
\documentclass[A4,12pt]{article}
\usepackage{multirow}
\usepackage{eqparbox}
\usepackage{arydshln}
\usepackage{wrapfig}
\usepackage[caption=false,font=footnotesize]{subfig}
\usepackage[titletoc,toc,title]{appendix}

\usepackage{fullpage}
\usepackage{color}
\usepackage{url}
\usepackage{graphicx}
\usepackage{epstopdf}
\usepackage{xspace}
\usepackage[cmex10]{amsmath}
\usepackage{amssymb}
\usepackage{amsthm}
\usepackage{algorithmicx}
\usepackage[noend]{algpseudocode}
\usepackage{listings}
\usepackage{subfig}
\usepackage{tabularx}
\usepackage{pgfplots}
\pgfplotsset{compat=newest}
\usepackage{pgfplotstable}


\usepackage{fancyvrb}
\usepackage{ifthen}
\usepackage{tikz}
\usepackage[caption=false,font=footnotesize]{subfig}

\usepackage{titlesec}
\setcounter{secnumdepth}{4}
\titleformat{\paragraph}
{\normalfont\normalsize\bfseries}{\theparagraph}{1em}{}
\titlespacing*{\paragraph}
{0pt}{3.25ex plus 1ex minus .2ex}{1.5ex plus .2ex}
\newcounter{tempEquationCounter} 
\newcounter{thisEquationNumber}
\newenvironment{floatEq}
{\setcounter{thisEquationNumber}{\value{equation}}\addtocounter{equation}{1}
\begin{figure*}[!t]
\normalsize\setcounter{tempEquationCounter}{\value{equation}}
\setcounter{equation}{\value{thisEquationNumber}}
}
{\setcounter{equation}{\value{tempEquationCounter}}
\hrulefill\vspace*{4pt}
\end{figure*}
}
\newcommand{\comm}[1]{} 


\makeatletter
\newcounter{subsubparagraph}[subparagraph]
\renewcommand\thesubsubparagraph{%
  \thesubparagraph.\@arabic\c@subsubparagraph}
\newcommand\subsubparagraph{%
  \@startsection{subsubparagraph}    
    {6}                              
    {\parindent}                     
    {3.25ex \@plus 1ex \@minus .2ex} 
    {-1em}                           
    {\normalfont\normalsize\bfseries}}
\newcommand\l@subsubparagraph{\@dottedtocline{6}{10em}{5em}}
\newcommand{\subsubparagraphmark}[1]{}
\makeatother


\newcommand{\FINDATE}{31.08.2016} 
\newcommand{\DNUM}{D2.4}
\newcommand{\DNAME}{Report on the final prototype of programming abstractions for energy-efficient inter-process communication}
\newcommand{\DFMTNAME}{Report on the final prototype \\ of programming  abstractions for energy-efficient \\ inter-process communication}
\newcommand{\DSHORTNAME}{Report on the final prototype of programming abstractions}




\pagestyle{myheadings}
\markright{\DNUM: \DSHORTNAME}
\addtolength{\headsep}{1cm}





\usepackage{siunitx}
\usepackage{diagbox}
\usepackage{makecell}
\newtheorem{lemma-uit}{Lemma}

\epstopdfsetup{update}

\newcommand\UB{\mathit{UB}}


\algrenewcommand\alglinenumber[1]{\scriptsize #1:}

\algnewcommand{\LComment}[1]{\Statex  \(\triangleright\) #1 \hfill~}


\algdef{SE}[DOWHILE]{Do}{doWhile}{\algorithmicdo}[1]{\algorithmicwhile\ #1}%

\algnewcommand\EMPTY{\textbf{EMPTY}}

\algblockdefx[NAME]{Start}{End}%
	[1]{\textbf{Struct} #1:}%
	{EndStruct}
\algnotext[NAME]{End}

\algblockdefx[Name]{START}{END}%
	[1]{#1}%
	{Ending}
\algnotext[Name]{END}




\usepackage{verbatim}  


\usepackage[T1]{fontenc}
\usepackage[english]{babel}
\usepackage{url,xspace}
\usepackage{stmaryrd}
\usepackage{paralist}
\usepackage{amsthm,amssymb,amsmath}
\DeclareMathAlphabet{\mathcal}{OMS}{cmsy}{m}{n}
\usepackage{fixltx2e}
\usepackage{array}

\usepackage{tikz}

\usetikzlibrary{patterns}
\usetikzlibrary{calc,positioning}
\usetikzlibrary{shapes}
\usetikzlibrary{matrix}
\usetikzlibrary{decorations}
\usetikzlibrary{decorations.pathmorphing}
\usetikzlibrary{decorations.markings}
\usetikzlibrary{decorations.pathreplacing}
\usetikzlibrary{backgrounds}

\usepackage[linesnumbered,vlined,noend,ruled,nofillcomment]{algorithm2e}

\usepackage{ifthen}
\usepackage{arrayjobx}

\usepackage{dsfont}

\usepackage{mathtools}


\graphicspath{{energy-ds-chalmers/pdfs/},{figs-chalmers/markov/},{figs-chalmers/aggregate/},{figs-uit/},{figs-uit-model/},{figs-uit-model/diagrams/}}

\begin{document}

\thispagestyle{empty}

\vspace{-3cm}
\begin{center}
\textbf{SEVENTH FRAMEWORK PROGRAMME}\\
\textbf{THEME ICT-2013.3.4}\\
Advanced Computing, Embedded and Control Systems
\end{center}
\bigskip

\begin{center}
\includegraphics[width=\textwidth]{./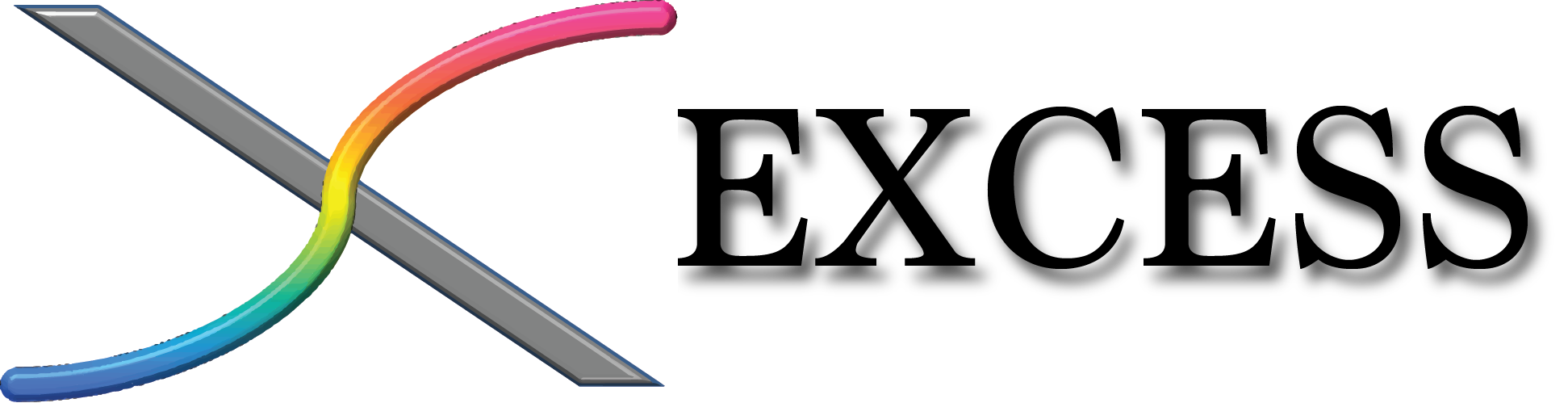}
\end{center}
\bigskip

\begin{center}
Execution Models for Energy-Efficient
Computing Systems\\
Project ID: 611183
\end{center}
\bigskip

\begin{center}
\Large
\textbf{\DNUM} \\
\textbf{\DNAME}
\end{center}
\bigskip

\begin{center}
\large
Phuong Ha, Vi Tran, Ibrahim Umar,
Aras Atalar, Anders Gidenstam,
Paul Renaud-Goud, Philippas Tsigas,
Ivan Walulya

\end{center}

\vfill

\begin{center}
\includegraphics[width=3cm]{./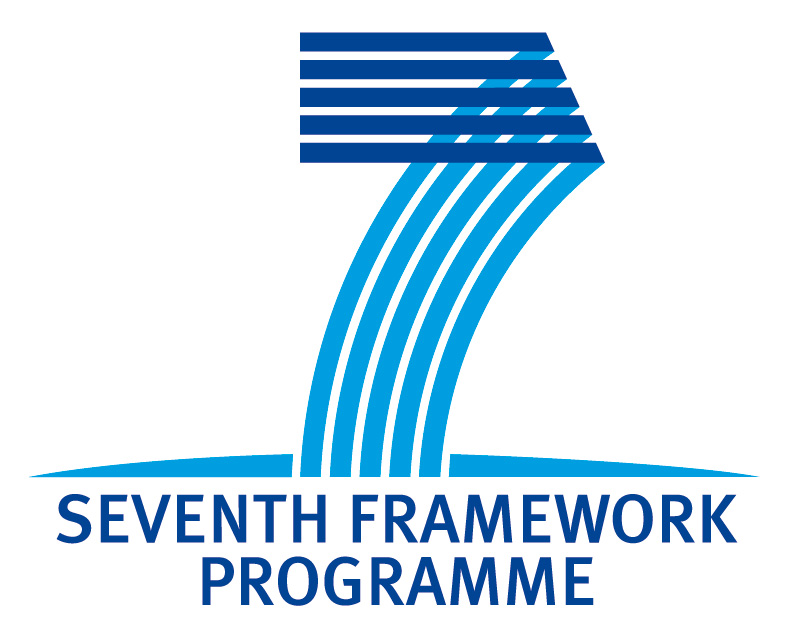}\\
Date of preparation (latest version): \FINDATE \\
Copyright\copyright\ 2013 -- 2016 The EXCESS Consortium  $^\ast$ \\
{\scriptsize $\ ^\ast$ Some sections in this report have been published, see copyright notices at the beginning of the sections.}\\
\hrulefill \\
The opinions of the authors expressed in this document do not
necessarily reflect the official opinion of EXCESS partners or of
the European Commission.
\end{center}

\newpage

\section*{DOCUMENT INFORMATION}

\vspace{1cm}

\begin{center}
\begin{tabular}{ll}
\textbf{Deliverable Number} & \DNUM \\
\textbf{Deliverable Name} & \begin{minipage}{10cm}{\DFMTNAME}\end{minipage} \\
\textbf{Authors}
& Phuong Ha \\
& Vi Tran \\
& Ibrahim Umar \\
& Aras Atalar\\
& Anders Gidenstam \\
& Paul Renaud-Goud\\
& Philippas Tsigas \\
& Ivan Walulya \\

\textbf{Responsible Author} & Phuong Ha\\
& e-mail: \url{phuong.hoai.ha@uit.no} \\
& Phone: +47 776 44032 \\
\textbf{Keywords} & High Performance Computing; \\
& Energy Efficiency \\
\textbf{WP/Task} & WP2/Task 2.1, 2.2, 2.3, 2.4 \\
\textbf{Nature} & R \\
\textbf{Dissemination Level} & PU \\
\textbf{Planned Date} &  31.08.2016\\
\textbf{Final Version Date} & 31.08.2016 \\
\textbf{Reviewed by} &  \\
\textbf{MGT Board Approval} & \\ 
\end{tabular}
\end{center}

\newpage

\section*{DOCUMENT HISTORY}

\vspace{1cm}

\begin{center}
\begin{tabular}{llll}
\textbf{Partner} & 
\textbf{Date} & 
\textbf{Comment} & 
\textbf{Version} \\
UiT (P.Ha, V.Tran) & 01.07.2016 & Deliverable skeleton & 0.1 \\
Chalmers (A.\ Atalar) & 19.07.2016 & Input - energy model and energy evaluation & 0.2 \\
Chalmers (I.\ Walulya) & 20.07.2016 & Input - implementation of streaming aggregation& 0.2 \\
UiT (V.Tran) & 03.08.2016 & Input - energy and power model & 0.3 \\
Chalmers (I.\ Walulya) & 22.08.2016 & Input revise & 0.4 \\
UiT (I.Umar, V.Tran) & 29.08.2016 & Input revise  & 0.5 \\
\end{tabular}
\end{center}

\newpage

\begin{abstract}
Work package 2 (WP2) aims to develop libraries for energy-efficient inter-process communication and data sharing on the EXCESS platforms. 
The Deliverable D2.4 reports on the final prototype of programming abstractions for energy-efficient inter-process communication. Section \ref{sec:introduction} is the updated overview of the prototype of programming abstraction and devised power/energy models. The Section 2-6 contain the latest results of the four studies: 
\begin{itemize} 
\item GreenBST, a energy-efficient and concurrent search tree (cf. Section \ref{sec:Concurrent-Data-Structures})
\item Customization methodology for implementation of streaming aggregation in embedded systems (cf. Section \ref{sec:stream})
\item Energy Model on CPU for Lock-free Data-structures in Dynamic Environments (cf. Section \ref{sec:chalmers-lock-free})
\item A General and Validated Energy Complexity Model for Multithreaded Algorithms (cf. Section \ref{sec:uit-energy-model})
\end{itemize}
\end{abstract}

\newpage

\section*{Executive Summary}
Work package 2 (WP2) investigate and model the trade-offs between energy consumption and performance of data structures and algorithms for inter-process communication. WP2 also provides concurrent data structures and algorithms that support energy-efficient massive parallelism while minimizing inter-component communication.
 
The main achievements of Deliverable D2.4 are summarized as follows.
\begin{itemize}
\item
We have described the cache-oblivious abstraction that is used in developing our energy-efficient and concurrent data structures. We also present in the same section a detailed description of GreenBST, an energy-efficient concurrent search tree that was 
briefly described in D2.3. Also in this deliverable, GreenBST is tested with new state-of-the-art concurrent search trees that are not included in D2.3. The latest experimental results showed that GreenBST is more energy efficient and has 
higher throughput for both the concurrent search- and update- intensive workloads than the state-of-the-art. We also have implemented GreenBST for Myriad2 platform and have conducted an experimental evaluation using the implementation.
\item We present a methodology for the customization of streaming aggregation implemented in modern low power embedded devices. The methodology is based on design space exploration and provides a set of customized implementations that can be used by developers to perform trade-offs between throughput, latency, memory and energy consumption. We compare the proposed embedded system implementations of the streaming aggregation operator with the corresponding HPC and GPGPU implementations in terms of performance per watt. Our results show that the implementations based on low power embedded systems provide up to 54 and 14 times higher performance per watt than the corresponding Intel Xeon and Radeon HD 6450 implementations, respectively. 
\item We present an energy model on CPU for lock-free data-structures in dynamic environments. Lock-free data structures are based on retry loops and are called by 
application-specific routines. In D2.3, we illustrate 
the performance impacting factors and the model that we use to cover a subset 
of the lock-free structures that we consider here. In the former study, the analysis is built upon properties that arise only when the sizes of the retry loops and the application-specific work are constant. 
In this work, we introduce two new frameworks that can be used to the 
capture the performance of a wider set of lock-free data structures (\textit{i.e.}\xspace the 
size of retry loops follow a probability distribution) in dynamic environments 
(\textit{i.e.}\xspace the size of application specific follows a probability distribution). 
These analyses allow us to estimate the energy consumption of an extensive set of 
lock-free data structures that are used under various access patterns.
\item We introduces a new general energy model ICE for analyzing the energy complexity of a wide range of multi-threaded algorithms. Compared to the EPEM model reported in D2.3, this model proposed using Ideal Cache memory model to compute I/O complexity of the algorithms. Besides a case study of SpMV to demonstrate how to apply the ICE model to find energy complexity of parallel algorithms, Deliverable D2.4 also reports a case study to apply the ICE model to Dense Matrix Multiplication (matmul). The model is then validated with both data-intensive (i.e., SpMV) and computation-intensive (i.e., matmul) algorithms according to three aspects: different algorithms, different input types/sizes and different platforms. In order to make the reading flow easy to follow, we include in this report a complete study of ICE model along with latest results.
\end{itemize}





\newpage

\tableofcontents

\newpage


\section{Introduction} 
\label{sec:introduction}

D2.4 reports the final prototype of programming abstraction based on the results from Task 2.1 to 2.4, including: i) the latest results of Task 2.1 on investigating and modeling the trade-off between energy and performance of concurrent data structures and algorithms \cite{HaTUTGRWA:2014} ii) the improved results of Task 2.2 on providing essential concurrent data structures and algorithms for inter-process communication \cite{HaTUAGRT15} and iii) the additional results of Task 2.3 on developing novel concurrent data structures and Task 2.4 on memory-access algorithms that are locality- and heterogeneity-aware \cite{HaTUAGRTW16}. 
The detailed studies (including their motivation, contributions and current results) of D2.4 are introduced in the followings subsections.
\subsection{Energy-efficient and Concurrent Data Structures and Algorithms}
Like other fundamental abstractions for energy-efficient computing,
search trees need to support both high concurrency and 
fine-grained data locality.
However, existing locality-aware search trees such as ones based on the van 
Emde Boas layout (vEB-based trees), poorly support {\em concurrent} (update) 
operations while existing highly-concurrent search 
trees such as the non-blocking 
binary search trees do not consider data locality.

We present GreenBST, a
practical energy-efficient concurrent search tree
that supports fine-grained data locality as
vEB-based trees do, but unlike vEB-based trees, 
GreenBST supports high concurrency.
GreenBST is a $k$-ary leaf-oriented tree of GNodes where each
GNode is a fixed size tree-container with the van Emde Boas layout.
As a result, GreenBST minimizes data transfer between memory 
levels while supporting highly concurrent (update) operations. 
Our experimental evaluation using the recent implementation of non-blocking binary search trees, 
highly concurrent B-trees, conventional vEB trees, as well as the 
portably scalable concurrent trees shows that GreenBST is efficient:
its energy efficiency (in operations/Joule) and throughput (in operations/second) are up to 65\% and 69\% 
higher, respectively, than the other trees on a high performance computing (HPC) platform 
(Intel Xeon), an embedded platform (ARM), and an accelerator platform (Intel Xeon Phi).
The results also provide insights into how to develop energy-efficient data structures in general.

\subsection{Customization methodology for implementation of streaming aggregation in embedded systems}
Streaming aggregation is a fundamental operation in the area of stream processing and its implementation provides various challenges. Data flow management is traditionally performed by high performance computing systems. However, nowadays there is a trend of implementing streaming operators in low power embedded devices, due to the fact that they often provide increased performance per watt in comparison with traditional high performance systems. In this work, we present a methodology for the customization of streaming aggregation implemented in modern low power embedded devices. The methodology is based on design space exploration and provides a set of customized implementations that can be used by developers to perform trade-offs between throughput, latency, memory and energy consumption. We compare the proposed embedded system implementations of the streaming aggregation operator with the corresponding HPC and GPGPU implementations in terms of performance per watt. Our results show that the implementations based on low power embedded systems provide up to 54 and 14 times higher performance per watt than the corresponding Intel Xeon and Radeon HD 6450 implementations, respectively. 

\subsection{Energy Model on CPU for Lock-free Data-structures in Dynamic Environments}
In this section, we firstly consider the modeling and the analysis of the performance 
of lock-free data structures. Then, we combine the perfomance analysis with our power 
model that is introduced in D2.1~\cite{EXCESS:D2.1} and D2.3~\cite{EXCESS:D2.3} to estimate 
the energy efficiency of lock-free data structures that are used in various settings. 
 
Lock-free data structures are based on retry loops and are called by application-specific routines. 
In contrast to the model and analysis provided in D2.3, we consider 
here the lock-free data structures in dynamic environments. The size of each of the retry loops, 
and the size of the application routines invoked in between, are not constant 
but may change dynamically. 

We present two analytical frameworks for calculating the performance of lock-free
data structures. The new frameworks follow two different approaches. The first framework, 
the simplest one, is based on queuing theory.
It introduces an average-based approach that facilitates a more coarse-grained analysis, 
with the benefit of being ignorant of size distributions. Because of this 
independence from the distribution nature it covers a set of complicated designs. 
The second approach, instantiated with an exponential distribution for the size 
of the application routines, uses Markov chains, and is tighter because it constructs 
stochastically the execution, step by step.

Both frameworks provide a performance estimate which is close to what we observe 
in practice. We have validated our analysis on (i) several fundamental lock-free 
data structures such as stacks, queues, deques and counters, some of them employing 
dynamic helping mechanisms, and (ii) synthetic tests covering a wide range of 
possible lock-free designs. We show the applicability of our results by introducing 
new back-off mechanisms, tested in application contexts, and by designing an efficient 
memory management scheme that typical lock-free algorithms can utilize. Finally, we
reveal how these results can be used to obtain the energy consumption of the lock-free data 
structures.

\label{energy-models}


\subsection{A General and Validated Energy Complexity Model for Multithreaded Algorithms}
Like time complexity models that have significantly contributed to the analysis and development of fast algorithms, energy complexity models for parallel algorithms are desired as crucial means to develop energy efficient algorithms for ubiquitous multicore platforms. Ideal energy complexity models should be validated on real multicore platforms and applicable to a wide range of parallel algorithms. However, existing energy complexity models for parallel algorithms are either theoretical without model validation or algorithm-specific without ability to analyze energy complexity for a wide-range of parallel algorithms.  

This paper presents a new general validated energy complexity model for parallel (multithreaded) algorithms. The new model abstracts away possible multicore platforms by their static and dynamic energy of computational operations and data access, and derives the energy complexity of a given algorithm from its {\em work}, {\em span} and {\em I/O} complexity. 
The new model is validated by different sparse matrix vector multiplication (SpMV) algorithms and dense matrix multiplication (matmul) algorithms running on high performance computing (HPC) platforms (e.g., Intel Xeon and Xeon Phi). The new energy complexity model is able to characterize and compare the energy consumption of SpMV and matmul kernels according to three aspects: different algorithms, different input matrix types and different platforms. The prediction of the new model regarding which algorithm consumes more energy with different inputs on different platforms, is confirmed by the experimental results. In order to improve the usability and accuracy of the new model for a wide range of platforms, the platform parameters of ICE model are provided for eleven platforms including HPC, accelerator and embedded platforms.
\label{uit-energy-motiv}

\newpage
\section{Libraries of Energy-efficient and Concurrent Data Structures}  \label{sec:Concurrent-Data-Structures}
In this section, we describe the cache-oblivious abstraction that is used in developing our energy-efficient 
and concurrent data structures. The inclusion of the cache-oblivious abstraction that is previously
described in the D2.2 is intended to help the readers to fully understand the methodology that 
is used for promoting energy-efficiency in data structures (cf. Section \ref{sec:wb-method}). 
The section continues with the detailed description of GreenBST, an energy-efficient concurrent search tree (cf.
Section \ref{sec:search-trees} and \ref{sec:hGBST}). In contrast to the D2.3, GreenBST in this
deliverable is presented with more details, emphasizing on its complete structure and concurrency control. 
The section concludes with the experimental results of the 
developed libraries of concurrent data structure (cf. Section \ref{sec:evaluation}). 
We add several state-of-the-art trees that are not included in D2.3 in the energy efficiency and 
throughput comparison of the concurrent data structure libraries.

\subsection{Cache-oblivious Abstraction}  \label{sec:wb-method}
Energy efficiency is one of the most important factors in 
designing high performance systems.
As a result, data must be organized and accessed  in an energy-efficient manner through novel fundamental data structures and algorithms that strive for the energy limit.
Unlike conventional locality-aware algorithms that only concern
about whether the data is on-chip (e.g., cache) or not (e.g., DRAM),
new energy-efficient data structures and algorithms must consider data locality
in finer-granularity: \textit{where on chip the data is}. Dally \cite{Dally11} 
predicted that for chips using the 10nm technology, the energy required between accessing data in nearby
on-chip memory and accessing data across the chip will differ as much as 75x 
(2pJ versus 150pJ), whereas the energy required 
between accessing the on-chip data
and accessing the off-chip data will only differ by 2x (150pJ versus 300pJ). Therefore, 
in order to construct energy efficient software systems, data
structures and algorithms must support not only high parallelism but also fine-grained data locality~\cite{Dally11}.


In order to devise locality-aware algorithms, 
we need theoretical execution models that promote data locality. 
One example of such models is the the cache-oblivious (CO) models 
\cite{Frigo:1999:CA:795665.796479}, 
which enable the analysis of data transfer between two levels of the memory hierarchy.
CO models are using the same analysis as 
the widely known I/O models \cite{AggarwalV88} 
except in CO models an optimal replacement is assumed.
Lower data transfer complexity implies better data locality 
and higher energy efficiency as energy consumption caused by data 
transfer dominates the total energy consumption \cite{Dally11}.
These models require the knowledge of the algorithm and some parameters
of the architecture to be known beforehand, hence they are
white-box methods.

The cache-oblivious (CO) models (cf. Section \ref{sec:cache-oblivious}) support not only fine-grained data locality but also portability. A CO algorithm that is optimized for 2-level memory, is asymptotically optimized for unknown multilevel memory (e.g., register, L1C, L2C, ..., LLC, memory), enabling fine-grained data locality (e.g., minimizing data movement between L1C and L2C). As cache sizes and block sizes in the CO models are unknown, CO algorithms are expected to be portable across different systems. For example, the memory transfer cost of an algorithm (e.g., how many data blocks need to be transferred between two level of memory), which is analyzed using the CO model, will be applicable on both HPC machines and embedded platforms (e.g., Myriad1/2 platforms), irrespective of the variations in the hardware parameters such as memory hierarchy, specifications and sizes. The performance portability is useful for analyzing the data movement and energy consumption of an algorithm in a platform-independent manner.

The memory transfer cost of an algorithm obtained using the CO model can be regarded as a first piece of information that can enable software designers to rapidly analyze the performance and energy consumption of their algorithms. After all, memory transfer is one of the parameters that dominate the total energy consumption. As for the next step, the transfer cost can be fed directly into the energy model of a specific platform to get a good approximation on the energy consumption of the algorithm on the platform. 

Algorithms and data structures analyzed using the 
cache-oblivious models \cite{Frigo:1999:CA:795665.796479} are 
found to be cache-efficient and disk-efficient \cite{Brodal:2004aa, Demaine:2002aa}, making them suitable
for improving energy efficiency in modern high performance systems. Nowadays, multilevel memory
hierarchies in commodity systems are becoming more prominent as modern CPUs tend to have at least 3 level of caches and disks start to incorporate hybrid-SSD cache memories. With minimal effort, cache-oblivious algorithms
are expected to be always locality-optimized irrespective of 
variations in memory hierarchies, enabling less data transfers
between memory levels that directly translate into runtime energy savings.

Since their inception, cache-oblivious models have been extensively used 
for designing locality-aware fundamental algorithms and data 
structures \cite{Brodal:2004aa, Demaine:2002aa, Fagerberg:2008aa}. Among those algorithms
are scanning algorithms (e.g., traversals, aggregates, and array reversals), divide 
and conquer algorithms (e.g., median selection, and matrix multiplication), and sorting
algorithms (e.g., mergesort and funnel-sort \cite{Frigo:1999:CA:795665.796479}). Several static data structures (e.g., static search trees, and funnels) 
and dynamic data structures (e.g., ordered files, b-trees, priority queues, and linked-list) have
been also analyzed using the cache-oblivious models. Performance of the said cache-oblivious 
algorithms and data structures have been reported similar to or sometimes better than the performance of their traditional  cache-aware counterparts.




\subsubsection{I/O model.} \label{sec:iomodel}
The I/O\footnote{The term "I/O" is from now on used a shorthand for block I/O operations} model was introduced
by Aggarwal and Vitter \cite{AggarwalV88}. In their seminal paper, Aggarwal and Vitter
postulated that the memory hierarchy consists of two levels, an internal memory with size
$M$ (e.g., DRAM) and an external storage of infinite size (e.g., disks). Data is transferred 
in $B$-sized blocks between those two levels of memory and the CPU can only access 
data that are available in the internal memory. In the I/O model, an algorithm's time complexity 
is assumed to be dominated by how many block transfers are required, 
as loading data from disk to memory 
takes much more time than processing the data.

For this I/O model, B-tree \cite{Bayer:1972aa} is an optimal search tree \cite{CormenSRL01}.
B-trees and its concurrent variants \cite{BraginskyP12, Comer79, Graefe:2010:SBL:1806907.1806908,
Graefe:2011:MBT:2185841.2185842}
are optimized for a known memory block size $B$ (e.g., page size) to minimize the
number of memory blocks accessed by the CPU during a search, thereby improving data
locality. The I/O transfer complexity of B-tree is $O(\log_B N)$, the optimal.

However, the I/O model has its drawbacks. 
Firstly, to use this model, an algorithm has to know the $B$ and $M$ (memory size) 
parameters in advance.
The problem is that these parameters are sometimes unknown (e.g., when memory is shared with other applications)
and most importantly not portable between different platforms.
Secondly, in reality there are different block sizes at different levels of the memory hierarchy
that can be used in the design of locality-aware data layout for search trees. 
For example in \cite{KimCSSNKLBD10, Sewall:2011aa}, 
Intel engineers have come out with very fast search trees by crafting  
a platform-dependent data layout based on the register size, SIMD width, cache line size,
and page size. 

Existing B-trees limit spatial locality optimization to the memory level
with block size $B$, leaving access to other memory levels with different block size
unoptimized.
For example a traditional B-tree that is optimized for searching data in disks (i.e.,
$B$ is page size), where each node is an array of sorted keys, 
is optimal for transfers between a disk and RAM.
However, data transfers between RAM and last level cache (LLC) 
are no longer optimal.
For searching a key inside each $B$-sized block in RAM, the transfer complexity 
is $\Theta (\log (B/L))$ transfers between RAM and LLC, 
where $L$ is the cache line size.
Note that a search with optimal cache line transfers of $O(\log_L B)$ is achievable
by using the van Emde Boas layout \cite{BrodalFJ02}. This layout has been proved 
to be optimal for search using the cache-oblivious model \cite{Frigo:1999:CA:795665.796479}. 

\subsubsection{Cache-oblivious model} \label{sec:cache-oblivious}
The cache-oblivious model was introduced 
by Frigo et al. in \cite{Frigo:1999:CA:795665.796479},
which is similar to the I/O model except that the block size $B$ and memory size
$M$ are unknown.
Using the same
analysis of the Aggarwal and Vitter's two-level I/O model, an algorithm is categorized as 
\textit{cache-oblivious} if it has no variables that need to be tuned with respect to 
hardware parameters, 
such as cache size and cache-line length in order to achieve optimality, 
assuming that I/Os are performed by an optimal off-line cache replacement strategy.

If a cache-oblivious algorithm is optimal for arbitrary two-level memory, the
algorithm is also optimal for any adjacent pair of available levels of the memory hierarchy. 
Therefore without knowing anything about memory level hierarchy and the size of each level, a cache-oblivious
algorithm can automatically adapt to multiple levels of the memory hierarchy.
In \cite{Brodal:2004aa}, cache-oblivious algorithms were reported performing better 
on multiple levels of memory hierarchy and 
more robust despite changes in memory size parameters compared to the cache-aware
algorithms. 

One simple example is that in the cache-oblivious model, B-tree is no longer optimal
because of the unknown $B$. Instead, the van Emde Boas (vEB) layout-based trees that are 
described by Bender  
\cite{BenderDF05, BenderFFFKN07, BenderFGK05} and Brodal, 
\cite{BrodalFJ02}, are optimal.
We would like to refer the readers to \cite{Brodal:2004aa, Frigo:1999:CA:795665.796479} 
for a more comprehensive overview of the I/O model and cache-oblivious model.

We provide some of the examples of cache-oblivious algorithms 
and cache oblivious data structures in the following texts.

\subsubsection{Cache-oblivious Algorithms}  \label{sec:COM-coalg}

\paragraph{Scanning algorithms and their derivatives} 
One example of a naive cache-oblivious (CO) algorithm is the 
\textit{linear scanning} of an $N$ element array that requires $\Theta(N/B)$ I/Os or transfers.
Bentley's \textit{array reversal algorithm} and Blum's \textit{linear time selection algorithm}
are primarily based on the scanning algorithm, therefore they
also perform in $\Theta(N/B)$ I/Os \cite{Brodal:2004aa, Demaine:2002aa}.

\paragraph{Divide and conquer algorithms.} Another example of CO algorithms
in divide and conquer algorithms is the matrix operation algorithms.
Frigo et al. proved that \textit{transposition} of an $n \times m$ matrix 
was optimally solved in $\mathcal{O}(mn/B)$ I/Os and 
the \textit{multiplication} of an $m\times n$-matrix and an $n \times p$-matrix 
was solved using $\mathcal{O}((mn + np + mp)/B + mnp/(B\sqrt{M}))$ 
I/Os, where $M$ is the memory size \cite{Frigo:1999:CA:795665.796479}. As for square matrices (e.g., $N \times N$), using the Strassen's algorithm
and the cache-oblivious model,
the required I/O bound has been proved to be $O(N^2/B + N^{\lg7}/B\sqrt{M})$.

\begin{figure}[t]
\centering \scalebox{0.7}{\input{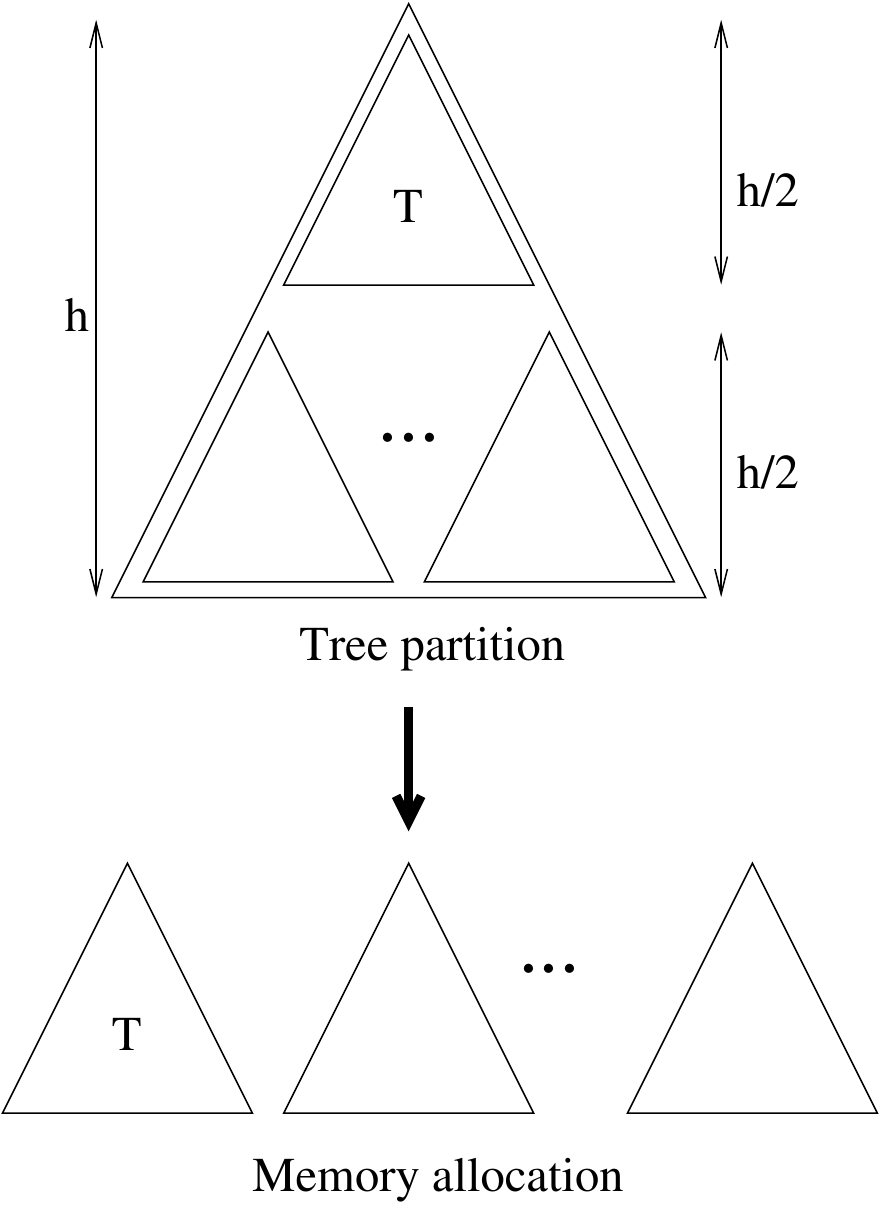_t}}
\caption{Static van Emde Boas (vEB) layout: a tree of height $h$ is recursively split at height $h/2$. 
The top subtree $T$ of height $h/2$ and $m=2^{h/2}$ bottom subtrees $W_1;W_2; \ldots ;W_m$ 
of height $h/2$ are located in contiguous memory locations where T is located before  
$W_1;W_2;\ldots;W_m$.}\label{fig:vEB}
\end{figure}

\paragraph{Sorting algorithms.} Demaine gave two examples of cache-oblivious 
sorting algorithm in his brief survey paper \cite{Demaine:2002aa}, namely the
\textit{mergesort} and \textit{funnelsort} \cite{Frigo:1999:CA:795665.796479}. 
In the same text he also wrote that both
sorting algorithms achieved the optimal $\Theta(\frac{N}{B} \log_2 \frac{N}{B})$ I/Os,
matching those in the original analysis of Aggarwal and Vitter \cite{AggarwalV88}.

\subsubsection{Cache-oblivious Data Structures}  \label{sec:COM-cods}

\paragraph{Static data structures}

\begin{figure}[t]
\centering  \includegraphics[width=\columnwidth]{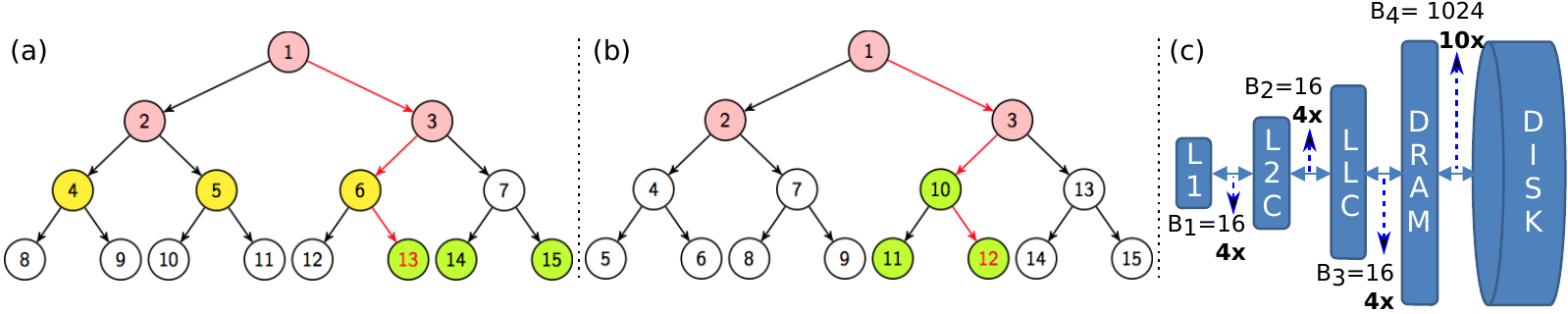}
\caption{Illustration of the required data block transfer in searching for (a) key 13 in BFS tree and 
(b) key 12 in vEB tree, where a node's value is {\em its address in the physical memory}. Note
that in (b), adjacent nodes are grouped together (e.g., (1,2,3) and (10,11,12)) because of the 
{\em recursive} tree building. The similarly colored nodes indicates a single block transfer $B$.
An example of multi-level memory is shown in (c), where $B_x$ is the {\em block transfer} size 
$B$ between levels of memory.}\label{fig:vEB-bfs}
\end{figure}

One of the examples of cache-oblivious (CO) static data structures is the \textit{CO search trees} that 
can be achieved using the van Emde Boas (vEB) layout \cite{Prokop99, vanEmdeBoas:1975:POF:1382429.1382477}. 
The vEB-based trees recursively arrange related data in contiguous memory locations, 
minimizing data transfer between any two adjacent levels of the memory hierarchy (cf. Figure \ref{fig:vEB}). 

Figure \ref{fig:vEB-bfs} illustrates the vEB layout, where the size $B$ of memory blocks transferred  
between 2-level memory in the I/O model \cite{AggarwalV88} is 3 (cf. Section \ref{sec:iomodel}). Traversing a complete binary tree with the Breadth First Search layout (or BFS tree for short) (cf. Figure \ref{fig:vEB-bfs}a) with height 4 will need three memory transfers to locate the key at leaf-node 13. The first two levels with
three nodes $(1, 2, 3)$ fit within a single block transfer while the next two levels need to be loaded
in two separate block transfers that contain nodes $(4, 5, 6)$ and nodes $(13, 14, 15)$, respectively. Generally, the number of memory transfers for a BFS tree of size $N$  is $(\log_2N-\log_2B) = \log_2 N/B \approx \log_2 N$ for $N \gg B$. 

For a vEB tree with the same height, the required memory transfers is only 
two. As shown in Figure \ref{fig:vEB-bfs}b, locating the key in leaf-node 12 requires only a transfer of nodes $(1, 2, 3)$ 
followed by a transfer of nodes $(10, 11, 12)$. Generally,  the memory transfer complexity for searching for a key in a tree of size $N$ is now reduced to $\frac{\log_2N}{\log_2B} = \log_B N$, simply by using 
an efficient tree layout so that nearby nodes are located in adjacent memory locations. 
If $B=1024$, searching a BFS tree for a key at a leaf requires 10x (or $\log_2 B$) more I/Os than searching a vEB tree with the same size $N$ where $N \gg B$. 

On commodity machines with multi-level memory, 
the vEB layout is even more efficient. 
So far the vEB layout is shown to have $\log_2B$ less I/Os for two-level memory. 
In a typical machine having three 
levels of cache (with cache line size of 64B), a RAM (with page size of 4KB) and a disk, 
searching a vEB tree can achieve up to 640x less I/Os than searching a BFS tree, assuming the node size is 4 bytes (Figure \ref{fig:vEB-bfs}c).

\paragraph{Dynamic data structures.}
In a standard \textit{linked-list} structure supporting traversals, insertions and deletions, 
the best-known cache-oblivious solution was $\mathcal{O}((\lg^2 N)/B)$ I/Os for updates
and $\mathcal{O}(K/B)$ for traversing $K$ elements in the list  \cite{Demaine:2002aa}.

The first cache-oblivious \textit{priority queue} was due to Arge et al. 
\cite{Arge:2002:CPQ:509907.509950} and it supports inserts and delete-min operations in $\mathcal{O}( {^1/_B} \log_{M/B} {^N/_B})$ I/Os.

The vEB layout in static cache-oblivious search tree has inspired many cache-oblivious \textit{dynamic search trees} such as 
cache-oblivious B-trees \cite{BenderDF05, BenderFFFKN07, BenderFGK05} and cache-oblivious binary trees \cite{BrodalFJ02}.
All of these search tree implementations have been proved having the optimal bounds of
$\mathcal{O}(\log_B N)$ in searches and require amortized $\mathcal{O}(\log_B N)$ I/Os for updates. 

However, vEB-based trees poorly support {\em concurrent} update operations. 
Inserting or deleting a node may result in 
relocating a large part of the tree in order to maintain 
the vEB layout (cf. Section \ref{subsec:staticveb}). Bender et al. 
\cite{BenderFGK05} discussed the problem
and provided important theoretical designs of concurrent vEB-based B-trees.
Nevertheless, we have found that the theoretical designs are not very efficient in practice 
due to the actual overhead of maintaining necessary pointers as well as their large memory footprint.

\subsubsection{New Relaxed Cache-oblivious Model}  \label{sec:COM-rco}

We observe that is unnecessary to keep a vEB-based tree in a contiguous block of memory whose size is greater
than some upper bound. In fact, allocating a contiguous block of memory for a 
vEB-based tree does not guarantee a contiguous block of
\textit{physical memory}. Modern OSes and systems utilize
different sizes of continuous physical memory blocks,  for example, in the form
of pages and cache-lines. A contiguous block in virtual
memory might be translated into several blocks with gaps
in RAM; also, a page might be cached by several cache lines with gaps at any level of cache. 
This is one of the motivations for the new relaxed cache oblivious model proposed. 

We define {\em relaxed cache oblivious} algorithms to be cache-oblivious (CO)
algorithms with the restriction that an upper bound $\UB$ on the unknown memory
block size $B$ is known in advance. 
As long as an upper bound on all the block
sizes of multilevel memory is known, the new relaxed CO model maintains the key
feature of the original CO model \cite{Frigo:1999:CA:795665.796479}. 
First, temporal locality is exploited perfectly as there are no constraints on cache size
$M$ in the model. As a result, an optimal offline cache replacement policy can be
assumed. In practice, the Least Recently Used (LRU) policy with memory of size 
$(1+\epsilon)M$, where $\epsilon>0$, is nearly as good as the optimal replacement policy
with memory of size $M$ \cite{Sleator:1985:AEL:2786.2793}.
Second, analysis for a simple two-level memory
are applicable for an unknown multilevel memory (e.g., registers, L1/L2/L3 caches
and memory). Namely, an algorithm that is optimal in terms of data movement for a 
simple two-level memory is asymptotically optimal for an unknown multilevel memory. 
This feature enables algorithm designs that can utilize fine-grained data locality 
in the multilevel memory hierarchy of modern architectures. 

The upper bound on the contiguous block size can be obtained easily from 
any system (e.g., page-size or any values greater than that), which is platform-independent. In fact, the search performance in  
the new relaxed cache oblivious model is resilient to different upper bound values (cf. Lemma \ref{lem:dynamic_vEB_search} in Section \ref{sec:concurrentvEB}).

\subsubsection{New Concurrency-aware van Emde Boas Layout} \label{sec:concurrentvEB}

\begin{figure}[t] 
\centering 
\scalebox{0.6}{\input{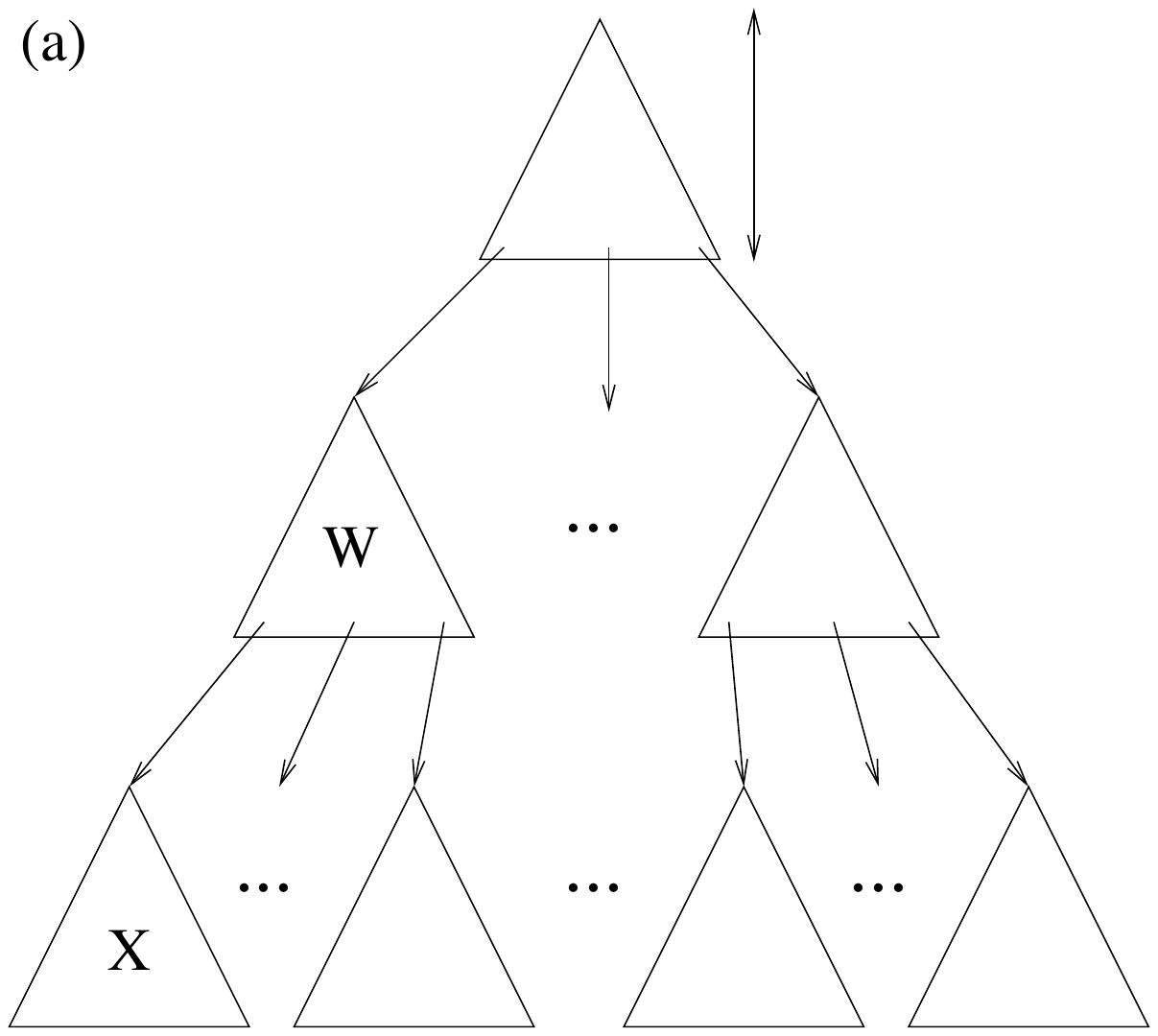_t}}
\scalebox{0.6}{\input{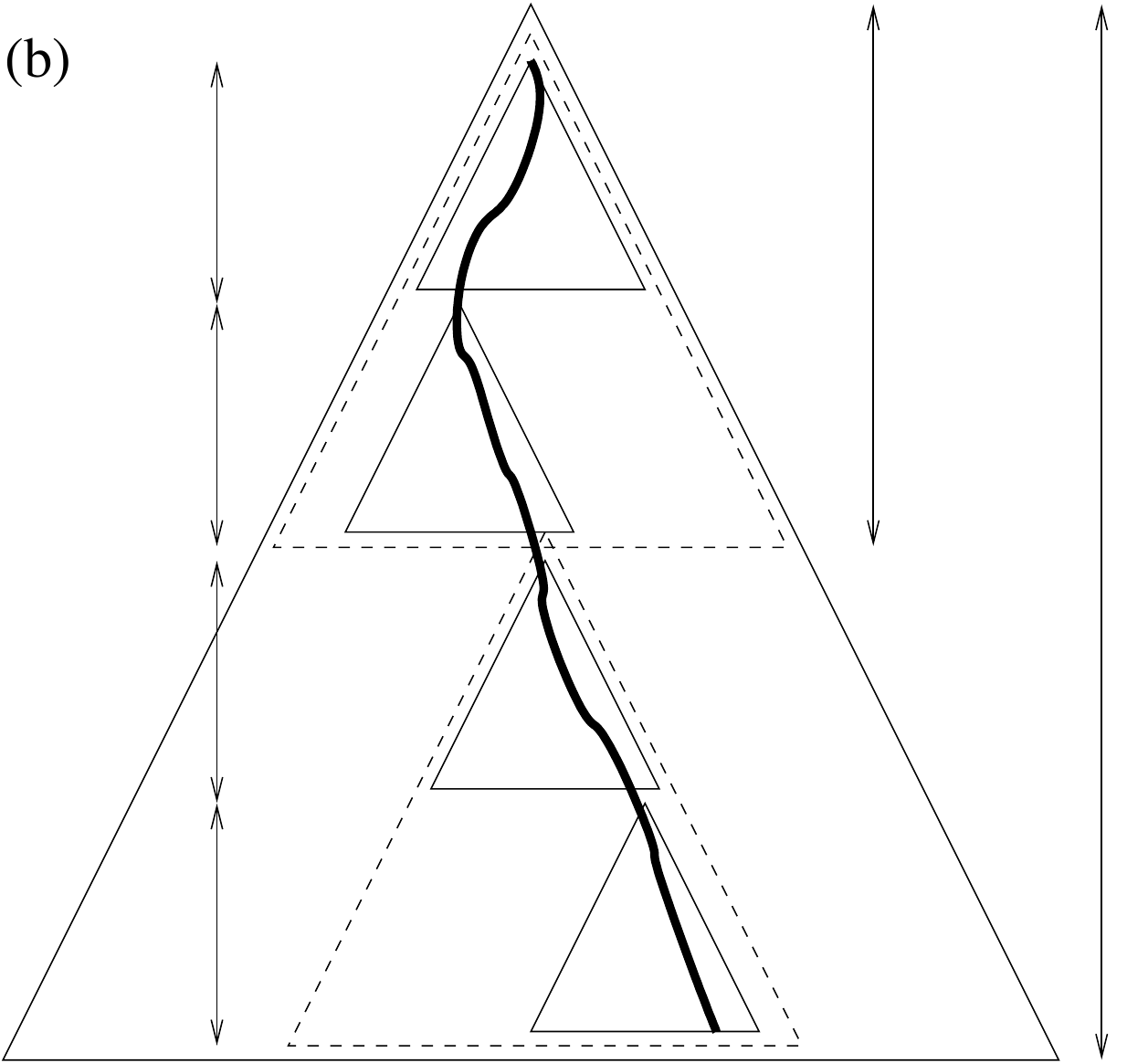_t}}
\caption{{\em(a)} New concurrency-aware vEB layout. {\em(b)} 
Search using concurrency-aware vEB layout.}\label{fig:search_complexity}\label{fig:dynamicVEB}
\end{figure}

We propose improvements to the conventional van Emde Boas (vEB) layout
to support high performance and high concurrency, which results in new {\em concurrency-aware} dynamic vEB layout. 
We first define the following notations that will be used to elaborate on the improvements:
\begin{itemize}

\item $b_i$ (unknown): block size in terms of the number of nodes at level $i$ of the memory
hierarchy (like $B$ in the I/O model \cite{AggarwalV88}), which is unknown as in the cache-oblivious 
model \cite{Frigo:1999:CA:795665.796479}. When the specific level $i$ of the memory
hierarchy is irrelevant, we use notation $B$ instead of $b_i$ in order to be
consistent with the I/O model.

\item $\UB$ (known): the upper bound (in terms of the number of nodes) on the
block size $b_i$ of all levels $i$ of the memory hierarchy.

\item {\em $\Delta$Node}: the largest recursive subtree of a van Emde Boas-based 
search tree that contains at most $\UB$ nodes (cf. dashed triangles of height $2^L$ in
Figure \ref{fig:search_complexity}b). $\Delta$Node is a fixed-size tree-container
with the vEB layout.

\item "level of detail" $k$ is a partition of the tree into recursive subtrees of 
height at most $2^k$. 

\item Let $L$ be the level of detail of $\Delta$Node. Let $H$ be the height
of a $\Delta$Node, we have $H = 2^L$. For simplicity, we assume $H = \log_2
(\UB+1)$.

\item $N, T$: size and height of the whole tree in terms of basic nodes (not in
terms of $\Delta$Nodes).

\end{itemize}  

\subparagraph{Conventional van Emde Boas (vEB) layout.} \label{subsec:staticveb}

The conventional van Emde Boas (vEB) layout has been introduced in
cache-oblivious data structures \cite{BenderDF05, BenderFFFKN07, BenderFGK05, BrodalFJ02,
Frigo:1999:CA:795665.796479}. Figure \ref{fig:vEB} illustrates the vEB layout.
Suppose we have a complete binary tree with height $h$. For simplicity, we
assume $h$ is a power of 2, i.e., $h=2^k, k \in \mathbb{N}$.
The tree is recursively laid out in the memory as follows. The tree is
conceptually split between nodes of height $h/2$ and $h/2+1$, resulting in a top
subtree $T$ and $m_1 = 2^{h/2}$ bottom subtrees $W_1, W_2, \cdots, W_{m_1}$ of
height $h/2$. The $(m_1 +1)$ top and bottom subtrees are then located in
contiguous memory locations where $T$ is located before $W_1, W_2, \cdots,
W_{m_1}$. Each of the subtrees of height $h/2$ is then laid out similarly to $(m_2 +
1)$ subtrees of height $h/4$, where $m_2 = 2^{h/4}$. The process continues until
each subtree contains only one node, i.e., the finest {\em level of detail}, 0.

The main feature of the vEB layout is that the cost of any search in this layout
is $O(\log_B N)$ memory transfers, where $N$ is the tree size and $B$ is the {\em
unknown} memory block size in the cache-oblivious model \cite{Frigo:1999:CA:795665.796479}. Namely, its search is cache-oblivious. 
The search cost is the
optimal and matches the search bound of B-trees that requires the memory block
size $B$ to be {\em known in advance}. Moreover, at any level of detail, each subtree
in the vEB layout is stored in a contiguous block of memory.

Although the conventional vEB layout is helpful for utilizing data locality, it poorly
supports concurrent update operations. Inserting (or deleting) a node at
position $i$ in the contiguous block storing the tree may restructure a large
part of the tree. For example, inserting
new nodes in the full subtree $W_1$ (a leaf subtree) in Figure \ref{fig:vEB} will  affect the other
subtrees $W_2, W_3, \cdots, W_m$ by rebalancing existing nodes between $W_1$ 
and the subtrees in order to have
space for new nodes. Even worse, we will need to allocate a new contiguous block of
memory for the whole tree if the previously allocated block of memory for
the tree runs out of space \cite{BrodalFJ02}. Note that we cannot use dynamic
node allocation via pointers since at {\em any} level of detail, each subtree in the vEB layout must be stored in a {\em
contiguous} block of memory.

\subparagraph{Concurrency-aware vEB layout.} \label{sec:relaxed-veb}

In order to make the vEB layout suitable for highly concurrent data structures
with update operations, we introduce a novel {\em concurrency-aware} dynamic vEB layout. Our key
idea is that if we know an upper bound $\UB$ on the unknown memory block size
$B$, we can support dynamic node allocation via pointers while maintaining the
optimal search cost of $O(\log_B N)$ memory transfers without knowing $B$ (cf.
Lemma \ref{lem:dynamic_vEB_search}). The assumption on known upper bound $\UB$ is supported 
by the fact that in practice it is unnecessary to keep the vEB layout in
a contiguous block of memory whose size is greater than some upper bound.
         
Figure \ref{fig:dynamicVEB}a illustrates the new concurrency-aware vEB layout based on the
relaxed cache oblivious model. Let $L$ be the coarsest level of detail such that
every recursive subtree contains at most $\UB$  nodes. Namely, let $H$ and $S$ 
be the height and size of such a subtree then $H=2^L$ and $S=2^H - 1 < \UB$.
The tree is recursively
partitioned into level of detail $L$ where each subtree represented by a
triangle in Figure \ref{fig:dynamicVEB}a,  is stored in a contiguous memory block
of size $\UB$. Unlike the conventional vEB, the subtrees at level of detail $L$
are linked to each other using pointers, namely each subtree at level of detail
$k > L$ is not stored in a contiguous block of memory.  Intuitively, since $\UB$
is an upper bound on the unknown memory block size $B$, storing a subtree at
level of detail $k > L$ in a contiguous memory block of size greater than $\UB$,
does not reduce the number of memory transfers, provided there is perfect alignment. 
For example, in Figure
\ref{fig:dynamicVEB}a, traveling from a subtree $W$ at level of detail $L$, which
is stored in a contiguous memory block of size $\UB$, to its child subtree $X$ at
the same level of detail will result in at least two memory transfers: one for
$W$ and one for $X$. Therefore, it is unnecessary to store both $W$ and $X$ in a
contiguous memory block of size $2\UB$. As a result, the memory transfer cost for search operations in
the new concurrency-aware vEB layout is intuitively the same as that of the conventional vEB layout (cf. Lemma \ref{lem:dynamic_vEB_search}) while the concurrency-aware vEB supports
high concurrency with update operations.

\begin{lemma-uit}
For any upper bound $\UB$ of the {\em unknown} memory block size $B$, a search in a complete binary tree with the new concurrency-aware vEB layout achieves the optimal memory transfer 
$O(\log_B N)$, where $N$ and $B$ are the tree size and 
the {\em unknown} memory block size in the cache-oblivious model
\cite{Frigo:1999:CA:795665.796479}, respectively.
\label{lem:search_mem} \label{lem:dynamic_vEB_search}
\end{lemma-uit}
\begin{proof} (Sketch)
Figure \ref{fig:search_complexity}b illustrates the proof.  
Let $k$ be the coarsest level of detail such that every recursive subtree 
contains at most $B$ nodes. Since $B \leq \UB$, $k \leq L$, where $L$ is 
the coarsest level of detail at which every recursive subtree ($\Delta$Nodes) 
contains at most $\UB$ nodes. 
That means there are at most $2^{L-k}$ subtrees
along the search path in a $\Delta$Node and no subtree of depth $2^k$ is split 
due to the boundary of $\Delta$Nodes. Namely, triangles of height $2^k$ fit
within a dashed triangle of height $2^L$ in Figure \ref{fig:search_complexity}b. 

Because at any level of detail $i \leq L$ in the concurrency-aware vEB layout, a recursive subtree of depth $2^i$ is stored in a contiguous block 
of memory, each subtree of depth $2^k$ {\em within} a $\Delta$Node is stored in
at most 2 memory blocks of size $B$ (depending on the starting location of
the subtree in memory). Since every subtree of depth $2^k$ fits in a
$\Delta$Node (i.e.,
no subtree is stored across two $\Delta$Nodes), every subtree of depth $2^k$ is 
stored in at most 2 memory blocks of size $B$.

Since the tree has height $T$, $\lceil T / 2^k \rceil$ subtrees of depth $2^k$ 
are traversed in a search and thereby at most  $2  \lceil T / 2^k \rceil$ memory 
blocks are transferred. 

Since a subtree of height $2^{k+1}$ contains more than $B$ nodes, 
$2^{k+1} \geq \log_2 (B + 1)$, or $2^{k} \geq \frac{1}{2} \log_2 (B+ 1)$. 

We have $2^{T-1} \leq N \leq 2^T$ since the tree is a {\em complete} binary tree. 
This implies $ \log_2 N \leq T \leq \log_2 N +1$.  

Therefore, the number of memory blocks transferred in a search is 
$2  \lceil T / 2^k \rceil \leq 4 \lceil \frac{\log_2 N + 1}{\log_2 (B + 1)} \rceil 
= 4 \lceil \log_{B+1} N + \log_{B+1} 2\rceil$ $= O(\log_B N)$, where $N \geq 2$.
\end{proof}

Unlike the conventional vEB layout, the new concurrency-aware vEB layout can solve the concurrency problems that might arise if the whole tree structure must be placed in a contiguous memory allocation. For example, when a conventional vEB layout tree is full, all of the tree structure must be re-allocated into a new bigger contiguous memory; and as a result, the whole tree must be locked to ensure correct concurrent search and update operations. The concurrency-aware vEB layout supports dynamic node allocation and new containers of size $\UB$ can be appended as needed to the existing tree structure whenever the tree is full. Therefore, in the concurrency-aware vEB layout, fine-grained locks can be use as the synchronization mechanism for concurrent tree operations. 

A library of novel locality-aware and energy efficient concurrent search trees based on the new concurrency-aware vEB layout is presented in Section~\ref{sec:search-trees}. The practical information on how to use the library is available in Appendix~\ref{sec:tree-library}.

\subsection{GreenBST} \label{sec:intro}\label{sec:search-trees}
\fbox{
\begin{minipage}{0.96\textwidth}
		\small	
		\textbf{Copyright Notice:} Most material in Section \ref{sec:search-trees}, \ref{sec:hGBST}, \ref{sec:evaluation},  \ref{sec:perfeval}, and \ref{sec:conclusions} is based on the following article \cite{Umar2016}:
	
		Ibrahim Umar, Otto Anshus, and Phuong Ha. Greenbst: Energy-efficient concurrent search tree. In Proceedings of Euro-Par 2016: Parallel Processing: 22nd International Conference on Parallel and Distributed Computing, pages 502--517, 2016. DOI:  10.1007/978-3-319-43659-3\_37
\end{minipage}
}
\newline

Recent researches have suggested that the energy consumption of future 
computing systems will be dominated by the cost of data movement~\cite{Dally11, Vi1, Vi2}. 
It is predicted that for 10nm technology chips, 
the energy required between accessing data in nearby on-chip memory and 
accessing data across the chip, will differ as much as 75$\times$ (2pJ versus 150pJ), 
whereas the energy required between accessing on-chip data and accessing 
off-chip data will only differ 2$\times$ (150pJ versus 300pJ)~\cite{Dally11}. Therefore, in order to 
construct energy-efficient software systems, data structures and 
algorithms must not only be concerned with whether the data 
is on-chip (e.g., in cache) or not (e.g., in DRAM), but 
must consider also data locality in {\em finer-granularity}: where the data is located on the chip.

Concurrent trees are fundamental data structures that are widely used in different contexts such as load-balancing \cite{DellaS00, HaPT07, ShavitA96} and searching \cite{Afek:2012:CPC:2427873.2427875, BronsonCCO10, Brown:2011:NKS:2183536.2183551, Crain:2012:SBS:2145816.2145837, DiceSS2006, EllenFRB10}. 
Concurrent search trees are crucial data structures that are 
widely used as a backend in many important systems such as 
databases (e.g., SQLite \cite{sqlite}), filesystems (e.g., 
Btrfs \cite{Rodeh:2008:BSC:1326542.1326544}), 
and schedulers (e.g., Linux's Completely Fair Scheduler (CFS)), 
among others.
These important systems can access and organize 
data in a more energy efficient manner by adopting the 
energy-efficient concurrent search trees as their backend structures.

Devising fine-grained data locality layout 
for concurrent search trees is challenging, mainly 
because of the trade-offs needed: (i) a platform-specific locality optimization 
might not be {\em portable} (i.e., not work on different platforms while 
there are big interests of concurrent data structures for unconventional platforms~\cite{Ha:2010aa,Ha:2012aa}),
(ii) the usage of transactional memory~\cite{Herlihy:1993aa,Ha:2009aa} and 
multi-word synchronization~\cite{Ha:2005aa,Ha:2003aa,Larsson:2004aa} complicates locality 
because each core in a CPU needs to consistently 
track read and write operations that are performed by the other cores, and (iii) 
fine-grained locality-aware layouts (e.g., van Emde Boas layout) poorly support concurrent update operations. 
Some of the fine-grained locality-aware search trees such as
Intel Fast \cite{KimCSSNKLBD10} and Palm \cite{Sewall:2011aa}
are optimized for a specific platform. Concurrent B-trees (e.g., B-link tree 
\cite{Lehman:1981:ELC:319628.319663}) only perform well if their $B$ size is optimal.
Highly concurrent search trees such as non-blocking concurrent 
search trees~\cite{EllenFRB10,Natarajan:2014:FCL:2692916.2555256} 
and Software Transactional Memory (STM)-based search 
trees~\cite{Afek:2012:CPC:2427873.2427875,Crain:2012:SBS:2145816.2145837}, 
however, do not take into account fine-grained data locality.

Fine-grained data locality for {\em sequential} search trees 
can be theoretically achieved using the van Emde Boas (vEB) 
layout \cite{Prokop99,vanEmdeBoas:1975:POF:1382429.1382477},
which is analyzed using cache-oblivious (CO) models \cite{Frigo:1999:CA:795665.796479}.
An algorithm is categorized as {\em cache-oblivious} for a two-level memory hierarchy if it has no variables that 
need to be tuned with respect to cache size 
and cache-line length, in order to optimize its data transfer complexity, 
assuming that the optimal off-line cache replacement strategy is used.
If a {\em cache-oblivious} algorithm is optimal for an arbitrary two-level memory,
the algorithm is also asymptotically optimal for any adjacent pair 
of available levels of the memory hierarchy 
\cite{Brodal:2004aa}.
Therefore, cache-oblivious algorithms
are expected to be locality-optimized irrespective of 
variations in memory hierarchies, enabling less data transfer
between memory levels and thereby saving energy.
 
However, the throughput of a vEB-based tree when doing
{\em concurrent} updates is lower compared to when it is 
doing {\em sequential} updates. 
Inserting or deleting a node may result in 
relocating a large part of the tree in order to maintain 
the vEB layout. Solutions to this problem have been 
proposed~\cite{BenderFGK05}.
The first proposed solution's structure 
requires each node to have parent-child pointers.
Update operations may result in updating the pointers. Pointers 
will also increase the tree memory footprint.
The second proposed solution uses the exponential tree algorithm~\cite{548472}.
Although the exponential tree is an important theoretical
breakthrough, it is complex~\cite{CormenSRL01}. 
The exponential tree grows exponentially in size, which not only complicates 
maintaining its inter-node pointers, but also exponentially 
increases the tree's memory footprint.
Recently, we have proposed a {\em concurrency-aware vEB layout}~\cite{Umar:2015, deltatreeTR2013}, which has a higher throughput
when doing {\em concurrent} updates compared to when it is 
doing {\em sequential} updates.
In the same study, we have proposed DeltaTree,
a B+tree that uses the concurrency-aware vEB layout. 
We have documented that the concurrency-aware vEB layout 
can improve DeltaTree's
{\em concurrent} search and update throughput over 
a concurrent B+tree~\cite{Umar:2015}. 

\begin{figure}[t]
\centering 
\includegraphics[width=0.4\linewidth]{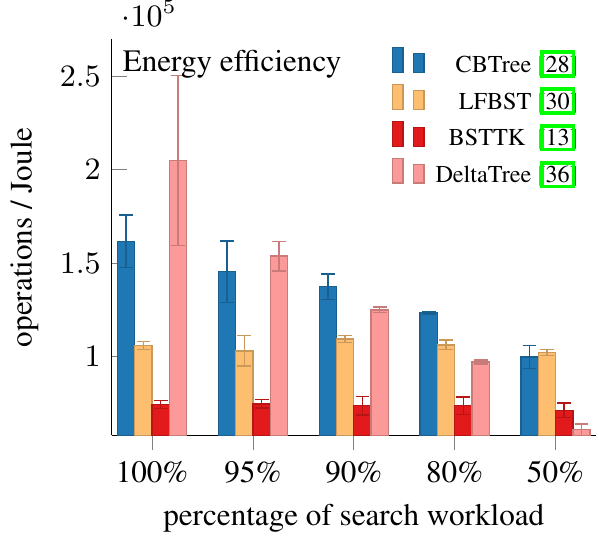}
\caption{Result of 5 million tree operations of decreasing search percentage workloads using 12 cores (1 CPU). DeltaTree's energy efficiency and throughput are lower than the other concurrent search trees after 95\% search workload on a dual Intel Xeon E5-2650Lv3 CPU system with 64GB RAM.}\label{fig:intro-comp}
\end{figure}

Nevertheless, we find 
DeltaTree's throughput and energy efficiency are lower
than the state-of-the-art concurrent search 
trees (e.g., the portably scalable
search tree~\cite{David:2015:ACS:2694344.2694359}) 
for the update-intensive workloads (cf. Figure \ref{fig:intro-comp}).
Our investigation reveals that the cost of 
DeltaTree's runtime maintenance 
(i.e., rebalancing the nodes) dominates the execution time.
However, reducing the frequency of the runtime maintenance 
lowers DeltaTree's energy efficiency and 
throughput for the search-intensive workloads,
because DeltaTree nodes will then be sparsely populated and frequently
imbalanced.
Note that DeltaTree energy efficiency and throughput are already
optimized for the search intensive 
workloads~\cite{Umar:2015,GBST-PPOPP2016}.  

In this section, we present \textit{GreenBST}, an energy-efficient 
concurrent search tree that is more energy efficient  
and has higher throughput for both the concurrent 
search- and update-intensive workloads than the other concurrent search 
trees (cf. Table \ref{tbl:algos}). GreenBST applies 
two significant improvements on 
DeltaTree in order to lower the cost of the tree runtime maintenance and reduce the tree memory footprint. 
First, unlike DeltaTree, GreenBST rebalances incrementally 
(i.e., fine-grained node rebalancing). 
In DeltaTree, the rebalance procedure has to 
rebalance {\em all} the keys within a node
and the frequency of rebalancing cannot be lowered as they are necessary
to keep DeltaTree in good shape (i.e., keeping DeltaTree's height low and its 
nodes are densely populated).
Incremental rebalance makes the overall cost 
of each rebalance in GreenBST lower than DeltaTree.
Second, we reduce the tree memory footprint by 
using a different layout for GreenBST's leaf nodes ({\em heterogeneous} layout). 
Reduction in the memory footprint also reduces GreenBST's data transfer, which 
consequently increases the tree's energy efficiency and throughput
in both update- and search- intensive workloads. We will show that with these
improvements, GreenBST can become up to 195\% more energy efficient than DeltaTree
(cf. Section \ref{sec:evaluation}).
 
We evaluate GreenBST's energy efficiency (in operations/Joule) and throughput (in operations/second) 
against six prominent concurrent search trees (cf. Table \ref{tbl:algos}) using a parallel micro-benchmarks 
{\em Synchrobench}~\cite{Gramoli:2015:MYE:2688500.2688501} and STAMP database 
benchmark {\em Vacation}~\cite{4636089} (cf. Section \ref{sec:evaluation}). 
We present memory and cache profile data to provide insights into what make GreenBST energy efficient
(cf. Section \ref{sec:evaluation}). We also provide insights into what are the key ingredients for 
developing energy-efficient data structures in general (cf. Section \ref{sec:perfeval}).

\begin{table}[!t]
\begin{center}
\scriptsize
\caption{List of the evaluated concurrent search tree algorithms.}
\begin{tabular}{c l l p{4.2cm} l p{1.5cm} p{1.5cm}}
\hline
\bf \# & \bf Algorithm & \bf Ref & \bf Description & \bf	Synchronization& \bf Code \mbox{authors} & \bf Data structure \\
\hline
1	&	SVEB	& \cite{BrodalFJ02}    & {\em Conventional} vEB layout search tree & global mutex & U. Aarhus & binary-tree	\\
2	&	CBTree	&\cite{Lehman:1981:ELC:319628.319663}	& Concurrent B-tree (B-link tree)&  lock-based & U. Troms\o & b+tree \\				
3	&	Citrus	&  \cite{Arbel:2014:CUR:2611462.2611471}	&	RCU-based search tree  & lock-based &  Technion &	binary tree	\\
4	&	LFBST	&\cite{Natarajan:2014:FCL:2692916.2555256} & Non-blocking binary search tree & lock free	&	UT Dallas & binary tree\\
5	&	BSTTK 	&\cite{David:2015:ACS:2694344.2694359} 	& Portably scalable concurrent search tree & lock-based & EPFL	& binary tree \\
6	&	DeltaTree	& \cite{Umar:2015} &	Locality aware concurrent search tree	&	lock-based	&	U. Troms\o 		&	b+tree\\	
7	&	\bf GreenBST&  - &	Improved locality aware concurrent search tree & lock-based &  this paper &	b+tree	\\
\hline
\end{tabular}
\end{center}
\label{tbl:algos}
\end{table}

\paragraph*{\bf Our contributions.}
Our contributions are threefold:
\begin{enumerate} \itemsep1pt \parskip0pt \parsep0pt

  \item We have devised a new {\em portable fine-grained locality-aware} concurrent search trees, 
  \textit{GreenBST} (cf. Section \ref{sec:hGBST}). GreenBST are based on our proposed concurrency-aware 
  vEB layout~\cite{Umar:2015} with the two improvements, namely the incremental node 
  rebalance and the heterogeneous node layouts. 
      
  \item We have evaluated GreenBST throughput (in operations/second) and energy efficiency 
  (in operations/Joule) with six prominent concurrent search trees (cf. Table \ref{tbl:algos}) on
  three different platforms (cf. Section \ref{sec:evaluation}).
  We show that compared to the state of the art concurrent search trees, GreenBST has the best
  energy efficiency and throughput across different platforms for most of the concurrent search- and update- intensive workloads.

  GreenBST code and evaluation benchmarks are available at: {\small \url{https://github.com/uit-agc/GreenBST}}.

  \item We have provided insights into how to develop energy-efficient data structures in general (cf. Section \ref{sec:perfeval}). 
  
\end{enumerate}

\subsection{GreenBST design overview}\label{sec:overview}\label{sec:hGBST}
We devise GreenBST based on the concurrency-aware vEB layout~\cite{Umar:2015} (cf. Section \ref{sec:relaxed-veb}), 
based on the idea that the layout has the same 
data transfer efficiency between two memory levels as the {\em conventional} 
sequential vEB layout~\cite{Prokop99,vanEmdeBoas:1975:POF:1382429.1382477}.
Therefore, theoretically, we can use the concurrency-aware layout within a {\em concurrent} 
search tree to 
minimize data movements between memory levels, 
which can eventually be a basis of an energy-efficient concurrent search tree.

GreenBST and DeltaTree is designed by devising three major
strategies, namely it uses a common GNode map instead of pointers or 
arithmetic-based implicit BST (i.e., a node's successor 
memory address is calculated {\em on the fly}) for node traversals, crafting an
efficient inter-node connection, and using balanced layouts.
In addition to the shared common traits with DeltaTree, 
GreenBST also employs two new major strategies:
(i) GreenBST uses incremental GNode rebalance and (ii) GreenBST 
uses heterogeneous GNode layouts.

\begin{figure}[!t] \centering 
\includegraphics[width=0.45\textwidth]{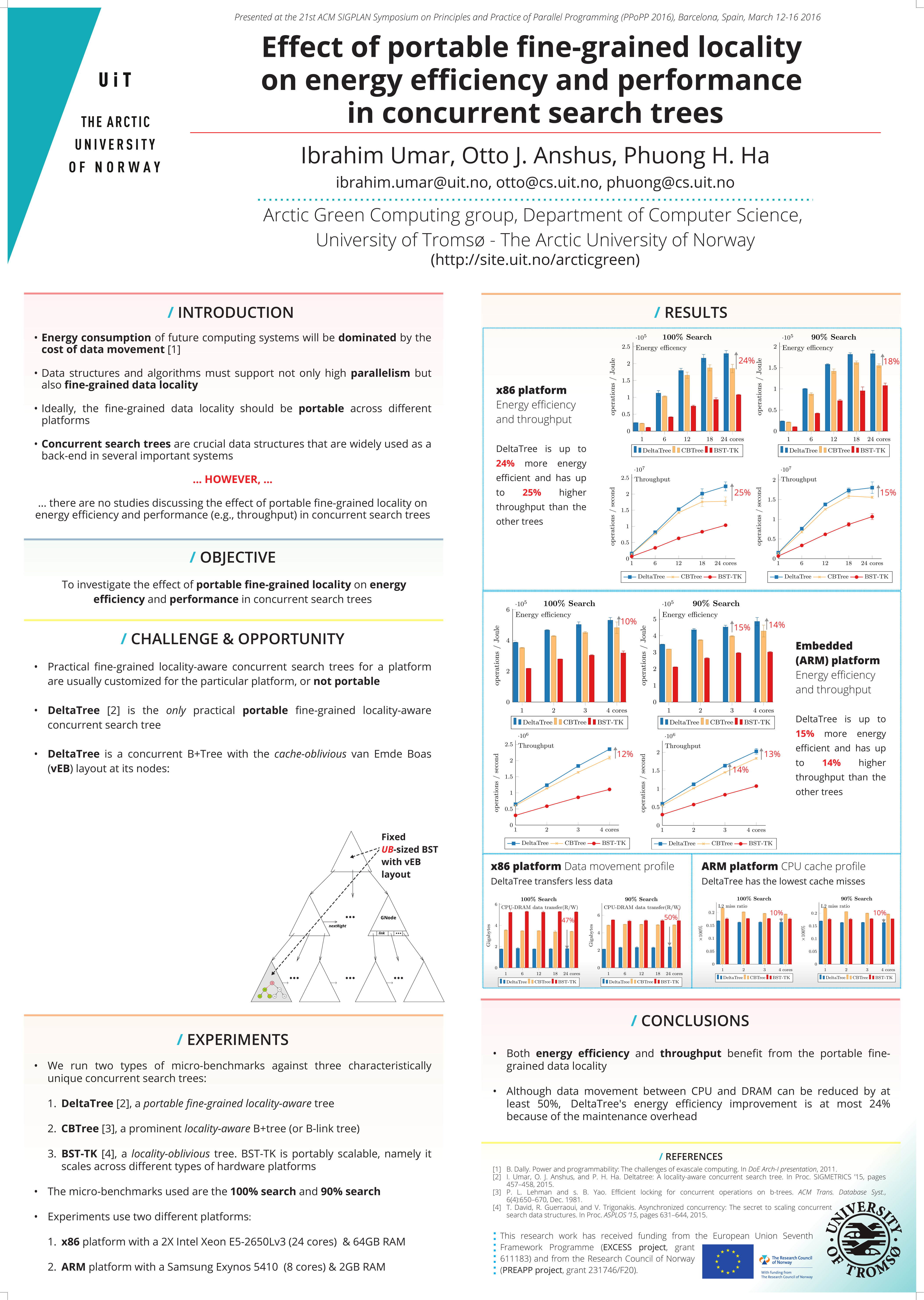}
\caption{Illustration of the GreenBST layout.}
\label{fig:gbst-layout}
\end{figure}

\subsubsection{Data structures.}

GreenBST is a collection of GNodes where each GNode
consists of $\UB$ internal \textbf{nodes} that hold the tree keys and
a $^1/_2\UB$ \textbf{link} array that links the GNode internal leaf 
nodes to another GNode's root node (cf. Figure~\ref{fig:gbst-layout}). 
The chain of GNodes formed
a B+tree (to avoid confusion, from this point onward, we refer to the "fat" nodes
of GreenBST as GNode and the GNode's
internal tree nodes as {\em internal nodes} or {\em nodes}).
Each GNode also contains  a lock (\textbf{locked}); a {\bf rev} counter that
is used for optimistic concurrency~\cite{Kung:1981aa};
\textbf{nextRight} variable, which is a pointer that points to the GNode's right sibling; 
and \textbf{highKey} variable, which contains the lowest key member of the right sibling GNode.
These last four variables are used for GreenBST concurrency control.
 
\subsubsection{Cache-resident map instead of pointers or arithmetic implicit array.}  \label{sec:mapdesc}

\begin{figure}[!t] \centering 
\includegraphics[width=0.95\textwidth]{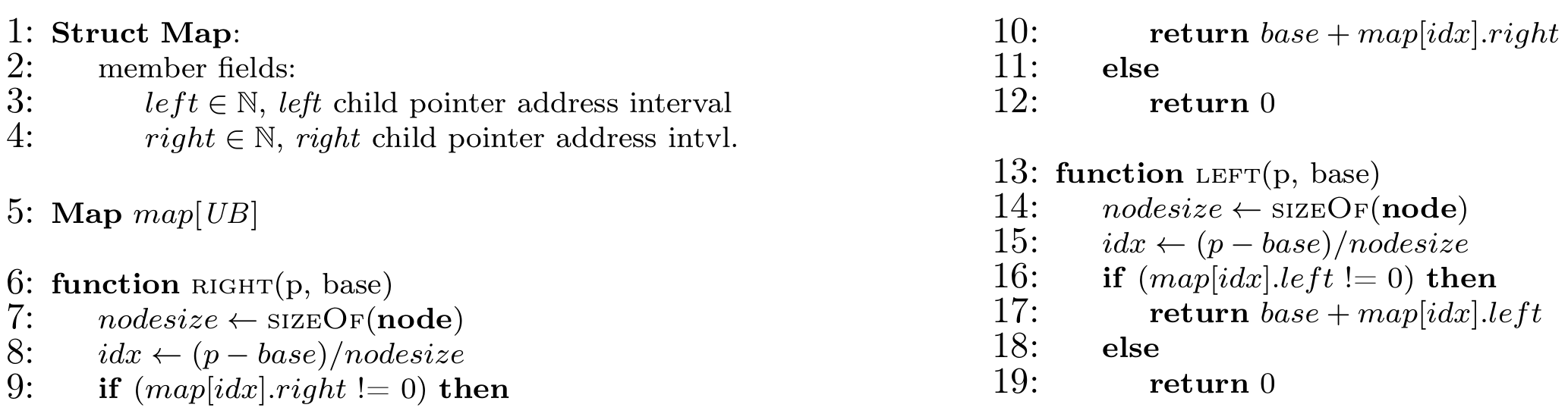}
\caption{Map structure and the \textit{mapping} functions.}
\label{lst:map-func}
\end{figure}

GreenBST does not use pointers to link between 
its internal nodes, instead it uses a single map-based implicit BST array.
This approach is unique to the concurrency-aware vEB layout as it benefits from the
usage of the fixed-size GNodes. The usage of pointers and arithmetic-based implicit array
in cache-oblivious (CO) trees has been previously studied~\cite{BrodalFJ02} and
both are found to have weaknesses. Pointer-based CO tree search operation
is slow, mainly because of overheads in every data transfer between memory 
(although CO tree can minimize data transfers, the inclusion of pointers
can lower the amount of meaningful data (e.g., keys) in each block 
transfer). The implicit array
that uses arithmetic calculation for every node traversal may increase the cost
of computation, especially if the tree is big.

The cache-resident-maps technique emulates BST's
(left and right) child traversals inside a GNode using a combination of
a cache-resident GNode \textit{map} structure and \textsc{left} and \textsc{right}
functions (cf. Figure \ref{lst:map-func}).
The
\textsc{left} and \textsc{right} functions, given an arbitrary node $v$ and
its GNode's root memory addresses, return the addresses 
of the left and right child nodes of $v$, or $0$ if $v$ has no children 
(i.e., $v$ is an internal leaf node of a GNode). 
The \textsc{left} and \textsc{right} operations throughout GreenBST share
a common cache-resident \textit{map} instance (cf. 
Figure \ref{lst:map-func}, line 5). All GNodes use 
the same fixed-size vEB layout, so
only one \textit{map} instance with size
$\UB$ is needed for all traversing operations. This makes GreenBST's 
memory footprint small and keeps the frequently used \textit{map} instance in cache.

Note that the mapping approach does not induce memory fragmentation.
This is because the mapping approach applies only for each GNode,
and \textit{map} is only used to point to internal nodes within a GNode. 
GNode layout uses a contiguous memory block of fixed size $\UB$ 
and {\em update} operations can only change the values of GNode internal nodes 
(e.g., from EMPTY to a key value in the case of insertion), 
but cannot change GNode's memory layout.

\subsubsection{Inter-GNode connection.}  \label{sec:interdesc}

\begin{figure}[!t] \centering 
\includegraphics[width=0.95\textwidth]{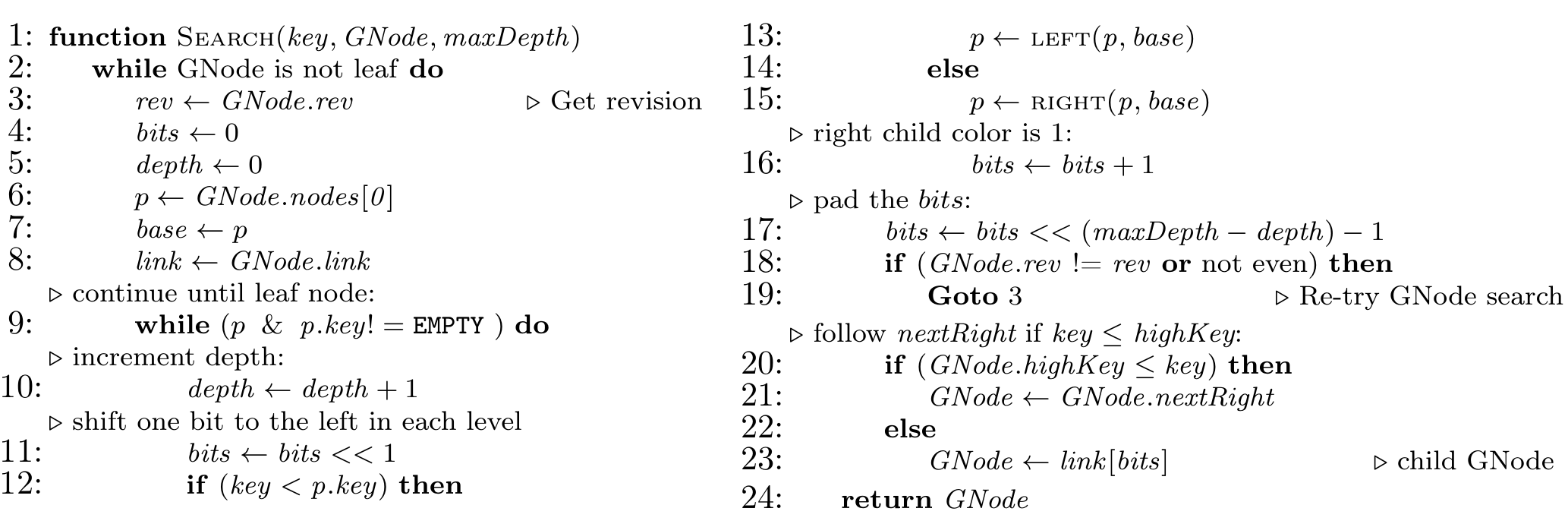}
\caption{Search within pointer-less GNode. This function will return the {\em leaf} GNode containing
the searched key. From there, an implicit array search using \textsc{left} and \textsc{right} functions
is adequate to pinpoint the key location. The search operations are utilizing both the 
\textbf{nextRight} pointers and \textbf{highKey}
variables to handle concurrent search even during GNode split.}
\label{lst:scannode-func}
\end{figure}

To enable traversing from a GNode to its child GNodes, 
we develop a new inter-GNode connection mechanism. 
We logically assign binary values to GNode's internal edges so that each path from 
GNode root to an internal leaf node is represented by a unique bit-sequence. 
The bit-sequence is then used as an index in a  {\bf link} array containing pointers to child GNodes. 
As GNode's internal node has only left and right edges, we assign 0 and 1 to the left and right edges, respectively. 
The maximum size of the bit representation is GNode's height or $\log(\UB)$ bits. 
We allocate a link pointer array whose size is half $\UB$ length. 
The algorithm in Figure \ref{lst:scannode-func} explains how the inter-GNode connection works in a pointer-less search function.

\subsubsection{Balanced and concurrent tree.} \label{sec:balanced}
GreenBST adopts the concurrent algorithms of B-link tree
that provides lock-free search operations and 
adopts the B+tree structure for its high-level structure~\cite{Lehman:1981:ELC:319628.319663}. However,
unlike B-link tree, GreenBST is an in-memory tree and uses optimistic concurrency
to handle lock-free concurrent search operations even in the occurrences of
the unique "in-place" GNodes maintenance operations.

Similar to B-link tree, GreenBST  {\em insert} operations build the tree from the bottom up,
but unlike B-link tree, GreenBST insert operation can trigger
{\em rebalance} operation, a unique GreenBST feature 
to maintain GNode's small height.

Function \textsc{rebalance($T_i$)}
is responsible for rebalancing a
GNode $T_i$ after an insertion.
If a new node $v$ is inserted at the {\em last level} node of a GNode, 
that GNode is rebalanced to a complete BST. A rebalance operation sets all
GNode leaves node height to $\lfloor\log N\rfloor + 1$,
where $N$ is the count of the GNode's internal nodes
and $N\leq\UB$.
Note that this is the default rebalance strategy used by DeltaTree,
the incremental rebalance used by GreenBST is explained
further in this section.

The {\em delete} operation in GreenBST 
simply marks the requested key ($v$) as deleted. This function 
fails if $v$ does not exist in the tree or $v$ is already marked.
GreenBST does not employ merge operation between GNodes 
as node reclamation is done by the rebalance and split operations. 
The offline memory reclamation techniques used in the B-link tree 
\cite{Lehman:1981:ELC:319628.319663} can be deployed to merge nearly empty 
GNodes in the case where delete operations are the majority.
Our new search trees aim at workloads dominated by search operations.  

GreenBST concurrency control uses locks and  
\textbf{nextRight} and \textbf{highKey} variables
to coordinate between search and update operations~\cite{Lehman:1981:ELC:319628.319663}
in addition to \textbf{rev} variable that is used for the search's optimistic concurrency.
When a GNode needs to be maintained by either rebalance or split operations,
the GNode's {\bf rev} counter is incremented by one before the operation
starts. The GNode 
counter is incremented by one again after the maintenance operation finishes.
Note that
all maintenance procedures happen when the lock is still held by the insert
operation and therefore, only one operation may update {\bf rev} counter
and maintain a GNode at a time.
The usage of {\bf rev} counter is to prevent search from returning a wrong
key because of 
the "in-place" GNode maintenance operation. Advanced locking techniques \cite{HaPT07_JSS, KarlinLMO91, LimA94} can also be used.

The {\em search} operation in GreenBST uses a combination of function
\textsc{Search} (cf. Figure 
\ref{lst:scannode-func}) and an implicit tree traversal using a map. 
Function \textsc{Search} 
traverses the tree from the internal root node of the root GNode down to a leaf GNode, 
at which the search is handed over to the 
implicit tree traversal to find the searched key within the leaf GNode. 
GreenBST {\em search} operation
does not wait nor use lock, even in the occurrence
of the concurrent updates.

GreenBST {\em search} uses optimistic concurrency~\cite{Kung:1981aa} to ensure the operation 
always returns the correct answer even if it arrives at
a GNode that is undergoing the in-place
maintenance operation (i.e., {\em rebalance} and {\em split}).
First, before starting to traverse a GNode, a search operation records the GNode {\bf rev} counter. 
Before following a link to a child GNode or returning a key,
the search operation re-checks again the counter. If the current 
counter value is an odd number or if it is not equal to 
the recorded value, the search operation needs to retry search as this
indicates that GNodes are being or have been maintained. 

\subsubsection{Incremental Rebalance.} 
As explained earlier, the rebalance in DeltaTree always
involves $\UB$ keys, which eventually makes insertions require
amortized $\mathcal{O}(\UB)$ time. GreenBST borrows
the incremental rebalance idea similar to the conventional vEB layout~\cite{BrodalFJ02}
that has the amortized $\mathcal{O}((\log^2\UB)/(1-\Gamma_1))$ time if
used in GreenBST.
However, unlike the conventional vEB layout that 
might have to rebalance the whole tree, 
we only apply the incremental rebalance to GNodes. 
To explain the idea, we denote {\em density($w$)} as the 
ratio of number of keys inside a subtree rooted at $w$ divided by
the number of maximum keys that a subtree rooted at $w$ can hold.
For example, a 
subtree with root $w$ that is located three levels away from an 
internal leaf of a GNode can hold at most $2^3-1$ keys. 
If the subtree only contains 3 keys, then
{\em density($w$)} $= ^3/_7 = 0.42$.
We also denote a {\em density threshold} 
$0<\Gamma_1<\Gamma_2<...<\Gamma_H=1$,
where $H$ is the GNode's height.
The main idea is: after a new key is inserted at an 
internal leaf position $v$,
we find the nearest ancestor $w$ of $v$ where
{\em density($w$)} $\leq\Gamma_{\mathit{depth}(w)}$ and 
{\em depth($w$)} is the level where $w$ resides, counted from
the root of the GNode. If that $w$ is found, 
we rebalance the subtree rooted at $w$.

\subsubsection{Heterogeneous GNodes.} 
We aim to reduce the overhead of rebalancing and 
lower the GreenBST height 
with the usage of different layouts for the leaf GNodes.
All DeltaTree's GNodes use the leaf-oriented
BST layout, hence DeltaTree uses {\em homogeneous} GNodes. 
Unlike DeltaTree, leaf GNodes in GreenBST use 
the internal tree layout instead of the external (or leaf-oriented) tree layout.
GreenBST uses {\em heterogeneous} GNodes as there are two difference 
GNode layouts used.
In the internal tree layout, keys are located in all nodes of a tree, 
while in the external tree layout, 
keys are only located in the leaf nodes. The reasoning behind this choice is
although leaf-oriented GNodes layout is required for 
inter-GNode connection 
(i.e., between parent- and child- GNodes),
leaf GNodes do not have any children and therefore, 
do not need to adopt same structure as the other GNodes.

\subsection{GreenBST experiments}\label{sec:evaluation}
\begin{table}[!t]
\centering
\scriptsize
\caption{We use 4 different benchmark platforms to evaluate the trees' energy efficiency and performance.}
\begin{tabular}{| c || p{3.5cm} | p{3cm} | p{3cm} | p{3.2cm} |}
\hline
\bf Name & 	\bf HPC &  		\bf		ARM & 			\bf	MIC  & \bf Myriad2\\
\hline
\hline
\bf System &	Intel Haswell-EP &	Samsung Exynos5 Octa &	Intel Knights Corner & Movidius Myriad2\\
\hline
\bf Processors&	2x Intel Xeon E5-2650L v3 &	1x Samsung Exynos 5410		&1x Xeon Phi 31S1P & 1x Myriad2 SoC \\
\hline
\bf \# cores&	24 (without hyperthreading)&	$-$ 4x Cortex A15 cores \newline $-$ 4x Cortex A7 cores	& 57 (without hyperthreading) &	$-$ 1x LeonOS core \newline $-$ 1x LeonRT core \newline $-$ 12x Shave cores \\
\hline
\bf Core clock&	2.5 GHz& $-$ 1.6 GHz (A15 cores) \newline $-$ 1.2 GHz (A7 cores) &	1.1 GHz &  600 MHz\\
\hline
\bf L1 cache	& 32/32 KB I/D	& 32/32 KB I/D	& 32/32 KB I/D		& $-$ LeonOS (32/32 KB I/D) \newline  $-$ LeonRT (4/4 KB I/D) \newline $-$ Shave (2/1 KB I/D)\\
\hline
\bf L2 cache	& 256 KB	& $-$ 2 MB (shared, A15 cores) \newline $-$ 512 KB (shared, A7 cores)  & 512 KB & $-$ 256 KB (LeonOS) \newline  $-$ 32 KB (LeonRT) \newline $-$ 256 KB (shared, Shave)\\
\hline
\bf L3 cache	& 30 MB (shared)&	-	& - & 2MB "CMX" (shared) \\
\hline
\bf Interconnect&	8 GT/s Quick Path Interconnect (QPI)&	CoreLink Cache Coherent \newline Interconnect (CCI) 400&	5 GT/s Ring Bus Interconnect & 400 GB/sec Interconnect\\
\hline
\bf Memory&	64 GB DDR3&	2 GB LPDDR3&	6 GB GDDR5 & 128 MB LPDDR II\\
\hline
\bf OS&		Centos 7.1 (3.10.0-229 kernel)	&Ubuntu 14.04 (3.4.103 kernel)	& Xeon Phi uOS (2.6.38.8+mpss3.5) & RTEMS (MDK 15.02.0)\\
\hline
\bf Compiler &	GNU GCC 4.8.3 & GNU GCC 4.8.2 & Intel C Compiler 15.0.2 & Movidius MDK 15.02.0\\
\hline
\end{tabular}
\label{tbl:platforms}
\end{table}

We run several different benchmarks to evaluate GreenBST throughput and energy efficiency. 
We combine the benchmark results
with the last level cache (LLC) and memory profiles of the trees
to draw a conclusion of whether GreenBST 
improved fine-grained data locality layout (i.e., heterogeneous layout) and concurrency
(i.e., lower overall cost of runtime maintenance) over DeltaTree
are able to
make GreenBST the most energy-efficient tree across different platforms. 
In addition, we would like to also conclude whether GreenBST improvements over 
DeltaTree are useful to increase GreenBST's energy efficiency 
when processing the update-intensive workloads. Note that we are not collecting
the computation profiles (e.g., Mflops/second) because all the tree operations are 
data-intensive instead of compute-intensive.

We conduct an experiment on GreenBST and several 
prominent concurrent search trees (cf. Table \ref{tbl:algos}) 
using parallel micro-benchmark that is 
based on Synchrobench~\cite{Gramoli:2015:MYE:2688500.2688501} (cf. Figure \ref{fig:combEval}). 
The trees' LLC and memory profiles during the micro-benchmarks are collected and presented in Figure \ref{fig:combEval}d and \ref{fig:combEval}e, respectively. 
To investigate GreenBST behavior in real-world applications, we implement GreenBST and CBTree as the backend structures in the STAMP database benchmark \texttt{\small Vacation}~\cite{4636089}, alongside the \texttt{\small Vacation}'s original backend structure red-black tree (rbtree) (cf. Figure \ref{fig:vacation}).

All the experimental benchmarks are conducted on an Intel high performance computing ({\bf HPC}) platform 
with 24 core 2$\times$ Intel Xeon E5-2650Lv3 CPU and 64GB of RAM, an {\bf ARM} embedded platform 
with an 8 core Samsung Exynos 5410 CPU and 2GB of RAM (Odroid XU+E), 
an accelerator platform based on the Intel Xeon Phi 31S1P with 57 cores and 6GB of RAM ({\bf MIC} platform),
and a specialized computing platform ({\bf Myriad2} platform). The detailed specifications for the testing 
platforms can be found in Table~\ref{tbl:platforms}.
For the parallel micro-benchmark, the trees are pre-initialized with several initial keys
before running 5 million operations of 100\% (search-intensive) and 50\% searches (update-intensive), respectively.
The initial keys given to both the ARM and MIC platforms are 
$2^{22}$ keys and to the HPC platform are $2^{23}$ keys.
All experiments are 
repeated at least 5 times to guarantee consistent results.

{\em Energy efficiency metrics} (in operations/Joule) 
are the energy consumption divided by the number of operations and {\em throughput metrics} 
(in operations/second) are the number of operations 
divided by the maximum time for the threads to finish the whole operations.
Energy metrics are collected from the on-board power measurement 
on the ARM platform, Intel RAPL interface on the HPC platform, and micras sysfs 
interface (i.e., \texttt{/sys/class/micras/power}) on the MIC platform.

\begin{figure}[!hpt]
\centering
\includegraphics[width=\linewidth]{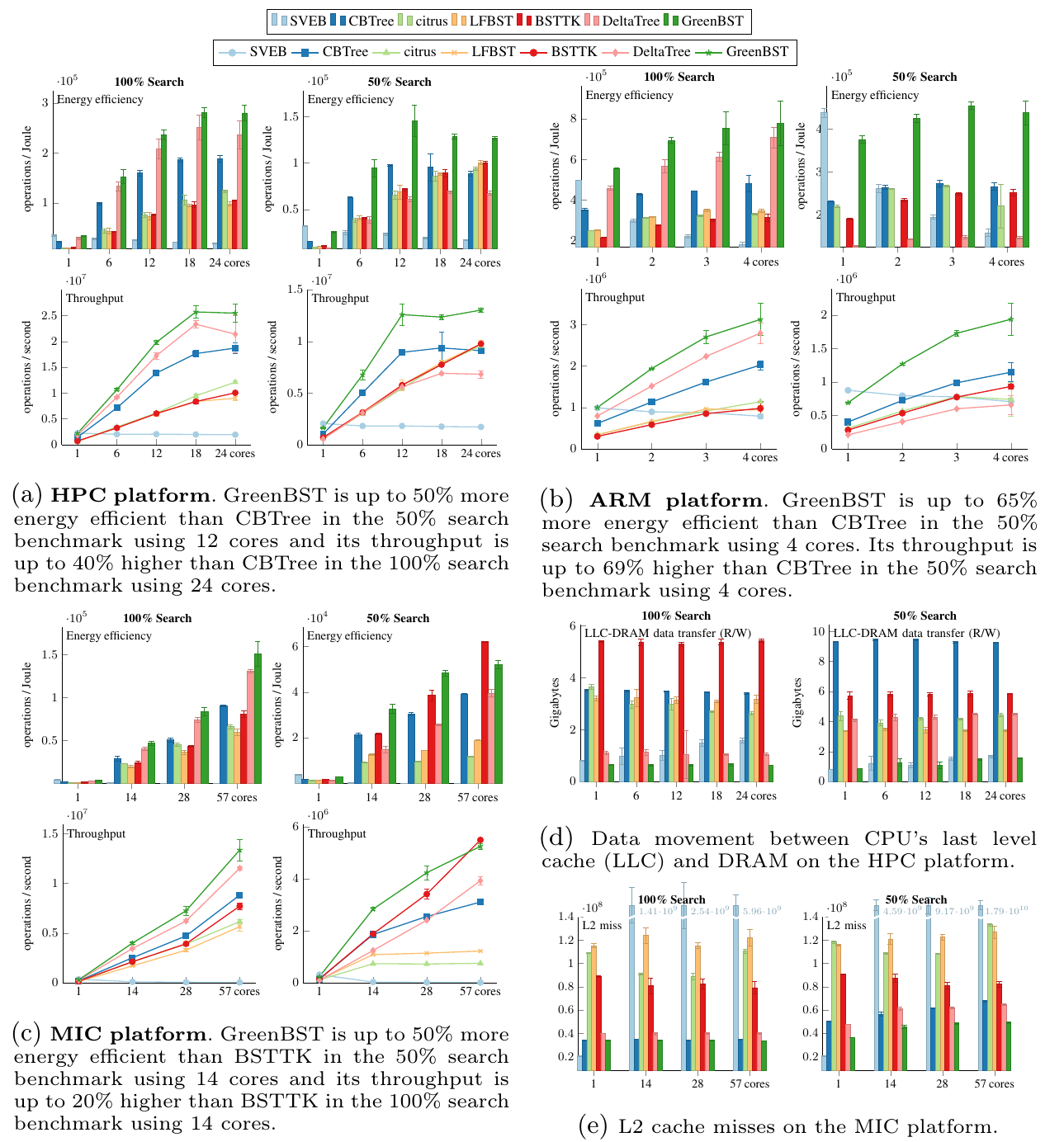}
\caption{(a,b,c) Energy efficiency and throughput comparison of the trees.
On the HPC platform, DeltaTree and GreenBST energy efficiency and throughput decreases in the 50\% search benchmark using 18 and 24 cores (i.e., with 2 chips) because of the coherence overheads between two CPUs (cf. Section \ref{sec:multicpu-issue}). In the 50\% search benchmark using 57 cores (MIC platform), BSTTK energy efficiency and throughput
beats GreenBST by 20\% because of the coherence overheads in the MIC platform (cf. Section \ref{sec:multicpu-issue}).
(d) LLC-DRAM data movements on the HPC platform, collected from the CPU counters using Intel PCM. (e) L2 cache miss counter on the MIC platform, collected using PAPI library.}
\label{fig:combEval}
\end{figure}

\begin{table}[!t]
\centering
\caption{The tree memory footprint after $2^{23}$ integer keys insertion on the HPC platform.}\label{tbl:size}
\scriptsize
\begin{tabular}{|l|c|c|c|c|c|c|c|}
\hline
\diaghead{\theadfont Diag Column Head II}
{ }{Tree name} & SVEB &CBTree & citrus & LFBST & BSTTK & DeltaTree & \bf GreenBST \\
\hline
Memory used (in {\bf GB})& $0.1$ & $0.4$ & $0.8$ & $0.7$ & $1.0$ & $0.6$ & $\bf 0.4$\\
\hline
\end{tabular}
\end{table}

\subsubsection{Experimental results on HPC, ARM, and MIC platforms} Based on the results in Figure \ref{fig:combEval} and \ref{fig:vacation}, GreenBST's energy efficiency and throughput are 
the highest compared to DeltaTree and the other trees.
Because of its {\em incremental rebalance}, GreenBST outperforms DeltaTree
(and the other trees) in the update-intensive workloads. With
its {\em heterogeneous layout}, GreenBST is able to 
outperform DeltaTree in the search-intensive workloads.
GreenBST energy efficiency and throughput
are up to 195\% higher than that of DeltaTree for the update intensive benchmark
and up to 20\% higher for the search intensive benchmark (cf. Figure \ref{fig:combEval}b).
Compared to the other trees, GreenBST energy efficiency and throughput 
are up to 65\% and 69\% higher, respectively. Note that CBTree (B-link tree) 
is a highly-concurrent B-tree variant that it's still used as a backend in popular database systems such as PostgreSQL.

The reason behind GreenBST good results is that
GreenBST's data transfer (cf. Figure \ref{fig:combEval}d)  and LLC misses (cf. Figure \ref{fig:combEval}e)
are among the lowest of all the trees. We would like to emphasize that even 
GreenBST memory footprint is the same to that CBTree (cf. Table \ref{tbl:size}), 
GreenBST data transfer is significantly lower than CBTree's.
These facts prove that the combination of 
locality-aware layout and the optimizations that GreenBST has over 
DeltaTree are beneficial
to both fine-grained locality and concurrency, which are the key ingredients 
of an energy-efficient concurrent search tree.

\begin{figure}[!t]
\centering
\resizebox{\textwidth}{!}{\includegraphics{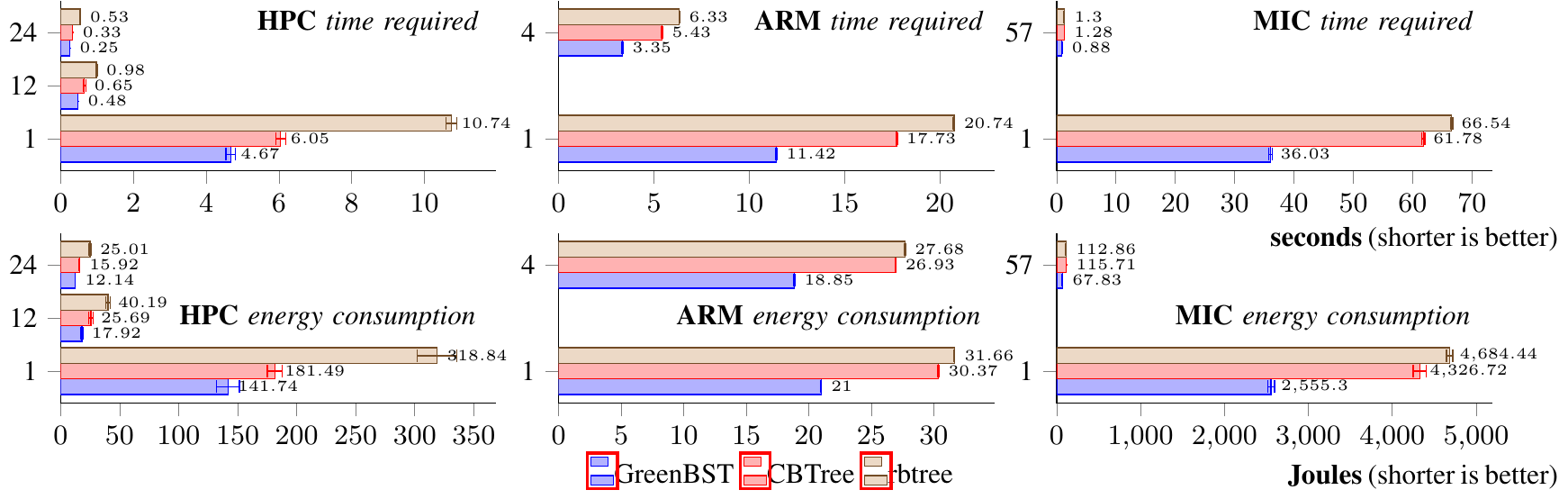}}
\caption{GreenBST energy efficiency and throughput against CBTree and STAMP's built-in red-black tree (rbtree) for the vacation benchmark.
At best, GreenBST consumes 41\% less energy and requires 42\% less time than CBTree (in the 57 clients benchmark on the MIC platform).
\label{fig:vacation}\label{fig:vacation-arm}\label{fig:vacation-mic}
}
\end{figure}

\subsubsection{Experimental results on Myriad2 platform}

We have implemented DeltaTree and GreenBST that work on the Myriad2 platform by crafting a new concurrency control for the trees.  A new concurrency is required because Myriad2 platform does not support atomic operations and has a limited number of usable hardware mutexes. Therefore, to circumvent these limitations, we create a new concurrency control scheme that works similarly to a ticket lock mechanism. In this scheme, we utilize LeonRT processor as a lock manager for the shaves. With LeonRT acting as a lock manager, all shaves need to request a DeltaNode or a GNode lock from LeonRT before it can lock the DeltaNode or GNode for update and maintenance operation. Our locking technique implementation uses only a shared array structure with $2\times\mathit{sv}$ size, where $\mathit{sv}$ is the number of active shaves. For low latency lock operations, we put this lock structure in the Myriad2's CMX memory.
All other DeltaTree and GreenBST structures are unchanged (e.g., the tree itself) and placed in the DDR memory.

We tested our
GreenBST and DeltaTree implementations on Myriad2 against the concurrent B+tree (B-link tree) \cite{Lehman:1981:ELC:319628.319663}.
The B-link tree implementation (CBTree) also utilized the same locking technique and memory placement strategy as GreenBST and DeltaTree.

Figure \ref{fig:ene-myriad2} shows that the energy efficiency of GreenBST 
is up to 4$\times$ better than that of CBTree in the 100\% search using 12 shaves on the Myriad2 platform. In terms of 
throughput on the Myriad2 platform, Figure \ref{fig:res-myriad2} indicates that 
GreenBST has up to 4$\times$ more throughput than CBTree in the 100\% search case
when using all available 12 shaves.

\begin{figure}[!t] 
\centering 
\includegraphics[width=0.45\linewidth]{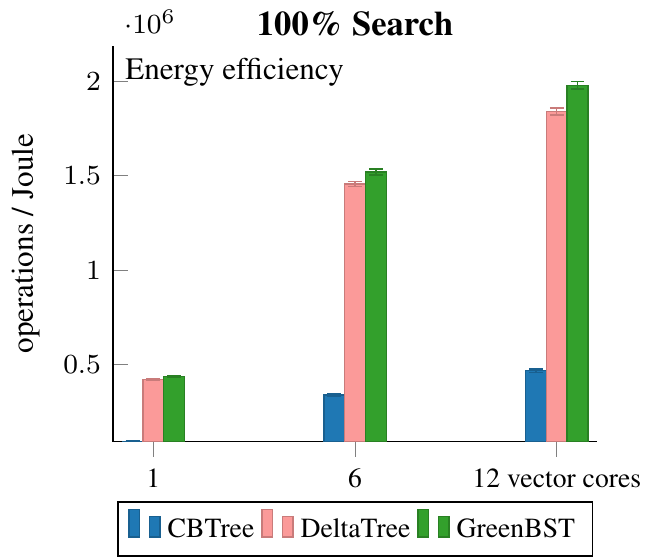}
\caption{Energy comparison using 
$2^{20}$ initial values on an Myriad2 platform. DeltaTree is up to 4$\times$ more
energy efficient than CBTree in 100\% search operation with 12 shaves.}\label{fig:ene-myriad2}
\end{figure}

\begin{figure}[!t]
\footnotesize 
\centering 
\includegraphics[width=0.45\linewidth]{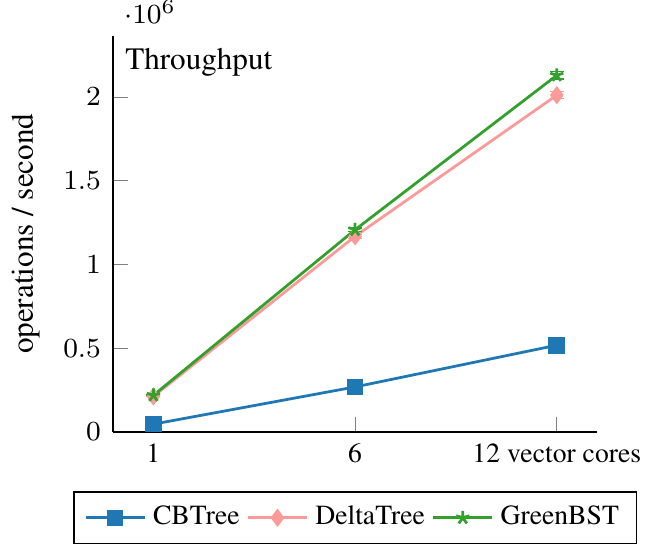}
\caption{Throughput comparison using 
$2^{20}$ initial values on an Myriad2 platform. DeltaTree is up to 4$\times$ faster 
than CBTree in 100\% search operation with 12 shaves.}\label{fig:res-myriad2}
\end{figure}

\subsection{Discussions} \label{sec:perfeval}

Some of the benchmark results show that besides data movements,
efficient concurrency control is also necessary in order to produce energy-efficient
data structures.
For example, the conventional vEB tree (SVEB) always transferred the smallest amount of
data between memory and the CPU, but unfortunately, its energy efficiency 
and throughput
failed to scale when using 2 or more cores.
SVEB is not designed for concurrent operations and an inefficient concurrency control 
(a global mutex) has to be implemented in order to include the tree in this study
(note that we are unable to use a more fine-grained concurrency because SVEB 
uses recursive layout in a contiguous memory block).
Therefore, even if SVEB has the smallest amount of data transfer 
during the micro-benchmarks, 
the concurrent cores have to spend a lot of time waiting and competing for 
a lock.
This is inefficient as a CPU core still consumes power (e.g., static power) 
even when it is waiting (idle).

Finally, an important lesson that we have learned is that minimizing overheads
in locality-aware data structures can reduce the structure's energy consumption. 
One of the main differences between DeltaTree and GreenBST is that DeltaTree 
uses the homogeneous (leaf-oriented) layout, while GreenBST 
does not. Leaf-oriented GNodes increases
DeltaTree's memory footprint by 50\% as compared 
to GreenBST (cf. Figure \ref{fig:combEval}e) and
has caused higher data transfer between LLC and DRAM (cf. Figure \ref{fig:combEval}d).
Bigger leaf size also increases maintenance cost for each leaf
GNode, because there are more data that need to be arranged in every rebalance or split 
operation, which leads to lower update concurrency.
Therefore, DeltaTree energy efficiency and throughput are lower than GreenBST.

\paragraph*{Inter-CPU and many-core coherence issue} \label{sec:multicpu-issue}
Our experimental analysis has revealed that multi-CPU and 
many-core cache coherence, if triggered, 
can degrade concurrent update throughput and energy efficiency of 
the locality-aware trees. Figure \ref{fig:combEval}a shows the "dips" in 
GreenBST's 50\% update energy efficiency and throughput on the HPC platform
(i.e., in the {\it 50\% update/18 cores} and {\it 50\% update/24 cores} cases). 
Figure \ref{fig:combEval}c also
shows that BSTTK beats GreenBST in the {\it 50\% update/57 cores} case
on the MIC platform.

Using the CPU performance counters, we have found 
that the GreenBST concurrent updates frequently triggered the 
inter-CPU coherency mechanism.
In the HPC platform, coherency mechanism causes heavy bandwidth saturation in 
the CPU interconnect. In the MIC platform, 
it causes most of the L2 data cache misses  
to be serviced from other cores and saturates the platform's 
bidirectional ring interconnect. These facts highlight the challenge 
faced by the locality-aware concurrent search tree: because 
of its locality awareness (i.e.,
related data are kept nearby and often re-used), the tree concurrent 
update operations might trigger 
heavy interconnect traffic on the multi-CPU platforms. The coherency
mechanisms increase the total number of 
data transfer and the platform's energy consumption.

\subsection{Conclusions}\label{sec:conclusions}
The results presented in this paper not only show that GreenBST 
is an energy-efficient concurrent search tree, but also provide 
an important insight into how to develop energy efficient data 
structures in general. On single core systems, having locality-aware 
data structures that can lower data movement has been 
demonstrated to be good enough to increase energy-efficiency.
However, on multi-CPU and many cores systems, 
data-structures' locality-awareness alone is not enough and 
good concurrency and multi-CPU cache strategy are needed. Otherwise, the 
energy overhead of "waiting/idling" CPUs or multi-CPU coherency mechanism 
can exceed the energy saving obtained by fewer data movements.

\newpage
\section{Customization methodology for implementation of streaming aggregation in embedded systems}\label{sec:stream}
\fbox{
	\begin{minipage}{0.96\textwidth}
		\small	
		\textbf{Copyright Notice:} Most material in this section is based on the following article:
		Lazaros Papadopoulos, Dimitrios Soudris , Ivan Walulya, Philippas Tsigas : Customization methodology for implementation of streaming aggregation in embedded systems. 	\emph{Journal of Systems Architecture}, May 2016\cite{Papadopoulos201648}. DOI:  10.1016/j.sysarc.2016.04.013
	\end{minipage}
}
\subsection{Introduction}
\label{introduction}
Efficient real-time processing of data streams produced by modern interconnected systems is a critical challenge. In the past, low-latency streaming was mostly associated with network operators and financial institutions. Processing of millions of events such as phone calls, text messages, data traffic over a network and extracting useful information is important for guaranteeing high Quality of Service. Stream processing applications that handle traditional streams of data were mostly implemented by using Stream Processing Engines (SPEs) running on high performance computing systems.

However, nowadays digital data come from various sources, such as sensors from interconnected city infrastructures, mobile cameras and wearable devices. In the deviced-driven world of Internet of Things, there is a need in many cases for processing data on-the-fly, in order to detect events while they are occurring. These data-in-motion come in the form of live streams and should be gathered, processed and analyzed as quickly as possible, as they are being produced continuously. Low-power embedded devices or embedded micro-servers \cite{x-gene2} are expected not only to monitor continuous streams of data, but also to detect patterns through advanced analytics and enable proactive actions. Applying analytics to these streams of data before the data is stored for post-event analysis (data-at-rest) enables new service capabilities and opportunities. 

Streaming aggregation is a fundamental operator in the area of stream processing. It is used to extract information from data streams through data summarization. Aggregation is the task of summarizing attribute values of subsets of tuples from one or more streams. A number of tuples are grouped and aggregations are computed on their attributes in real-time fashion. High frequency trading in stock markets (e.g. continuously calculating the average number of each stock over a certain time window), real time network monitoring (e.g. computing the average network traffic over a time window) are examples of data stream processing, where streaming aggregation along with other operators is used to extract information from streams of tuples. 

Streaming aggregation performance is affected a lot by the cost of data transfer. So far, streaming aggregation scenarios have been implemented and evaluated in various architectures, such as GPUs, Nehalem and Cell processors \cite{sa_on_parallel}. Indeed, there is a trend to utilize low power embedded platforms on running computational demanding applications in order to achieve high performance per watt \cite{arm-cortexA8}\cite{appleTV}\cite{tibidabo}\cite{perf-per-watt}. 

Modern embedded systems provide different characteristics and features (such as memory hierarchy, data movement options, OS support, etc.) depending on the application domain that they target. The impact of each one of these features on performance and energy consumption of the whole system, when running a specific application, is often hard to predict at design time. Even if it is safe to assume in some cases that the utilization of a specific feature will improve or deteriorate the value of a specific metric in a particular context, it is hard to quantify the impact without testing. This problem becomes even harder when developers attempt to improve more than one metric simultaneously. A similar problem is the porting of an application running on a specific system to another with different specifications. The application usually need to be customized in the new platform differently, in order to provide improved performance and energy efficiency. The typical solution followed by developers is to try to optimize the implementation of the application on the embedded platform in an ad-hoc manner, which is a time consuming process that may yield suboptimal results. Therefore, there is a need for a systematic customization approach: Exploration can assist the effective tuning of the application and platform design options, in order to satisfy the design constraints and achieve the optimization goals. 

Towards this end, in this work, we propose a semi-automatic step-by-step exploration methodology for the customization of streaming aggregation implemented in embedded systems. The methodology is based i) on the identification of the parameters of the streaming aggregation operator that affect the evaluation metrics and ii) on the identification of the embedded platform specification features that affect the evaluation metrics when executing streaming aggregation. These parameters compose a design space. The methodology provides a set of implementation solutions. For each solution, the application and the platform parameters have different values. In other words, each customized streaming aggregation implementation is tuned differently, so it provides different results for each evaluation metric. Developers can perform trade-offs between metrics, by selecting different customized implementations. Thus, instead of evaluating solutions in ad-hoc manner, the proposed approach provides a systematic way to explore the design space. 

The main contributions of this work are summarized as follows:
\begin{enumerate}[i.]
	\item We present a methodology for efficient customization of streaming aggregation implementation in embedded systems.
	\item We show that streaming aggregation implemented on embedded devices yields significantly higher performance per watt in comparison with corresponding HPC and general purpose GPU (GPGPU) implementations.
\end{enumerate}
Finally, based on the experimental results of the demonstration of the methodology, we draw interesting conclusions on how each one of the application and platform parameters (i.e. design options) affects each one of the evaluation metrics. The methodology is demonstrated in two streaming aggregation scenarios implemented in four embedded platforms with different specifications: Myriad1, Myriad2, Freescale I.MX.6 Quad and Exynos 5 octa. The evaluation metrics are throughput, memory footprint, latency, energy consumption and scalability.

%
\subsection{Related Work}
\label{related}
Stream processing on various high performance architectures has been studied in the past extensively. Many works focus on the parallelization of stream processing \cite{adaptive1}, \cite{streamcloud}, \cite{heter}. They describe how the stream processing operators should be assigned to partitions to increase parallelism. The authors in \cite{sa_lock_free} describe another way of improving the performance of streaming aggregation: They propose lock-free data structures for the implementation of streaming aggregation on multicore architectures. The evaluation has been conducted on a 6-core Xeon processor and the results show improved scalability.  

With respect to stream processing engines (SPEs), Aurora and Borealis  \cite{borealis} are among the most well known ones. Several works that focus on the evaluation of stream processing operators on specific parallel architectures can be found in the literature. For example, an evaluation on heterogeneous architectures composed of CPU and a GPU accelerator is presented in \cite{heter}. The authors of \cite{sa_on_parallel} evaluate streaming aggregation implementations on Core 2 Quad, Nvidia GTX GPU and on Cell Broadband Engine architectures. The aggregation model used in this work is more complex, since it focuses on timestamp-based tuple processing. 

There exists several works that describe the usage of low power embedded processors to run server workloads. More specifically, many works propose the integration of low-power ARM processors in servers \cite{arm-cortexA8} \cite{appleTV},  or present energy-efficient clusters built with mobile processors \cite{tibidabo}.

In the area of embedded systems stream processing, several works focus on compilers that orchestrate parallelism, while they handle resource and timing constraints efficiently \cite{compiler}. A programming language for stream processing in embedded systems has been proposed in \cite{prog_lang}. These works are complementary to ours: The conclusions we drive from this work could assist the implementation of efficient compilers and development frameworks for stream programming. 

Design space exploration in embedded systems is another area related with the present work. Exploration methodologies have been proposed for tuning at system architecture level \cite{Pareto},  for customization of dynamic data structures \cite{ddtr} and of dynamic memory management optimization \cite {dmm}. These customization approaches are complementary to the one proposed in the present work. Performance and energy consumption of streaming aggregation implementation could improve with effective customization of data structures or of the dynamic memory management of the system.

%
\subsection{Streaming Aggregation}
In this Section we provide a description of the streaming aggregation operator and we analyze the design challenges of implementing a streaming aggregation scenario on an embedded platform.

\begin{figure}
	\centering
	\includegraphics[width=0.75\textwidth, page=1]{./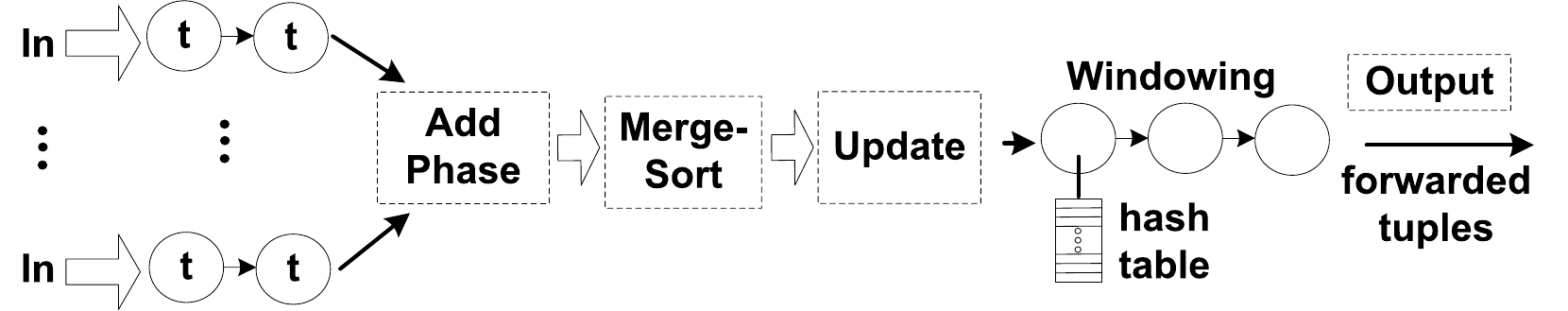}
	\caption{Time-based streaming aggregation scenario phases.}
	\label{SA_generic}
\end{figure}

\subsubsection{Streaming Aggregation description}
Streaming aggregation is a very common operator in the area of stream processing. It is used to group a set of inbound tuples and compute aggregations on their attributes, similarly to the \textit{group-by} SQL statement. In the context of this work, we discuss two aggregation scenarios: \textit{multiway time-based with sliding windows} and \textit{count-based with tumbling windows}.

\paragraph{Multiway time-based streaming aggregation}
In multiway aggregation, multiple streams of incoming tuples, which are stored in queues, are combined into one stream, through a merge operator and their tuples are sorted given their timestamp attribute. It consists of 4 phases, as presented in Fig. \ref{SA_generic}: 

\begin{enumerate}
	\item \textit{Add}: Incoming tuples are fetched from each input stream.
	\item \textit{Merge-Sort}: The tuples are merged and sorted, by the \textit{merge} operator. 
	\item \textit{Update}: Each tuple is assigned to the windows that it contributes to. 
	\item \textit{Output}: Tuples with the computed aggregated value are forwarded. 
\end{enumerate}

During the \textit{Add} phase tuples from each input stream are fetched and forwarded to the Merge-Sort phase. Since the incoming tuples are stored in a queue, they are forwarded in a FIFO manner. 

\textit{Merge-Sort} operation is used to combine streams that were sorted on a given attribute into a single stream, whose tuples are also ordered on the same attribute. In the context of this work, the tuples are sorted in timestamp order. 

\textit{Merge} and \textit{Sort} are tightly coupled operations in streaming aggregation scenarios since they share the same resource (i.e. the incoming  dequeued tuples) and they can be considered a single primitive operation. Merge-Sort phase ensures deterministic processing of the incoming tuples. A tuple is ready to be processed and forwarded to the next phase, if at least one tuple with an equal or higher timestamp has been received at each input stream. 

In the \textit{Update} phase the windowing operation is taking place and each single tuple is assigned to the window that it contributes to. In the context of this work, the aggregated values are computed over sliding windows, which have two attributes: \textit{size} and \textit{advance}. As an example, a window with \textit{size} 5 time units and \textit{advance} 2 time units, covers periods: [0, 5), [2, 7), [4, 9), etc. A tuple with timestamp 3, would contribute to windows [0, 5) and [2, 7). 

In the \textit{Output} phase, the aggregated value is calculated for all windows in which no more incoming tuples are expected to contribute (i.e. completed windows). The deterministic processing of tuples that took place in the earlier phases (more specifically during the \textit{Add} and \textit{Merge-Sort} phases), ensures that the aggregated value will be calculated only for completed windows. A new tuple is created for each aggregated value and it is forwarded, as a result of the aggregation operator. 




Multiway time-based streaming aggregation provides pipeline parallelism, which can be exploited by assigning each phase on a different processing element (PE).  However, performance relies not only on the exploitation of parallelism or on the computational power that the system provides, but also on the efficient data transfer between the phases. The sorted tuples of the  Merge-Sort phase are used by the Update phase to be assigned to the windows that each one contributes to. The Update phase provides to the Output phase information on the windows in which the last tuples contributed to. Thus, the Output phase identifies the completed windows and calculates the aggregated value for each one. The utilization of efficient means of forwarding the information from one phase to another, affects both performance and energy consumption. The same applies to the way by which memory accesses on shared data are synchronized. Other important implementation issues that should be taken into account are the size of the queues in which the inbound tuples are stored (input queues) and the memory allocation of both the queues and the data structure in which the windows are stored. 

\paragraph{Count-based streaming aggregation}
In count-based aggregation, the window size is determined by the number of tuples buffered, instead of the time passed. Our case study considers fixed size windows and aggregation takes place periodically, i.e. when a specific number of tuples is received. Every time an aggregation is completed, all currently stored tuples are evicted and the next window is initially empty (tumbling window). 

\begin{figure}
	\centering
	\includegraphics[width=0.49\textwidth, page=2]{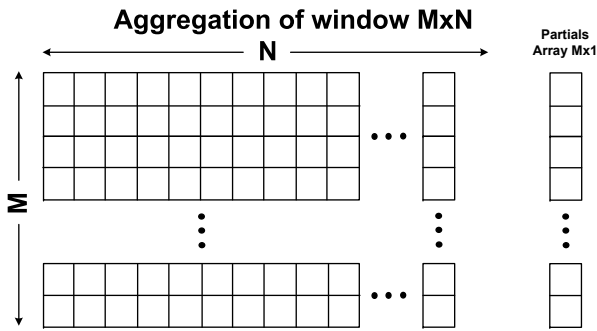}
	\caption{Window and partials array data structures used in the count-based streaming aggregation scenario.}
	\label{count-based}
\end{figure}

To implement the count-based aggregation scenario, we followed an approach based on \cite{sa_on_parallel}. The time intervals between aggregations are based on the number of tuples stored in the window and results of a specific window may depend on results of the previous one. Thus, an extra data structure is needed to store the partially aggregated results of the last window, which may be used in the following aggregation. 

Figure \ref{count-based}, shows the data structures used in the count-based scenario: A \textit{M}x\textit{N} window and the partials array, with 1x\textit{M} entries. \textit{M} is the maximum number of input streams and \textit{N} is the window width. When it is not possible to compute the aggregated value of \textit{N} tuples for a specific input stream before the current window is forwarded, the partially aggregated result is stored in partials array. This result is used by the following window to compute the aggregated value of \textit{N} tuples for the specific input stream. The output is a single tuple that it is produced by a query executed in the \textit{M} aggregated values. 

Apparently, count-based streaming aggregation provides data parallelism. Each window row can be assigned to a different processing element (PE) to compute the aggregated value of each input stream in parallel. Similarly to the time-based scenario, data transfer overhead, memory allocation issues and the window size affect the performance and the energy consumption of the operator. The embedded systems provide various solutions and each one has different impact on each evaluation metric. The design options for all the aforementioned implementation issues compose a design space that it is described in the following Section. 

\subsection{Customization Methodology}
\label{Sec: custom.}
In this Section, we first present the design space for the streaming aggregation customization and then we describe the proposed methodology. 

\subsubsection{Design Space}

\begin{figure*}[!t]
	\centering
	\includegraphics[width=0.97\textwidth, page=3]{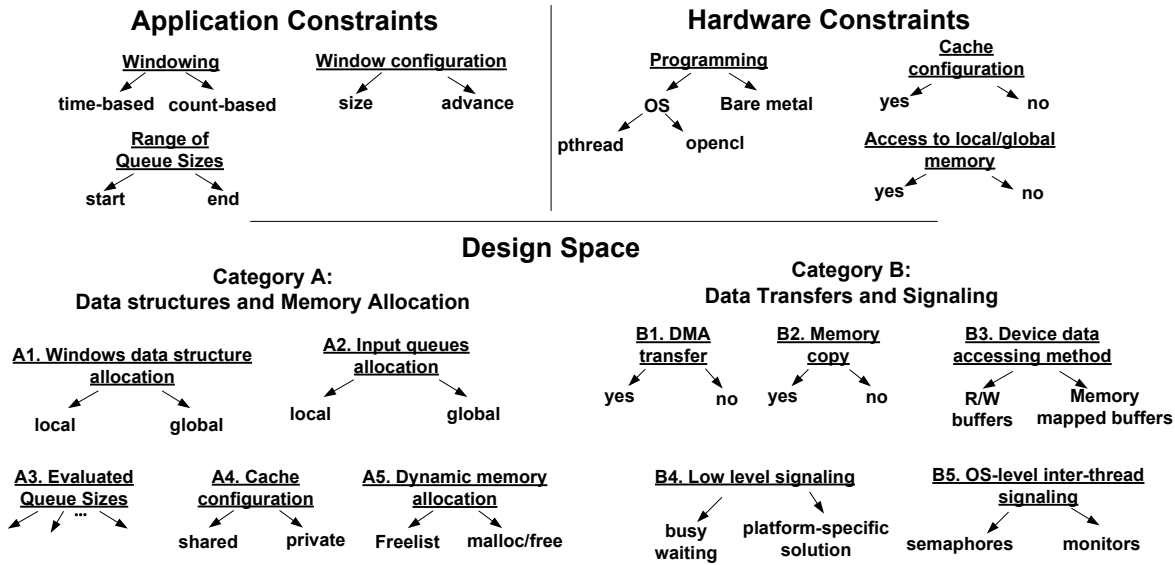}
	\caption{Constraints and Design space for streaming aggregation.}
	\label{design-space}
\end{figure*}

The design space of the streaming aggregation implementation is presented as a set of decision trees, grouped into two categories (Fig. \ref{design-space}):
\begin{itemize}
	\item
	\textit{Category A} consists of decision trees that refer to memory configuration and allocation. Cache configuration options (private cache for each core or shared cache for all cores) are depicted in decision tree \textit{A4}. \textit{A5} is related with the dynamic memory allocation that can be based on freelists or in \textit{malloc}/\textit{free} system calls. 
	\item
	In \textit{category B} are assigned decision trees related to data movement and means by which accesses to shared resources are synchronized. The first three decision trees refer to different ways that data can be copied from global to local memories, or from one local memory to another (depending on the embedded system's memory hierarchy). Decision trees \textit{B4} and \textit{B5} are about synchronization between PEs, when accessing shared buffers. At low level, synchronization can be accomplished by spinning on shared variables (i.e. busy waiting) or by using other platform specific solutions. In platforms that run OS and support POSIX threads developers can utilized semaphores or monitors.
\end{itemize}

Apparently, not all design options are applicable in any context. Fig. \ref{design-space} shows the application and the hardware constraints that affect which decision trees or leaves are applicable in each specific context. The  constraints are used to prune the decision trees and leaves that yield implementations which do not adhere to developer's requirements or they are not supported by the embedded platform. 

\begin{table}[]
	\centering
	\caption{Decision trees or leaves disabled for each application and hardware constraint.}
	\label{constraints}
	\begin{tabular}{ll}
		\hline\noalign{\smallskip}
		\multirow{2}{*}{App./Hw constraint} & Decision tree/    \\
		& leaf disabled \\
		\hline\noalign{\smallskip}
		Windowing(tuple-based) 			    & A2, A3, A5, B4  \\
		Window configuration                & may disable A1(local)  \\
		Programming(bare metal)             & B3 and B5        \\
		Programming(pthread)             	& B1, B3, B4       \\
		Programming(OpenCL)             	& B1, B2, B4, B5     \\
		Cache config.(no)       			& A4     \\
		Access to local/global(no)      	& A1, A2              
	\end{tabular}
\end{table}

Table \ref{constraints} summarizes the design options that are disabled, due to application and hardware constraints. As an example, if the embedded platform runs an OS, access to DMA and to low-level signaling mechanisms are most likely handled by the OS directly, so these design options are not exposed to developers. \textit{Window configuration} constraint may force the allocation of the data structures in a global memory. All constraints are provided manually. Constraints that prune non-compatible design space options "convert" the platform-independent design space into platform-dependent. Thus, they make the customization approach applicable in different contexts and in various embedded platforms. 

After the pruning, valid customized streaming aggregation implementations are instantiated from the remaining decision tree leaves of the design space. In other words, the implementations that will finally be explored are the ones that are produced by combining the remaining leaves to create consistent implementations. Each one of these combinations is a valid customized solution that should be evaluated. All combinations of the remaining tree leaves are evaluated by brute-force exploration.

\subsubsection{Methodology description}

The exploration methodology consists of two steps and it is presented in Fig. \ref{methodology}. The inputs of the methodology are the application and hardware constraints. The output is a streaming aggregation implementation with customized software and hardware parameters. 

\begin{figure} [t]
	\centering
	\includegraphics[width=0.89\textwidth, page=4]{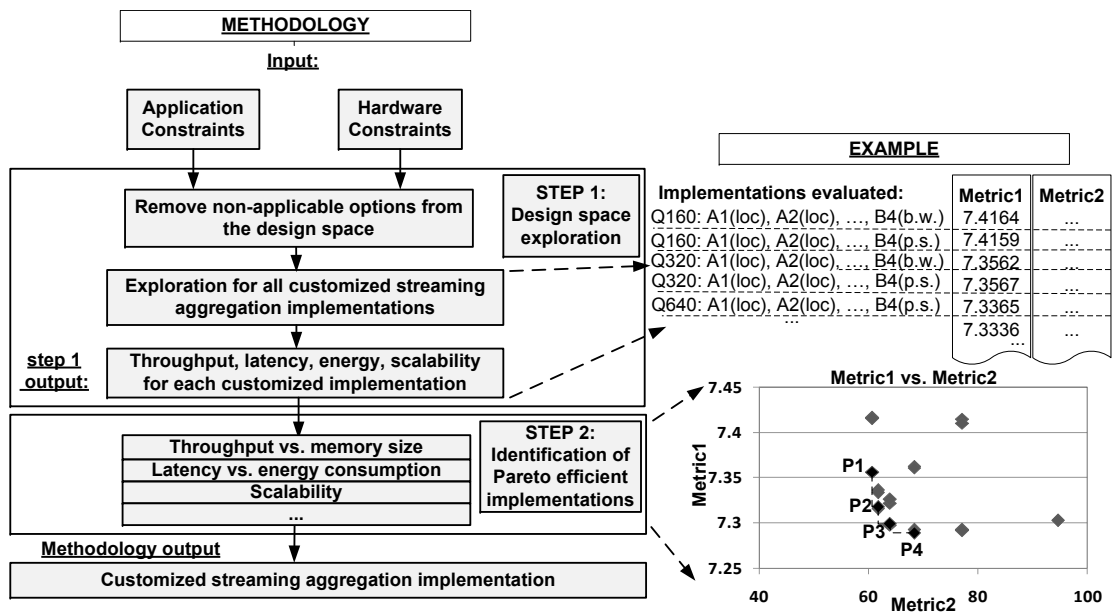}
	\caption{Customization methodology.}
	\label{methodology}
\end{figure}

The first step of the methodology aims at the pruning of the design space and the implementation of the design space exploration. First, the non-applicable options are removed from the design space due to the application and hardware constraints. Then, the streaming aggregation is executed once for each different combination of the decision tree leaves of the design space. For each customization, throughput, latency, memory size and energy consumption results are gathered. Scalability is another metric that can be evaluated, in case there is a relatively large number of PEs available. In the second step, the Pareto efficient implementations are identified. The trade-offs that can be performed by customization of the streaming aggregation on an embedded platform are presented in the form of Pareto curves. Developers can select the implementation that is most efficient according to the optimization target. 

The tool flow that supports the methodology consists of a set of bash shell scripts that handle the first step of the methodology. For the second phase, the design space pruning and the exploration are performed automatically, provided that the hardware constraints are set manually. All performance results are collected automatically. However, power (which is used to calculate energy consumption) is measured manually, since it is usually based on platform-specific hardware instrumentation. Also, the tool flow integrates a script that calculates the Pareto curve for each requested pair of metrics. 

Finally, it is important to state that most design options are normally provided as functions, macros, or compiler directives from either the platform SDK, or from the POSIX/OpenCL libraries. Therefore, it should not require significant programming effort by developers to switch between the design options presented in Fig. \ref{design-space}. Although the number of available implementations in some cases is increased, the systematic methodology we propose guarantees that all Pareto efficient implementations can be identified. 
%
\subsection{Demonstration of the Methodology}

In this Section we first provide a short description of the embedded architectures that we used for demonstration of the methodology. Then, we present the experimental setup and the evaluation results, which are discussed in the last subsection. 

\subsubsection{Platforms description}
Myriad embedded processors are designed by Movidius Ltd. \cite{movidius}. They target computer vision and data streaming applications. Myriad architectures are utilized in the context of Project Tango, which aims at the design of mobile devices capable of creating a 3D model of the environment around them \cite{tango}. They belong to the family of low power mobile processors and provide increased performance per watt \cite{perf-per-watt}. 

\begin{figure}
	\centering
	\includegraphics[width=0.65\textwidth, page=5]{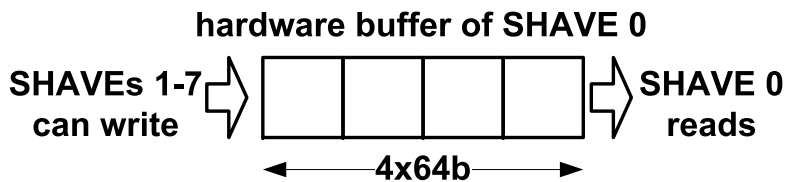}
	\caption{Myriad1 hardware buffers.}
	\label{sf}
\end{figure}

Myriad1 architecture is designed at 65nm. It  integrates 8 VLIW processing cores named Streaming Hybrid Architecture Vector Engine (SHAVEs) operating at 180MHz and a LEON3 processor that controls the data flow, handles interrupts, etc.. More technical information about Myriad1 can be found in \cite{myriad1}. A local DMA engine is available for each SHAVE. Additionally, Myriad1 provides a set of hardware buffers for direct communication between the SHAVE cores. Each SHAVE has its own hardware buffer and they are accessed in FIFO manner. The size of each one is 4x64 bit words.  As shown in Fig. \ref{sf}, each SHAVE can push data into the buffer of any other SHAVE and it can read data only from its own buffer. A SHAVE writes to the tail of another buffer and the owner of the buffer can read from the head. An interesting feature of the Myriad1 hardware buffers is the fact that when a SHAVE tries to write to a full FIFO or read from its own FIFO that happens to be empty, it stalls and enters a low energy mode. We take advantage of this, in order to propose energy efficient streaming aggregation implementations on Myriad1 platform. 

Myriad2 is designed at 28nm \cite{myriad2}. In contrast with Myriad1, Myriad2 integrates 12 SHAVE cores operating at 504MHz, along with two independent LEON4 processors: LEON-RT targeting job management and LEON-OS suitable for running RTEMS/Linux, etc.. Myriad2 provides a single top-level DMA engine and the hardware buffers size is 16x64 words. 

Regarding the memory specifications, Myriad1 provides 1MB local memory with unified address space that it is named Connection Matrix (CMX). Each 128KB are directly linked to each SHAVE processor providing local storage for data and instruction code. Therefore, the CMX memory can be seen as a group of 8 memory "slices", with each slice being connected to each one of the 8 SHAVEs. Each SHAVE accesses its own CMX slice more efficiently in comparison with the rest CMX slices. Myriad2 CMX memory is 2MB and each slice is 128KB. Also, Myriad2 provides 1KB L1 and 256KB L2 cache. Finally, both platforms provide a global DDR memory of 64MB. 

\begin{figure}[!t]
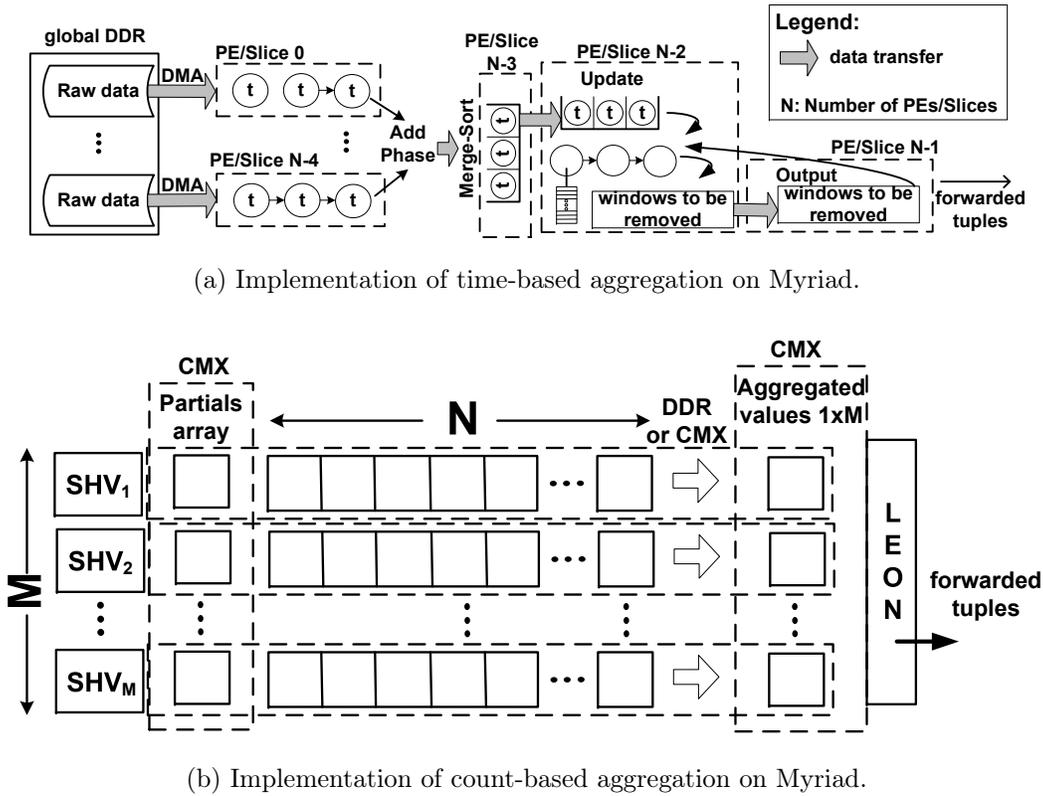

	\centering
	\subfloat[Implementation of time-based aggregation on Myriad.\label{myriad-time-mem-alloc}]{
		\includegraphics[width=0.85\textwidth, page=6]{./figs-chalmers/stream/figures.pdf}
	}
	\hfill
	\subfloat[Implementation of count-based aggregation on Myriad.\label{myriad-count-mem-alloc}]{
		\includegraphics[width=0.85\textwidth, page=8]{./figs-chalmers/stream/figures.pdf}
	}
	\caption{Implementation of time-based and count-based streaming aggregation on Myriad.}
	\label{mem-alloc}
\end{figure}

Concerning the memory allocation of the time-based streaming aggregation data structures, the incoming streams of raw data (produced by sensors, cameras, etc.) are placed in DDR memory. Each input queue is handled by a different SHAVE and it is placed in its local slice. Each SHAVE that handles an input queue fetches chunks of raw data in its own memory slice, by using DMA transfers. Then, it converts the raw data into tuples and stores them in its own input queue. The windows are stored in a linked list data structure, which is allocated in the CMX slice of the SHAVE core that handles the Update phase. Memory allocation and other implementation details are displayed in Fig. \ref{myriad-time-mem-alloc}. Regarding the count-based aggregation scenario that uses a \textit{M}x\textit{N} window, each one of the \textit{M} SHAVEs continuously fetches raw data that correspond to \textit{N} tuples from DDR to CMX. However, if \textit{N} is very large and tuples cannot be stored and processed in CMX, they are placed and aggregated in DDR. Each SHAVE computes the aggregated value of \textit{N} tuples and forwards the result to LEON, which produces the output tuple that corresponds to the specific window. The implementation diagram in Fig. \ref{myriad-count-mem-alloc}.

Freescale I.MX 6 Quad integrates four ARM Cortex A9 cores that operate at 1GHz \cite{freescale}. It belongs to a family of multicore ARM-based platforms that target single board computers and run Linux-based OS. It provides 1GB RAM and two cache memory levels. On I.MX.6 the raw data are placed in data files. Chunks of raw data are fetched in RAM using \textit{freed()} function. Then, tuples are created and placed in the input queues to be forwarded to the subsequent streaming aggregation phases. 

Exynos 5 octa is an ARM-based platform that targets mobile computers. It is designed at 28nm by SAMSUNG and it is based on big.LITTLE architecture \cite{exynos}. It integrates two ARM clusters: 4 Cortex-A15 and 4 Cortex-A7 cores. Exynos 5 integrates a PowerVR SGX544 GPU that supports OpenCL1.1. It includes 3 processing cores running at 533MHz. The evaluation board integrating Exynos is the Odroid-XU that provides 2GB DDR3 RAM  \cite{odroid}. In the context of this work, we used PowerVR GPU to perform aggregation in the count-based streaming scenario, implemented in OpenCL.

\subsubsection{Experimental Setup}

\begin{table*}[]
	\centering
	\caption{Hardware constraints for Myriad1, Myriad2, I.MX.6 Quad and Exynos for both scenarios.}
	\label{hw-constraints}
	\small
	\begin{tabular}{c|ccc|ccc}
		\hline\noalign{\smallskip}
		& \multicolumn{3}{|c|}{Time-based aggregation} & \multicolumn{3}{|c}{Count-based aggregation} \\
		\hline\noalign{\smallskip}
		& Myriad1       & Myriad2       & I.MX.6     & Myriad1        & Myriad2        & Exynos    \\
		\hline\noalign{\smallskip}
		windowing                                                            & time          & time          & time       & count          & count          & count     \\
		\hline\noalign{\smallskip}
		programming                                                          & bare metal    & bare metal    & pthread    & bare metal     & bare metal     & OpenCL    \\
		\hline\noalign{\smallskip}
		cache config.                                                        & no            & yes           & no         & no             & yes            & no        \\
		\hline\noalign{\smallskip}
		\begin{tabular}[c]{@{}c@{}}access local/global mem.\end{tabular} & yes           & yes           & no         & yes            & yes            & yes      
	\end{tabular}
\end{table*}

The dataset we used to demonstrate the proposed methodology has been collected from the online audio distribution platform SoundCloud \cite{soundcloud}. It consists of a subset of approximately 40,000 users that exchanged comments between 2007 and 2013. The incoming tuples contain the following attributes: \textit{timestamp}, \textit{user\_id}, \textit{song\_id} and \textit{comment}. The aggregation function forwards the id of the user with the largest number of comments in each window. In the time-based aggregation scenario the window is sliding, while in the count-based, the window is tumbling, so the aggregated value is calculated over the last \textit{M}x\textit{N} tuples. 

The aggregation operator is implemented entirely in C. Throughput is measured as tuples processed per second, while latency as the timestamp difference between an output tuple with the aggregated value and the latest input tuple that produced it. The energy consumption results on I.MX.6 were obtained based on hardware instrumentation using a Watts Up PRO meter device and following a setup similar to methods proposed in the literature \cite{power1}\cite{power2}. In Myriad2 power was measured though the MV198 power measurement board integrated on Myriad2 evaluation board. In Myriad1 power was estimated, based on moviSim simulator provided by Movidius MDK. In Exynos it is measured based on power sensors that are provided by Odroid-XU-e evaluation board \cite{odroid}. All the values presented are the average of 10 executions, by elimination of the outliers. Each single experiment is executed from 30 seconds up to one minute.

The time-based aggregation scenario, which is actually a pipeline, is demonstrated in Myriad and I.MX.6 Quad platforms. The count-based scenario, that provides increased data parallelism, is demonstrated in Myriad and in Exynos embedded GPU. As stated earlier, Myriad1 provides 8 PEs. In time-based aggregation, each one of the merge-sort, update and output phases is assigned to a single PE. Each one of the remaining 5 PEs handles a single input queue. In Myriad2, which integrates 12 PEs, the input queues are 9. In I.MX.6 Quad that provides 4 PEs, we assigned each phase on single PE and the remaining PE handles 5 input queues. 

The hardware constraints of the evaluation boards are presented in Table \ref{hw-constraints}. The experiments we performed are the following: In the time-based aggregation scenario, in I.MX.6 we implemented the methodology using a single window configuration. However, for Myriad1 and Myriad2, we present results for two different scenarios: in the first one the window configuration (i.e. the window \textit{size} and \textit{advance} values) are set, so that the maximum memory size of the windows data structure is small enough to fit in the local memory. In the second experiment, the windows data structure can only fit in the global memory. Thus, we study how the memory allocation of the windows data structure affects the evaluation metrics. In the count-based scenario, the aggregation is performed in parallel by the accelerator of each platform: The SHAVEs in Myriad and the GPU in Exynos. 

The output of the methodology is a set of Pareto points for throughput vs. memory size and latency vs. energy consumption. In time-based scenario, we present results for scalability for Myriad1 and Myriad2. The implementations that are evaluated for scalability are the ones that were found to be Pareto efficient in latency vs. energy consumption evaluation. 

\subsubsection {Time-based aggregation results}
In the time-based scenario, we evaluate each implementation for a number of queue sizes. The queue sizes we select are the ones that provide latency below a fixed threshold. Therefore, we first measure latency for a range queue sizes and select the size values which provide latency below the threshold. Then, we proceed to the implementation of the methodology. 48 implementations are evaluated in Myriad and 4 in I.MX.6 Quad. The number of implementations that are evaluated can be reduced by selecting a smaller number of queue size values. (However, in this case fewer Pareto points may be identified). 

\paragraph{Demonstration on Myriad1}

\begin{figure}[t]
	\centering
	\subfloat[Windows list in local mem..\label{m1-lat-qs-won}]{
		\includegraphics[width=0.47\textwidth, page=1]{./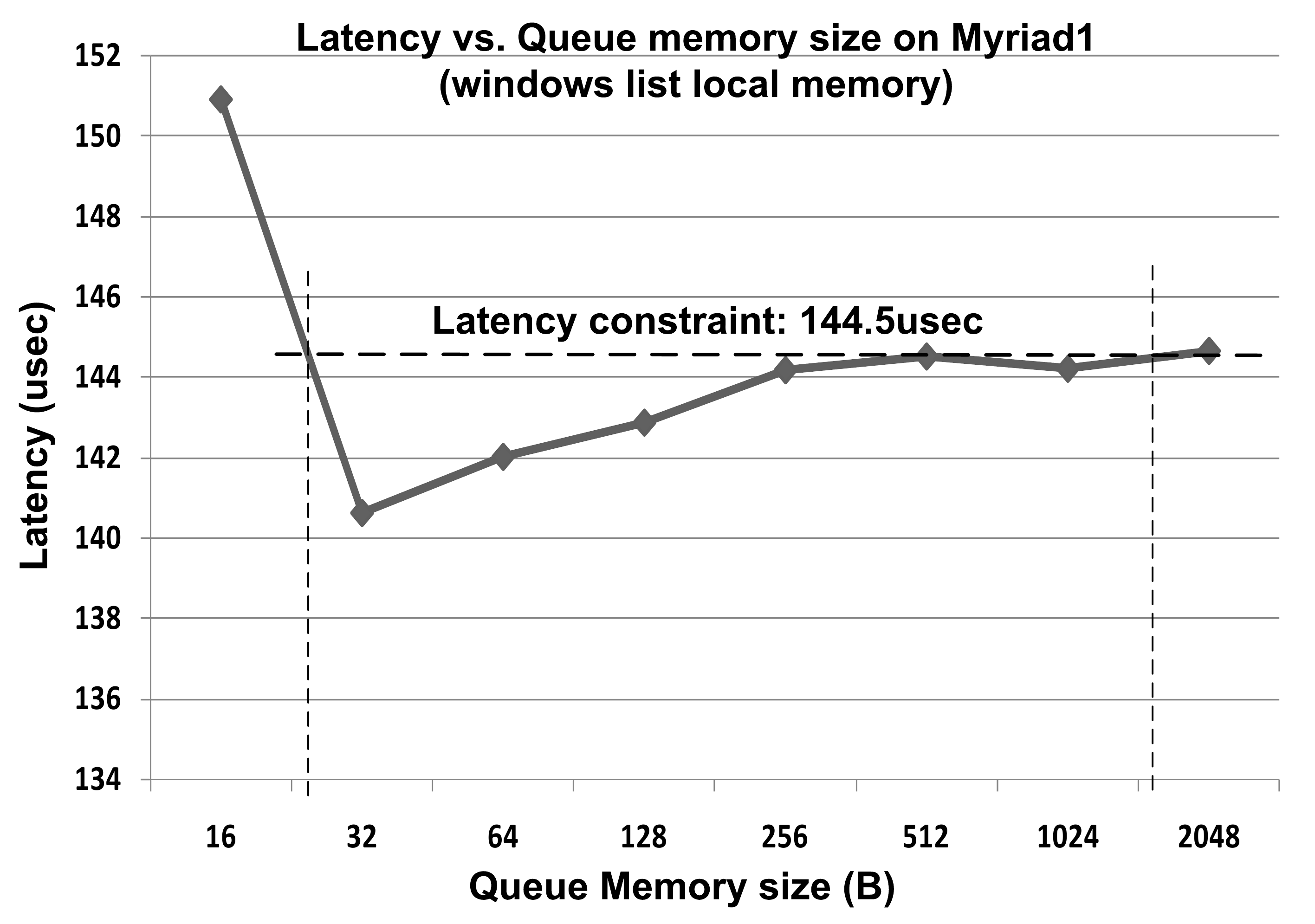}
	}
	\subfloat[Windows list in global mem..\label{m1-lat-qs-woff}]{
		\includegraphics[width=0.47\textwidth, page=2]{./figs-chalmers/stream/exp.pdf}
	}
	\caption{Latency vs. Queue size on Myriad1.}
	\label{m1-lat-qs}
\end{figure}

\begin{figure*}[!t]
	\subfloat[][Throughput vs. memory footprint  \\(Windows in local memory)\label{m1-won-thr-mem}]{
		\includegraphics[width=0.47\textwidth, page=3]{./figs-chalmers/stream/exp.pdf}
	}
	\subfloat[][Latency vs. energy consumption \\(Windows in local memory)\label{m1-won-lat-en}]{
		\includegraphics[width=0.47\textwidth, page=4]{./figs-chalmers/stream/exp.pdf}
	}
	\hfill
	\subfloat[Scalability (Windows in local memory) \label{m1-won-scal}]{
		\includegraphics[width=0.47\textwidth, page=5]{./figs-chalmers/stream/exp.pdf}
	}
	\subfloat[][Throughput vs. memory footprint\\ (Windows in global memory)\label{m1-woff-thr-mem}]{
		\includegraphics[width=0.47\textwidth, page=6]{./figs-chalmers/stream/exp.pdf}
	}
	\caption{Evaluation of time-based streaming aggregation implementations on Myriad1.}
	\label{m1-results}
\end{figure*}

\begin{figure*}[!t]
	\subfloat[][Latency vs. energy consumption\\ (Windows in global memory)\label{m1-woff-lat-en}]{
		\includegraphics[width=0.47\textwidth, page=7]{./figs-chalmers/stream/exp.pdf}
	}
	\subfloat[Scalability (Windows in global memory)\label{m1-woff-scal}]{
		\includegraphics[width=0.47\textwidth, page=8]{./figs-chalmers/stream/exp.pdf}
	}
	\caption{Evaluation of time-based streaming aggregation implementations on Myriad1.}
	\label{m1-results}
\end{figure*}
\begin{table*}[!t]
	\centering
	\caption{Myriad1 Pareto efficient points description. B4(p.s.) (i.e. platform specific) refers to Myriad hardware buffers. }
	\label{m1-pareto}
	\scriptsize
	\begin{tabular}{ll|ll|ll}
		\hline\noalign{\smallskip} 
		Pareto & Description               & Pareto & Description               & Pareto & Description               \\
		\hline\noalign{\smallskip} 
		P1     & A1(l), A2(l), A3(32B),  & P8     & A1(l), A2(l), A3(128B), & P15    &A1(l), A2(l), A3(128B), \\
		& A5(fl), B2(yes), B4(p.s.) &        & A5(fl), B2(yes), B4(b.w.)  &        & A5(fl), B2(yes), B4(b.w.)  \\
		P2     & A1(l), A2(l), A3(64B),  & P9     & A1(l), A2(l), A3(64B),  & P16    & A1(on), A2(on), A3(256B), \\ 
		& A5(fl), B2(yes), B4(p.s.) &        & A4(fl), B1(yes), B4(p.s.) &        & A5(fl), B2(yes), B4(p.s.) \\ \cdashline{5-6}
		P3     & A1(l), A2(l), A3(128B), & P10    & A1(l), A2(l), A3(64B),  & P17    & A1(l), A2(l), A3(256B), \\
		& A5(fl), B2(yes), B4(p.s.) &        & A5(fl), B2(yes), B4(p.s.) &        & A5(fl), B1(yes), B4(p.s.) \\
		P4     & A1(l), A2(l), A3(256B), & P11    & A1(l), A2(l), A3(64B),  & P18    & A1(l), A2(l), A3(128B), \\ 
		& A5(fl), B2(yes), B4(p.s.) &        & A5(fl), B1(yes), B4(p.s.) &        & A5(fl), B2(yes), B4(p.s.) \\ \cdashline{1-2}
		P5     & A1(l), A2(l), A3(512B), & P12    & A1(l), A2(l), A3(32B),  & P19    & A1(l), A2(l), A3(64B),  \\ 
		& A5(fl), B2(yes), B4(p.s.) &        & A5(fl), B2(yes), B4(p.s.) &        & A5(fl), B2(yes), B4(p.s.) \\ \cdashline{3-4}
		P6     & A1(l), A2(l), A3(256B), & P13    & A1(l), A2(l), A3(32B),  & P20    & A1(l), A2(l), A3(32B),  \\
		& A5(fl), B2(yes), B4(p.s.) &        & A5(fl), B2(yes), B4(p.s.) &        & A5(fl), B2(yes), B4(p.s.) \\
		P7     & A1(l), A2(l), A3(128B), & P14    & A1(l), A2(l), A3(64B),  & P21    & A1(l), A2(l), A3(32B),  \\
		& A5(fl), B1(yes), B4(p.s.) &        & A5(fl), B2(yes), B4(p.s.) &        & A5(fl), B2(yes), B4(b.w.)
	\end{tabular}
\end{table*}

In the first experiment in Myriad1 the window size and advance values are configured so that the windows data structure can fit in the local memory. Assuming latency constraint of 144.5usec, the range of queue sizes that we evaluate are from 32B to 1024B (Fig. \ref{m1-lat-qs-won}). 

The results for throughput vs. memory evaluation are displayed in Fig. \ref{m1-won-thr-mem}. We notice that the Pareto points can be divided in two categories: The ones with performance lower than 8.0usec/tuple that correspond to implementations that utilize busy waiting and the rest ones that utilize the Myriad hardware buffers. (In both axes, the lower the values, the higher the efficiency). 4 Pareto efficient points are identified, which are described in Table \ref{m1-pareto}. All Pareto efficient customized implementations can be used to perform trade-offs between throughput and memory: throughput can increase up to 1.02\% and maximum memory size can drop up to 11.2\% by selecting P4 and P1 solutions respectively. 

Pareto points of latency vs. energy can be grouped into the same categories: The ones that exploit busy waiting and the rest that utilize hardware buffers. The later are more efficient both in terms of latency and energy consumption. 8 Pareto points can be identified that can be used to perform trade-offs between the aforementioned metrics: up to 2.85\% lower latency (P12) and up to 2.6\% lower energy consumption (P5). 

Finally, scalability evaluation of the Pareto points of latency vs. energy is shown in Fig. \ref{m1-won-scal}. Throughput remains almost constant for all implementations or increases with the number of inputs. The only exception is P12, in which the queues have very small size (32B). 

In the second experiment, we assume latency threshold to be 202usec (Fig. \ref{m1-lat-qs-woff}). We notice in both Fig. \ref{m1-woff-thr-mem} and Fig. \ref{m1-woff-lat-en} that throughput is lower and latency higher in comparison with the previous experiment, since in this one the windows are placed in the global memory. The Pareto efficient points demonstrated in Fig. \ref{m1-woff-thr-mem} can be used to perform trade-offs between throughput and memory size (up to 0.5\% for throughput by selecting P16 and up to 5.9\% in memory size by selecting P13). In Fig. \ref{m1-woff-lat-en}, we notice that Pareto point P21 is the most efficient in terms of latency (4.45\% lower in comparison with P17), while P17 implementation is the most energy efficient (19.3\% lower consumption than P21). In the scalability evaluation of Fig. \ref{m1-woff-scal}, it is shown that all implementations provide high throughput that it is affected by the number of inputs only slightly, apart from P21 that utilizes busy-waiting and yields much lower throughput in comparison with the rest of the implementations. 

\begin{figure}[t]
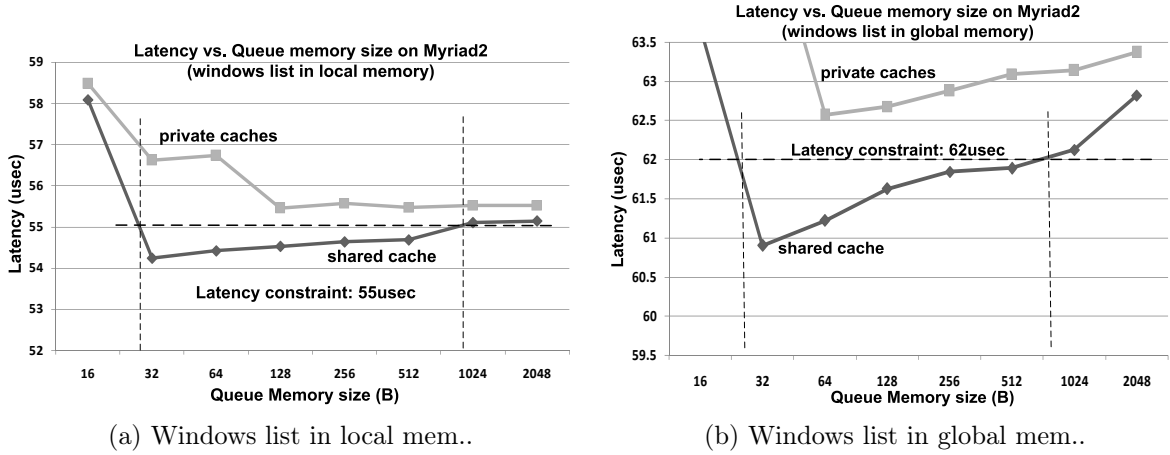

	\centering
	\subfloat[Windows list in local mem..\label{m2-lat-qs-won}]{
		\includegraphics[width=0.47\textwidth, page=9]{./figs-chalmers/stream/exp.pdf}
	}
	\subfloat[Windows list in global mem..\label{m2-lat-qs-woff}]{
		\includegraphics[width=0.47\textwidth, page=10]{./figs-chalmers/stream/exp.pdf}
	}
	\caption{Latency vs. Queue size on Myriad2.}
	\label{m2-lat-qs}
\end{figure}

\paragraph{Demonstration on Myriad2}

\begin{figure*}[!t]
	\subfloat[][Throughput vs. memory footprint  \\(Windows in  local memory)\label{m2-won-thr-mem}]{
		\includegraphics[width=0.32\textwidth, page=11]{./figs-chalmers/stream/exp.pdf}
	}
	\subfloat[][Latency vs. energy consumption \\(Windows in local memory) \label{m2-won-lat-en}]{
		\includegraphics[width=0.32\textwidth, page=12]{./figs-chalmers/stream/exp.pdf}
	}
	\subfloat[Scalability (Windows in local memory) \label{m2-won-scal}]{
		\includegraphics[width=0.32\textwidth, page=13]{./figs-chalmers/stream/exp.pdf}
	}
	\hfill
	\subfloat[][Throughput vs. memory footprint \\ (Windows in global memory)\label{m2-woff-thr-mem}]{
		\includegraphics[width=0.32\textwidth, page=14]{./figs-chalmers/stream/exp.pdf}
	}
	\subfloat[][Latency vs. energy consumption \\(Windows in global memory)\label{m2-woff-lat-en}]{
		\includegraphics[width=0.32\textwidth, page=15]{./figs-chalmers/stream/exp.pdf}
	}
	\subfloat[Scalability (Windows in global memory)\label{m2-woff-scal}]{
		\includegraphics[width=0.32\textwidth, page=16]{./figs-chalmers/stream/exp.pdf}
	}
	\caption{Evaluation of time-based streaming aggregation implementations on Myriad2.}
	\label{m2-results}
\end{figure*}

\begin{table*}[!t]
	\centering
	\caption{Myriad2 Pareto efficient points description. B4(p.s.) (i.e. platform specific) refers to Myriad hardware buffers}
	
	\label{m2-pareto}
	\scriptsize
	\begin{tabular}{ll|ll|ll}
		\hline\noalign{\smallskip} 
		Par. & Description               & Par. & Description               & Par. & Description               \\
		\hline\noalign{\smallskip} 
		P1     & A1(l), A2(l), A3(32B),  & P8     & A1(l), A2(l), A3(512B), & P15    & A1(l), A2(l), A3(256B), \\ 
		& A4(s), A5(fl), B2(y), B4(b.w.) &        & A4(s), A5(fl), B2(y), B4(b.w.) &        & A4(s), A5(fl), B2(y), B4(p.s.) \\ \cdashline{5-6}
		P2     & A1(l), A2(l), A3(64B),  & P9     & A1(l), A2(l), A3(128B), & P16    & A1(l), A2(l), A3(256B), \\
		& A4(s), A5(fl), B2(yes), B4(p.s.) &        & A4(s), A5(fl), B1(y), B4(b.w.) &        & A4(s), A5(fl), B2(y), B4(p.s.) \\
		P3     & A1(l), A2(l), A3(128B), & P10    & A1(l), A2(l), A3(64B),  & P17    & A1(l), A2(l), A3(512B), \\
		& A4(s), A5(fl), B2(y), B4(p.s.) &        & A4(s), A5(fl), B2(y), B4(b.w.) &        & A4(s), A5(fl), B1(y), B4(b.w.) \\
		P4     & A1(l), A2(l), A3(256B), & P11    & A1(l), A2(l), A3(32B),  & P18    & A1(l), A2(l), A3(128B), \\ 
		& A4(s), A5(fl), B2(y), B4(p.s.) &        & A4(s), A5(fl), B1(y), B4(b.w.) &        & A4(s), A5(fl), B2(y), B4(b.w.) \\ \cdashline{3-4}
		P5     & A1(l), A2(l), A3(512B), & P12    & A1(l), A2(l), A3(32B),  & P19    & A1(l), A2(l), A3(64B),  \\ 
		& A4(s), A5(fl), B2(y), B4(p.s.) &        & A4(s), A5(fl), B2(y), B4(b.w.) &        & A4(s), A5(fl), B2(y), B4(b.w.) \\ \cdashline{1-2}
		P6     & A1(l), A2(l), A3(512B), & P13    & A1(l), A2(l), A3(64B),  & P20    & A1(l), A2(l), A3(32B),  \\
		& A4(s), A5(fl), B2(y), B4(p.s.) &        & A4(s), A5(fl), B2(y), B4(p.s.) &        & A4(s), A5(fl), B2(y), B4(b.w.) \\
		P7     & A1(l), A2(l), A3(256B), & P14    & A1(l), A2(l), A3(128B), &        &                           \\
		& A4(s), A5(fl), B1(y), B4(p.s.) &        &A4(s), A5(fl), B2(y), B4(p.s.) &        &                          
	\end{tabular}
\end{table*}

\begin{figure*}[t]
	\def\tabularxcolumn#1{m{#1}}
	\begin{tabularx}{\linewidth}{@{}cXX@{}}
		\begin{tabular}{cc}
			\subfloat[Latency vs. Queue size\label{imx6-lat-qs}]{\includegraphics[width=0.52\textwidth, page=17]{./figs-chalmers/stream/exp.pdf}} 
			& \subfloat[Throughput evaluation\label{imx6-thr-mem}]{\includegraphics[width=0.52\textwidth, page=18]{./figs-chalmers/stream/exp.pdf}}\\
			\subfloat[Latency vs. energy consumption\label{imx6-lat-en}]{\includegraphics[width=0.52\textwidth, page=19]{./figs-chalmers/stream/exp.pdf}}
		\end{tabular}
	\end{tabularx}
	\caption{Evaluation results of time-based streaming aggregation implementations on I.MX.6 Quad.}
	\label{imx6-results}
\end{figure*}


Fig. \ref{m2-lat-qs-won} and Fig. \ref{m2-lat-qs-woff} show latency vs. queue sizes on Myriad2 for two different cache configurations, shared and private (decision tree \textit{A4} in Fig. \ref{design-space}). We notice that shared cache provides lower latency than private in both cases, up to 4.2\%. Therefore, all implementations that utilize private cache are pruned and they are not evaluated in step 1 of the methodology. 

In the first experiment in Myriad2, the windows data structure is placed in the local memory. Latency constraint is assumed to be at 55usec and therefore queue sizes from 32B to 512B will be evaluated (Fig. \ref{m2-lat-qs-won}). 

Throughput vs. memory footprint results of the methodology are shown in Fig. \ref{m2-won-thr-mem}. Implementations based on \textit{memcpy} provide higher performance than the ones based on dma transfers between the CMX slices. The 5 Pareto efficient points that are identified provide trade-offs up to 3.7\% for throughput (P5) and up to 22.5\% for memory footprint (P1). 

Latency vs. energy results are displayed in Fig. \ref{m2-won-lat-en}. The Pareto points can be grouped into 2 categories: the ones that utilize busy waiting synchronization scheme and the rest ones that are based on hardware buffers. The 6 Pareto efficient points can be used to perform trade-offs between latency and energy (up to 6.37\% for latency by selecting implementation P11 and 5.2\% for energy consumption, by selecting P6). 

With respect to scalability in Fig. \ref{m2-won-scal}, we notice that throughput for all implementations increases up to 6 inputs and then it drops slightly. As in Myriad1 experiments, implementations with lower queue size tend to provide lower throughput. 

In the second experiment, in which the windows data structure is placed in global memory due to its increased memory size, latency constraint is set to 62usec (Fig. \ref{m2-lat-qs-woff}) and throughput vs. memory footprint results are presented in Fig. \ref{m2-woff-thr-mem}. 4 Pareto efficient points have been identified that provide throughput vs. memory size trade-offs (up to 6.4\% for throughput and up to 3.07\% for latency). Correspondingly, the 5 Pareto efficient points in latency vs. energy consumption evaluation displayed in Fig. \ref{m2-woff-lat-en} can be used for performing trade-offs, up to 8.59\% for latency (P20) and 18\% for energy (P16). Scalability results in Fig. \ref{m2-woff-scal} are slightly different from the ones in the previous experiment. Implementations scale up to 8 inputs and most of them tend to provide slightly lower throughput when 9 inputs are used. 

\paragraph{Demonstration on I.MX.6 Quad} 
Few customized implementations exist for I.MX.6, since the operating system handles many design options. In the I.MX.6 Quad experiment latency threshold has been set to 60usec and a single effective queue size has been found: 156KB  (Fig. \ref{imx6-lat-qs}). 4 customized implementations have been evaluated and throughput results are shown in Fig. \ref{imx6-thr-mem}, while latency vs. energy results are displayed in Fig. \ref{imx6-lat-en}. We notice that the most efficient implementation in terms of both throughput, latency and energy is the one that utilizes semaphores for synchronization, along with freelist-based memory management.

\subsubsection{Count-based aggregation results}
In the count-based scenario, we evaluate each implementation for different window sizes. The selected values are provided to the first step of the methodology. 24 different implementations are evaluated in each platform.

\begin{figure*}[t]
	\def\tabularxcolumn#1{m{#1}}
	\begin{tabularx}{\linewidth}{@{}cXX@{}}
		\begin{tabular}{cc}
			\subfloat[Throughput evaluation on Myriad1\label{s2-m1-thr}]{
				\includegraphics[width=0.450\textwidth, page=21]{./figs-chalmers/stream/exp.pdf}
			}
			& \subfloat[Latency vs. energy consumption on Myriad1\label{s2-m1-lat}]{
				\includegraphics[width=0.450\textwidth, page=22]{./figs-chalmers/stream/exp.pdf}
			}\\
			\subfloat[Throughput evaluation on Myriad2\label{s2-m2-thr}]{
				\includegraphics[width=0.450\textwidth, page=24]{./figs-chalmers/stream/exp.pdf}
			}	
			&\subfloat[Latency vs. energy consumption on Myriad2\label{s2-m2-lat}]{
			\includegraphics[width=0.450\textwidth, page=25]{./figs-chalmers/stream/exp.pdf}
			}\\
		\end{tabular}
	\end{tabularx}
	\caption{Evaluation results of count-based streaming aggregation implementations.}
	\label{s2}
\end{figure*}

\begin{figure*}[!t]
		\subfloat[Throughput evaluation on Exynos\label{s2-ex-thr}]{
			\includegraphics[width=0.45\textwidth, page=27]{./figs-chalmers/stream/exp.pdf}
		}
	\subfloat[Latency vs. energy consumption on Exynos\label{s2-ex-lat}]{
		\includegraphics[width=0.45\textwidth, page=28]{./figs-chalmers/stream/exp.pdf}
	}
	\caption{Evaluation results of count-based streaming aggregation implementations.}
	\label{s21}
\end{figure*}


\paragraph{Demonstration on Myriad1}
Fig. \ref{s2-m1-thr} shows throughput vs. memory footprint on Myriad1. Implementations that process tuples in local memory and transfer data from global to local memory through DMA provide higher throughput. For instance, at 4KB window size, P1 provides 58\% higher throughput than the implementation that uses \textit{memcpy} for data transfer. 

Latency vs. energy consumption results are presented in Fig. \ref{s2-m1-lat}. We notice that smaller windows provide lower latency. Also, transferring tuples in local memories provides lower latency than processing windows in Myriad1 global memory. 3 Pareto points are identified that provide trade-offs between latency and energy consumption.

\paragraph{Demonstration on Myriad2}
\begin{figure}
	\centering
	\includegraphics[width=0.65\textwidth, page=23]{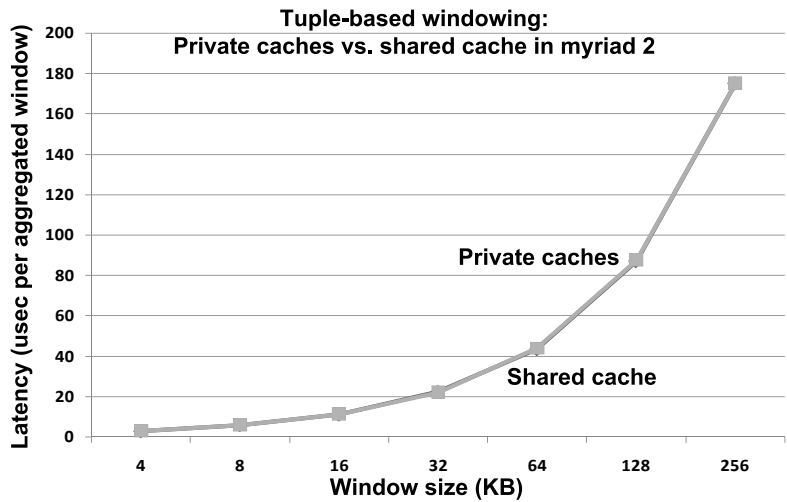}
	\caption{Latency vs. window size on Myriad2 for count-based streaming aggregation.}
	\label{m2-lat-win-size}
\end{figure}

In Myriad2, we first evaluate latency vs. window size for two different cache configurations. As shown in Fig. \ref{m2-lat-win-size}, utilization of shared cache provides slightly lower latency than private caches (less than 1\%). Therefore, implementations that utilize private caches are pruned and the design space is reduced. 

As in Myriad1, implementations that provide higher throughput are the ones in which tuples are transferred through DMA and processed in local memory. Fig. \ref{s2-m2-thr} shows that throughput increases up to 59\% using the aforementioned implementation, in comparison with the implementation in which tuples are processed in global memory, with window size 64KB. Also, we notice that larger windows provide slightly higher throughput. For instance, increasing window size from 4KB to 128KB, yields throughput increase about 10\% (P1 to P6). 

Implementations that utilize local memory and DMA transfers provide both low latency and energy efficiency, as shown in Fig. \ref{s2-m2-lat}. Processing in global or in local memory affects both latency and energy consumption results. For instance, tuples in local memory and utilization of DMA with 4KB window size provides 31.4\% lower energy consumption than the corresponding implementation with tuple processing in global memory. 

\paragraph{Demonstration on Exynos 5}
Throughput vs. memory footprint results are displayed in Fig. \ref{s2-ex-thr}. Larger window sizes provide higher throughput. Implementations that utilize R/W buffers yield higher performance than corresponding implementations with memory mapped data buffers: up to 21\% for 64KB window size. 

Regarding latency vs. energy consumption, displayed in Fig. \ref{s2-ex-lat}, 6 Pareto points are identified. Smaller window sizes provide lower latency, but higher energy consumption, due to the increased rate of data transfers. Utilization of R/W buffers is more efficient than memory mapped ones, both in terms of latency and energy consumption. Due to the relatively small buffer size, the overhead of utilizing R/W buffers is also small.

\subsubsection{Performance per watt evaluation} 
One of the goals of this work is to compare performance per watt of streaming aggregation mapped on low power embedded platforms with the corresponding results on an HPC CPU and a GPGPU. In this subsection, we first provide details on the implementation of the operator on the aforementioned platforms and then we present the evaluation results. 

We implemented the time-based streaming aggregation scenario on an Intel Xeon E5 CPU with 8 cores operating at 3.4GHz, with 16GB RAM, running Ubuntu Linux 12.04. Compiler is gcc v.4.9.2 and optimization flag is \textit{-O3}. Power consumption was measured through hardware instrumentation and refers to dynamic CPU Power. Throughput and latency were measured similarly to the embedded implementations. Data transfer was based on \textit{memcpy()} operations and synchronization based on semaphores.

The results are presented on Table \ref{s1-perf-per-watt}. The values for Myriad1, Myriad2 and I.MX.6 correspond to the implementation that provides the best results for each specific metric. To ensure fair comparison, all values for all platforms utilize 5 input queues. Performance per watt is calculated as number of tuples forwarded per second, per watt. 

In Table \ref{s1-perf-per-watt}, we notice that in terms of performance, latency on Intel Xeon is 62.3\% lower than in Myriad2, while it is 3.8 and 9.3 times lower than in I.MX.6 and Myriad1, respectively. In terms of throughput, Xeon provides more than two times higher throughput than Myriad2, 2.8 than I.MX.6 and 8.3 times higher than Myriad1. The high performance of Intel Xeon is related with the higher computational power it provides and the fact that it operates in much higher frequency than the embedded architectures. However, in terms of performance per watt, embedded platforms outperform Intel Xeon. Because the Myriad processors consume very low power, they achieve higher performance per watt: 54 times higher in Myriad2, while in Myriad1 it is 20 times higher. Finally, I.MX.6 provides 24 times higher performance per watt in comparison with Intel Xeon. 

Count-based aggregation scenario was implemented in OpenCL 1.1 and evaluated in AMD Radeon HD 6450 general purpose GPU \cite{hd6450}. The host runs Ubuntu Linux 12.04 with gcc v.4.9.2. Throughput and latency were measured similarly to the corresponding embedded implementations, while power consumption is estimated based on GPU's specifications. Device data accessing is based on R/W buffers. 

The results are presented in Table \ref{s2-perf-per-watt}. Embedded platforms provide lower throughput and higher latency than Radeon GPGPU. However, both Myriad boards yield higher performance per watt than GPGPU, due to the very low power that they require. More specifically, Myriad2 provides about 14 higher performance per watt, while Myriad1 7 times.

\subsubsection{Discussion of Experimental Results}

\begin{table} [!t]
	\centering
	\caption{Time-based streaming aggregation: Comparison between latency, throughput and performance per watt on embedded and Intel Xeon architectures.}
	\label{s1-perf-per-watt}
	\begin{tabular}{l|l|l|l}
		\hline\noalign{\smallskip} 
		& \multirow{2}{*}{Latency (usec)} & Throughput                  & \multirow{2}{*}{(t/sec)/watt} \\
		&                                 & (t/sec) &                             \\
		\hline\noalign{\smallskip} 
		Myriad1 &   140.38                              &  132,622                            &    379,041                         \\
		\hline\noalign{\smallskip} 
		Myriad2 &    39.8                             &      497,154                       &        1,004,766                     \\
		\hline\noalign{\smallskip} 
		I.MX.6  &    58                             &        384,952                     &      446,787                       \\
		\hline\noalign{\smallskip} 
		Xeon    &   15                              &        1,105,221                     &     18,412                       
	\end{tabular}
\end{table}

\begin{table} [!t]
	\centering
	\caption{Count-based streaming aggregation: Comparison between latency, throughput and performance per watt on embedded and Radeon HD 6450.}
	\label{s2-perf-per-watt}
	\begin{tabular}{l|l|l|l}
		\hline\noalign{\smallskip} 
		& Latency & Throughput                  & \multirow{2}{*}{(Mt/sec)/watt} \\
		&        (usec)                          & (Mt/sec) &                             \\
		\hline\noalign{\smallskip} 
		Myriad1 &   17.98                             &  151.8                            &    593                        \\
		\hline\noalign{\smallskip} 
		Myriad2 &    3.04                             &      505.4                       &        1286                     \\
		\hline\noalign{\smallskip} 
		Exynos  &    7.5                             &        47.4                    &      7.93                     \\
		\hline\noalign{\smallskip} 
		GPGPU    &   1.94                              &        2576.3                     &     85.87                       
	\end{tabular}
\end{table}

In this subsection we summarize the conclusions we draw from the demonstration of the methodology that is presented in the previous subsections. The trade-offs we demonstrated in the experimental results can be used to draw conclusions about the relation between the customization options and the evaluation metrics. 

\paragraph{Time-based streaming aggregation conclusions}
\textbf{Observation 1}: Streaming aggregation should be customized differently, not only between I.MX.6 Quad and Myriad architectures, but also between Myriad1 and Myriad2.

For example, in Myriad1, in the first experiment, in the implementation that provides the lowest latency, data transfer is based on hardware buffers. On the contrary, in Myriad2 it is based on busy waiting mechanism. In the implementation that provides the highest throughput, the queue is 256B in Myriad1, while it is 512B in Myriad2.
\\
\\
\textbf{Observation 2}: There is a threshold in the queue size, below which latency is very high. Very large queue sizes may also negatively affect latency.

We notice that in both Myriad and I.MX.6, latency is very high for small queue sizes, which is due to the high overhead of  constantly fetching data for refilling the queues with new tuples. In these cases, the thread that executes the merge-sort phase, often finds the queues to be empty. As the queue size increases latency drops drastically. However, in Myriad1 and Myriad2 experiments, we notice that as the queue size increases, latency tends to increase, as well (Fig. \ref{m1-lat-qs} and Fig. \ref{m2-lat-qs}). The reason is the fact that the larger the queue, the more cycles it takes to complete a DMA transfer of data from the DDR to the local memory and start refilling the queue with new tuples. Thus, the tuples that entered the update phase before a new DMA transfer and exit the output phase after it, they have higher latency than the rest ones. In contrast with Myriad, on I.MX.6 we can use much bigger queues, since the available memory is much larger. However, beyond a specific queue size, throughput and latency on I.MX.6 do not seem to be significantly affected any more (Fig. \ref{imx6-lat-qs}).
\\
\\
\textbf{Observation 3}: Throughput is mainly affected by either the data transfer mechanism (in Myriad2) or by the signaling mechanism (in Myriad1).

In general, in Myriad1 and Myriad2, throughput drops when the queue size becomes smaller, due to overhead of the DMA transfers, which is added more frequently when the queues are small (e.g. Fig. \ref{m1-won-thr-mem} and Fig. \ref{m2-won-thr-mem}). However, latency becomes lower in that case, as stated earlier. In Myriad2, throughput is mainly determined by whether \textit{memcpy} or DMA data transfer mechanism is used. Indeed, data transfer options seem to have major impact on throughput (Fig. \ref{m2-won-thr-mem} and Fig. \ref{m2-woff-thr-mem}). On the other hand, in Myriad1 the utilization of hardware buffer or of busy waiting scheme is the dominant factor that affects throughput (Fig. \ref{m1-won-thr-mem} and Fig. \ref{m1-woff-thr-mem}). In Myriad2 signaling design options have much lower impact in comparison with data transfer options. On the contrary, in Myriad1, data transfer mechanism has relatively small effect on throughput in comparison with the signaling mechanism (\textit{memcpy} however is slightly more efficient). In I.MX.6, the utilization of freelists to avoid the frequent system calls improves throughput and latency results. However, the main factor that improves performance is the utilization of semaphores instead of monitors (Fig. \ref{imx6-thr-mem}).
\\
\\
\textbf{Observation 4:} Latency is affected by the synchronization mechanism. Different mechanism should be used in Myriad1 than in Myriad2. 

The synchronization mechanism is the main design option that affects latency and energy in both Myriad architectures. Busy waiting mechanism provides lower latency in Myriad2 and slightly lower energy consumption. On the contrary, the utilization of hardware buffers in Myriad1 is more efficient it terms of latency. The data transfer mechanism has much lower impact in both architectures in terms of latency and energy. 
\\
\\
\textbf{Observation 5:} The frequency by which data movements are performed from global to local memory affects energy consumption in Myriad. We notice that larger queue sizes are more energy efficient in both Myriad1 and Myriad2, due to the lower rate by which data are fetched in the local memory (e.g. Fig. \ref{m1-won-lat-en} and Fig. \ref{m2-won-lat-en}). On I.MX.6 Quad, energy consumption is determined mainly the by synchronization scheme that it is used. 

Finally, an interesting observation is the fact that the memory allocation of the input queues affects neither the performance nor the energy consumption in Myriad significantly. The reason is the fact that both Myriad architectures provide cache memory and the rate of cache misses for accessing the queues by the PE that performs the merge-sort operation is relatively small. On the other hand, the allocation of the windows data structure in global memory has major impact in both performance and energy consumption. For instance, in Myriad2, by allocating the windows data structure in global memory, latency increases about 9\%, throughput drops by 7\% and energy consumption increases by 20\% in comparison with the allocation in local memory.

The above observations can be used to draw more general conclusions on how the streaming aggregation should be customized on embedded platforms. 
When the optimization target is performance, the following considerations should be taken into account: 
\begin{itemize}
	\item 
	The queue size should be large enough to decrease the rate by which data transfers are instructed. Frequent small data transfers lower performance. However, for implementations that are very sensitive to latency, it should be noted that too large queue sizes may increase latency.
	\item
	Window \textit{size} and \textit{advance} values affect a lot the maximum size of the windows data structure and therefore the memory allocation design options and the performance. Platforms with very small local memory may be not suitable for implementing streaming aggregation, since they would limit the window configuration values that can be used, if allocation of the data structure in global memory is not a option, due to very strict performance requirements.
	\item
	Platform-specific options for efficient communication between cores (such as the hardware buffers on Myriad) should be evaluated, when the streaming aggregation is implemented at low level. In some cases (such as in Myriad1) they can provide increased performance. 
\end{itemize}

On the other hand, if the main goal is energy efficiency, the following issues should be considered:
\begin{itemize}
	\item
	The queues should be as large as possible to avoid the energy consumption overhead of frequent small data transfers.
	\item
	For window \textit{size} and \textit{advance} values apply the same that are stated earlier: Window configuration that forces the allocation of the windows data structure in global memory has negative impact in energy consumption.
	\item
	Finally, developers should try to evaluate features that set the PEs in a low-energy mode when they are forced to wait (such as the hardware buffers in Myriad1).
\end{itemize}

\paragraph{Count-based streaming aggregation conclusions}
\textbf{Observation 1:} Both throughput and latency in Myriad implementations are affected by the memory allocation of the processed tuples. In Exynos implementations, they are mainly affected by the data accessing method by the device.

In general, throughput is apparently affected by the window size. Apart from that, design choices such as the allocation of the window in local memory and R/W buffers in OpenCL implementations, yield increased throughput. 

In contrast with throughput, smaller window sizes provide lower latency. Implementations in which tuples are processed in local memories in Myriad and utilize R/W buffers in mobile GPU provide the lowest latency. 
\\
\\
\textbf{Observation 2:} Energy consumption is mainly affected by the memory allocation and the window size. 

Energy consumption in Myriad is affected by both the type of memory in which tuples are processed and the size of the window (Fig. \ref{s2-m2-lat}). In Exynos, window size has the highest impact in energy (Fig. \ref{s2-ex-lat}). Since the rate of data transfers is increased when smaller windows are used, energy consumption is also increased. 
\\
\\
To summarize, when the optimization target is performance, DMA transfers and R/W OpenCL buffers provide higher throughput than the rest of the design choices. Large windows yield increased throughput, while smaller ones provide low latency. Finally, window sizes that allow processing in local memory benefit both performance and energy.

The methodology we propose in this work provides a systematic approach to the efficient customization of the streaming aggregation on embedded platforms. Instead of trying to tune the application and hardware parameters arbitrary to achieve the desired results, the proposed methodology provides a set of customization solutions from which developers can select the one that is more suitable according the design constraints. 

Finally, it is important to state that the methodology is not fundamentally limited to streaming aggregation. The design space could be adapted to be applicable to other streaming operators, as well (such as join, filter etc.) and to embedded platforms with various other features. New attributes can be integrated in the design space for exploration as new decision trees, leaves or categories. The application and hardware constraints should be updated accordingly to retain the coherency of the customized implementations. 
%
\subsection{Conclusion}
We proposed a customization methodology for the implementation of streaming aggregation in modern embedded devices. The methodology was demonstrated in 4 different embedded architectures, 2 aggregation scenarios and a real-world data set. The customized implementations provided by the methodology can be utilized by developers to perform trade-offs between several parameters, taking into consideration the design constraints that are imposed by both the application requirements and the embedded architecture. In the future, we intend to extend the design space by integrating more streaming aggregation operators and evaluate the approach in embedded platforms with various features. 


\newpage
\section{Energy Model on CPU for Lock-free Data-structures in Dynamic Environments}









\newcommand{\todo}[1]{~\noindent{\color{red}\rule[-.1cm]{.3cm}{.3cm}~{\color{red}{#1}}}~}

\newcommand{\ema}[1]{\ensuremath{#1}\xspace}

\newcommand{\inte}[2][0]{\ema{\left\llbracket #1,#2 \right\rrbracket}}

\newcommand{\pro}[1]{\ema{\mathds{P}\left(#1\right)}}
\newcommand{\expe}[1]{\ema{\mathds{E}\left(#1\right)}}

\newcommand{\indi}[2]{\ema{\mathds{1}_{#1}\left( #2 \right)}}

\newcommand{\expu}[1]{\ema{e^{#1}}}
\newcommand{\expi}[1]{\ema{\operatorname{exp}\left( #1\right)}}

\newcommand{\pinf}{\ema{+\infty}}

\makeatletter
\newsavebox\myboxA
\newsavebox\myboxB
\newlength\mylenA

\newcommand*\xoverline[2][0.75]{%
    \sbox{\myboxA}{$\m@th#2$}%
    \setbox\myboxB\null
    \ht\myboxB=\ht\myboxA%
    \dp\myboxB=\dp\myboxA%
    \wd\myboxB=#1\wd\myboxA
    \sbox\myboxB{$\m@th\overline{\copy\myboxB}$}
    \setlength\mylenA{\the\wd\myboxA}
    \addtolength\mylenA{-\the\wd\myboxB}%
    \ifdim\wd\myboxB<\wd\myboxA%
       \rlap{\hskip 0.5\mylenA\usebox\myboxB}{\usebox\myboxA}%
    \else
        \hskip -0.5\mylenA\rlap{\usebox\myboxA}{\hskip 0.5\mylenA\usebox\myboxB}%
    \fi}
\makeatother

\newcommand{\barov}[1]{\xoverline[1]{#1}}

\newcommand{\kth}[1]{\ema{#1^{\mathrm{th}}}}
\newcommand{\kst}[1]{\ema{#1^{\mathrm{st}}}}
\newcommand{\knd}[1]{\ema{#1^{\mathrm{nd}}}}
\newcommand{\krd}[1]{\ema{#1^{\mathrm{rd}}}}

\newtheorem{theorem}{Theorem}
\newtheorem{proposition}{Proposition}
\newtheorem{lemma}{Lemma}
\newtheorem{corollary}{Corollary}
\newtheorem{property}{Property}


\theoremstyle{definition}

\newtheorem{definition}{Definition}

\newtheorem{remark}{Remark}




\newcommand{\figsidebyside}[9]{
\begin{figure}[#1]
\centering
\begin{minipage}{#2\textwidth}
\begin{center}
\includegraphics[width=\textwidth]{#3}
\end{center}
\captionof{figure}{#4 \label{fig:#5}}
\end{minipage}\hfill%
\begin{minipage}{#6\textwidth}
\begin{center}
\includegraphics[width=\textwidth]{#7}
\end{center}
\captionof{figure}{#8 \label{fig:#9}}
\end{minipage}
\end{figure}
}

\makeatletter
\g@addto@macro\normalsize{%
  \setlength\abovedisplayskip{2pt}
  \setlength\belowdisplayskip{2pt}
  \setlength\abovedisplayshortskip{2pt}
  \setlength\belowdisplayshortskip{2pt}
}
\makeatother

\newcommand{\vspp}[1]{\pp{\vspace*{#1cm}}}

\newcommand{\sta}[1]{\ema{\mathcal{S}_{#1}}}

\newcommand{\eve}[1]{\ema{E_{\mathrm{#1}}}}
\newcommand{\thex}[2]{\ema{X_{[#1,#2[}}}
\newcommand{\kdur}[2]{\ema{a_{#1,#2}}}
\newcommand{\kaft}[2]{\ema{b_{#1,#2}}}

\newcommand{\kkdur}[2]{\ema{a_{#1,#2}}}
\newcommand{\kkaft}[1]{\ema{b_{#1}}}

\newcommand{\rwi}[2]{\ema{\mathit{ct}_{\mathrm{#1}}\left(#2\right)}}
\newcommand{\rw}[1]{\rwi{}{#1}}
\newcommand{\rwl}{\mathit{ct}_{\mathrm{lo}}}

\newcommand{\rwh}[1]{\ema{\xoverline[.88]{\mathit{ct}}\left(#1\right)}}


\newcommand{\ct}{\ema{P}}

\newcommand{\trl}{\ema{\ct_{\mathit{rl}}}}
\newcommand{\tps}{\ema{\ct_{\mathit{ps}}}}
\newcommand{\atrl}{\ema{\xoverline[.7]{\ct_{\mathit{rl}}}}}
\newcommand{\atps}{\ema{\xoverline[.7]{\ct_{\mathit{ps}}}}}

\newcommand{\tcom}{\ema{\ct_{\mathit{com}}}}

\newcommand{\fa}[1]{\ema{f_{#1}}}

\newcommand{\atrlf}{\ema{\trl^{(0)}}}
\newcommand{\trls}{\ema{\trl^{(s)}}}
\newcommand{\atrli}{\ema{\trl^{(\mathrm{i})}}}

\newcommand{\scas}{\ema{\mathit{cc}}}
\newcommand{\fcas}{\ema{\mathit{cc}}}
\newcommand{\mem}{\ema{\mathit{rc}}}
\newcommand{\total}{\ema{T_{total}}}
\newcommand{\calrl}{\ema{\mathit{cw}}}
\newcommand{\calpar}{\ema{W_{par}}}
\newcommand{\successperiod}{\ema{SUC_{period}}}
\newcommand{\retry}{\ema{RL}}
\newcommand{\diff}{\ema{Diff}}
\newcommand{\rint}[4]{\ema{\int_{#1}^{#2} #3 \, \mathit{d#4}}}

\newcommand\vespv{\ema{s}}
\newcommand\vesp[1]{\ema{\vespv_{#1}}}

\newcommand{\avexp}[1]{\ema{\barov{e}\left(#1\right)}}
\newcommand{\reexp}[1]{\ema{e\left(#1\right)}}
\newcommand{\difavexp}[1]{\ema{\barov{e}'\left(#1\right)}}

\newcommand{\expa}{\ema{e}}

\newcommand{\trlo}{\ema{\trl^{(0)}}}


\newcommand{\exppl}{(\texttt{+})}
\newcommand{\expmi}{(\texttt{-})}

\newcommand{\cw}{\ema{\mathit{cw}}}
\newcommand{\pw}{\ema{\mathit{pw}}}

\newcommand{\rc}{\ema{\mathit{rc}}}
\newcommand{\cc}{\ema{\mathit{cc}}}
\newcommand{\rlw}{\ema{\mathit{rlw}}}
\newcommand{\rlwp}{\ema{\rlw^{\expmi}}}
\newcommand{\psiz}{\ema{\pw^{\exppl}}}
\newcommand{\rlsiz}{\ema{\rlw^{\exppl}}}

\newcommand{\thru}{\ema{T}}

\newcommand{\ctot}{\ema{P}}
\newcommand{\thr}[1]{\ema{\mathcal{T}_{#1}}}

\newcommand{\shiftf}{\ema{\mathit{delay}}}
\newcommand{\shift}[1]{\ema{\mathit{delay}\left(#1\right)}}
\newcommand{\shifti}[2]{\ema{\mathit{delay}_{#1}\left(#2\right)}}

\newcommand{\supe}[1]{\ema{\mathit{sp}\left(#1\right)}}
\newcommand{\avsupe}[1]{\ema{\barov{\mathit{sp}}\left(#1\right)}}
\newcommand{\avsupef}{\ema{\barov{\mathit{sp}}}}

\def\ffuf{\ema{\operatorname{f_1}}}
\def\ffus{\ema{\operatorname{f_2}}}
\def\ffusi{\ema{\ffus^{-1}}}

\newcommand{\fuf}[1]{\ema{\ffuf{\left( #1\right)}}}
\newcommand{\fufi}[1]{\ema{\ffuf^{-1}\left( #1\right)}}
\newcommand{\fus}[1]{\ema{\ffus\left( #1\right)}}
\newcommand{\fusi}[1]{\ema{\ffusi\left( #1\right)}}

\newcommand{\rl}{retry loop\xspace}
\newcommand{\rls}{retry loops\xspace}
\newcommand{\Rls}{Retry loops\xspace}
\newcommand{\RLs}{Retry Loops\xspace}
\newcommand{\wl}{work loop\xspace}
\newcommand{\wls}{work loops\xspace}
\newcommand{\Wls}{Work loops\xspace}
\newcommand{\WLs}{Work Loops\xspace}
\newcommand{\supw}{success period\xspace}
\newcommand{\Supw}{Success period\xspace}
\newcommand{\SUpw}{Success Period\xspace}
\newcommand{\supws}{success periods\xspace}
\newcommand{\Supws}{Success periods\xspace}
\newcommand{\SUpws}{Success Periods\xspace}
\newcommand{\cww}{critical work\xspace}
\newcommand{\Cww}{Critical work\xspace}
\newcommand{\cwws}{critical works\xspace}
\newcommand{\Cwws}{Critical works\xspace}
\newcommand{\pww}{parallel work\xspace}
\newcommand{\Pww}{Parallel work\xspace}
\newcommand{\pwws}{parallel works\xspace}
\newcommand{\Pwws}{Parallel works\xspace}

\newcommand{\tbf}{reset time\xspace}
\newcommand{\Tbf}{Reset time\xspace}
\newcommand{\TBF}{Reset Time\xspace}

\newcommand{\re}{retry\xspace}
\newcommand{\res}{retries\xspace}
\newcommand{\RES}{Retries\xspace}
\newcommand{\RE}{Retry\xspace}

\newcommand{\wc}{worst-case\xspace}

\newcommand{\avba}{average-based\xspace}
\newcommand{\Avba}{Average-based\xspace}

\newcommand{\psw}{parallel section\xspace}
\newcommand{\pss}{parallel sections\xspace}
\newcommand{\ds}{data structure\xspace}
\newcommand{\dss}{data structures\xspace}
\newcommand{\DSs}{Data Structures\xspace}
\newcommand{\casexp}{{\it Compare-And-Swap}\xspace}
\newcommand{\cas}{{\it CAS}\xspace}
\newcommand{\cass}{\cas{}'s\xspace}

\newcommand{\rf}{{\it Read}\xspace}
\newcommand{\faa}{{\it Fetch-and-Increment}\xspace}
\newcommand{\acc}{{\it Access}\xspace}

\newcommand{\deq}{deque\xspace}
\newcommand{\deqs}{deques\xspace}
\newcommand{\Deq}{Deque\xspace}
\newcommand{\Deqs}{Deques\xspace}

\newcommand{\delmin}{\FuncSty{DeleteMin}\xspace}
\newcommand{\enqop}{\FuncSty{Enqueue}\xspace}
\newcommand{\deqop}{\FuncSty{Dequeue}\xspace}
\newcommand{\popop}{\FuncSty{Pop}\xspace}
\newcommand{\pushop}{\FuncSty{Push}\xspace}
\newcommand{\incop}{\FuncSty{Increment}\xspace}
\newcommand{\decop}{\FuncSty{Decrement}\xspace}

\newcommand{\casop}[1]{\FuncSty{CAS\textsubscript{#1}}\xspace}

\newcommand{\caca}{wasted \re}
\newcommand{\cacas}{wasted \res}
\newcommand{\cacaf}{\ema{\operatorname{IT}}}

\newcommand{\flo}{\ema{f^{\exppl}}}
\newcommand{\fup}{\ema{f^{\expmi}}}

\newcommand{\ghz}[1]{\ema{#1\,\mathrm{GHz}}}
\newcommand{\megb}[1]{\ema{#1\,\mathrm{MB}}}

\newcommand{\cycles}[1]{\ema{#1\,\mathrm{cycles}}}
\newcommand{\uow}[1]{\ema{#1\,\mathrm{u.o.w.}}}

\newcommand{\aexpi}[1]{\ema{\barov{e_i}\left(#1\right)}}
\newcommand{\aexp}[2]{\ema{\barov{e_{#1}}\left(#2\right)}}

\newcommand{\expansion}[1]{\avexp{#1} }
\newcommand{\expansionp}[1]{\difavexp{#1} }
\newcommand{\stag}[1]{\ema{\mathit{Stage}_{#1}}}
\newcommand{\casn}[1]{\ema{\mathit{CAS}_{#1}}}


\newcommand{\pushl}{\FuncSty{PushLeft}\xspace}
\newcommand{\pushr}{\FuncSty{PushRight}\xspace}
\newcommand{\popl}{\FuncSty{PopLeft}\xspace}
\newcommand{\popr}{\FuncSty{PopRight}\xspace}

\newcommand{\anch}{{\it Anchor}\xspace}

\newcommand{\ie}{\textit{i.e.}\xspace}
\newcommand{\etal}{\textit{et al.}\xspace}
\newcommand{\eg}{\textit{e.g.}\xspace}
\newcommand{\etc}{\textit{etc.}\xspace}
\newcommand{\afort}{\textit{a fortiori}\xspace}
\newcommand{\Afort}{\textit{A fortiori}\xspace}

\newcommand\rr[1]{#1}

\newcommand\pp[1]{}

\newcommand\pr[2]{#2}

\newcommand\tra[1]{}

\newcommand{\ptcom}[1]{\textcolor{Bittersweet}{[\bf PT: #1]}}
\newcommand{\prgcom}[1]{\textcolor{Bittersweet}{[\bf PRG: #1]}}

\newcommand\falseparagraph[1]{\noindent{\bf #1:}\xspace}



\makeatletter
\newcommand{\removelatexerror}{\let\@latex@error\@gobble}
\makeatother

\newcommand{\abstalgo}{
\removelatexerror
\pp{\scriptsize}
\begin{procedure}[H]
\SetKwData{pet}{execution\_time}
\SetKwData{pdo}{done}
\SetKwData{psucc}{success}
\SetKwData{pcur}{current}
\SetKwData{pnew}{new}
\SetKwData{pacp}{AP}
\SetKwData{pttot}{t}
\SetKwFunction{pinit}{Initialization}
\SetKwFunction{ppw}{Parallel\_Work}
\SetKwFunction{pcw}{Critical\_Work}
\SetKwFunction{pread}{Read}
\SetKwFunction{pcas}{CAS}

\SetAlgoLined
\While{! \pdo}{\nllabel{alg:li-bwl}
\ppw{}\;\nllabel{alg:li-ps}
\While{! \psucc}{\nllabel{alg:li-bcs}
\pcur $\leftarrow$ \pread{\pacp}\;\nllabel{alg:li-bbcs}
\pnew $\leftarrow$ \pcw{\pcur}\;
\psucc $\leftarrow$ \pcas{\pacp, \pcur, \pnew}\;\nllabel{alg:li-ecs}
}
}
\caption{AbstractAlgorithm()\label{alg:gen-name}}
\end{procedure}
}

\newcommand\posrem[2]{#2}

\setcounter{tocdepth}{3}

\setcounter{secnumdepth}{3}


\newcommand{\Watiw}{Slack time\xspace}
\newcommand{\WAtiw}{Slack Time\xspace}
\newcommand{\watiw}{slack time\xspace}
\newcommand{\watisw}{slack times\xspace}

\newcommand{\staw}{stage\xspace}
\newcommand{\Staw}{Stage\xspace}
\newcommand{\staws}{stages\xspace}
\newcommand{\Staws}{Stages\xspace}

\newcommand{\wati}[1]{\ema{\mathit{st}\left(#1\right)}\xspace}
\newcommand{\nwati}[1]{\ema{\mathit{nst}\left(#1\right)}\xspace}

\newcommand{\avwati}[1]{\ema{\xoverline[.8]{\mathit{st}}\left(#1\right)}\xspace}

\newcommand{\Compw}{Completion time\xspace}
\newcommand{\COmpw}{Completion time\xspace}
\newcommand{\compw}{completion time\xspace}

\newcommand{\reti}[1]{\ema{\xoverline[.85]{\mathit{rt}}\left(#1\right)}\xspace}

\newcommand{\retiw}{reset time\xspace}
\newcommand{\retiws}{reset times\xspace}
\newcommand{\Retiws}{Reset times\xspace}
\newcommand{\REtiws}{Reset Times\xspace}

\DeclarePairedDelimiter\abs{\lvert}{\rvert}%
\DeclarePairedDelimiter\norm{\lVert}{\rVert}%

\makeatletter
\let\oldabs\abs
\def\abs{\@ifstar{\oldabs}{\oldabs*}}
\let\oldnorm\norm
\def\norm{\@ifstar{\oldnorm}{\oldnorm*}}
\makeatother

\newcommand{\ra}{\ema{\rightarrow}}






\newcommand\lemsl{
\begin{lemma}
\label{lem:unif-min}
Let an integer $n$, a real positive number $a$, and $n$
independent random variables $X_1, X_2, \dots, X_n$, uniformly
distributed within $[0,a[$. Let then $X$ be the random variable
defined by: $X = \min_{i \in \inte[1]{n}} X_i$. The expectation of $X$ is:
\[ \expe{X} = \frac{a}{n+1}. \]
\end{lemma}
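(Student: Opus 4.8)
The plan is to compute the survival function of $X$ and then recover the expectation by integration. First I would note that, since the $X_i$ are i.i.d.\ uniform on $[0,a[$, for any $t \in [0,a]$ we have $\pro{X_i \geq t} = (a-t)/a$, and hence, by independence, $\pro{X \geq t} = \pro{\bigcap_{i \in \inte[1]{n}} \{X_i \geq t\}} = \prod_{i=1}^n \pro{X_i \geq t} = \big((a-t)/a\big)^n$. For $t < 0$ this probability equals $1$ and for $t > a$ it equals $0$, so the tail is supported on $[0,a]$.

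Next I would invoke the tail formula for the expectation of a nonnegative random variable, $\expe{X} = \int_0^{\pinf} \pro{X \geq t}\, dt$, which here collapses to $\int_0^a \big((a-t)/a\big)^n\, dt$. The substitution $u = a-t$ then gives $\frac{1}{a^n}\int_0^a u^n\, du = \frac{1}{a^n}\cdot \frac{a^{n+1}}{n+1} = \frac{a}{n+1}$, which is the claimed value. An equivalent route is to differentiate the survival function to obtain the density $f_X(t) = \frac{n}{a}\big((a-t)/a\big)^{n-1}$ on $[0,a[$ and integrate $\int_0^a t\, f_X(t)\, dt$ by parts; this produces the same answer and could be used as a cross-check.

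The argument is entirely routine, so I do not anticipate a real obstacle. The only points deserving a line of care are the justification of the independence step — namely that $\{X \geq t\} = \bigcap_i \{X_i \geq t\}$ since the minimum exceeds $t$ precisely when every $X_i$ does — and the handling of the boundary cases so that the tail integral is effectively taken only over $[0,a]$; both are immediate from the definition of the uniform distribution on $[0,a[$.
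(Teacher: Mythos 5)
Your proof is correct and rests on the same key step as the paper's: using independence to compute the survival function $\pro{X > t} = \left((a-t)/a\right)^n$. The only (inessential) difference is that you then apply the tail-integral formula $\expe{X} = \int_0^{a} \pro{X \geq t}\,dt$, whereas the paper differentiates to obtain the density and integrates $t\,f(t)$ — your route is marginally shorter, and you already note the density computation as an equivalent cross-check.
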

\begin{proof}
Let a positive real number $x$ be such that $x<a$. We have
\begin{align*}
\pro{X > x} &=  \pro{\forall i : X_i > x}\\
&= \prod_{i=1}^n \pro{X_i > x}\\
\pro{X > x} &=  \left( \frac{a-x}{a} \right) ^{n}\\
\end{align*}
Therefore, the probability distribution of $X$ is given by:
\[ t \mapsto \frac{n}{a} \left( \frac{a-x}{a} \right) ^{n-1},\]
and its expectation is computed through

\begin{align*}
\expe{X} &= \frac{n}{a} \rint{0}{a}{x \times \left( \frac{a-x}{a} \right) ^{n-1}}{x} \\
&= \frac{n}{a} \rint{0}{a}{(a-u) \times \left( \frac{u}{a} \right) ^{n-1}}{u} \\
&= \frac{n}{a^n} \rint{0}{a}{(a-u) \times u^{n-1} }{u} \\
&= \frac{n}{a^n} \left( a \times \frac{a^n}{n} - \frac{a^{n+1}}{n+1} \right) \\
\expe{X} &= \frac{a}{n+1}.
\end{align*}
\end{proof}
}

\newcommand\proofswitch{
\pr{We have}{It remains} to decide whenever the \ds is under contention or not, and
to find the corresponding solution.
Concerning the frontier between contended and non-contended system, we
can remark that Equations~\ref{eq:little-co} and~\ref{eq:little-nc}
are equivalent if and only if
\begin{equation}
\label{eq:little-fronti}
\frac{\rc + \cw + \scas}{\atrl} =
\frac{\atrl +2}{\atrl +1}  \left( \cw + \avexp{\atrl} \right)
+ 2\scas,
\end{equation}
which leads to Lemma~\ref{lem:lit-swi}.

\begin{lemma}
\label{lem:lit-swi}
The system switches from being
non-contended to being contended at $\atrl = \atrlf$, where
\pr{
\[ \atrlf = \frac{\scas+\cw-\mem}{2 (\cw + 2 \scas)} \left( \sqrt{1+\frac{4 (\mem+\cw+\scas) (\cw+2\scas)}{(\scas+\cw-\mem)^2}} -1 \right). \]
}
{
\[ \atrlf = \frac{-(\scas+\cw-\mem) + \sqrt{\left( \scas+\cw-\mem \right)^2 + 4 (\mem+\cw+\scas) (\cw+2\scas)}}{2 (\cw + 2 \scas)}.\]
}
\end{lemma}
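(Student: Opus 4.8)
The plan is to read Equation~\ref{eq:little-fronti} as an equation in the single unknown $\atrl$ and to solve it explicitly; the stated formula for $\atrlf$ is then simply its positive root.

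First I would dispose of the expansion term. The frontier is, by definition, the value of $\atrl$ separating the contended regime (where the critical work is inflated to $\cw + \avexp{\atrl}$) from the non-contended one (where it is just $\cw$); since the expansion is identically zero on the non-contended side and the two descriptions must agree at the boundary, we have $\avexp{\atrlf}=0$. Substituting this into Equation~\ref{eq:little-fronti} and writing $\mem$ for $\rc$ turns it into
\[
\frac{\mem+\cw+\scas}{\atrl} \;=\; \frac{\atrl+2}{\atrl+1}\,\cw \;+\; 2\scas .
\]

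Next I would clear denominators: multiplying both sides by $\atrl(\atrl+1)$ gives $(\mem+\cw+\scas)(\atrl+1) = \cw\,\atrl(\atrl+2) + 2\scas\,\atrl(\atrl+1)$, and collecting powers of $\atrl$ yields the quadratic
\[
(\cw+2\scas)\,\atrl^{2} + (\scas+\cw-\mem)\,\atrl - (\mem+\cw+\scas) = 0 .
\]
Its leading coefficient $\cw+2\scas$ is positive and its constant term $-(\mem+\cw+\scas)$ is negative, so the discriminant is positive and there is a unique positive root, which is the only admissible value of $\atrl$. Solving with the quadratic formula and keeping the $+$ branch produces
\[
\atrlf = \frac{-(\scas+\cw-\mem) + \sqrt{(\scas+\cw-\mem)^{2} + 4(\mem+\cw+\scas)(\cw+2\scas)}}{2(\cw+2\scas)},
\]
which is the second expression in the statement; the first form follows by pulling $\scas+\cw-\mem$ out of the radical, the two being equal when $\scas+\cw-\mem>0$, as holds in the relevant parameter range.

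The only genuinely non-mechanical step is the claim $\avexp{\atrlf}=0$; I would justify it from the properties of the expansion function established in the earlier analysis (it equals $0$ precisely when the system is non-contended, and the frontier is exactly where it leaves that value), after which everything reduces to the elementary manipulation above. A minor additional check is confirming the sign pattern of the quadratic, which guarantees that exactly one root is positive and hence that $\atrlf$ is well defined.
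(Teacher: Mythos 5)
Your reduction to the quadratic $(\cw+2\scas)\,\atrl^{2} + (\scas+\cw-\mem)\,\atrl - (\mem+\cw+\scas) = 0$ and the selection of its unique positive root are exactly the paper's computation, and that part is fine. The gap is in the step you yourself flag as "the only genuinely non-mechanical" one, namely $\avexp{\atrlf}=0$. Your justification — that expansion vanishes on the non-contended side and the two descriptions must agree at the boundary — is circular: the expansion $\avexp{\cdot}$ is a fixed function of $\atrl$ (zero for $\atrl\le 1$, positive and increasing beyond), and whether it vanishes at the frontier depends on \emph{where} the frontier lies, which is what you are computing. The paper closes this by first solving the quadratic with the expansion set to zero, and then \emph{verifying} that the resulting root satisfies $\atrlf\le 1$: evaluating the quadratic at $\atrl=1$ gives $\cw+2(\scas-\mem)\ge 0$ because $\scas\ge\mem$ on all architectures considered, and since the leading coefficient is positive and the other root is negative, the positive root lies in $(0,1]$. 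Only then does monotonicity of the expansion together with $\avexp{1}=0$ give $\avexp{\atrlf}=0$, so that $\atrlf$ really solves Equation~\ref{eq:little-fronti}. You need this check (and the hardware assumption $\scas\ge\mem$ it rests on); without it your substitution of $\avexp{\atrl}=0$ into the frontier equation is unjustified.

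A second, smaller omission: the lemma asserts a single switching point, so you must also rule out further solutions of Equation~\ref{eq:little-fronti} at which the expansion is \emph{non}-zero. Uniqueness of the positive root of the quadratic only covers the zero-expansion case. The paper handles the rest by rewriting the frontier condition in the form of Equation~\ref{eq:lit-fro-mon}, whose left-hand side is constant in $\atrl$ while the right-hand side is increasing (since $x\mapsto x(x+2)/(x+1)$ and $\avexp{\cdot}$ are both non-decreasing); a constant can meet an increasing function at most once. Adding these two arguments would make your proof match the paper's.
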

\begin{proof}
We show that:
\begin{itemize}
\item \atrlf is the unique positive solution of Equation~\ref{eq:little-fronti} if the expansion is set to 0,
\item $\atrlf \leq 1$,
\item there is no solution of Equation~\ref{eq:little-fronti} with a non-null expansion.
\end{itemize}

If the expansion is set to 0, then Equation~\ref{eq:little-fronti} can
be turned into the second order equation
\[ \atrl^2 (\cw + 2 \cc) + \atrl \left( \cw + \cc - \mem \right) - (\mem + \cw + \cc) = 0,
\]
that has a single positive solution: \atrlf.

While instantiating the binomial with $\atrl = 1$, we obtain $\cw + 2
(\scas - \mem)$, which is not negative, since $\scas \geq \mem$ in all
the architectures that we are aware of.
As the second order equation has also a negative solution, and $\cw +
2 \cc$ is positive, we have that $1 \geq \atrlf$.
This implies that \atrlf is a solution of the former
Equation~\ref{eq:little-fronti}: the expansion is indeed a
non-decreasing function, thus $0 \leq \avexp{\atrlf} \leq \avexp{1} =
0$. Still we could have other solutions with a non-null expansion.

However, Equation~\ref{eq:little-fronti} can be rewritten as:
\begin{equation}
\label{eq:lit-fro-mon}
\rc + \cw + \scas =
\frac{\atrl +2}{\atrl +1} \times \atrl \times  \left( \cw + \avexp{\atrl} \right)
+ 2\scas.
\end{equation}
The left-hand side of Equation~\ref{eq:lit-fro-mon} is
constant, while the right-hand side is increasing, which discards
any other solution, hence the lemma.
\end{proof}

Thanks to Lemma~\ref{lem:lit-swi}, we can unify the \supw as:
\[
\avsupe{\atrl} = \left\{\begin{array}{ll}
\left( \mem + \cw + \scas \right) / \atrl & \quad \text{if } \atrl \leq \atrlf \\
 \left( \cw + \avexp{\atrl}\right) \times \frac{\atrl + 2}{\atrl +1} + 2\scas  &
 \quad \text{otherwise.}
\end{array}\right.
\]
The unified \supw obeys to the following equation
\begin{equation}
\label{eq:little-res}
\avsupe{\atrl} = \frac{\pw}{\ct - \atrl}.
\end{equation}
}

\newcommand{\prooffp}{
Let us note $\fuf{\atrl} = \avsupe{\atrl} \times \atrl$ and $\fus{\atrl} =
\pw \times \atrl / (\ct-\atrl)$; then Equation~\ref{eq:little-res} is equivalent to
$\fuf{\atrl} = \fus{\atrl}$, and we have some properties on \ffuf and \ffus.

Firstly, since $x \mapsto x(x+2)/(x+1)$ is non-decreasing on
$[0,\pinf[$, as well as the expected expansion, we know that \ffuf is
a non-decreasing function.
Secondly, \ffus is increasing on $[0,\ct[$, and is bijective from
$[0,\ct[$ to $[0,\pinf[$. We can thus rewrite Equation~\ref{eq:little-res} as:
\begin{equation}
\label{eq:fi-po}
 \atrl = \fusi{\fuf{\atrl}}.
\end{equation}
Moreover, $\ffusi \circ \ffuf$ is a non-decreasing function, as a
composition of two non-decreasing functions.
Thirdly, \ffusi can be obtained through $x = \fus{\fusi{x}} = \pw \times \fusi{x} / (\ct -
\fusi{x})$, which leads to
\[ \fusi{x} = \frac{x}{\pw + x} \ct. \]

In addition, we know by construction that if $\atrl > \atrlf$, then
\begin{equation}
\label{eq:little-fip}
 \left( \cw + \avexp{\atrl}\right) \times \frac{\atrl + 2}{\atrl +1} + 2\scas \geq \frac{\mem + \cw + \scas}{\atrl}.
\end{equation}
Indeed, on the one hand,
\[ \lim_{\atrl \rightarrow 0^{+}} \frac{\mem + \cw + \scas}{\atrl} = \pinf, \]
and on the other hand, $(\cw + \avexp{\atrl}) \times (\atrl + 2)/(\atrl +1) + 2\cc$
remains bounded. According to Lemma~\ref{lem:lit-swi}, those two
functions cross only once, hence Equation~\ref{eq:little-fip}.

Since $\avsupe{\atrl} = (\mem + \cw + \scas)/\atrl$ if $\atrl \leq
\atrlf$, we have $\avsupe{\atrl} \geq (\mem + \cw + \scas)/\atrl$
for any \atrl, and then
\[ \fuf{\atrl} \geq \mem + \cw + \scas. \]

Let then
\[ \atrli = \frac{\mem + \cw + \scas}{\pw + \mem + \cw + \scas} \ct.\]
We have seen that $\ffusi \circ \ffuf$ is a non-decreasing function,
hence
\begin{align*}
\fusi{\fuf{\atrli}} &\geq \fusi{\mem + \cw + \scas}\\
&\geq \frac{\mem + \cw + \scas}{\pw +  \mem + \cw + \scas} \times \ct\\
\fusi{\fuf{\atrli}} &\geq \atrli.
\end{align*}
Since \ffusi is bounded, Equation~\ref{eq:fi-po} admits a solution.

We are interested in the solution whose \atrl is minimal since it
corresponds to the first attained solution when the expansion grows,
starting from 0. The current theorem comes then from the application
of the Knaster-Tarski theorem.
}


\newcommand\proofaliexp{
Let us set the timeline so that at the beginning of the \supw, \ie
just after a successful \cas, we are at $t=0$.
Firstly, a success cannot start before $t=t_0$, where
$t_0=\scas+\lceil\cw/\scas\rceil\scas$. The quickest thread indeed
starts a failed \cas at $t=0$ and comes back from \cww at
$t=\scas+\cw$. It has then to wait for the current \cas to finish
before being able to obtain the cache line.
At $t=t_0$, $\trl - t_0/\scas +1$ threads are competing for the
data. Among them, 1 thread will lead to a successful \cas, while the
$\trl - t_0/\scas$ other threads will end up with a failed \cas.
If a failed \cas occurs, then at $t=t_0+\scas$, the same number of
threads compete, but now there is one more potential success and one
less potential failure.  In the worst case, it will continue until all
competing threads will lead to a successful \cas.

Let $\tcom = \trl - t_0/\scas +1$ the number of threads that are
competing at each round, and let, for all $i \in \inte[1]{\tcom}$,
$p_i = i/\tcom$ the probability to draw a thread that will execute a
successful \cas.

The expected number of failed \cass that occurs after the first thread comes back is then given by
\[
\expe{F} = \prsu{1} \times 0 + (1-\prsu{1}) \prsu{2} \times 1 + \dots + \\
(1-\prsu{1})(1-\prsu{2})\times\dots\times (1-\prsu{\tcom -1}) \times \prsu{\tcom} \times (\tcom -1).
\]
More formally,
\begin{align*}
\expe{F} &= \sum_{i=1}^{\tcom} \prod_{j=1}^{i-1} (1-\prsu{j}) \prsu{i} \times (i-1)\\
&= \sum_{i=1}^{\tcom} \prod_{j=1}^{i-1} (1-\frac{j}{\tcom}) \frac{i}{\tcom} \times (i-1)\\
&= \sum_{i=1}^{\tcom} \frac{1}{\left(\tcom\right)^i} \prod_{j=1}^{i-1} (\tcom-j) i(i-1)\\
\expe{F} &= \sum_{i=1}^{\tcom} \frac{i(i-1)}{\left(\tcom\right)^i} \frac{(\tcom-1)!}{(\tcom-i)!}\\
\end{align*}
}


\def\mmtr{M}
\def\mmab{C}
\def\mma{D}
\def\mmpe{Q}
\def\mmtrir{R}
\def\mmtril{T}

\newcommand{\mtr}[2]{\ema{\mmtr_{#1,#2}}}
\newcommand{\mab}[2]{\ema{\mmab_{#1,#2}}}
\newcommand{\ma}[2]{\ema{\mma_{#1,#2}}}
\newcommand{\mpe}[2]{\ema{\mmpe_{#1,#2}}}
\newcommand{\mtrir}[2]{\ema{\mmtrir_{#1,#2}}}
\newcommand{\mtril}[2]{\ema{\mmtril_{#1,#2}}}

\newcommand{\eigv}{\ema{v}}
\newcommand{\eig}[1]{\ema{\eigv_{#1}}}

\newcommand\transmat{
We consider here that the system is in a given state, and\rr{ we} compute
the probability that the system will next reach any other
state. Without loss of generality, we\rr{ can} choose the origin of time
such that the current \supw begins at $t=0$.


Let us first look at the core cases, \ie let $i \in \intmid \cup \inthig$ and
$k \in \inte[0]{\ct-i-1}$; we assume that the system is currently in
state \sta{i}, and we are interested in the probability that the
system will switch to \sta{i+k} at the end of the current state. In
other words, we want to find the probability that, given that the
current \supw started when $i$ threads were in the \rl, the next
\supw will begin while $i+k$ threads are in the \rl.

\rr{
\begin{figure}
\begin{center}
\begin{tikzpicture}[%
it/.style={%
    rectangle,
    text width=11em,
    text centered,
    minimum height=\pr{2}{3}em,
    draw=black!50,
    scale=.9,
  }
]
\coordinate (O) at (0,0);
\pha{pcas}{O}{\cas}{\pr{2}{5}}{\green}
\pha{sla}{pcas}{\wati{i}}{\pr{5}{8}}{\grey}
\pha{acc}{sla}{\cas}{\pr{2}{4}}{red}
\pha{cri}{acc}{\cw}{\pr{3}{4}}{\maroon}
\pha{exp}{cri}{\reexp{i}}{\pr{4}{6}}{\grey}
\pha{fcas}{exp}{\cas}{\pr{2}{5}}{\green}
\coordinate (intsl) at ($(sla.north west)+(0,1*\marup)$); 
\coordinate (intsr) at ($(sla.north east)+(0,1*\marup)$);
\coordinate (inte) at ($(fcas.north east)+(0,1*\marup)$); 
\draw [decorate,decoration={brace,amplitude=10pt}]
(intsl)  --  (intsr)
node [black,midway,yshift=15pt] {\scriptsize 0 new thread};

\draw [decorate,decoration={brace,amplitude=10pt}]
(intsr) -- (inte)
node [black,midway,yshift=15pt] {\scriptsize $k+1$ new threads};
\arcod{($(acc.north west)!.2! (acc.north east) + (0,.7*\marup)$)}{.4*\marup}
\arcod{($(exp.north west)!.1! (exp.north east) + (0,.7*\marup)$)}{.4*\marup}
\arcod{($(exp.north west)!.9! (exp.north east) + (0,.7*\marup)$)}{.4*\marup}

\coordinate (extsl) at ($(sla.south west)+(0,-\marup)$);
\coordinate (extsr) at ($(sla.south east)+(0,-\marup)$);
\draw [decorate,decoration={brace,mirror,amplitude=10pt},text width=11em, align=center] (extsl) -- (extsr)
node (caca) [black,midway,yshift=-18pt,execute at begin node=\setlength{\baselineskip}{8pt}] {\scriptsize at least 1\\new thread};

\coordinate (Ob) at ($(sla.south west)!.2!(sla.south) + (0,-3*\marup)$);
\pha{accb}{Ob}{\rf}{\pr{2}{4}}{yellow}
\pha{crib}{accb}{\cw}{\pr{3}{4}}{\maroon}
\pha{expb}{crib}{\reexp{i}}{\pr{4}{6}}{\grey}
\pha{fcasb}{expb}{\cas}{\pr{2}{5}}{\green}
\arcou{(accb.south west)}{3.3*\marup}
\arcou{($(crib.south west)!.2! (crib.south east) + (0,-.7*\marup)$)}{.4*\marup}
\arcou{($(expb.south west)!.9! (expb.south east) + (0,-.7*\marup)$)}{.4*\marup}
\draw[very thick, draw=blue] (accb.south west) -- (fcasb.south east) -- (fcasb.north east) -- (accb.north west);
\coordinate (extnl) at ($(accb.south west)+(0,-\marup)$); 
\coordinate (extnr) at ($(fcasb.south east)+(0,-\marup)$);
\draw [decorate,decoration={brace,mirror,amplitude=10pt},text width=11em, align=center]
(extnl) -- (extnr)
node [black,midway,yshift=-15pt] {\scriptsize $k$ new threads};

\draw[dotted, draw=black] (intsl) -- (extsl);
\draw[dotted, draw=black] (intsr) -- (extsr);
\draw[dotted, draw=black] (inte) -- (fcas.south east);
\draw[dotted, draw=black] (accb.north west) -- (extnl);
\draw[dotted, draw=black] (fcasb.north east) -- (extnr);
\node[text width=1.2cm, align=center] (inttext) at ($(pcas.west) + (-2*\marup,0)$) {Internal\\ execution};
\node[anchor=east] (eint) at ($(inttext.east) + (0.2,1.2*\marup)$) {\eve{int}};
\node[anchor=east] (eext) at ($(inttext.east) + (0.2,-2*\marup)$){\eve{ext}};

\path[->,out=90,in=-180] ($(inttext.north west)!.3!(inttext.north)$) edge (eint);
\path[->,out=-90,in=-180] ($(inttext.south west)!.3!(inttext.south)$) edge (eext);
\end{tikzpicture}
\end{center}
\caption{Possible executions\label{fig:ex-eint-eext}}
\end{figure}
}

As the successful thread will exit the \rl at the end of the current
\supw, there is at least one thread that enters the \rl during the
current \supw. Two non-overlapping events can then occur (see
Figure~\ref{fig:ex-eint-eext}): either the first thread exiting the
\psw starts within $[0,\wati{i}[$, \ie in the \watiw of the internal
    execution, and this event is written \eve{ext}, or the first
    thread entering the \rl starts after $t=\wati{i}$, and this event
    is denoted by \eve{int}.
Therefore, we have $\pro{\sta{i} \rightarrow \sta{i+k}} =
\pro{\eve{ext}} + \pro{\eve{int}}$.

First note that \eve{ext} cannot happen when the \supw is highly
contended; in this case, the \watiw is indeed null, and we conclude
$\pro{\eve{ext}} = 0$. In addition, we have seen in
Section~\ref{sec:mark-expa} that external threads, \ie threads that
are in the \psw at the beginning of the \supw, do not participate to
the game of expansion, so they cannot be successful. Under
high-contention, \eve{int} happens, and the successful \cas{} that
ends the \supw is operated by an internal thread, \ie a thread that
was already in the \rl when the \supw began.

Under medium contention, \eve{ext} can occur. In this case, an
external thread accesses the \ds before any internal thread does. We
have also seen that the expansion is null in medium contention level,
thus the external thread will execute its \cww, and especially its
\cas without being delayed; this implies that the first external
thread that accesses the \ds will end the current \supw with the end
of its \cas. If however \eve{int} occurs, an internal thread succeeds,
but is not necessarily the first thread that accessed the \ds during
the \supw.


The two possible events are pictured in Figure~\ref{fig:ex-eint-eext},
where the blue arrows represent the threads that exit the \psw. Recall,
we aim at computing the probability to start the next \supw with $i+k$
threads inside the \rl. We formalize the idea drawn in the figure by
using \thex{a}{b}, which is defined as a random variable
indicating the number of threads exiting the \psw during the time
interval $[a,b[$.
The probability of having \eve{int} is then given by\vspp{0}
\pr{\begin{equation*}
\pro{\eve{int}} = \pro{\thex{0}{\wati{i}}=0 \quad | \quad \trl=i \text{ at } t=0^{+}}
\times \pro{\thex{\wati{i}}{\wati{i}+\rw{i}}=k+1 \quad | \quad \trl=i \text{ at } t=\wati{i}^{+}}.
\end{equation*}}{
\[ \pro{\eve{int}} = \pro{\thex{0}{\wati{i}}=0 \quad | \quad \trl=i \text{ at } t=0^{+}} \\
\times \pro{\thex{\wati{i}}{\wati{i}+\rw{i}}=k+1 \quad | \quad \trl=i \text{ at } t=\wati{i}^{+}}.\]
}


Concerning \eve{ext}, we know that if $i \in \inthig$, then $\pro{\eve{ext}} = 0$.
Otherwise, if we denote by $t_3$ the starting time of the first
thread that exits the \psw, we obtain
\begin{align*}
\pro{\eve{ext}} =& \pro{\thex{0}{\wati{i}}>0 \quad | \quad \trl=i \text{ at } t=0^{+}} \\
&\times \pro{\thex{t_3}{t_3+\rc+\cw+\scas}=k \quad | \quad \trl=i+1 \text{ at } t=t_3^{+}}
\end{align*}
To simplify the reasoning, and given that the costs of \rf and \cas
are approximately the same, we approximate $t_3+\rc+\cw+\scas$ with $t_3+\scas+\cw+\scas$, leading to
\begin{align*}
\pro{\eve{ext}} =& \pro{\thex{0}{\wati{i}}>0 \quad | \quad \trl=i \text{ at } t=0^{+}} \\
&\times \pro{\thex{t_3}{t_3+\rw{i+1}}=k \quad | \quad \trl=i+1 \text{ at } t=t_3^{+}}
\end{align*}

According to the exponential distribution, given a thread that is in
the \psw at $t=a$, the probability to exit the \psw within $[a,b[$ is:
\[ \rint{a}{b}{\lambda \expu{-\lambda (t-a)}}{t} = \rint{0}{b-a}{\lambda \expu{-\lambda u}}{u}.  \]
It is also the probability, given a thread that is in the \psw at
$t=0$, to exit the \rl within $[a,b-a[$. This implies:
\begin{align*}
\pro{\eve{int}} =& \pro{\thex{0}{\wati{i}}=0 \quad | \quad \trl=i \text{ at } t=0^{+}} \\
&\times \pro{\thex{0}{\rw{i}}=k+1 \quad | \quad \trl=i \text{ at } t=0^{+}}
\end{align*}
and
\begin{align*}
\pro{\eve{ext}} =& \pro{\thex{0}{\wati{i}}>0 \quad | \quad \trl=i \text{ at } t=0^{+}} \\
&\times \pro{\thex{0}{\rw{i}}=k \quad | \quad \trl=i+1 \text{ at } t=0^{+}}.
\end{align*}

To lighten the notations, let us define
\begin{equation}
\label{eq:def-ab}
\left\{ \begin{array}{l}
\kkdur{i}{k} = \pro{\thex{0}{\rw{i}}=k \quad | \quad \trl=i \text{ at } t=0}\\
\kkaft{i} = \pro{\thex{0}{\wati{i}} = 0 \quad | \quad \trl=i \text{ at } t=\rw{i}^{+}}.
\end{array} \right.
\end{equation}

In addition, given a thread that is in the \psw at $t=0$, the
probability to exit the \psw within $[0,b-a[$ is $\rint{0}{b-a}{\lambda
      \expu{-\lambda u}}{u}$. By counting the number of threads that
    need to exit the \psw, we obtain:
\begin{equation}
\label{eq:exp-ab}
\left\{ \begin{array}{l}
\kkdur{i}{k} = \binom{\ct-i}{k} \left( 1 - \expu{-\lambda \rw{i}} \right) ^{k} \left( \expu{-\lambda \rw{i}} \right) ^{\ct-i-k}\\
\kkaft{i} = \left( \expi{-\lambda \wati{i}} \right) ^{\ct-i}.
\end{array} \right.
\end{equation}

Altogether, we have that
\[ \pro{\sta{i} \rightarrow \sta{i+k}} = \kkaft{i} \times \kkdur{i}{k+1} + (1-\kkaft{i}) \times \kkdur{i+1}{k}. \]

\medskip

The situation is slightly different if $k=-1$; in this case, no thread
should exit the \psw during the \watiw and no thread should exit during
the \re of the first thread that accessed the \ds during the
\supw neither. This shows that
\[ \pro{\sta{i} \rightarrow \sta{i-1}} = \kkaft{i} \times \kkdur{i}{0}. \]

When the \supw is not contended, \ie if $i=0$, the \watiw of the
execution that ignores external threads can be seen as infinite, hence
we can define $\kkaft{0} = 0$ (the probability that a thread exits its
\psw during an infinite interval of time is $1$). As for the
\kkdur{i}{k}'s, they can be defined in the same way as earlier.

We have obtained the full transition matrix $\left( \mtr{i}{j}
\right)_{(i,j) \in \inte{\ct-1}^2}$, which is a triangular matrix,
augmented with a subdiagonal:
\begin{equation*}
\left\{
\begin{array}{lll}
\mtr{i}{i+k} &=  \kkaft{i} \kkdur{i}{k+1} + (1-\kkaft{i}) \kkdur{i+1}{k} & \text{ if }
k \in \inte[0]{\ct-i-1}\\
\mtr{i}{i-1} &= \kkaft{i} \times \kkdur{i}{0} & \text{ if }
i > 0\\
\mtr{i}{j} &= 0 & \text{ otherwise}\\
\end{array}
\right.
\end{equation*}

\medskip

\begin{lemma}
$\mmtr$ is a right stochastic matrix.
\end{lemma}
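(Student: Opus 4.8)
The plan is to verify directly the two defining properties of a right stochastic matrix from the closed-form expressions obtained for the entries: (i) every entry is non-negative, and (ii) every row sums to $1$. The only substantive ingredient is that, for a fixed $i$, the numbers $\kkdur{i}{k}$ form the probability mass function of a binomial law and hence sum to $1$; everything else is bookkeeping over the three families of entries of row $i$ (the off-diagonal part with $k\geq 1$, the diagonal entry $k=0$, and the single sub-diagonal entry $j=i-1$), together with the convention $\kkaft{0}=0$ used for the non-contended state.

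For non-negativity, recall from Equation~\ref{eq:exp-ab} that $\kkdur{i}{k}=\binom{\ct-i}{k}\left(1-\expu{-\lambda\rw{i}}\right)^{k}\left(\expu{-\lambda\rw{i}}\right)^{\ct-i-k}$ is a product of non-negative factors, since $\expu{-\lambda\rw{i}}\in(0,1]$, so $\kkdur{i}{k}\geq 0$; moreover $\kkaft{i}\in[0,1]$ in every contention regime — it equals $\left(\expu{-\lambda\wati{i}}\right)^{\ct-i}$ when $i>0$, with $\wati{i}\geq 0$, and $\kkaft{0}=0$ by convention. Each entry $\mtr{i}{j}$ is either $0$ or a sum of products of these non-negative quantities with $\kkaft{i}$ or $1-\kkaft{i}$, hence $\mtr{i}{j}\geq 0$.

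For the row sums, fix $i\in\inte{\ct-1}$ and gather the non-zero entries of row $i$: the sub-diagonal entry $\mtr{i}{i-1}=\kkaft{i}\kkdur{i}{0}$, which is simply $0$ when $i=0$ because $\kkaft{0}=0$, and the entries $\mtr{i}{i+k}=\kkaft{i}\kkdur{i}{k+1}+(1-\kkaft{i})\kkdur{i+1}{k}$ for $k\in\inte[0]{\ct-i-1}$. Adding these and regrouping by the factor $\kkaft{i}$ yields
\begin{equation*}
\sum_{j}\mtr{i}{j}=\kkaft{i}\left(\kkdur{i}{0}+\sum_{k=0}^{\ct-i-1}\kkdur{i}{k+1}\right)+(1-\kkaft{i})\sum_{k=0}^{\ct-i-1}\kkdur{i+1}{k}.
\end{equation*}
The first bracketed sum is $\sum_{k=0}^{\ct-i}\kkdur{i}{k}$, the total mass of a binomial distribution with $\ct-i$ trials and success probability $1-\expu{-\lambda\rw{i}}$, hence $1$; the second is $\sum_{k=0}^{\ct-(i+1)}\kkdur{i+1}{k}$, the total mass of a binomial distribution with $\ct-(i+1)$ trials, hence also $1$. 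Therefore $\sum_{j}\mtr{i}{j}=\kkaft{i}+(1-\kkaft{i})=1$, and $\mmtr$ is right stochastic.

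I expect the only point needing care is the index accounting at the two boundaries of the state space: the absence of the state $\sta{i-1}$ when $i=0$, absorbed cleanly by the convention $\kkaft{0}=0$, and the degenerate binomial $\kkdur{\ct}{0}=1$ (zero trials) that appears in the last row $i=\ct-1$. Once one checks that the regrouping above holds verbatim for every $i\in\inte{\ct-1}$, there is no genuine difficulty — the lemma is essentially a consistency check confirming that the approximation made in the $\eve{ext}$ branch of the transition analysis did not destroy the probabilistic normalization of the rows.
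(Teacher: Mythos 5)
Your proof is correct and follows essentially the same route as the paper's: the row sum is regrouped into a $\kkaft{i}$-part and a $(1-\kkaft{i})$-part, each reducing to the binomial normalization $\sum_{k=0}^{\ct-i}\kkdur{i}{k}=1$, giving $\kkaft{i}+(1-\kkaft{i})=1$. The only cosmetic differences are that you absorb the $i=0$ case into the general computation via the convention $\kkaft{0}=0$ (the paper treats it separately) and that you spell out non-negativity, which the paper leaves implicit.
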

\begin{proof}
First note that, by definition of \kkdur{i}{k}, for all $i \in \inte[0]{\ct-1}$,
\[ \sum_{k=0}^{\ct-i} \kkdur{i}{k} = 1. \]
If $i$ threads are indeed inside the \rl at $t=0$, then, within
$[0,\wati{i}[$, at least $0$ thread, and at most $\ct-i$ threads (inclusive) will exit
their \psw.

We have first
\[
\sum_{j=0}^{\ct-1} \mtr{0}{j} = \sum_{k=0}^{\ct-1} \kkdur{0+1}{k} = 1.
\]

In the same way, for all $i \in \inte[1]{\ct-1}$,
\begin{align*}
\sum_{j=0}^{\ct-1} \mtr{i}{j} =&\; \sum_{k=-1}^{\ct-1-i} \mtr{i}{i+k}\\
=&\; \kkaft{i} \times \kkdur{i}{0}  + \sum_{k=0}^{\ct-1-i} \kkaft{i} \kkdur{i}{k+1} + (1-\kkaft{i}) \kkdur{i+1}{k}\\
=&\; \kkaft{i} \times \sum_{k=-1}^{\ct-1-i} \kkdur{i}{k+1}  +   (1-\kkaft{i}) \sum_{k=0}^{\ct-1-i}  \kkdur{i+1}{k}\\
\sum_{j=0}^{\ct-1} \mtr{i}{j} =&\;1.
\end{align*}

\end{proof}

\begin{lemma}
The transition matrix has a unique stationary distribution, which is
the unique left eigenvector of the transition matrix with eigenvalue 1
and sum of its elements equal to 1.
\end{lemma}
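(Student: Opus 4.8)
The plan is to recognise the statement as the classical uniqueness theorem for the stationary distribution of a finite \emph{irreducible} Markov chain, so that the whole argument reduces to checking that the chain with transition matrix $\mmtr$ on the state space $\inte[0]{\ct-1}$ is irreducible. Once irreducibility is in hand, everything else follows from the Perron--Frobenius theorem applied to the finite stochastic matrix $\mmtr$ (which is stochastic by the previous lemma): the eigenvalue $1$ is simple, its left eigenspace is one-dimensional and spanned by a strictly positive vector, hence there is exactly one nonnegative left eigenvector for eigenvalue $1$ whose entries sum to $1$, and that vector is the unique stationary distribution.

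To prove irreducibility I would exhibit, for each ordered pair of states, a path of positive probability, and in fact two one-step observations suffice. First, from state $0$ one reaches every state in a single step: $\mtr{0}{j} = \kkdur{1}{j} = \binom{\ct-1}{j}\left(1-\expu{-\lambda\rw{1}}\right)^{j}\left(\expu{-\lambda\rw{1}}\right)^{\ct-1-j}$, which is strictly positive for every $j \in \inte[0]{\ct-1}$ since $0<\expu{-\lambda\rw{1}}<1$. Second, every state $i\geq 1$ reaches state $i-1$, because the subdiagonal entry $\mtr{i}{i-1}=\kkaft{i}\,\kkdur{i}{0}$ is a product of the strictly positive factors $\kkdur{i}{0}=\left(\expu{-\lambda\rw{i}}\right)^{\ct-i}$ and $\kkaft{i}=\left(\expi{-\lambda\wati{i}}\right)^{\ct-i}$; iterating gives a positive-probability descent $i\ra i-1\ra\cdots\ra 0$. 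Composing a descent to $0$ with a single step $0\ra j$ shows that every state is reachable from every state, so $\mmtr$ is irreducible and the lemma follows.

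The only step that needs care is the positivity bookkeeping for the subdiagonal entries, \ie making sure $\kkaft{i}$ and $\kkdur{i}{0}$ never degenerate across the contention regimes: in medium contention $\wati{i}>0$, so $\kkaft{i}\in(0,1)$, while in high contention $\wati{i}=0$ and $\kkaft{i}=1$, and in both cases the entry used in the path stays strictly positive; the non-contended state $0$ is harmless because $0$ is the bottom state and its downward move is never used, even though $\kkaft{0}=0$ by definition. A secondary, purely formal point is that the explicit formulas for $\kkdur{i}{k}$ and $\kkaft{i}$ should be invoked only for the indices in which they were derived, which is exactly the range used above; with that caveat the argument is complete.
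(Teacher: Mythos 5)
Your proof is correct and follows essentially the same route as the paper: both reduce the lemma to irreducibility of the finite chain (the paper descends along the subdiagonal from \sta{j} to \sta{i} and jumps upward in one step, you descend to \sta{0} and then use the everywhere-positive row $\mtr{0}{j} = \kkdur{1}{j}$), and then invoke the standard uniqueness theorem for finite irreducible chains. Your positivity bookkeeping for \kkaft{i} and \kkdur{i}{0} across the contention regimes is more explicit than the paper's, and you are right that aperiodicity---which the paper also verifies---is not actually needed for existence and uniqueness of the stationary distribution.
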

\begin{proof}
Note that the Markov chain is
irreducible and aperiodic. Let $X \geq \ct -1$, $i \in \inte{\ct-1}$ and
$j \in \inte[i]{\ct-1}$.
\begin{align*}
\pro{\sta{j} \rightarrow \sta{i} \text{ in X steps}}  \geq &
\pro{\sta{j} \rightarrow \sta{j-1} \rightarrow \dots \rightarrow \sta{i}}\\
& \times \pro{\sta{i} \rightarrow \sta{i}}^{X-(j-i)}\\
\pro{\sta{j} \rightarrow \sta{i} \text{ in X steps}} > & 0
\end{align*}
As
\[ \pro{\sta{i} \rightarrow \sta{j} \text{ in X steps}} \geq \pro{\sta{i} \rightarrow \sta{j}} > 0, \]
the Markov chain is irreducible.
Since \sta{1} is clearly aperiodic, and the chain is irreducible, the chain is aperiodic as well.

This implies that the Markov chain has a unique stationary
distribution, which is the unique left eigenvector of the transition
matrix with eigenvalue 1 and sum of its elements equal to 1.
\end{proof}
}


\newcommand{\statdis}{
\begin{theorem}
Given the transition matrix, the stationary distribution can be found
in $(\ct+1)\ct -1$ operations.
\end{theorem}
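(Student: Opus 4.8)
The plan is to exploit the shape of the transition matrix just established: $\mmtr$ is upper Hessenberg, i.e.\ $\mtr{i}{j} = 0$ whenever $i > j+1$, so that row $i$ is supported only on the subdiagonal entry $\mtr{i}{i-1}$ together with the entries $\mtr{i}{j}$ for $j \ge i$. Hence, reading the linear system $\pi\mmtr = \pi$ for the (row) stationary vector $\pi = (\pi_0, \dots, \pi_{\ct-1})$ column by column turns it into a first-order forward recurrence: column $j$ says $\sum_{i=0}^{\ct-1} \pi_i \mtr{i}{j} = \pi_j$, only the terms $i \in \inte[0]{j+1}$ survive, and therefore for $0 \le j \le \ct-2$
\[
\pi_{j+1}\, \mtr{j+1}{j} = \pi_j - \sum_{i=0}^{j} \pi_i\, \mtr{i}{j},
\]
which determines $\pi_{j+1}$ from $\pi_0, \dots, \pi_j$ as soon as $\mtr{j+1}{j} \neq 0$.

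First I would check that the recurrence is well posed. By the explicit formulas obtained for the matrix entries, $\mtr{i}{i-1} = \kkaft{i}\kkdur{i}{0}$ is a product of strictly positive exponential factors, hence positive, for every $i \in \inte[1]{\ct-1}$; and since the chain is irreducible (already proved), $\pi_i > 0$ for all $i$, so normalizing by the first coordinate is legitimate. The algorithm is then: set $\pi_0 = 1$; compute $\pi_1, \dots, \pi_{\ct-1}$ successively from the displayed recurrence; finally rescale the vector by $1/\sum_{i=0}^{\ct-1}\pi_i$. Since fixing $\pi_0$ makes columns $0, \dots, \ct-2$ a uniquely solvable (essentially triangular) system, the recurrence returns exactly the positive scalar multiple of the true stationary vector with first coordinate $1$, and the normalization recovers $\pi$. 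I would also record that column $\ct-1$ — the one not used, since the forward recurrence produces $\pi_1, \dots, \pi_{\ct-1}$ from columns $0, \dots, \ct-2$ and there is no $\pi_{\ct}$ left to solve for — imposes no additional constraint: it is the linear combination of the other columns forced by $\mmtr$ being row-stochastic, and its automatic consistency, together with the fact that the normalized output really is a probability vector, is inherited from the two preceding lemmas (stochasticity, and existence/uniqueness of the stationary distribution).

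Finally I would tally the arithmetic. Computing $\pi_{j+1}$ costs one length-$(j+1)$ inner product $\sum_{i=0}^{j}\pi_i\mtr{i}{j}$ subtracted from $\pi_j$, plus one division by $\mtr{j+1}{j}$, i.e.\ on the order of $2(j+1)$ operations; summed over $j = 0, \dots, \ct-2$ this is $\sum_{\ell=1}^{\ct-1} 2\ell = \ct(\ct-1)$, and the concluding normalization (a length-$\ct$ summation followed by $\ct$ divisions, $2\ct-1$ operations) brings the total to $\ct(\ct-1) + 2\ct - 1 = (\ct+1)\ct - 1$. The $\Theta(\ct^2)$ bound is immediate from the Hessenberg structure; the only delicate point — and thus the main obstacle — is the bookkeeping needed to make the constant come out as exactly $(\ct+1)\ct - 1$, i.e.\ committing to a precise convention for how the per-step divisions and the final normalization are charged so that the count is not merely $\ct^2 + O(\ct)$.
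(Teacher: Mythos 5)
Your proposal is correct and follows essentially the same route as the paper: exploit the Hessenberg structure of $\mmtr$ to turn $\pi\mmtr=\pi$ into a forward recurrence $\eig{i+1} = \frac{1}{\mtr{i+1}{i}}\bigl((1-\mtr{i}{i})\eig{i} - \sum_{k=0}^{i-1}\eig{k}\mtr{k}{i}\bigr)$, justify $\mtr{i+1}{i}=\kkaft{i+1}\kkdur{i+1}{0}\neq 0$, and normalize at the end. Your operation tally is in fact slightly more careful than the paper's own (whose per-step count of $2i+1$ sums to $(\ct-1)^2$ rather than $\ct(\ct-1)$), and your remark on the automatic consistency of the last column is a welcome extra, but neither changes the argument.
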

\begin{proof}
As the Markov chain is irreducible, the stationary
distribution does not contend any zero. The space of the left
eigenvectors with unit eigenvalue is uni-dimensional; therefore, for
any \eig{0}, there exists a vector $\eigv = (\eig{0} \; \eig{1} \; \dots \;
\eig{\ct-1})$, such that \eigv spans this space.

Let $\eig{0}$ a real number; necessarily, \eigv fulfills $\eigv \cdot
\mmtr = \eigv$, hence for all $i \in \inte{\ct-2}$
\[ \sum_{k=0}^{i+1} \eig{k} \mtr{k}{i} = \eig{i}, \]
which leads to, for all $i \in \inte{\ct-2}$:
\[ \eig{i+1} = \frac{1}{\mtr{i+1}{i}} \left( (1-\mtr{i}{i})\eig{i} - \sum_{k=0}^{i-1} \eig{k} \mtr{k}{i} \right). \]
So we obtain the $\eig{1}, \dots, \eig{\ct-1}$ iteratively (we know
that $\mtr{i+1}{i} = \kkaft{i+1} \times \kkdur{i+1}{0}$, which is not
null), with $2\times i + 1$ operations needed to compute \eig{i+1}.

The elements of the stationary distribution should sum to one, so we
start from any \eig{0}, compute the whole vector, and then normalize
each element by their sum, hence the theorem.
\end{proof}

}


\newcommand{\watithput}{
In order to compute the final throughput, we have to compute the
expectation of the \watiw, when the system goes from state \sta{i} to
any other state, that we note \expe{\wati{\sta{i} \rightarrow
    \sta{\star}}}.
Also, we will be able to exhibit a vector $\vespv = (\vesp{0} ,
\vesp{1} , \dots , \vesp{\ct-1})$ of expected \supw, where \vesp{i} is
the expectation of the execution time of the \supw if $i$ threads are
in the \rl when the \supw begins:
\begin{equation*}
\left\{\begin{array}{ll}
\vesp{i} = \expe{\wati{\sta{i} \rightarrow \sta{\star}}} + \scas + \cw + \reexp{i} + \scas&\;\text{if }
i \notin \intnoc\\
\vesp{i} = \expe{\wati{\sta{i} \rightarrow \sta{\star}}} + \mem + \cw + \scas&\;\text{otherwise.}
\end{array}\right.
\end{equation*}

Finally, the expected throughput (inverse of the \supw) is calculated through
\pr{$\thru = (\eigv \cdot \vespv)^{-1},$}
{\[ \thru = \frac{1}{\eigv \cdot \vespv},\]}
where \eigv is the stationary distribution of the Markov chain.


 We know already that if
  $i \in \inthig$, then $\expe{\wati{\sta{i} \rightarrow \sta{i+k}}} =
  0$.

In the other extreme case, \ie if $i \in \intnoc$, we rely on the
following lemma.

\begin{lemma}
\label{lem:exp-min}
Let an integer $n$, a real number $\lambda$, and $n$ independent random
variables $X_1, X_2, \dots, X_n$, following an exponential
distribution of mean $\lambda^{-1}$. Let then $X$ be the random variable
defined by: $X = \min_{i \in \inte[1]{n}} X_i$. The expectation of $X$
is:
\[ \expe{X} = \frac{1}{\lambda n}. \]
\end{lemma}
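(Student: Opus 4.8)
The plan is to follow exactly the pattern already used for the uniform case in Lemma~\ref{lem:unif-min}, substituting the exponential distribution for the uniform one. First I would compute the tail of $X$. For any positive real number $x$, using independence of the $X_i$ and the fact that each has tail $\pro{X_i > x} = \expu{-\lambda x}$,
\begin{align*}
\pro{X > x} &= \pro{\forall i : X_i > x}\\
&= \prod_{i=1}^n \pro{X_i > x}\\
\pro{X > x} &= \left( \expu{-\lambda x} \right)^n = \expu{-\lambda n x}.
\end{align*}
In other words $X$ is itself exponentially distributed, now with rate $\lambda n$ (equivalently, with mean $(\lambda n)^{-1}$).

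Then I would extract the expectation. The shortest route is the tail-integral formula for a nonnegative random variable:
\[ \expe{X} = \rint{0}{\pinf}{\pro{X > x}}{x} = \rint{0}{\pinf}{\expu{-\lambda n x}}{x} = \frac{1}{\lambda n}, \]
an elementary and convergent integral since $\lambda n > 0$. If one prefers to mirror the computation of Lemma~\ref{lem:unif-min} verbatim, one instead differentiates $x \mapsto 1 - \expu{-\lambda n x}$ to obtain the density $x \mapsto \lambda n \, \expu{-\lambda n x}$, and then integrates $x \mapsto x \lambda n \, \expu{-\lambda n x}$ over $[0,\pinf[$ by parts; this again yields $\frac{1}{\lambda n}$.

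There is essentially no obstacle here. The only two points deserving a word of justification are that the $X_i$ are genuinely independent, so that the tail factorizes as a product, and that the resulting improper integral converges, both of which are immediate. The proof thus reduces to the two displays above.
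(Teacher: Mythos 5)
Your proof is correct and follows essentially the same route as the paper's: both compute the tail $\pro{X > x} = \expu{-\lambda n x}$ from independence, and your second variant (differentiating to get the density $\lambda n \expu{-\lambda n t}$ and integrating by parts) is exactly the paper's computation, while your tail-integral shortcut is a harmless simplification of the same final step.
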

\begin{proof}
We have
\begin{align*}
\pro{X > x} &=  \pro{\forall i : X_i > x}\\
&= \prod_{i=1}^n \pro{X_i > x}\\
&=  \left( \int_{x}^{\pinf}  \lambda \expu{-\lambda t} \right) ^{n}\\
\pro{X > x} &=  \expu{-\lambda n x}\\
\end{align*}
Therefore, the probability distribution of $X$ is given by:
\[ t \mapsto \lambda n \expu{-\lambda n t},\]
and its expectation is computed through

\begin{align*}
\expe{X} &= \rint{0}{\pinf}{ \lambda n t \expu{-\lambda n t}}{t}\\
&= \left[ \expu{-\lambda n t} t \right]_{\pinf}^0  + \rint{0}{\pinf}{\expu{-\lambda n t}}{t}\\
&= \left[ \frac{1}{\lambda n} \expu{-\lambda n t} \right]_{\pinf}^0 \\
\expe{X} &= \frac{1}{\lambda n}\\
\end{align*}
\end{proof}

This proves that
\[ \expe{\wati{\sta{0} \rightarrow \sta{\star}}} = \frac{1}{\pw \times \ct}. \]

Let now $i \in \intmid$, and $k \in \inte[-1]{\ct -i -1}$; we are
interested in \expe{\wati{\sta{i} \rightarrow \sta{i+k}}}.
%
%
The \watiw is less immediate, and we use the following reasoning.
First note that the probability distribution of the first thread
exiting the \psw is given by $t \mapsto \lambda (\ct-i) \expu{-\lambda (\ct-i)
  t}$. If this thread comes back during $]0,\wati{i}[$, the time that
    passed since the beginning of the \supw is the \watiw, otherwise,
    it is \wati{i}.

\begin{align*}
\expe{\wati{\sta{i} \rightarrow \sta{\star}}}
&= \rint{0}{\wati{i}}{\lambda (\ct-i) \expu{-\lambda (\ct-i) t} t}{t}
+ \rint{\wati{i}}{\pinf}{\lambda (\ct-i) \expu{-\lambda (\ct-i) t} \wati{i}}{t}\\
&= \left[ \expu{-\lambda (\ct-i) t} t \right]_{\wati{i}}^{0}
+ \left[ \frac{1}{\lambda (\ct-i)} \expu{-\lambda (\ct-i) t} \right]_{\wati{i}}^{0}
+ \wati{i} \left[ \expu{-\lambda (\ct-i) t} \right]_{\pinf}^{\wati{i}} \\
\expe{\wati{\sta{i} \rightarrow \sta{\star}}}&= - \wati{i}\expu{-\lambda (\ct-i) \wati{i}}
+ \frac{1- \expu{-\lambda (\ct-i) \wati{i}}}{\lambda (\ct-i)}
+ \wati{i} \left( \expu{-\lambda (\ct-i) \wati{i}} \right)
\end{align*}

We conclude that
\[ \expe{\wati{\sta{i} \rightarrow \sta{\star}}} = \frac{1 - \expu{-\frac{(\ct-i) \wati{i}}{\pw}}}{\ct-i} \pw. \]

Putting all together, we obtain
\begin{equation*}
\left\{\begin{array}{ll}
\expe{\wati{\sta{i} \rightarrow \sta{\star}}} = \frac{1 - \expu{-\frac{(\ct-i) \wati{i}}{\pw}}}{\ct-i} \pw
& \quad\text{if } i \in \intnoc \cup \intmid\\
\expe{\wati{\sta{i} \rightarrow \sta{\star}}} = 0 & \quad \text{if } i \in \inthig.
\end{array}\right.
\end{equation*}
}


\newcommand\failedres{
Another metric to estimate the quality of the model is the number of
failed \res per successful \re. We compute it by counting the number
of failed \res within the current \supw, where a \re is billed to a
given \supw if its failed \cas occurs during this \supw. We denote by
$\expe{\fa{i}}$ the expected number of failed \cas during a \supw that
begins with $i$ threads, where $i \in \inte{\ct-1}$.

If the \supw is not contended, \ie if $i \in \intnoc$, no failure will
occur since the first \cas of the \supw will be a success; hence
$\expe{\fa{i}} = 0 = i$.

If the \supw is medium contended, \ie if $i \in \intmid$, every thread
that is in the \rl in the beginning of the \supw will execute at least
one \cas during this \supw, and exactly two if the thread is the
successful one. We know indeed that, even if a thread exits its \psw
during the \watiw, and is then successful, the failed \cass will occur
before the thread entering the \rl executes its successful \cas. As
any thread that exits its \psw during the \supw either is successful at
its first \cas, or does not operate the \cas during the \supw, we
conclude that: $\expe{\fa{i}} = i$.

If the \supw is highly contended, \ie if $i \in \inthig$, then we know
that we have an uninterrupted sequence of failed \cass, from the
beginning of the \supw to the last ending successful \cas. The
expected number of failed \cass is then directly related to the
expected duration of the \supw. Recalling that the expansion is given
in Theorem~\ref{th:mark-expa}, we obtain:
\[ \expe{\fa{i}} = 1 + \frac{\cw + \reexp{i}}{\scas}. \]

}


\newcommand{\wholemarkov}{
\subsubsection{Transition Matrix}

\transmat

\subsubsection{Stationary Distribution}

\statdis

\subsubsection{\Watiw and Throughput}

\watithput

\subsubsection{Number of Failed \RES}
\label{sec:nbf}

\failedres
}


\newcommand\treib{
The lock-free stack by Treiber~\cite{lf-stack} is a fundamental \ds
that provides \popop and \pushop operations. To \popop an element, the
top pointer is read and the next pointer of the initial element is
obtained. The latter pointer will be the new value of the \cas that
linearizes the operation. So, accessing the next pointer of the
topmost element represents \cw as it takes place between the \rf and
the \cas.
We initialize the stack by pushing elements with or without
a stride from a contiguous chunk of memory. By this way, we are able
to introduce both costly or not costly cache misses. We also vary the
number of elements popped at the same time to obtain different \cw;
the results, with different \cw values are illustrated
in Figure~\ref{fig:stack}.}

\newcommand\synth{
We first evaluate our models using a set of synthetic tests
that have been constructed to abstract
different possible design patterns of lock-free data structures (value of \cw)
and different application contexts (value of \pw).
The \cww is either constant, or follows a Poisson distribution; in
Figure~\ref{fig:synt}, its mean value \cw is indicated at the top of
the graphs.

A steep decrease in throughput, as \pw gets low, can be observed for the cases with
low \cw, that mainly originates due to expansion.
When \cw is high, performance continues to increase when \pw
decreases, though slightly. The expansion is indeed low but the
\watiw, which appears as a more dominant factor, decreases as the
number of threads inside the retry loop increases.

When looking into the differences between the constructive and the \avba
approach: the \avba approach estimations come out
to be less accurate for mid-contention cases as it only differentiates
between contended and non-contended modes. In addition, it fails to capture
the failing retries when measured throughput starts to deviate
from the theoretical upper bound, as \pw gets lower. In contrast, the
constructive approach provides high accuracy in all metrics for almost
every case.

We have also run the same synthetic tests with a \pww that follows a
Poisson distribution (Figure~\ref{fig:synt-poisson}) or is constant
(Figure~\ref{fig:synt-const}), in order to observe the impact of the
distribution nature of the \pww. Compared to the exponential
distribution, a better throughput is achieved with a Poisson
distribution on the \pww. The throughput becomes even better with a
constant \pww, since the slack time is minimized due to the
synchronization between the threads, as explained
in~\cite{EXCESS:D2.3}.
}

\newcommand\synthtreib{
Here, we consider lock-free algorithms that strictly follow the
pattern in~\ref{alg:gen-name} and provide predictions using both the \avba
and the constructive approach together with the theoretical upper bound.}


\newcommand\finemm{%
One quantum of the collection
phase is the collection of the list of one thread, while three nodes
are reclaimed during one quantum of the reclamation phase. The
traditional MM scheme was parameterized by a threshold based on the number
of the removed nodes; the fine-grain MM scheme
is parameterized by the number of quanta that are executed at each
call.

We apply different MM schemes on the \deqop operation of the
Michael-Scott queue, and plot the results in Figure~\ref{fig:mm_perf}.
We initialize the queue with enough elements. Threads execute
\deqop, which returns an element, then call the MM scheme.
On the left side, we compare a pure queue (without MM), a queue with
the traditional MM (complete reclamation once in a while) and a queue with
fine-grain MM (according to the numbers of quanta that are executed
at each call). Note that the
performance of the traditional MM is also subject to the tuning of the
threshold parameter. We have tested and kept only the best parameter
on the studied domain.
First, unsurprisingly, we can observe that the pure queue
outperforms the others as its \cw is lower (no need to maintain the
list of nodes that a thread is accessing).
Second, as the fine-grain MM is called after each completed \deqop,
adding a constant work, the MM can be seen as a part of the \pww. We
highlight this idea on the second experiment (on the right side). We first
measure the work done in a quantum. It follows that, for each value
of the granularity parameter, we are able to estimate the effective
\pww as the sum of the initial \pw and the work added by the
fine-grain MM. Finally, we run the queue with the fine-grain MM, and
plot the measured throughput, according to the effective parallel
work, together with our two approaches instantiated with the effective
\pw. The graph shows the validity of the model estimations for all
values of the granularity parameter.
}


\newcommand\adaptsine{
Numerous scientific applications are built upon a pattern of
alternating phases, that are communication- or
computation-intensive. If the application involves \dss, it is
expected that the rate of the modifications to the \dss is high
in the data-oriented phases, and conversely.
These phases could be clearly separated, but the application can also
move gradually between phases. The rate of modification to a \ds
will anyway oscillate periodically between two extreme values.
We place ourselves in this context, and evaluate the two MMs
accordingly. The \pww still follows an exponential distribution of
mean \pw, but \pw varies in a sinusoidal manner with time, in order to
emulate the numerical phases. More precisely, \pw is a step
approximation of a sine function. Thus, two additional parameters
rule the experiment: the period of the oscillating function represents
the length of the phases, and the number of steps within a period
depicts how continuous are the phase changes.
}


\newcommand\fulldeq{
We consider the \deq designed in~\cite{deq}. \pushl and
\pushr (resp. \popl and \popr) operations are exactly the same, except
that they operate on the different ends of the \deq.
The status flags, which depict the state of the \deq, and the
pointers to the leftmost element and the rightmost element are
together kept in a single double-word variable, so-called
\anch, which could be modified by a double-word \cas
atomically.

A \popl operation linearizes and even completes in one \staw that ends
with a double-word \cas that just sets the left pointer of the anchor
to the second element from left.

A \pushl operation takes three \staws to complete. In the first \staw,
the operation is linearized by setting the left pointer of the \anch
to the new element and at the same time changing the status flags to
``left unstable''\pr{.}{, to indicate the status of the incomplete but
linearized \pushl operation.} In the second \staw, the left pointer of
the leftmost element is redirected to the recently pushed element.
In the third \staw, a \cas is executed on \anch to bring the \deq
status flags into ``stable state''. Every operation can help an incomplete
\pushl or \pushr until the \deq comes into the stable state; in this
state, the other operations can attempt to linearize anew.

As noticed, the first and the third \staw execute a \cas on the same
variable (\anch) so it is possible to delay the third \staw of the
\supw by executing a \cas in the first \staw. This implies
that the expansion in \staw one should also be considered when the
delay in the third \staw is considered, and the other way around. This
can be done by summing expansion estimates of the \staws that run the
\cas on the same variable and using this expansion value in all these
\staws. Again, it just requires simple modifications in the expansion
formula by keeping assumptions unchanged.

We first run pop-only and push-only experiments where dedicated
threads operate on both ends of the \deq, in a half-half
manner. We provide predictions by plugging the slightly modified
expansion estimate, as explained above, into the \avba approach. Then,
we take one step further and mix the operations, assigning the threads
inequally among push and pop operations.
And, we obtain estimates for them by simply taking the weighted
average (depending on the number of threads running each operation) of
the \supw of pop-only and push-only experiments, with
the corresponding \pw value.


\begin{figure}[h!]
\begin{center}
\includegraphics[width=.8\textwidth]{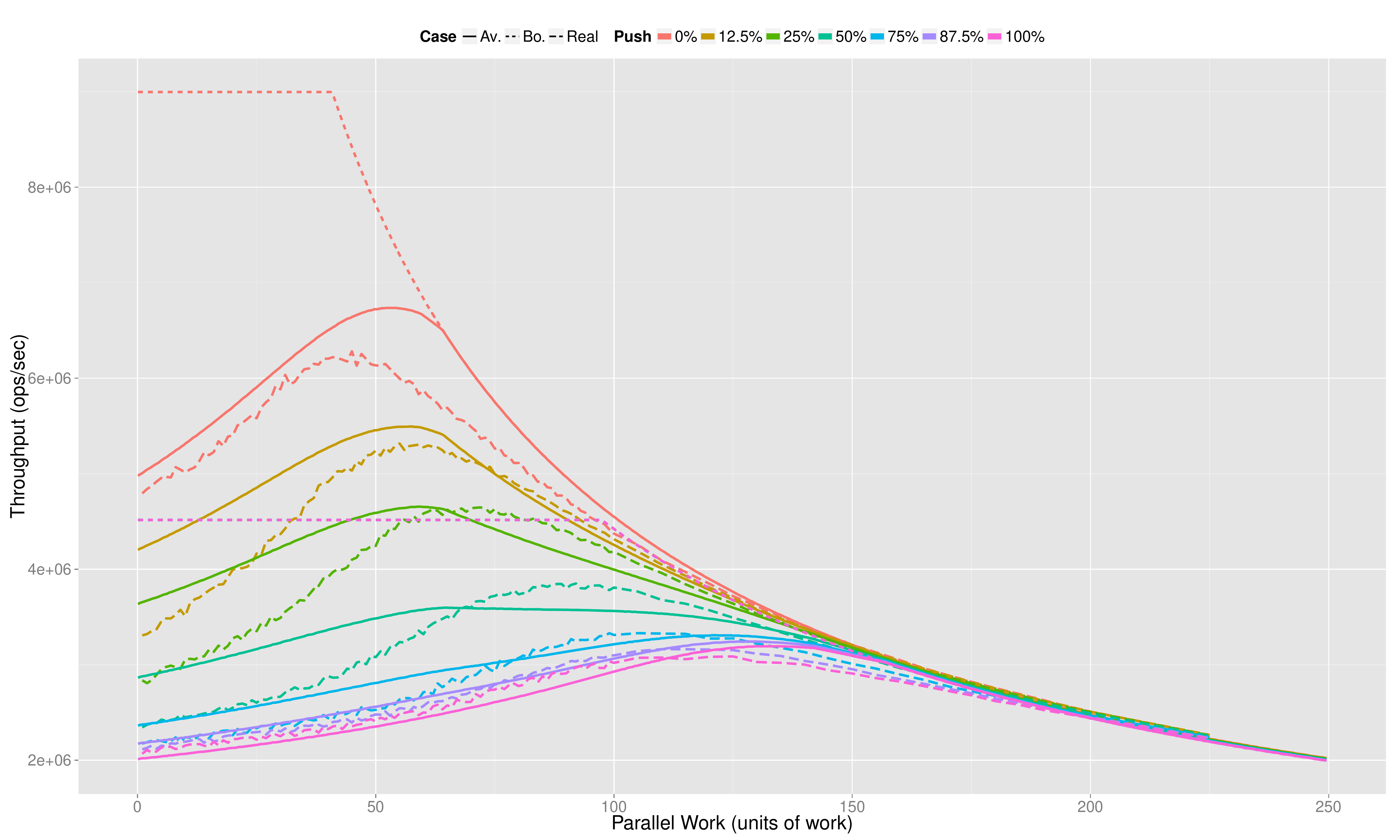}
\end{center}
\caption{Operations on \deq\label{fig:deq}}
\end{figure}

In Figure~\ref{fig:deq}, results are illustrated; they are
satisfactory for the push-only and pop-only cases.
For the mixed-case experiments, the results are mixed: our analysis
follows the trend and becomes less accurate
when the \pw gets lower, as experimental
curves tend toward push-only \supw. This, presumably, happens because
the first \staw of a \pushl (or \pushr) operation is shorter than the
first \staw of a \popl (or \popr) operation. This brings indeed an
advantage to push operations, under contention: they have higher chances
to linearize before pop operations after the \ds comes into the stable
state. It\rr{ also} provides an interesting observation which highlights
the lock-free nature of operations: it is improbable to complete a pop
operation if numerous threads try to push, due to the
difference of work inside the first \staw of their \rl.
}


\newcommand{\fullenq}{
As a first step, we consider the \enqop operation of the MS queue to
validate our approach.  This operation requires two
pointer updates leading to two \staws, each ending with a \cas. The
first \staw, that linearizes the operation, updates the next pointer
of the last element to the newly enqueued element. In the next and
last \staw, the queue's head pointer is updated to point to the recently
enqueued element, which could be done by a helping thread, that brings
the data structure into a stable state.

\rr{
\begin{figure}[h!]
\begin{center}
\includegraphics[width=.8\textwidth]{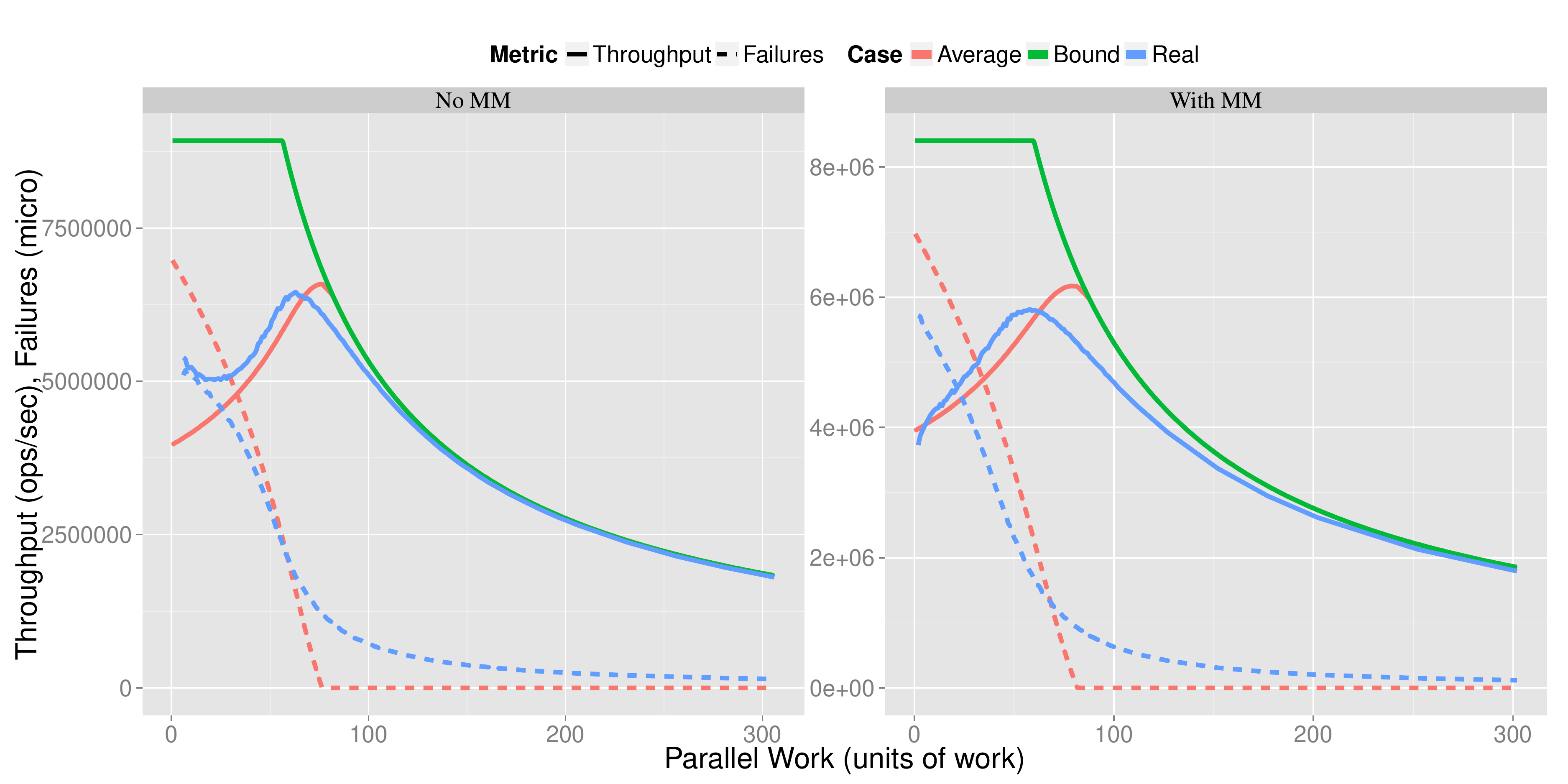}
\end{center}
\caption{Enqueue on MS Queue \label{fig:enqueue}}
\end{figure}

We estimate the expansion in the \supw as described above and
throughput as explained in Section~\ref{sec:avba}. The results for the
\enqop experiments where all threads execute \enqop are presented in
Figure~\ref{fig:enqueue}.
}
}


\newcommand{\fulladsexp}{
Consider an operation such that, the \supw (ignoring the slack time) is composed
of $S$ \staws (denoted by $\stag{1}, \dots, \stag{S}$) where each stage represents a step
towards the completion of the operation.
Let \casn{i} denote the \cas operation at the end of the \stag{i}.
From a system-wide perspective, $\{ \casn{1}, \dots , \casn{S}\}$ is the set of \cass that
have to be successfully and consecutively executed to complete an operation, assuming
all threads are executing the same operation.
This design enforces that \casn{i} can be successful only if the last successful \cas
is a \casn{i-1}. And, \casn{1} can be successful only if the last successful \cas
is a \casn{S}. In other words, another operation can not linearize before the
completion of the linearized but incomplete operation.

Now, let $e_i$ denote the expected expansion of \casn{i}.  If the \ds
is in the stable state (\ie is in \stag{1}, where a new operation can be
linearized), then we have to consider the probability, for all threads
except one, to expand the successful \casn{1} which linearizes the
operation. After the linearization, this operation will be completed
in the remaining stages where again the successful \cass at the end
of the stages are subject to the same expansion possibility by the
threads in the \rl, as they might be still trying to help for the
completion of the previously completed operation.

Similar to~\cite{EXCESS:D2.3}, our assumption here is that any
thread that is in the \rl, can launch \casn{i}, with probability $h$,
that might expand the successful \casn{i}. We consider, the starting
point of a failing \casn{i} is a random variable which is distributed
uniformly within the \rl, which is composed of expanded \staws of the
operation. This is because an obsolete thread can launch a \casn{i},
regardless of the \staw in which the \ds is in (equally, regardless of the
last successful \cas). Due to the uniformity assumption, the expansion
for the successful \cass in all stages, would be equal.  Similar to
the~\cite{EXCESS:D2.3}, we estimate the expansion $e_i$ by considering the
impact of a thread that is added to the \rl. Let the cost function
$\mathit{delay}_i$ provide the amount of delay that the additional thread
introduces, depending on the point where the starting point of its
\casn{i} hits. By using these cost functions, we can formulate the
total expansion increase that each new thread introduces and derive
the differential equation below to calculate the expected total
expansion in a \supw , where $\avexp{\atrl}=\sum^{S}_{i=1}
\aexpi{\atrl}$. Note that, we assume that the expansion starts as soon
as strictly more than 1 thread are in the retry loop, in
expectation.

\begin{lemma}
\label{lem.1}
The expansion of a \cas operation is the solution of the following
system of equations, where $\rlw = \sum^{S}_{i=1} \rlw_i =
\sum^{S}_{i=1}(\rc_i + \cw_i + \cc_i)$:
\[ \left\{
\begin{array}{lcl}
\expansionp{\atrl} &=& \fcas \times \dfrac{S \times \frac{\fcas}{2} + \expansion{\atrl}}{ \rlw + \expansion{\atrl}}\\
\expansion{\trlo} &=& 0
\end{array} \right., \text{ where \trlo is the point where expansion begins.}
\]
\end{lemma}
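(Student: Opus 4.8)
The plan is to build the equation one stage at a time and then add the $S$ stages together. Throughout I would treat $\atrl$ — the expected number of threads inside the retry loop — as a continuous parameter and reason to first order in an infinitesimal increment $d\atrl$, exactly as in the single-stage analysis that this lemma generalises. The picture of one success period under contention is the following: its length is no longer the nominal $\rlw = \sum_{j=1}^{S}\rlw_j = \sum_{j=1}^{S}(\rc_j+\cw_j+\cc_j)$ but the expanded length $\rlw + \avexp{\atrl}$, each stage $j$ being stretched by $\aexpi{\atrl}$; by the uniformity assumption recalled in the preceding text all stages carry the same expansion, so $\avexp{\atrl} = \sum_{j=1}^{S}\aexpi{\atrl}$ with $\aexpi{\atrl}$ common to every $j$.

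Fix a stage $i$ and add one extra thread to the retry loop. By the modelling hypothesis it attempts $\casn{i}$ (the per-iteration launch probability $h$ is taken as one — a retry-loop thread does attempt its CAS), and, again by the uniformity assumption, since an obsolete thread may fire $\casn{i}$ regardless of which stage the structure currently sits in, the start of that $\casn{i}$ is uniform over the whole expanded loop, \ie over an interval of length $\rlw + \avexp{\atrl}$. I would then compute the delay this interfering $\casn{i}$ inflicts on the successful $\casn{i}$ by integrating the cost function $\shifti{i}{\cdot}$. The collision-prone part of stage $i$ has two pieces: the CAS slot itself, of width $\cc_i = \fcas$, where a uniform landing costs the successful CAS its residual, contributing $\int_{0}^{\cc_i}(\cc_i-t)\,dt = \cc_i^2/2$; and the region of width $\aexpi{\atrl}$ by which the slot has already been pushed back, a landing inside which blocks the successful CAS for its full duration $\cc_i$ and therefore contributes $\cc_i\,\aexpi{\atrl}$. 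Hence $\int \shifti{i}{t}\,dt = \cc_i(\cc_i/2 + \aexpi{\atrl})$, and dividing by the length $\rlw + \avexp{\atrl}$ of the uniform support gives the per-added-thread growth $\aexpi{\atrl}' = \dfrac{\cc_i(\cc_i/2 + \aexpi{\atrl})}{\rlw + \avexp{\atrl}}$.

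Summing over $i = 1,\dots,S$ completes the derivation: the denominator $\rlw + \avexp{\atrl}$ and the common CAS cost $\cc_i = \fcas$ factor out of all $S$ terms, and $\sum_{i=1}^{S}(\cc_i/2 + \aexpi{\atrl}) = S\,\tfrac{\fcas}{2} + \avexp{\atrl}$ since $\sum_i \aexpi{\atrl} = \avexp{\atrl}$, so $\avexp{\atrl}' = \fcas\cdot\dfrac{S\,\tfrac{\fcas}{2} + \avexp{\atrl}}{\rlw + \avexp{\atrl}}$, which is the claimed relation once one reads $\expansionp{\atrl} = \avexp{\atrl}'$ and $\expansion{\atrl} = \avexp{\atrl}$. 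The boundary value $\expansion{\trlo} = 0$ is just the modelling convention stated immediately above the lemma: below $\trlo$ — strictly more than one thread in the retry loop in expectation — there is no interference and hence no expansion, so the initial-value problem is integrated upward from $\atrl = \trlo$; the right-hand side being continuous, increasing and bounded for $\avexp{\atrl}\ge 0$ makes this expansion profile well-defined and unique.

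The step I expect to be hardest is the delay integral: pinning down that the collision window of a stage has total measure $\cc_i + \aexpi{\atrl}$ and, in particular, that a landing inside the already-expanded part is counted exactly once, so that the $\aexpi{\atrl}$ term genuinely appears in the numerator and the relation is a true ODE rather than a closed-form expression. A secondary point, inherited from the base case, is the legitimacy of replacing the discrete ``add one thread'' argument by the derivative $\avexp{\atrl}'$; I would handle it by arguing that the joint effect of several threads added simultaneously is $o(d\atrl)$, so that the first-order balance above is exact in the limit.
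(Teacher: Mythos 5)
Your proposal is correct and follows essentially the same route as the paper's proof: a marginal-thread argument with the interfering $\casn{i}$'s start uniformly distributed over the expanded loop of length $\rlw+\expansion{\atrl}$, a per-stage delay integral split into the CAS slot (contributing $\fcas^2/2$) and the already-expanded region (contributing $\fcas\,\aexpi{\atrl}$), summed over the $S$ stages and passed to the limit to obtain the ODE. The only cosmetic difference is that the paper keeps the launch probability $h$ as the explicit infinitesimal in $\expansion{\atrl+h}$ rather than writing $d\atrl$, which changes nothing in the computation.
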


\begin{proof}

We compute $\expansion{\atrl + h}$, where $h\leq1$, by assuming that
there are already \atrl threads in the \rl, and that a new thread
attempts to \cas during the \re, within a probability $h$. For
simplicity, we denote $a^i_j = (\sum_{j=1}^{i-1} \rlw_j + e_j(\atrl))
+ \rc_i + \cw_i$.

\begin{align*}
\expansion{\atrl + h}
&= \expansion{\atrl} + h\times
\sum^{S}_{i=1} \rint{0}{\rlsiz}{\frac {\shifti{i}{t_i}}{\rlsiz}}{t_i} \\
&= \expansion{\atrl}
     + h \times \sum^{S}_{i=1} \Big( \rint{0}{a^i_j - \cc}{\frac{\shifti{i}{t_i}}{\rlsiz}}{t_i}  +
      \rint{a^i_j - \fcas}{a^i_j}{\frac{\shifti{i}{t_i}}{\rlsiz}}{t_i}\\
& \quad\quad\quad\quad\quad\quad\quad\quad\quad\quad  + \rint{a^i_j}{a^i_j + \aexpi{\atrl}}{\frac{\shifti{i}{t_i}}{\rlsiz}}{t_i}
  + \rint{a^i_j + \aexpi{\atrl}}{\rlsiz}{\frac{\shifti{i}{t_i}}{\rlsiz}}{t_i}\Big)\\
  &= \expansion{\atrl} + h \times \sum^{S}_{i=1} \Big(
    \rint{a^i_j-\fcas}{a^i_j}{\frac{t_i}{\rlsiz}}{t_i}
      + \rint{a^i_j}{a^i_j + \aexpi{\atrl}}{\frac{\fcas}{\rlsiz}}{t_i} \Big)\\
\expansion{\atrl + h} &= \expansion{\atrl} + h \times \frac{ (\sum^{S}_{i=1} \frac{\fcas^2}{2}) + \expansion{\atrl}\times\fcas}{\rlsiz}
\end{align*}

This leads to
\[ \quad\frac{\expansion{\trl + h}- \expansion{\atrl}}{ h} = \frac{ S \times \frac{\fcas^2}{2} + \expansion{\atrl}\times\fcas}{\rlsiz}.\]
When making $h$ tend to $0$, we finally obtain
\[ \expansionp{\atrl} = \fcas \times \frac{S \times \frac{\fcas}{2} + \expansion{\atrl}}{ \rlw + \expansion{\atrl}}. \qedhere\]
\end{proof}

In addition, if a set $S_k$ of \cass are operating on the same
variable $var_k$, then $\casn{i} \in S_k$ can be expanded by the
$\casn{j} \in S_k$. In this case, we can obtain $\aexp{k}{\atrl}$ by
using the reasoning above. The calculation simply ends up as follows:
Consider the problem as if no \cas shares a variable and denote
expansion in \stag{i} with $\aexpi{\atrl}^{(\mathit{old})}$. Then, $\aexp{k}{\atrl}
= \sum_{\cas_i \in S_k} \aexpi{\atrl}^{(\mathit{old})}$.
}


\newcommand{\fulladswati}{
We assume here the slack time can only occur after the completion of
an operation (\ie before stage 1), as the other stages are expected to
start immediately due to the thread that completes the previous
stage. Similar to Section~\ref{sec:litt-slack}, we consider that, at
any time, the threads that are running the retry loop have the same
probability to be anywhere in their current retry.  Thus, a thread can
be in any stage just after the successful CAS that completes the
operation. So, we need to consider the thread which is closest to the
end of its current stage when the operation is completed. We denote
the execution time of the expanded retry loop with \rlsiz and the
number of \staws with $S$. For a thread executing \stag{i} when the
operation completes, the time before accessing the \ds is then
uniformly distributed between 0 and $\rlsiz_i$.

Here, we take another assumption and consider all stages can be
completed in the same amount of time (\ie for all (i, j) in $\{1,
\dots ,S\}^2$, $\rlsiz_i = \rlsiz_j = \rlsiz/S$). This assumption
does not diverge much from the reality and provides a reasonable
approximation. With these assumption and using
Lemma~\ref{lem:unif-min}, we conclude that:

\begin{equation}
\label{eq:slack-multiple}
\avwati{\atrl} = \frac{\rlsiz}{S \times (\atrl +1)}.
\end{equation}

}


\newcommand{\sumads}{
Here, we consider \dss that apply immediate helping, where threads
help for the completion of a recently linearized operation until the
\ds comes into a stable state in which a new operation can be
linearized. The crucial observation is that the \ds goes through
multiple \staws in a round robin fashion.  The first \staw is the one
where the operation is linearized. The remaining ones are the \staws
in which other threads, that execute another operation, might help for
the completion of the linearized operation, before attempting to
linearize their own operations.  Thus, the \supw (ignoring the \watiw)
can be seen as the sum of the execution time of these \staws, each
ending with a \cas that updates a pointer. The \cas in the first stage
might be expanded by the threads that are competing for the
linearization of their operation, and consequent \cass might be
expanded by the helper threads, which are still trying to help an
already completed operation.  Also, there might be slack time before
the start of the first \staw as the other \staws will start
immediately due to the thread that has completed the previous \staw.

Although it is hard to stochastically reconstruct the executions
with Markov chains, our \avba approach
provides the flexibility required to estimate the performance by
plugging the expected \supw, given the number of threads inside the
\rl, into the Little's Law.  As the impacting factors are similar, we
estimate the \supw in the same vein as in Section~\ref{sec:avba}; with
a minor adaptation of the expansion formula
and by slightly adapting the slack time estimation based on the
same arguments.
}


\newcommand\bothmm{
\falseparagraph{Fine-grain Memory Management Scheme}
We divide the routine (and further the phases) of the traditional MM
mechanism into quanta (equally-sized chunks).%
\finemm

\falseparagraph{Adaptive Memory Management Scheme}
We build the adaptive MM scheme on top of the fine-grain MM mechanism by
adding a monitoring routine that tracks the number of failed \rls,
employing a sliding windows. Given a granularity parameter and a
number of failed \rls, we are able to estimate the parallel work and
the throughput, hence we can decide a change in the granularity
parameter to reach the peak performance. Note that one can avoid
memory explosion by specifying a threshold like the traditional
implementation in case the application provides a durable low
contention; in the worst case, it performs like the traditional MM.
}


\newcommand\fullbosy{
In Figure~\ref{fig:bo-synt}, we compare, on a synthetic workload, this
constant back-off strategy against widely known strategies, namely
exponential and linear, where the back-off amount increases
exponentially or linearly after each failing retry loop starting from
a \cycles{115} step size.
}


\newcommand\figsynthcst{
\begin{figure}[b!]
\begin{center}
\includegraphics[width=.85\textwidth]{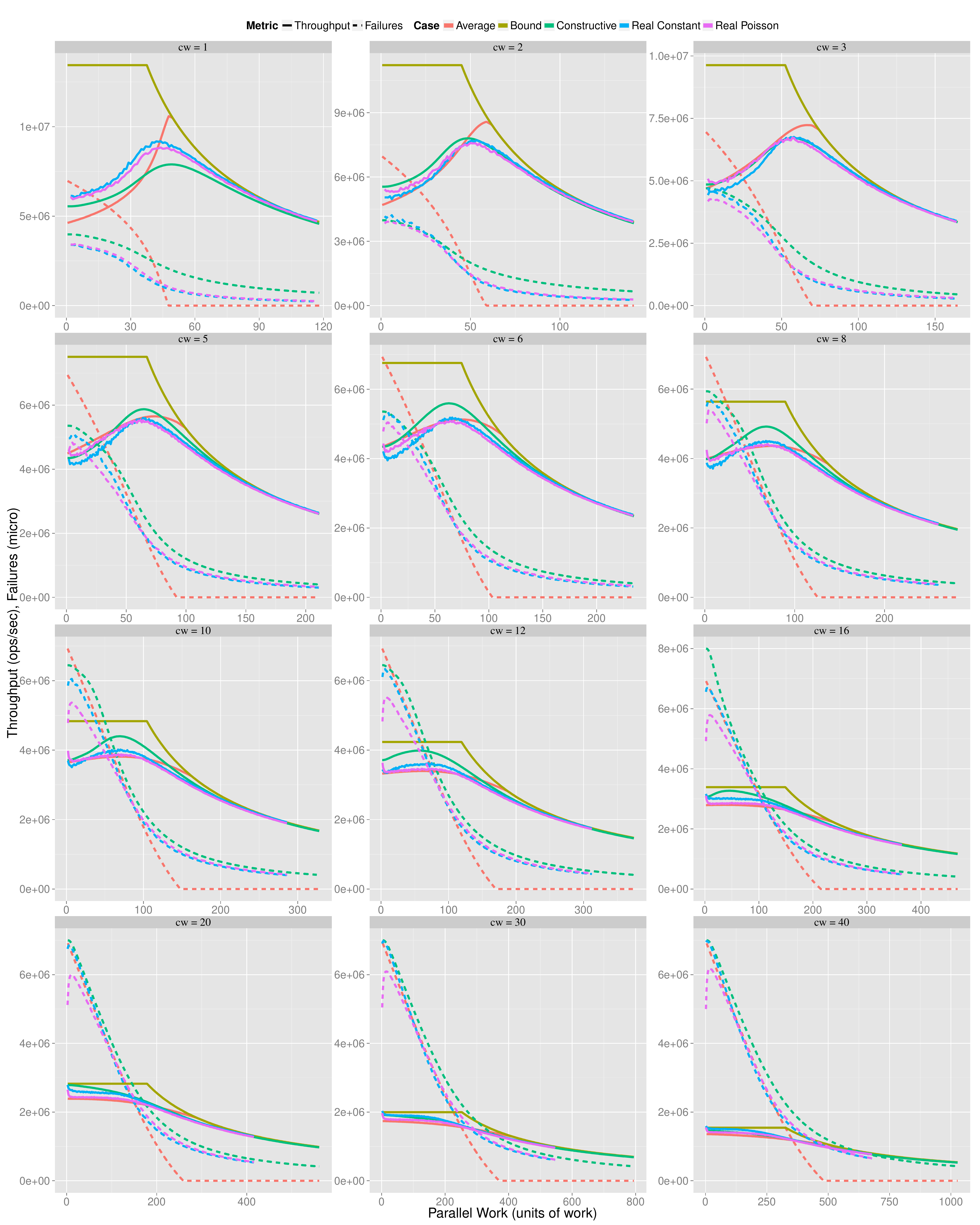}
\end{center}
\caption{Synthetic program with exponentially distributed \pww\label{fig:synt}}
\end{figure}}

\newcommand\figsynthpoi{
\begin{figure}[b!]
\begin{center}
\includegraphics[width=.85\textwidth]{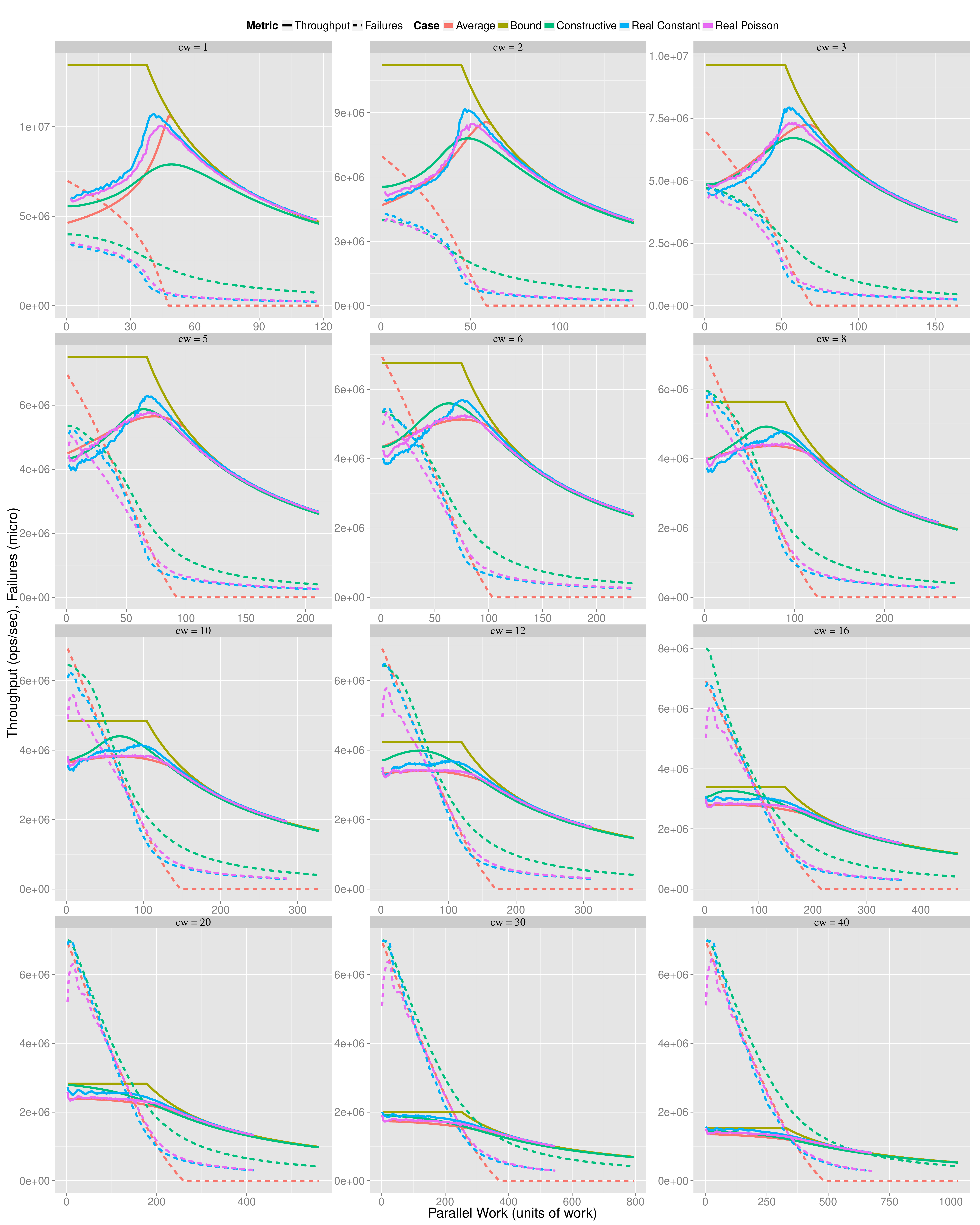}
\end{center}
\caption{Synthetic program with \pww following Poisson\label{fig:synt-poisson}}
\end{figure}}

\newcommand\figsynthconst{
\begin{figure}[b!]
\begin{center}
\includegraphics[width=.85\textwidth]{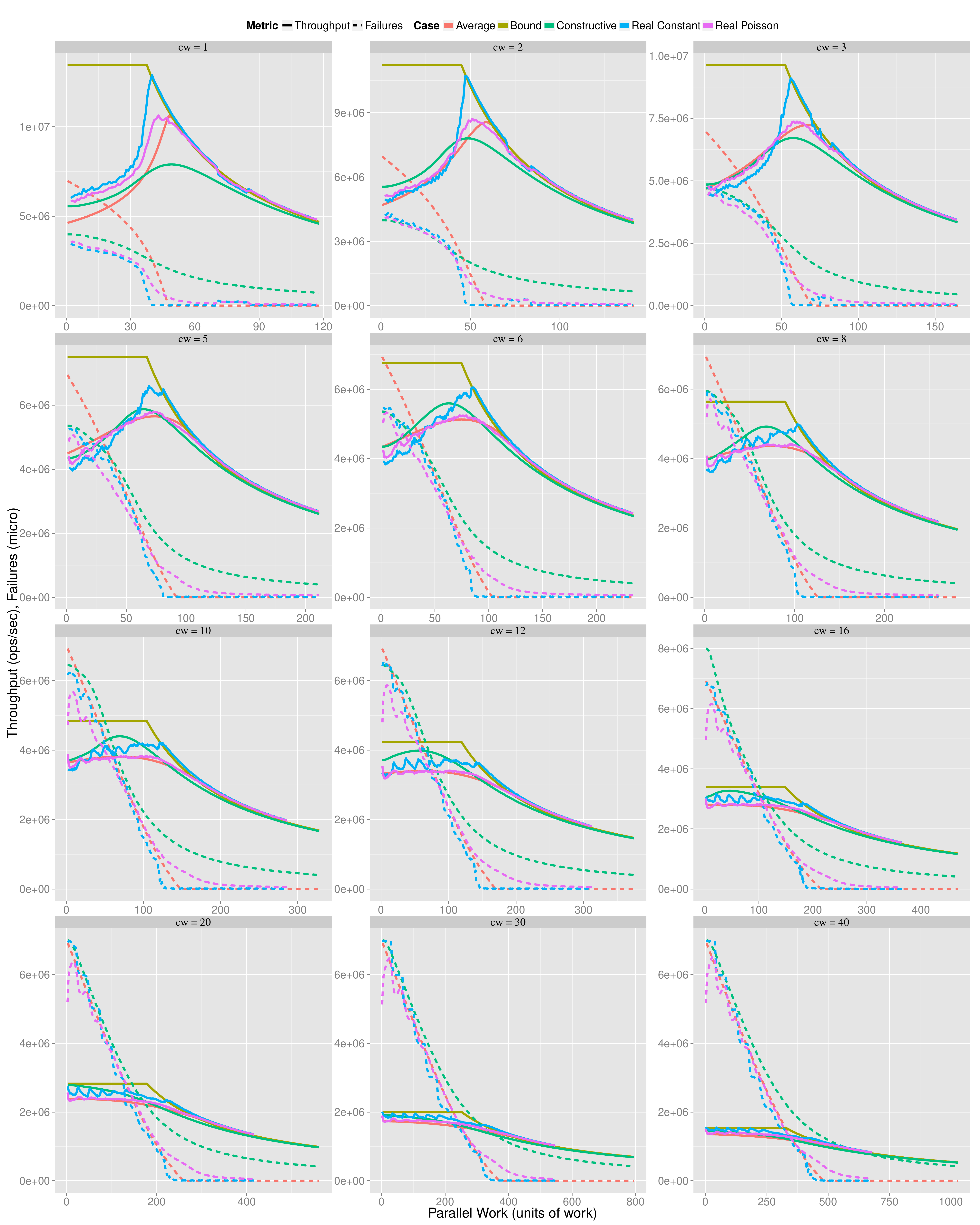}
\end{center}
\caption{Synthetic program with Constant \pww \label{fig:synt-const}}
\end{figure}}

\newcommand\figtreib{
\begin{figure}[h!]
\begin{center}
\includegraphics[width=\textwidth]{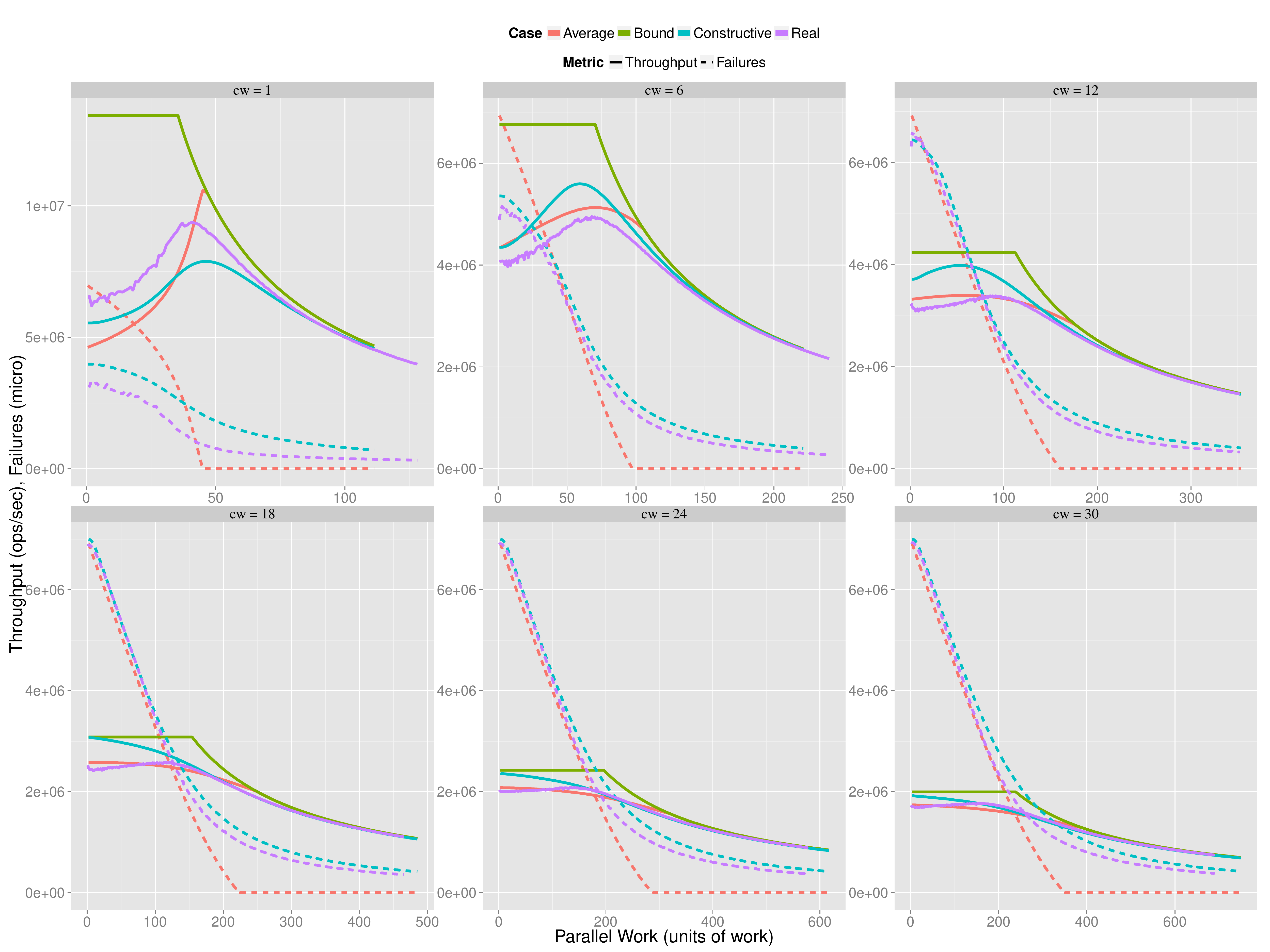}
\end{center}
\caption{Treiber's Stack\label{fig:stack}}
\end{figure}}

\newcommand\figcompmm{
\begin{figure}[b!]
\begin{center}
\pr{\includegraphics[width=.8\textwidth]{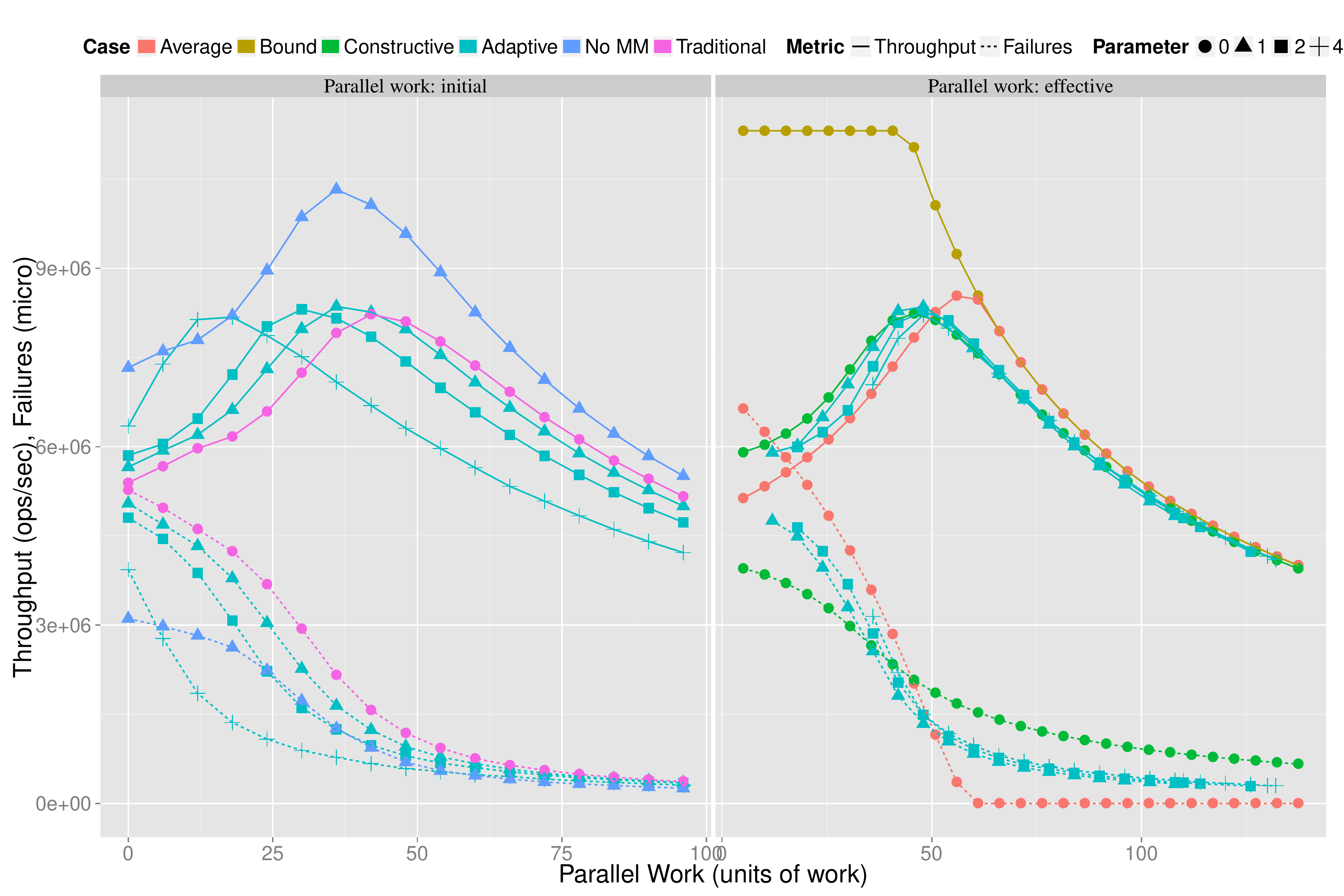}}{\includegraphics[width=.9\textwidth]{dequeue_xp_rr_disc}}
\end{center}
\caption{Performance of memory management mechanisms\label{fig:mm_perf}}
\end{figure}}


\subsection{Introduction}





%

Here, we consider the modeling and the analysis of the performance
of lock-free data structures. Then, we combine the perfomance analysis with our power
model that is introduced in D2.1~\cite{EXCESS:D2.1} and D2.3~\cite{EXCESS:D2.3} to estimate
the energy efficiency of lock-free data structures that are used in various settings.

Lock-free data structures are based on retry loops and are called by 
application-specific routines. In contrast to the model and analysis 
provided in D2.3, we consider here the lock-free data structures in 
dynamic environments. The size of each of the retry loops,
and the size of the application routines invoked in between, are not constant
but may change dynamically.

During the last two decades, lock-free \dss have received a lot of
attention in the literature, and have been accepted in industrial
applications, \eg in the Intel's Threading Building Blocks
Framework~\cite{itbbf}, the Java concurrency package~\cite{jav-conc}
and the Microsoft .NET Framework~\cite{mic-net-f}.
\rr{Lock-free implementations provide indeed a way out of several
limitations of their lock-based counterparts, in robustness,
availability and programming flexibility. Last but not least, the
advent of multi-core processors has pushed lock-freedom on top of the
toolbox for achieving scalable synchronization.}

Naturally, the development of lock-free \dss was accompanied by
studies on the performance of such \dss, in order to characterize their
scalability.
%
%
Having no guarantee on the execution time of an individual operation,
the time complexity analyses of lock-free algorithms have turned
towards amortized analyses.
The so-called amortized analyses are thus interested in the \wc
behavior over a sequence of operations, which can be seen as a \wc
bound on the average time per operation.
In order to cover various contention environments, the time complexity
of the algorithms is often parametrized by different contention
measures, such as point~\cite{point-contention}, interval~\cite{interval-contention}
or step~\cite{step-contention} contention.
Nonetheless these investigations are targeting worst-case asymptotic
behaviors.  There is a lack of analytical results in the literature
capable of describing the execution of lock-free algorithms on top of a
hardware platform, and providing predictions that are close to what is
observed in practice.
Asymptotic bounds are particularly useful to rank different
algorithms, since they rely on a strong theoretical background, but
the presence of potentially high constants might produce misleading
results. Yet, an absolute prediction of the performance can be of
great importance by constituting the first step for further
optimizations. 

The common measure of performance for \dss is throughput, defined
as the number of operations on the \ds per unit
of time.
To this end, this performance measure is usually obtained by
considering an algorithm that strings together a pure sequence of
calls to an operation on the \ds. However, when used in a more
realistic context, the calls to the operations are mixed with
application-specific code (that we call here \pww). For instance, in a
work-stealing environment designed with \deqs, a thread basically runs
one of the following actions: pushing a new-generated task in its
\deq, popping a task from a \deq or executing a task. The
modifications on the \deqs are thus interleaved with \deq-independent
work. There exist some papers that consider in their experiments local
computations between calls to operations during their respective
evaluations, but the amount of local computations follows a given
distribution varying from paper to paper, \eg constant
\cite{lf-queue-michael}, uniform \cite{scalable-stack-uniform},
exponential \cite{Val94}.


%

In this work, we derive a general approach for unknown distributions of 
the size of the application-specific code, as well as a tighter method 
when it follows an exponential distribution.

As for modeling the \ds itself, we use as a basis the universal construction
described by Herlihy in~\cite{herli-univ-const}, where it is shown
that any abstract data type can get such a lock-free implementation, which
relies on one \rl.
%
Moreover, we have particularly focused our experiments on \dss
that present a low level of disjoint-access
parallelism~\cite{disjoint-access} (stack, queue, shared counter,
\deq). Coming back to amortized analyses, the time complexity of an
operation is often expressed as a contention-free time complexity
added with a contention overhead. In this work, we want to model and
analyze the impact of contention, whether nonexistent, mediocre or
high. So that the contention overhead is not hidden, we focus on \dss
with low contention-free complexity, that can also provide very high
contention without bringing hundreds of threads into play.


We propose two different approaches that analyze the performance of
such \dss.
On the one hand, we derive an \avba approach invoking queuing theory,
which provides the throughput of a lock-free algorithm without any
knowledge about the distribution of the \pww. This approach
is flexible but allows only a coarse-grained analysis, and hence a
partial knowledge of the contention that stresses the \ds.
On the other hand, we exhibit a detailed picture of the execution of
the algorithm when the \pww is instantiated with an exponential
distribution, through a second complementary approach. We prove that
the multi-threaded execution follows a Markovian process and a Markov
chain analysis allows us to pursue and reconstruct the execution, and
to compute a more accurate throughput.

We finally show several ways to use our analyses and we evaluate the
validity of our ideas by experimental results. Those two analysis
approaches give a good understanding of the phenomena that drive the
performance of a lock-free \ds, at a high-level for the \avba
approach, and at a detailed level for the constructive
method. 
We also
emphasize that there exist several concrete paths to apply our
analyses.
To this end, based on the knowledge about the application at hand, we
implement two back-off strategies. We show the applicability of these strategies by tuning a Delaunay triangulation
application~\cite{caspar} and a streaming pipeline component which is
fed with trade exchange workloads~\cite{taq-se}.
We also design a new adaptive memory management mechanism for
lock-free data structures in dynamic environments which surpasses the
traditional scheme and which is such that the loss in performance,
when compared to a static data structure without memory management, is
largely leveraged. This memory management mechanism is based on the analyses presented in this work. 

Lastly, we show how these results can be used to obtain the energy consumption 
of the lock-free data structures.

\rr{
The rest is organized as follows: we start by presenting
related work in Section~\ref{sec:related}, then we define the
algorithm and the platform that we consider, together with concepts
that are common to our both approaches in Section~\ref{sec:preli}. The
\avba approach is described in Section~\ref{sec:avba}, while the
constructive analysis is exposed in Section~\ref{sec:cons}, both
methods are evaluated in the experiment part that is presented in
Section~\ref{sec:xp} and the energy model with the evaluations is given in 
Section~\ref{sec:markov-energy}.
}

\vspp{-.4}
\subsection{Previous Work}
\label{sec:related}

In D2.3, performance impacting factors are
illustrated for a subset of the lock-free structures that we consider
in this work. In the former paper, the analysis is built upon
properties that arise only when the sizes of the \cww and the \pww are
constant. There, we show that the execution is not memoryless due to the
natural synchrony provided by the \rls; at the end of the line, we
prove that the execution is cyclic and use this property to bound the
rate of failed \res. This work is complementary to that work, not only
because of the difference in the analysis tools but also because they
altogether exhibit the impact of the size distributions of the \pww
on the performance of lock-free \dss. Moreover,\rr{ owing to our
assumptions on the size of the parallel and \cwws,} the results of this paper
can be applied to a larger variety of \dss running on a larger variety of
environments.


\vspp{-.15}
\subsection{Preliminaries}
\label{sec:preli}

We describe in this subsection the structure of the algorithm that is covered by
our model. We explain how to analyze the execution of an instance of
such an algorithm when executed by several threads, by slicing this
execution into a sequence of adjacent \supws, where a \supw is an interval of
time during which exactly one operation returns. Each of the \supws
is further split into two by the first access to the \ds in the
considered \rl.
%
This execution pattern reflects fundamental phases of both analyses,
whose first steps and general direction are outlined at the end of the
subsection.



\vspp{-.3}
\subsubsection{System Settings}


All threads call Procedure~\ref{alg:gen-name} (see
Figure~\ref{alg:gen-nb}) when they are spawned. So each thread follows
a simple though expressive pattern: a sequence of calls to an
operation on the \ds, interleaved with some \pww during which the
thread does not try to modify the \ds.
For instance, it can represent a work-stealing algorithm, as
described in the introduction.
%
%

The algorithm is decomposed in two main sections: the {\it \psw},
represented on line~\ref{alg:li-ps}, and the {\it \rl} (which
represents one operation on the shared \ds) from
line~\ref{alg:li-bcs} to line~\ref{alg:li-ecs}. A {\it \re} starts at
line~\ref{alg:li-bbcs} and ends at line~\ref{alg:li-ecs}. The outer
loop that goes from line~\ref{alg:li-bwl} to line~\ref{alg:li-ecs} is
designated as the {\it \wl}\rr{.

}\pp{. }
In each \re, a thread tries to modify the \ds and does not exit the
\rl until it has successfully modified the \ds.
\rr{It firstly reads the
access point \DataSty{AP} of the \ds, then, according to the value
that has been read, and possibly to other previous computations that
occurred in the past, the thread prepares, during the \cww,
%
%
the new desired value as an access point of
the \ds. Finally, it atomically tries to perform the change through a
call to the \cas primitive. If it succeeds, \ie if the access point
has not been changed by another thread between the first \rf and the
\cas, then it goes to the next \psw, otherwise it repeats the
process. }%
The \rl is composed of at least one \re (and the first
iteration of the \rl is strictly speaking not a \re, but a try).


We denote by \cc the execution time of a \cas{} when the executing
thread does not own the cache line in exclusive mode, in a setting
where all threads share a last level cache. Typically, there
exists a thread that touches the data between two requests of the same
thread, therefore this cost is paid at every occurrence of a \cas.
As for the \rf{}s, \rc holds for the execution time of a cache
miss. When a thread executes a failed \cas, it immediately reads the
same cache line (at the beginning of the next \re), so the cache line
is not missing, and the execution time of the \rf is considered as
null. However, when the thread comes back from the \psw, a cache miss
is paid.
To conclude with the parameters related to the platform, we dispose of
\ct cores, where the \cas (resp. the \rf) latency is identical for all
cores, \ie \cc (resp. \rc) is constant.

The algorithm is parametrized by two execution times. In the general
case, the execution time of an occurrence of the \psw
(application-specific section) is a random variable that follows an
unknown probability distribution. In the same way, the execution time
of the \cww (specific to a \ds) can vary while following an unknown
probability distribution. The only provided information is the mean
value of those two execution times: \cw for the \cww, and \pw for the
\pww. These values will be given in units of work, where $\uow{1} =
\cycles{50}$.\vspp{-.4}


\subsubsection{Execution Description}
\label{sec:fra-exe}

It has been underlined in~\cite{EXCESS:D2.3} that there are two
main conflicts that degrade the performance of the \dss which do not
offer a great degree of disjoint-access parallelism: logical and
hardware conflicts.

{\it Logical conflicts} occur when there are more than one thread
in the retry loop at a given time (happens typically when the
number of threads is high or when the parallel section is small).
At any time, considering only the threads
that are in the \rl, there is indeed at most one thread whose \re will
be successful (\ie whose ending \cas will succeed), which implies the
execution of more \res for the failing threads.
%
In addition, after a thread executes successfully its final
\cas, the other threads of the \rl have first to finish their current
\re before starting a potentially successful \re, since they are not
informed yet that their current \re is doomed to failure. This creates
some ``holes'' in the execution where all threads are executing useless
work.

The threads will also experience {\it hardware conflicts}: if several
threads are requesting for the same data, so that they can operate a
\cas on it, a single thread will be satisfied. All the other threads
will have to wait until the current \cas is finished, and give a new
try when this \cas is done. While waiting for the ownership of the
cache line, the requesting threads cannot perform any useful
work. This waiting time is referred to as {\it expansion}.

\def\herec{.5}
\def\wirec{2}
\def\grey{black!20}
\def\greywh{black!4}

\def\maroon{blue!20!black!40!red!}
\def\green{black!20!green}

\newcommand{\pha}[5]{%
\node[it,text width=#4em,right= 0 of #2,fill=#5] (#1) {#3};
}

\newcommand{\supfig}{
\begin{center}
\begin{tikzpicture}[%
it/.style={%
    rectangle,
    text width=11em,
    text centered,
    minimum height=3em,
    draw=black!50
  }
]
\coordinate (O) at (0,0);
\pha{pcas}{O}{successful\\\cas}{\pr{4.5}{5}}{\green}
\pha{sla}{pcas}{useless\\work}{\pr{4}{8}}{\grey}
\pha{acc}{sla}{\acc}{\pr{3.5}{4}}{orange}
\pha{cri}{acc}{\cw}{\pr{2}{4}}{\maroon}
\pha{exp}{cri}{expansion}{\pr{4.5}{6}}{\grey}
\pha{fcas}{exp}{successful\\\cas}{\pr{4.5}{5}}{\green}
\draw [decorate,decoration={brace,mirror,amplitude=10pt}]
(sla.south west) -- (sla.south east)
node [black,midway,yshift=-20pt] {\watiw};
\draw [decorate,decoration={brace,mirror,amplitude=10pt}]
(acc.south west) -- (fcas.south east)
node [black,midway,yshift=-20pt] {\compw};
\draw [decorate,decoration={brace,amplitude=10pt}]
(sla.north west) -- (fcas.north east)
node [black,midway,yshift=20pt] (supw) {\supw};

\node (anot) at (.8,1) {{\scriptsize can be null}};

\draw[very thin] ($(anot.south east)!.8!(anot.east)$) -- ++(.4,0) -- ($(sla.north)+(0,-.1)$);
\draw[very thin] ($(anot.north east)!.8!(anot.east)$) -- ++(.4,0) -- ($(exp.north)+(0,-.1)$);


\end{tikzpicture}
\end{center}
}


\rr{\begin{figure}[t!]
\abstalgo
\caption{Thread procedure}\label{alg:gen-nb}
\end{figure}}

\rr{\begin{figure}[t!]
\supfig
\caption{Success Period}\label{fig:seq}
\end{figure}}

\pp{\begin{figure}[t!]
\centering
\begin{minipage}{.4\textwidth}
\abstalgo
\captionof{figure}{Thread procedure\label{alg:gen-nb}}
\end{minipage}\hfill%
\begin{minipage}{.6\textwidth}
\supfig
\captionof{figure}{\Supw\label{fig:seq}}
\end{minipage}
\end{figure}}

We now refine the description of the execution of the algorithm. The
timeline is initially decomposed into a sequence of \supws that will
define the throughput. A \supw is an interval of time of the
execution that
%
(i) starts after a successful \cas,
(ii) contains a single successful \cas,
(iii) finishes after this successful \cas.
\pr{To}{As explained in the previous subsection, to} be successful in its \re,
a thread has first to access the \ds, then modify it locally, and
finally execute a \cas, while no other thread performs changes on the
\ds. That is why each \supw is further cut into two main phases (see
Figure~\ref{fig:seq}). During the first phase, whose duration is
called the {\it \watiw}, no thread is accessing the \ds. The second
phase, characterized by the {\it \compw}, starts with the first access
to the \ds (by any thread). Note that this \acc could be either a \rf
(if the concerned thread just exited the \psw) or a failed \cas (if the thread
was already in the \rl).
The next successful \cas will come at least
after \cw (one thread has to traverse the \cww anyway), that is why we
split the latter phase into: \cw, then expansion, and finally a
successful \cas.


\subsubsection{Our Approaches}
\label{sec:fra-app}

In this work, we propose two different approaches to compute the
throughput of a lock-free algorithm, which we name as \avba and
constructive. The \avba approach relies on queuing theory and is
focused on the average behavior of the algorithm: the throughput is
obtained through the computation of the expectation of the \supw at a
random time.
%
%
As for the constructive approach, it describes precisely the instants
of accesses and modifications to the \ds in each \supw: in this way,
we are able to deconstruct and reconstruct the execution, according to
observed events. The constructive approach leads to a more accurate
prediction at the expense of requiring more information about the
algorithm: the distribution functions of the critical and \pwws have
indeed to be instantiated.


In both cases, we partition the domain space into different levels of
contention (or {\it modes}); these partitions are independent across
approaches, even if we expect similarities, but in each case, cover
the whole domain space (all values of \cww, \pww and number of
threads).
\medskip

\subsubsection{\Avba Analysis}
\label{sec:fra-app-asy}

We distinguish two main modes in which the algorithm can run:
contended and non-contended. In the non-contended mode, \ie when the
\pww is large or the number of threads is low,
concurrent operations are not likely to collide. So every \rl will
count a single \re, and atomic primitives will not delay each
other. In the contended mode, any operation is likely to experience
unsuccessful \res before succeeding (logical conflicts), and a \re
will last longer than in the non-contended mode because of the
collision of atomic primitives (hardware conflicts).

Once all the parameters are given, 
the analysis is centered around the calculation of a single variable
\atrl, which represents the expectation of the number of threads
inside the \rl at a random instant. Based on this variable, we are
able to express the expected expansion \avexp{\atrl} at a random
time. As a next step, we show how this expansion can be used to
estimate the expected \watiw \avwati{\atrl} and the expected \compw
\rwh{\atrl}, and at the end, the expected time of a \supw
\avsupe{\atrl}.

\subsubsection{Constructive Method}
\label{sec:fra-app-con}


The previous \avba reasoning is founded on expected values at a random time,
while in the constructive approach, we study each \supw individually,
based on the number of threads at the beginning of the considered
\supw. So we are able to exhibit more clearly the instants of
occurrences of the different accesses and modifications to the \ds,
and thus to predict the throughput more accurately.

We rely on the same set of values used in the \avba approach, but
these values are now associated with a given
\supw.
%
%
Thus the number of threads inside the \rl \trl, as well as the \watiw
and the \compw are evaluated at the beginning of each \supw.
We denote these times in the same way as in the first approach, but
remove the bar on top since these values are not expectations any
more.

The different contention modes do not characterize here the
steady-state of the \ds as in the previous approach but are
associated with the current \supw. Accordingly, the contention can
oscillate through different modes in the course of the execution.
First, a \supw is not
contended when $\trl=0$, \ie when there is no thread in the \rl after
a successful \cas.
In this case, the first thread that exits the \psw
will be successful, and the \acc of the sequence will be a \rf.
Second, the contention of a \supw is high when at any time during
the \supw, there exists a thread that is executing a \cas. In other
words, at the end of each \cas, there is at least one thread that is
waiting for the cache line to operate a \cas on it. This implies that
the first access of the \supw is a \cas and occurs immediately after
the preceding successful \cas: the \watiw is null.
Third, the medium contention mode takes place when $\trl>0$, while at the
same time, there are not enough requesting threads to fill the whole
\supw with \cass (which implies a non-null \watiw). Since these
requesting threads have synchronized in the previous \supw, \cass do
not collide in the current \supw, and because of that, the expansion
is null.

\subsection{Average-based Approach}
\label{sec:avba}

We propose in this section our coarse-grained analysis to predict the
performance of lock-free \dss.
Our approach utilizes
fundamental queuing theory techniques, describing the average
behavior of the algorithm.
%
In turn, we need only a minimal knowledge about the algorithm: the mean
execution time values \cw and \pw.
%
As explained in Section~\ref{sec:fra-app-asy}, the system runs in one of
the two possible modes: either contended or uncontended.


\subsubsection{Contended System}

We first consider a system that is contended.
When the system is contended, we use Little's law to obtain, at a
random time, the expectation of the \supw, which is the interval of
time between the last and the next successful \cass
(see Figure~\ref{fig:seq}).

The stable system that we observe is the \psw: threads are entering it
(after exiting a successful \rl) at an average rate, stay inside,
then leave (while entering a new \rl).
The average number of threads inside the \psw is $\atps = \ct - \atrl$,
each thread stays for an average duration of \pw, and in average, one
thread is exiting the \rl every \supw \avsupe{\atrl}, by definition of
the \supw.
According to Little's law~\cite{littles-law}, we have:\vspp{-.3}
\pr{
\begin{equation}
\label{eq:little-gen}
\atps = \pw \times 1 / \avsupe{\atrl}, \text{ \ie} \quad  \avsupe{\atrl} = \pw / (\ct - \atrl)
\end{equation}
}
{
\[ \atps = \pw \times \frac{1}{\avsupe{\atrl}}, \text{ \ie} \]
\begin{equation}
\label{eq:little-gen}
\frac{1}{\pw} \times \avsupe{\atrl} = \frac{1}{\ct - \atrl}
\end{equation}
}

\pr{We decompose a \supw into two parts: \watiw and \compw (as explained
in Section~\ref{sec:fra-exe}). 
We express the expectation of the \supw time as}%
{As explained in Section~\ref{sec:fra-exe}, we further decompose a \supw
into two parts, separated by the first access to the \ds after a
successful \cas. We can then write the average \supw as the sum of:
(i) the expected time before some thread starts its
  \acc (the \watiw), and
(ii) the expected \compw.
We compute these two expectations independently and gather them into
the \supw thanks to:}
\begin{equation}
\label{eq:little-sp}
\avsupe{\atrl} = \avwati{\atrl} + \rwh{\atrl}.
\end{equation}

When the \ds is contended, a thread is likely to be
successful after some failed \res. Therefore a thread that is
successful was  already in the \rl when the previous
successful \cas occurred.
\rr{This implies that the \acc to the \ds will
be due to a failed \cas, instead of a \rf.}%
The time before a thread starts its \acc is then the time before a
thread finishes its current \cww since there is a thread
currently executing a \cas.

\smallskip

\subsubsection{Expected \COmpw}
\label{sec:little-expa}

Since the \ds is contended, numerous threads are inside the \rl, and,
due to hardware conflicts, a \re can experience expansion:
the more threads inside the \rl, the longer time between a \cas
request and the actual execution of this \cas. The expectation of the
\compw can be written as:\vspp{-.2}
\begin{equation}
\label{eq:little-ret}
\rwh{\atrl} = \scas + \cw + \avexp{\atrl} + \scas ,
\end{equation}
where \avexp{\atrl} is the expectation of expansion when there are \atrl
threads inside the \rl, in expectation.
This expansion can be computed in the same way as
in~\cite{EXCESS:D2.3}, through the following differential equation:\vspp{.2}
\pr{\begin{minipage}{.4\textwidth}
\begin{equation*}
\left\{
\begin{array}{lcl}
\difavexp{\atrl} &=& \fcas \times \dfrac{\frac{\fcas}{2} + \avexp{\atrl}}{ \scas +\calrl + \scas + \avexp{\atrl}},\\
\avexp{1} &=& 0
\end{array} \right.
\end{equation*}\end{minipage}\hfill\begin{minipage}{.5\textwidth}
by assuming that the expansion starts as soon as strictly more than 1
thread are in the \rl, in expectation.\end{minipage}}{
\begin{equation*}
\left\{
\begin{array}{lcl}
\difavexp{\atrl} &=& \fcas \times \dfrac{\frac{\fcas}{2} + \avexp{\atrl}}{ \scas +\calrl + \scas + \avexp{\atrl}}\\
\avexp{1} &=& 0
\end{array} \right.,
\end{equation*}
by assuming that the expansion starts as soon as strictly more than 1
thread are in the \rl, in expectation.
}
%

\vspace*{.3cm}
\subsubsection{Expected \WAtiw}
\label{sec:litt-slack}

Concerning the \watiw, we consider that, at any time, the threads that
are running the \rl have the same probability to be anywhere in their
current \re. However, when a thread is currently executing a \cas, the
other threads cannot execute as well a \cas. The other threads are
thus in their \cww or expansion. For every thread, the time
before accessing the \ds is then uniformly distributed
between $0$ and $\cw+\avexp{\atrl}$.
\pr{Using a well-known formula on the expectation of the minimum of
  uniformly distributed random variables, we show
  in Appendix~\ref{app:lemsl} that\vspace*{-.4cm} 
}
{

According to Lemma~\ref{lem:unif-min}, we conclude that
}
\begin{equation}
\label{eq:slack-cont}
\pp{\hspace{5cm}}\avwati{\atrl} = \left( \cw + \avexp{\atrl} \right) / (\atrl +1).
\end{equation}

\rr{\lemsl}

\vspace*{.1cm}
\subsubsection{Expected \SUpw}

We just have to combine Equations~\ref{eq:little-sp},
\ref{eq:little-ret}, and~\ref{eq:slack-cont} to obtain the general
expression of the expected \supw: \pr{$\avsupe{\atrl} = \left( 1 + 1/(\atrl +1) \right) \left( \cw + \avexp{\atrl} \right) + 2\scas$,}%
{\[\avsupe{\atrl} = \left( 1 + \frac{1}{\atrl +1} \right) \left( \cw + \avexp{\atrl} \right) + 2\scas, \]}
which leads, according to Equation~\ref{eq:little-gen}, to\vspp{-.2}
\begin{equation}
\label{eq:little-co}
\frac{1}{\pw} \times \left(
\frac{\atrl +2}{\atrl +1}  \left( \cw + \avexp{\atrl} \right)
+ 2\scas
\right) = \frac{1}{\ct - \atrl}.
\end{equation}

\subsubsection{Non-contended System}

When the system is not contended, logical conflicts are not likely to
happen, hence each thread succeeds in its \rl at its first {\it
  \re}. \Afort, no hardware conflict occurs. Each thread still
performs one success every \wl, and the \supw is given by
\pr{%
$\avsupe{\atrl} = (\pw + \mem+\cw+\scas)/\ct$.
Moreover, a thread spends average \pw units of time
in the \rl within each \wl. As this holds for
every thread, we deduce
$\ct - \atrl = \atps = \pw/(\pw + \mem+\cw+\scas) \times \ct$.
Combining the two previous equations, we obtain
\begin{equation}
\label{eq:little-nc}
\frac{\avsupe{\atrl}}{\pw}  = \frac{1}{\ct - \atrl},
\text{ where } \avsupe{\atrl} = \frac{\mem+\cw+\scas}{\atrl}.
\end{equation}
}%
{
\begin{equation}
\label{eq:little-avsp}
\avsupe{\atrl} = \frac{\pw + \mem+\cw+\scas}{\ct}.
\end{equation}

Moreover, a thread spends in average $\mem+\cw+\scas$ units of time
in the \rl within each \wl. As this holds for
every thread, we can obtain the following expression for the total
average number of threads inside the \rl:
\begin{equation}
\label{eq:little-trl}
\atrl  = \frac{\mem+\cw+\scas}{\pw + \mem+\cw+\scas}  \times \ct = \frac{\mem+\cw+\scas}{\avsupe{\atrl}}
\end{equation}

Equation~\ref{eq:little-avsp} also gives $ \mem+\cw+\scas = \ct \times
\avsupe{\atrl} - \pw$, hence, thanks to Equation~\ref{eq:little-trl},
\begin{equation}
\label{eq:little-nc}
\atrl = \frac{\ct \times \avsupe{\atrl}-\pw}{\avsupe{\atrl}}, \text{ \ie} \quad
\frac{\avsupe{\atrl}}{\pw}  = \frac{1}{\ct - \atrl},
\end{equation}
where $\avsupe{\atrl} = \frac{\mem+\cw+\scas}{\atrl}$.}

\subsubsection{Unified Solving}

\pp{Based on the fact that Equations~\ref{eq:little-co}
  and~\ref{eq:little-nc} are equivalent on a single point, we can
  exhibit an expression of the \supw that is valid on the whole domain
  (see Appendix~\ref{app:switch}). In turn
  Theorem~\ref{th:fixed-point}, proved in
  Appendix~\ref{app:fixed-point}, explains how to compute the
  throughput estimate.}

\rr{
\proofswitch

We show in the following theorem how to compute the throughput estimate; the proof
manipulates equations in order to be
able to use the fixed-point Knaster-Tarski theorem.
} 

\begin{theorem}
\label{th:fixed-point}
The throughput can be obtained iteratively through a fixed-point
search, as $\thru = \left( \avsupe{\lim_{n \rightarrow \pinf} u_n} \right) ^{-1}$, where\vspp{-.2}
\[ \left\{\begin{array}{ll}
u_0 = \frac{\mem + \cw + \scas}{\pw + \mem + \cw + \scas} \ct &\\
u_{n+1} = \frac{u_n \avsupe{u_n}}{\pw + u_n \avsupe{u_n}} \times \ct & \quad \text{for all } n \geq 0.
\end{array}\right.
\]
\end{theorem}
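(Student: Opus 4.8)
The idea is to rewrite Equation~\ref{eq:little-res} as a fixed-point equation $\atrl = g(\atrl)$ for a \emph{monotone} self-map $g$ of the interval $[0,\ct]$, and then to read the statement off the Knaster--Tarski theorem: $g$ has a least fixed point, the monotone (Kleene) iteration started just below it converges to it, and that least fixed point is exactly the $\atrl$ we want because it is the first solution encountered as the expansion grows from $0$. The throughput is then the reciprocal of the expected \supw evaluated there.

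First I would introduce the auxiliary functions $\fuf{\atrl} = \avsupe{\atrl}\,\atrl$ and $\fus{\atrl} = \pw\,\atrl/(\ct-\atrl)$, so that (after multiplying Equation~\ref{eq:little-res} through by $\atrl$) solving it amounts to solving $\fuf{\atrl} = \fus{\atrl}$. Two structural facts are needed. (i) $\ffuf$ is non-decreasing on $[0,\pinf[$: on the non-contended branch of the unified \supw (Lemma~\ref{lem:lit-swi}) it is the constant $\mem+\cw+\scas$, while on the contended branch it equals $\left(\cw+\avexp{\atrl}\right)\,\atrl\,(\atrl+2)/(\atrl+1) + 2\scas\,\atrl$, a sum/product of non-decreasing factors, since $x\mapsto x(x+2)/(x+1)$ is non-decreasing and the expansion $\avexp{\cdot}$ is non-decreasing. (ii) $\ffus$ is a strictly increasing bijection from $[0,\ct[$ onto $[0,\pinf[$, with inverse $\fusi{x} = \frac{x}{\pw+x}\,\ct$. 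Hence Equation~\ref{eq:little-res} is equivalent to $\atrl = \fusi{\fuf{\atrl}}$, and I set $g := \ffusi\circ\ffuf$, which is non-decreasing as a composition of non-decreasing maps and is bounded above by $\ct$; a direct substitution shows $g(\atrl) = \frac{\atrl\,\avsupe{\atrl}}{\pw + \atrl\,\avsupe{\atrl}}\,\ct$, so the recurrence in the statement is precisely $u_{n+1} = g(u_n)$.

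Next I would produce a sub-fixed-point seed. From the unification of Lemma~\ref{lem:lit-swi} one has $\avsupe{\atrl} = (\mem+\cw+\scas)/\atrl$ on the non-contended branch and $\avsupe{\atrl} \ge (\mem+\cw+\scas)/\atrl$ on the contended branch; the latter inequality holds because $(\mem+\cw+\scas)/\atrl \to \pinf$ as $\atrl\to 0^{+}$ while the contended expression stays bounded, and Lemma~\ref{lem:lit-swi} guarantees the two curves cross exactly once. Therefore $\fuf{\atrl}\ge\mem+\cw+\scas$ for every $\atrl$, with two consequences: every solution of $\atrl = \fusi{\fuf{\atrl}}$ satisfies $\atrl\ge\fusi{\mem+\cw+\scas} = \frac{\mem+\cw+\scas}{\pw+\mem+\cw+\scas}\,\ct =: u_0$, and, by monotonicity of $g$, $g(u_0)\ge\fusi{\mem+\cw+\scas} = u_0$. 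Thus $g$ restricts to a monotone self-map of the complete lattice $[u_0,\ct]$, Knaster--Tarski supplies a fixed point there, the increasing bounded sequence $(u_n)$ with $u_n = g^{\,n}(u_0)$ converges (continuity of $g$ on $[0,\ct[$) to the least fixed point in $[u_0,\ct]$, and since all fixed points lie above $u_0$ this is the least fixed point overall. Finally, $\avsupe{\cdot}$ being the expected \supw whose reciprocal is the throughput by definition, $\thru = \left(\avsupe{\lim_{n\to\pinf}u_n}\right)^{-1}$ evaluated at this least root is the desired estimate; minimality is also the physically correct choice, since starting from zero expansion and letting contention grow the first intersection of $\ffuf$ and $\ffus$ occurs at the smallest root.

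The step I expect to be the main obstacle is the contended-branch inequality $\avsupe{\atrl}\ge(\mem+\cw+\scas)/\atrl$, equivalently that the contended \supw expression dominates the non-contended one for $\atrl>\atrlf$: it is not obtained by direct algebra but by the single-crossing/monotonicity structure of the two curves recorded in Lemma~\ref{lem:lit-swi}, and it is exactly what makes $u_0$ a legitimate starting point below a fixed point. A lesser technical point to watch is that $g$ never reaches the boundary $\ct$ and stays bounded, so that Knaster--Tarski genuinely applies and the Kleene iteration converges to the \emph{least} fixed point rather than to an arbitrary one.
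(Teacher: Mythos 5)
Your proposal is correct and follows essentially the same route as the paper's proof: the same auxiliary functions $\ffuf$ and $\ffus$, the same monotonicity of $\ffusi\circ\ffuf$, the same lower bound $\fuf{\atrl}\geq\mem+\cw+\scas$ derived from the single-crossing structure of Lemma~\ref{lem:lit-swi}, the same seed $u_0$, and the same appeal to Knaster--Tarski with the least fixed point selected as the physically meaningful root. The only addition is your explicit remark on the convergence of the monotone iteration, which the paper leaves implicit.
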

\rr{ 
\begin{proof}
\prooffp
\end{proof}
} 

\subsection{Constructive Approach}
\label{sec:cons}


In this section, we instantiate the probability distribution of the
\pww with an exponential distribution. We have therefore a
better knowledge of the behavior of the algorithm, particularly in
medium contention cases, which allows us to follow a fine-grained
approach that studies individually each successful operation together
with every \cas occurrence. We provide an elegant and efficient
solution that relies on a Markov chain analysis.

\vspace{-.3cm}
\subsubsection{Process}
\label{sec:mark-proc}

%


We have seen in Section~\ref{sec:fra-app-con} that the \supw can run
in one of the three modes: no contention, medium contention or high
contention.
\pr{We }{The main idea is to }start from a configuration with a given
number of threads \trl \rr{just }after a successful \cas, and to describe
what will happen until the next successful \cas: what will be the mode
of the next \supw, and\rr{ even} more precisely, which will be the
number of threads at the beginning of the next \supw.

As a basis, we consider the execution that would occur without any
other thread exiting the \psw (then entering the \rl); we call this
execution the {\it internal execution}. This execution follows the
\supw pattern described in Figure~\ref{fig:seq} (with an infinite
\watiw if the system is not contended).
On top of this basic \supw, we inject the threads that can exit the
\psw, which has a double impact. On the one hand, they increase the
number of threads inside the \rl for the next \supw. On the other
hand, if the first thread that exits the \psw starts its \re during the
\watiw of the \supw of the internal execution, then this thread will
succeed its \acc, which is a \rf, and will shrink the actual \watiw
of the current \supw.

According to the distribution probability of the arrival of the new
threads, we can compute the probability for the next \supw to
start with any number of threads. The expression of this stochastic
sequence of \supws in terms of Markov chains results in the throughput
{estimate}.

\vspp{-.4}
\subsubsection{Expansion}
\label{sec:mark-expa}

The expansion, as before, represents the additional time in the
execution time of a \re, due to the serialization of atomic
primitives. However, in contrary to Section~\ref{sec:little-expa}, we
compute here this additional time in the current \supw, according to
the number of threads \trl inside the \rl at the beginning of the
\supw.
The expansion only appears when the \supw is highly contended, \ie
when we can find a continuous sequence of \cass all through the
\supw. We assume that for the rest of the section.

The expansion is highly correlated with the way the cache coherence
protocol handles the exchange of cache lines between threads. We rely
on the experiments of the research report associated
with~\cite{ali-same}, which show that if several threads request for
the same cache line in order to operate a \cas, while another thread
is currently executing a \cas, they all have an equal probability to
obtain the cache line when the current \cas is over.

\def\herec{.5}
\def\wirec{1}
\def\wicw{2.5}
\newcommand{\dcasf}[2]{
\draw[pattern=north west lines, draw=none] (0,-#2*\herec) rectangle (#1*\wirec,-#2*\herec+\herec);
\draw[fill=red] (#1*\wirec,-#2*\herec) rectangle ++(\wirec,\herec) node[midway, align=center] {\cas};
}
\newcommand{\dcass}[2]{
\draw[fill=\green] (#1*\wirec,-#2*\herec) rectangle ++(\wirec,\herec) node[midway, align=center] {\cas};
}
\newcommand{\dcw}[2]{
\draw[fill=\maroon] (#1*\wirec,-#2*\herec) rectangle ++(\wicw,\herec) node[midway, align=center] {\cw};
}
\newcommand{\dpw}[3]{
\draw[fill=\grey] (#1*\wirec,-#2*\herec) rectangle ++(#3*\wirec,\herec) node[midway, align=center] {\pw};
}
\newcommand{\dwt}[3]{
\draw[densely dashed] (#1*\wirec,-#2*\herec+.5*\herec) -- ++(#3*\wirec,0);
}

\def\marup{1}
\def\decx{.2}
\def\decy{.35}

\newcommand\arcod[2]{
\draw[draw=blue,very thick,fill=blue] #1 -- ++ (0,-#2) -- ++(\decx,\decy) -- ++(-\decx,0);
}
\newcommand\arcou[2]{
\draw[draw=blue, very thick,fill=blue] #1 --  ++ (0,#2) -- ++(\decx,-\decy) -- ++(-\decx,0);}

\pp{
\begin{figure}
\begin{center}
\hspace*{-1cm}\begin{minipage}{.5\textwidth}
\begin{center}
\begin{tikzpicture}[scale=.97]
\clip (-1*\wirec,1*\herec) rectangle (8*\wirec,-9.5*\herec);

\dcass{0}{0}\dpw{1}{0}{8}
\dcasf{1}{3}\dcw{2}{3}
\dcasf{2}{2}\dcw{3}{2}
\dcasf{3}{5}\dcw{4}{5}
\dcasf{4}{4}\dcw{5}{4}
\dcasf{5}{1}\dcw{6}{1}
\dcasf{6}{6}\dcw{7}{6}

\dwt{2+\wicw}{3}{3.5}
\dwt{3+\wicw}{2}{1.5}\dcass{3+1.5+\wicw}{2}
\dwt{4+\wicw}{5}{1.5}
\dwt{5+\wicw}{4}{.5}

\draw[blue,densely dotted, thick] (5*\wirec,0) -- ++(0,-6.4*\herec) -- ++(-.5*\wirec,0) -- ++(0,-.5*\herec) ;
\draw[blue,densely dotted, thick] (6*\wirec,0) -- ++(0,-6.9*\herec);
\draw[blue,densely dotted, thick] (7*\wirec,0) -- ++(0,-6.4*\herec) -- ++(.5*\wirec,0) -- ++(0,-.5*\herec) ;
\draw[blue,densely dotted, thick] (1*\wirec,0) -- ++(0,-6.9*\herec);

\node[text width=1cm, align=center] at (4.5*\wirec,-8*\herec) { {\color{red} $\trl-4$}\\ {\color{black} vs}\\ {\color{green}1}};
\node[text width=1cm, align=center] at (6*\wirec,-8*\herec) { {\color{red} $\trl-5$}\\ {\color{black} vs}\\ {\color{green}2}};
\node[text width=1cm, align=center] at (7.5*\wirec,-8*\herec) { {\color{red} $\trl-6$}\\ {\color{black} vs}\\ {\color{green}3}};

\node[text width = 3cm, align=center] at (1.2*\wirec,-7.8*\herec) {\trl threads inside\\ the \rl};

\node[draw=black,rounded corners=4,scale=.8] at (-.5*\wirec,0.5*\herec ) {\tiny Thread 1};
\node[draw=black,rounded corners=4,scale=.8] at (-.5*\wirec,-0.5*\herec) {\tiny Thread 2};
\node[draw=black,rounded corners=4,scale=.8] at (-.5*\wirec,-1.5*\herec) {\tiny Thread 3};
\node[draw=black,rounded corners=4,scale=.8] at (-.5*\wirec,-2.5*\herec) {\tiny Thread 4};
\node[draw=black,rounded corners=4,scale=.8] at (-.5*\wirec,-3.5*\herec) {\tiny Thread 5};
\node[draw=black,rounded corners=4,scale=.8] at (-.5*\wirec,-4.5*\herec) {\tiny Thread 6};
\node[draw=black,rounded corners=4,scale=.8] at (-.5*\wirec,-5.5*\herec) {\tiny Thread 7};

\end{tikzpicture}
\end{center}
\captionof{figure}{Highly-contended execution\label{fig:mark-expa}}
\end{minipage}%
\hspace*{1cm}\begin{minipage}{.48\textwidth}
\begin{center}
\begin{tikzpicture}[%
it/.style={%
    rectangle,
    text width=11em,
    text centered,
    minimum height=\pr{2}{3}em,
    draw=black!50,
    scale=.9,
  }
]
\coordinate (O) at (0,0);
\pha{pcas}{O}{\cas}{\pr{2}{5}}{\green}
\pha{sla}{pcas}{\wati{i}}{\pr{5}{8}}{\grey}
\pha{acc}{sla}{\cas}{\pr{2}{4}}{red}
\pha{cri}{acc}{\cw}{\pr{3}{4}}{\maroon}
\pha{exp}{cri}{\reexp{i}}{\pr{4}{6}}{\grey}
\pha{fcas}{exp}{\cas}{\pr{2}{5}}{\green}
\coordinate (intsl) at ($(sla.north west)+(0,.4*\marup)$); 
\coordinate (intsr) at ($(sla.north east)+(0,.4*\marup)$);
\coordinate (inte) at ($(fcas.north east)+(0,.4*\marup)$); 
\draw [decorate,decoration={brace,amplitude=10pt}]
(intsl)  --  (intsr)
node [black,midway,yshift=15pt] {\scriptsize 0 new thread};

\draw [decorate,decoration={brace,amplitude=10pt}]
(intsr) -- (inte)
node [black,midway,yshift=15pt] {\scriptsize $k+1$ new threads};
\arcod{($(acc.north west)!.2! (acc.north east) + (0,.5*\marup)$)}{.4*\marup}
\arcod{($(exp.north west)!.1! (exp.north east) + (0,.5*\marup)$)}{.4*\marup}
\arcod{($(exp.north west)!.9! (exp.north east) + (0,.5*\marup)$)}{.4*\marup}

\coordinate (extsl) at ($(sla.south west)+(0,-.4*\marup)$);
\coordinate (extsr) at ($(sla.south east)+(0,-.4*\marup)$);
\draw [decorate,decoration={brace,mirror,amplitude=10pt},text width=11em, align=center] (extsl) -- (extsr)
node (caca) [black,midway,yshift=-13pt,xshift=26pt] {\scriptsize at least 1 new thread};

\coordinate (Ob) at ($(sla.south west)!.2!(sla.south) + (0,-1.5*\marup)$);
\pha{accb}{Ob}{\rf}{\pr{2}{4}}{yellow}
\pha{crib}{accb}{\cw}{\pr{3}{4}}{\maroon}
\pha{expb}{crib}{\reexp{i}}{\pr{4}{6}}{\grey}
\pha{fcasb}{expb}{\cas}{\pr{2}{5}}{\green}
\arcou{(accb.south west)}{1.65*\marup}
\arcou{($(crib.south west)!.2! (crib.south east) + (0,-.6*\marup)$)}{.4*\marup}
\arcou{($(expb.south west)!.9! (expb.south east) + (0,-.6*\marup)$)}{.4*\marup}
\draw[very thick, draw=blue] (accb.south west) -- (fcasb.south east) -- (fcasb.north east) -- (accb.north west);
\coordinate (extnl) at ($(accb.south west)+(0,-.5*\marup)$); 
\coordinate (extnr) at ($(fcasb.south east)+(0,-.5*\marup)$);
\draw [decorate,decoration={brace,mirror,amplitude=10pt},text width=11em, align=center]
(extnl) -- (extnr)
node [black,midway,yshift=-15pt] {\scriptsize $k$ new threads};

\draw[dotted, draw=black] (intsl) -- (extsl);
\draw[dotted, draw=black] (intsr) -- (extsr);
\draw[dotted, draw=black] (inte) -- (fcas.south east);
\draw[dotted, draw=black] (accb.north west) -- (extnl);
\draw[dotted, draw=black] (fcasb.north east) -- (extnr);
\node[text width=1.2cm, align=center] (inttext) at ($(pcas.west) + (-1*\marup,0)$) {Internal\\ execution};
\node[anchor=east] (eint) at ($(inttext.east) + (0.2,1.2*\marup)$) {\eve{int}};
\node[anchor=east] (eext) at ($(inttext.east) + (0.2,-2*\marup)$){\eve{ext}};

\path[->,out=90,in=-180] ($(inttext.north west)!.3!(inttext.north)$) edge (eint);
\path[->,out=-90,in=-180] ($(inttext.south west)!.3!(inttext.south)$) edge (eext);
\end{tikzpicture}
\end{center}
\captionof{figure}{Possible executions\label{fig:ex-eint-eext}}
\end{minipage}
\end{center}
\end{figure}
}

\rr{
\begin{figure}
\begin{center}
\begin{tikzpicture}[scale=1.6]
\clip (-1.2*\wirec,1*\herec) rectangle (8*\wirec,-9*\herec);

\dcass{0}{0}\dpw{1}{0}{8}
\dcasf{1}{3}\dcw{2}{3}
\dcasf{2}{2}\dcw{3}{2}
\dcasf{3}{5}\dcw{4}{5}
\dcasf{4}{4}\dcw{5}{4}
\dcasf{5}{1}\dcw{6}{1}
\dcasf{6}{6}\dcw{7}{6}

\dwt{2+\wicw}{3}{3.5}
\dwt{3+\wicw}{2}{1.5}\dcass{3+1.5+\wicw}{2}
\dwt{4+\wicw}{5}{1.5}
\dwt{5+\wicw}{4}{.5}

\draw[blue,densely dotted, thick] (5*\wirec,0) -- ++(0,-6.4*\herec) -- ++(-.5*\wirec,0) -- ++(0,-.5*\herec) ;
\draw[blue,densely dotted, thick] (6*\wirec,0) -- ++(0,-6.9*\herec);
\draw[blue,densely dotted, thick] (7*\wirec,0) -- ++(0,-6.4*\herec) -- ++(.5*\wirec,0) -- ++(0,-.5*\herec) ;
\draw[blue,densely dotted, thick] (1*\wirec,0) -- ++(0,-6.9*\herec);

\node[text width=2cm, align=center] at (4.5*\wirec,-8*\herec) { {\color{red} $\trl-4$}\\ {\color{black} vs}\\ {\color{green}1}};
\node[text width=2cm, align=center] at (6*\wirec,-8*\herec) { {\color{red} $\trl-5$}\\ {\color{black} vs}\\ {\color{green}2}};
\node[text width=2cm, align=center] at (7.5*\wirec,-8*\herec) { {\color{red} $\trl-6$}\\ {\color{black} vs}\\ {\color{green}3}};

\node[text width = 3cm, align=center] at (1.2*\wirec,-7.8*\herec) {\trl threads inside\\ the \rl};

\node[draw=black,rounded corners=4,scale=.8] at (-.6*\wirec,0.5*\herec ) {\small Thread 1};
\node[draw=black,rounded corners=4,scale=.8] at (-.6*\wirec,-0.5*\herec) {\small Thread 2};
\node[draw=black,rounded corners=4,scale=.8] at (-.6*\wirec,-1.5*\herec) {\small Thread 3};
\node[draw=black,rounded corners=4,scale=.8] at (-.6*\wirec,-2.5*\herec) {\small Thread 4};
\node[draw=black,rounded corners=4,scale=.8] at (-.6*\wirec,-3.5*\herec) {\small Thread 5};
\node[draw=black,rounded corners=4,scale=.8] at (-.6*\wirec,-4.5*\herec) {\small Thread 6};
\node[draw=black,rounded corners=4,scale=.8] at (-.6*\wirec,-5.5*\herec) {\small Thread 7};

\end{tikzpicture}
\end{center}
\caption{Highly-contended execution\label{fig:mark-expa}}
\end{figure}
}

We draw an illustrative example in Figure~\ref{fig:mark-expa}. The
green \cass are successful while the red \cass fail. To lighten
the picture, we hide what happened for the threads before they
experience a failed \cas. The horizontal dash lines represent the time
where a thread wants to access the data in order to operate a \cas
but has to wait because another thread owns the data in exclusive
mode.
We can observe in this example that the first thread that accesses
the \ds is not the thread whose operation returns.

We are given that \trl threads are inside the \rl at the end of the
previous successful \cas, and we only consider those threads. When
such a thread executes a \cas for the first time, this \cas is
unsuccessful. The thread was in the \rl when the successful \cas has
been executed, so it has read a value that is not up-to-date
anymore. However, this failed \cas will bring the current version of
the value (to compare-and-swap) to the thread, a value that will be
up-to-date until a successful \cas occurs.

So we have firstly a sequence of failed \cass until the first thread
that operated its \cas within the current \supw finishes its \cww. At
this point, there exists a thread that is executing a \cas. When this
\cas is finished, some threads compete to obtain the cache line. We
have two bags of competing threads: in the first bag, the thread that
just ended its \cww is alone, while in the second bag, there are all
the threads that were in the \rl at the beginning of the \supw, and
did not operate a \cas yet. The other, non-competing, threads are
running their \cww and do not yet want to access the data.

As described before, every thread has the same probability to become the next
owner of the cache line. If a thread from the first bag is drawn, then
the \cas will be successful and the \supw ends. Otherwise, the \cas is
a failure, and we iterate at the end of this failed \cas. However, the
thread that just failed its \cas is now executing its \cww,
and does not request for a new \cas until this work has been done,
thus it is not anymore in the second bag. In addition, the thread that
had executed its \cas after the thread of the first bag is now back
from its \cww and falls into the first bag. The process iterates until
a thread is drawn from the first bag.

As a remark, note that we do not consider threads that are not in the
\rl at the beginning of the \supw since even if they come back from
the \psw during the \supw, their \rf will be delayed and their \cas is
likely to occur after the end of the \supw.


Theorem~\ref{th:mark-expa} \pp{, proved in Appendix~\ref{app:aliexp}, }
gives the explicit formula for the expansion\pr{.}{, based on the previous
explanations.}

\newcommand{\prsu}[1]{\ema{p_{#1}}}

\begin{theorem}
\label{th:mark-expa}
The expected time between the end of the \cww of the first
thread that operates a \cas in the \supw and the beginning of a
successful \cas is given by:\vspp{0}
\[ \reexp{\trl} = \lceil\cw/\scas\rceil\scas - \cw +
\sum_{i=1}^{\tcom} \frac{i(i-1)}{\left(\tcom\right)^i} \frac{(\tcom-1)!}{(\tcom-i)!} \times \scas, \pp{\quad\text{where }\tcom = \trl - \lceil\cw/\scas\rceil +1.}\]
\rr{where $\tcom = \trl - \lceil\cw/\scas\rceil +1$.}
\end{theorem}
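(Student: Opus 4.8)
The plan is to reconstruct, step by step, the worst-case execution of a highly-contended \supw, tracking the number of threads that simultaneously request the cache line at each \cas slot. Set the origin of time at the successful \cas that ends the previous \supw, so $t=0$ is that instant, and recall that $\trl$ threads sit in the \rl then. Throughout, I would use that \cass are serialized on a grid of step $\scas$ starting at $t=0$ and that, by the cache-coherence behaviour reported in~\cite{ali-same}, all threads waiting for the line have equal probability to acquire it when the current \cas completes.

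First I would locate the earliest time at which a successful \cas can begin. A thread already in the \rl at $t=0$ holds an obsolete value, so its first \cas, during $[0,\scas[$, necessarily fails; it then traverses its \cww for time $\cw$ and becomes ready to retry at $t=\scas+\cw$, but cannot \cas before the next grid point, $t_0=\scas+\lceil\cw/\scas\rceil\scas$. Counting the \cas slots contained in $[0,t_0[$ shows how many threads perform their (failed) first \cas before $t_0$, so that the number of threads still competing for the line from $t=t_0$ onwards is $\tcom=\trl-\lceil\cw/\scas\rceil+1$, exactly one of which currently holds the up-to-date value and hence would succeed.

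Next I would describe the competition from $t_0$ as a sequence of rounds, one per \cas slot. The key modelling claim — and this is the worst case — is that each thread failing a \cas returns, after its \cww, in time to keep the competing pool at constant size $\tcom$ while adding one more ``up-to-date'' thread to it; thus at round $i$ the probability of drawing a thread whose \cas succeeds is $\prsu{i}=i/\tcom$. Consequently the number $F$ of failed \cass in this phase satisfies $\pro{F=i-1}=\prod_{j=1}^{i-1}(1-\prsu{j})\,\prsu{i}$, whence
\[ \expe{F} = \sum_{i=1}^{\tcom} \prod_{j=1}^{i-1}\left(1-\frac{j}{\tcom}\right)\frac{i}{\tcom}(i-1). \]
Using $\prod_{j=1}^{i-1}\left(1-\frac{j}{\tcom}\right)=\frac{(\tcom-1)!}{(\tcom-i)!\left(\tcom\right)^{i-1}}$, this collapses to $\expe{F}=\sum_{i=1}^{\tcom}\frac{i(i-1)}{\left(\tcom\right)^{i}}\frac{(\tcom-1)!}{(\tcom-i)!}$.

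Finally I would convert $\expe{F}$ into the stated expansion. Measured from the end of the \cww of the first thread to operate a \cas (which occurs at $t=\scas+\cw$), the successful \cas starts at $t_0+F\scas$, so the elapsed time equals $(\lceil\cw/\scas\rceil\scas-\cw)+F\scas$; taking expectations gives $\reexp{\trl}=\lceil\cw/\scas\rceil\scas-\cw+\expe{F}\scas$, which is the claim. The main obstacle I anticipate is precisely the book-keeping of the third step: rigorously justifying that a thread coming back from its \cww re-enters exactly when another one leaves, so that the pool stays of size $\tcom$ and the $\prsu{i}=i/\tcom$ schedule is a legitimate worst-case bound, and tidying the accompanying slot-counting (in particular reconciling the value of $t_0/\scas$ with $\tcom$) — the algebra afterwards is routine.
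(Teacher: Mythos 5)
Your proposal is correct and follows essentially the same route as the paper's proof: same origin of time, same earliest-success instant $t_0=\scas+\lceil\cw/\scas\rceil\scas$, same constant competing pool of size $\tcom$ with success probability $i/\tcom$ at round $i$, and the same computation of $\expe{F}$; your final assembly of $\reexp{\trl}$ from $\expe{F}$ is only slightly more explicit than the paper's. The slot-counting wrinkle you flag (reconciling $t_0/\scas$ with the stated $\tcom$) is present in the paper itself, so nothing is missing from your argument relative to theirs.
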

\rr{ 
\begin{proof}
\proofaliexp
\end{proof}
}  

\subsubsection{Formalization}

The \pww follows an exponential distribution, whose mean is
\pw. More precisely, if a thread starts a \psw at the instant $t_1$,
the probability distribution of the execution time of the \psw is
\pr{$ t \mapsto \lambda \expu{-\lambda (t-t_1)} \indi{[t_1,\pinf[}{t}$, where $\lambda = 1/\pw. $}
{\[ t \mapsto \lambda \expu{-\lambda (t-t_1)} \indi{[t_1,\pinf[}{t}, \text{ where } \lambda = \frac{1}{\pw}. \]}
This probability distribution is memoryless, which implies that the
threads that are executing their \psw cannot be differentiated: at a
given instant, the probability distribution of the remaining execution
time is the same for all threads in the \psw, regardless of when the \psw began. For all
threads, it is defined by:
\pr{ $ t \mapsto \lambda \expi{-\lambda t}$, where $\lambda = 1/\pw. $}
{\[ t \mapsto \lambda \expu{-\lambda t}, \text{ where } \lambda = \frac{1}{\pw}. \]}

For the behavior in the \rl, we rely on the same approximation as in
the previous section, \ie when a successful thread exits its
\rl, the remaining execution time of the \re of every other thread
that is still in the \rl is uniformly distributed between $0$ and the
execution time of a whole \re. We have seen that the expectation of
this remaining time is the size of the execution time of a \re divided
by the number of threads inside the \rl plus one. Here, we assume that
a thread will start a \re at this time.
This implies another kind of memoryless property: the behavior of a
thread that is in the \rl does not depend on the moment that it
entered its \rl.

To tackle the problem of estimating the throughput of such a system,
we use an approach based on Markov chains. We study the behavior of
the system over time, step by step: a state of the Markov chain
represents the state of the system when the current \supw began (\ie
just after a successful \cas) and (thus) the system changes state at
the end of every successful \cas.
According to the current state, we are able to compute the probability
to reach any other state at the beginning of the next \supw.
In addition, the two memoryless properties render the description of a
state easy to achieve: the number of threads inside the \rl when the
current success begins, indeed fully characterizes the system.

We recall that \trl is the number of threads inside the \rl when the
\supw begins.
The Markov chain is strongly connected with \trl, since it composed of
\ct states $\sta{0}, \sta{1}, \dots, \sta{\ct-1}$, where, for all $i
\in \inte{\ct-1}$, the \supw is in state \sta{i} iff $\trl=i$. For all
$(i,j) \in \inte{\ct-1}^2$, $\pro{\sta{i} \rightarrow \sta{j}}$
denotes the probability that a success characterized by \sta{j}
follows a success in state \sta{i}.
$\wati{\sta{i} \rightarrow \sta{j}}$ denotes the
\watiw that passed while the system has gone from state \sta{i} to
state \sta{j}.
This \watiw can be expressed based on the \watiw \wati{i} of the
internal execution, \ie the execution that involves only the $i$
threads of the \rl and ignores the other threads (see
Section~\ref{sec:mark-proc}). \rr{Recall that we consider that the \watiw of
the internal execution with $0$ thread is infinite, since no thread will
access the \ds.}
In the same way, we denote by \rw{i} the \compw of the internal
execution, hence $\rw{i} = \cc + \cw + \reexp{i} + \cc$.

\newcommand{\intgen}[1]{\ema{\mathcal{I}_{\mathrm{#1}}}}
\newcommand{\intnoc}{\intgen{noc}}
\newcommand{\intmid}{\intgen{mid}}
\newcommand{\inthig}{\intgen{hi}}

\newcommand{\ihig}{\ema{i_{\mathrm{hi}}}}

We have seen that the level of contention (mode) is determined by
\trl, hence the interval \inte{\ct-1} can be partitioned into
\pr{$\inte{\ct-1} = \intnoc \cup \intmid \cup \inthig$,}
{\[ \inte{\ct-1} = \intnoc \cup \intmid \cup \inthig, \]}
where the partitions correspond to the different contention levels.
So, by definition, $\intnoc = \{ 0 \}$, and for all $i \in \intnoc
\cup \intmid$, $\reexp{i} = 0$ (see Section~\ref{sec:fra-app-con}).

The \supw is highly-contended, \ie we have a continuous sequence of
\cass in the \supw, if the sum of the execution time of all the
\cass that need to be operated exceeds the \cww. Hence
$\inthig = \inte[\ihig]{\ct-1}$, where
\pr{$\ihig = \min \{ i \in \inte[1]{\ct-1} \; | \; i \times \scas > \cw  \}$.}
{\[ \ihig = \min \{ i \in \inte[1]{\ct-1} \; | \; i \times \scas > \cw  \}. \]}
In addition, as the sequence of \cass is continuous when the
contention is high, the \watiw is null when the \supw is highly
contended, \ie, for all $i \in \inthig$, $\wati{i} = 0$, and \afort,
$\wati{\sta{i} \rightarrow \sta{\star}} = 0$.

Otherwise, the \supw is in medium contention, hence $\intmid =
\inte[1]{\ihig-1}$. Moreover, if $i \in \intmid$, $\wati{i} > 0$, and
$\reexp{i} = 0$, because the \cass synchronized during the previous
\supw and will not collide any more in the current \supw.





\pp{Everything is now in place to be able to obtain the stationary
  distribution of the Markov chain, and in turn to compute the
  throughput and the failure rate estimates. The reasoning that leads
  to the computation of the probability of going from state \sta{i} to
  state \sta{i+k} can be roughly summarized by
  Figure~\ref{fig:ex-eint-eext}, where we start from an internal
  execution with $i$ threads inside the \rl and the blue arrows
  represent the threads that exit the \psw. Two non-overlapping events
  can then potentially occur: either (event \eve{ext}) the first thread exiting the \psw
  starts within $[0,\wati{i}[$, \ie in the \watiw of
      the internal execution, or (event \eve{int}) the first thread entering the \rl
      starts after $t=\wati{i}$. The details can be
      found in Appendix~\ref{app:markov}.
}


\rr{\wholemarkov}

\subsection{Experiments}
\label{sec:xp}

To validate our analysis results, we use two main types of lock-free
algorithms.  In the first place, we consider a set of algorithms that follow
the pattern in~\ref{alg:gen-name}. This set of algorithms includes: (i) synthetic
designs, that cover the design space of possible lock-free data structures;
(ii) several fundamental designs of \ds operations such as lock-free
stacks~\cite{lf-stack} (\popop, \pushop),
queues~\cite{lf-queue-michael} (\deqop), counters~\cite{count-moir}
(\incop, \decop).
As a second step, we consider more advanced lock-free operations that
involve helping mechanisms, and show how to use our analysis in this
context.  Finally, in order to highlight the benefits of the analysis
framework, we show how it can be applied to i) determine a beneficial back-off
strategy and ii) optimize the memory management scheme used by a \ds,
in the context of an application.


We also give insights about the strengths of our two approaches.
\pr{The }{On the one hand, the }constructive approach exhibits better predictions
due to the tight estimation of the failing retries. On the other hand, the
\avba approach is applicable to a broader spectrum of algorithmic designs
as it leaves room to abstract complicated algoritmic designs, which do
not follow the pattern of~\ref{alg:gen-name}.

%
%

\vspp{-.2}
\subsubsection{Setting}

We have conducted experiments on an Intel ccNUMA workstation
system. The system is composed of two sockets equipped with
Intel Xeon E5-2687W v2 CPUs\pp{.}\rr{ with frequency band
  \ghz{1.2\text{-}3.4.} The physical cores have private L1, L2 caches
  and they share an L3 cache, which is \megb{25}.}  In a socket, the
ring interconnect provides L3 cache accesses and core-to-core
communication. \rr{Due to the bi-directionality of the ring
  interconnect, uncontended latencies for intra-socket communication
  between cores do not show significant variability.}\pp{Threads are
  pinned to a single socket to minimize non-uniformity in \rf and \cas
  latencies. }%
\pp{The methodology in~\cite{david-emp-atom} is used to measure the
  \cas and \rf latencies, while the \pww is implemented
  by a for-loop of {\it Pause} instructions. }%
\rr{Our model assumes uniformity in the \cas and \rf latencies on the
  shared cache line. Thus, threads are pinned to a single socket to
  minimize non-uniformity in \rf and \cas latencies.  In the
  experiments, we vary the number of threads between 4 and 8 since the
  maximum number of threads that can be used in the experiments are
  bounded by the number of physical cores that reside in one socket. }%
We show the experimental results with 8 threads.

In all figures, the y-axis shows both the throughput\rr{ values}, \ie
number of operations completed per second, and the ratio of failing to
successful retries (multiplied by $10^6$, for readability),
while the mean of the exponentially distributed \pww \pw is
represented on the x-axis. 
%
The number of failures per success in the \avba
approach is computed as $\atrl-1$ and \pr{in the constructive approach
by stochastically counting the failing \cass inside a \supw
(see Appendix~\ref{sec:nbf}).}{ is described in Section~\ref{sec:nbf}
for the constructive approach.}


\rr{\figsynthcst}

We have also added a straightforward upper bound as a baseline
approach,\rr{ which is} defined as the minimum of $1/(\rc+\cw+\cc)$ (two
successful \res cannot overlap) and $\ctot/(\pw+\rc+\cw+\cc)$ (a
thread can succeed only once in each \wl).

\vspp{-.2}
\subsubsection{Basic Data Structures}

\pr{Firstly, we consider lock-free algorithms that strictly follow the
  pattern in~\ref{alg:gen-name}. We provide predictions, on the one
  hand, on a set of synthetic tests that have been constructed to
  abstract different possible design patterns of lock-free data
  structures (value of \cw) and different application contexts (value
  of \pw), and, on the other hand, on the well-known Treiber's
  stack. The results are depicted in Appendix~\ref{app:xp-basic}.}
{
\synthtreib


\figsynthpoi
\figsynthconst

\subsubsection{Synthetic Tests}

\synth

\subsubsection{Treiber's Stack}

\figtreib

\treib

}

%


\subsection{Towards Advanced Data Structure Designs}

Advanced lock-free operations generally require multiple pointer
updates that cannot be done with a single \cas. 
One way to design such operations, in a lock-free manner, is to
use helping mechanisms: an inconsistency will be fixed eventually by
some thread.
Here we consider two \dss that apply immediate helping, the queue
from~\cite{lf-queue-michael} and the deque designed in~\cite{deq}. In the queue
experiment (Figure~\ref{fig:enqueue}), we run the \enqop operation on the
queue with and without memory management; in the \deq experiment, each
thread is dedicated to an end of the \deq (equally distributed), while
we vary the proportion of push operations (colors in Figure~\ref{fig:deq}).

\pp{\figsidebyside{h!}{.5}{enqueue_pp_disc}{Enqueue on MS queue}{enqueue}{.5}{deq_pp}{Operations on \deq}{deq}}

\rr{
\sumads

\subsubsection{Expected Expansion for the Advanced Data Structures}
\fulladsexp

\subsubsection{Expected Slack Time for the Advanced Data Structures}
\fulladswati

\subsubsection{Enqueue on Michael-Scott Queue}
\fullenq

\subsubsection{\Deq}
\label{sec:xp-deq}
\fulldeq}

\subsection{Applications}

\pp{\figsidebyside{b!}{.5}{back-offs-new}{Performance impact of our back-off tunings}{bos}{.5}{adaptive_pp_disc}{Adaptive MM with varying mean \pw}{mm_adaptive}}

\subsubsection{Back-off Optimizations}

When the \pww is known, we can deduce from our analysis a simple and
efficient back-off strategy: as we are able to estimate the value for
which the throughput is maximum, we just have to back-off for the time
difference between the peak \pw and the actual \pw.
\pr{In Appendix~\ref{app:bo}, we compare this back-off strategy
against widely known strategies, namely exponential and linear, on a
synthetic workload. }%
{\fullbosy}%
In Figure~\ref{fig:bos}, we apply our constant
back-off on a Delaunay triangulation application~\cite{caspar},
provided with several workloads. The application uses a stack in two
phases, whose first phase pushes elements on top of the stack without
delay. We are able to estimate a corresponding back-off time, and we
plot the results by normalizing the execution time of our back-offed
implementation with the execution time of the initial implementation.


\rr{\begin{figure}[t!]
\begin{center}
\includegraphics[width=.9\textwidth]{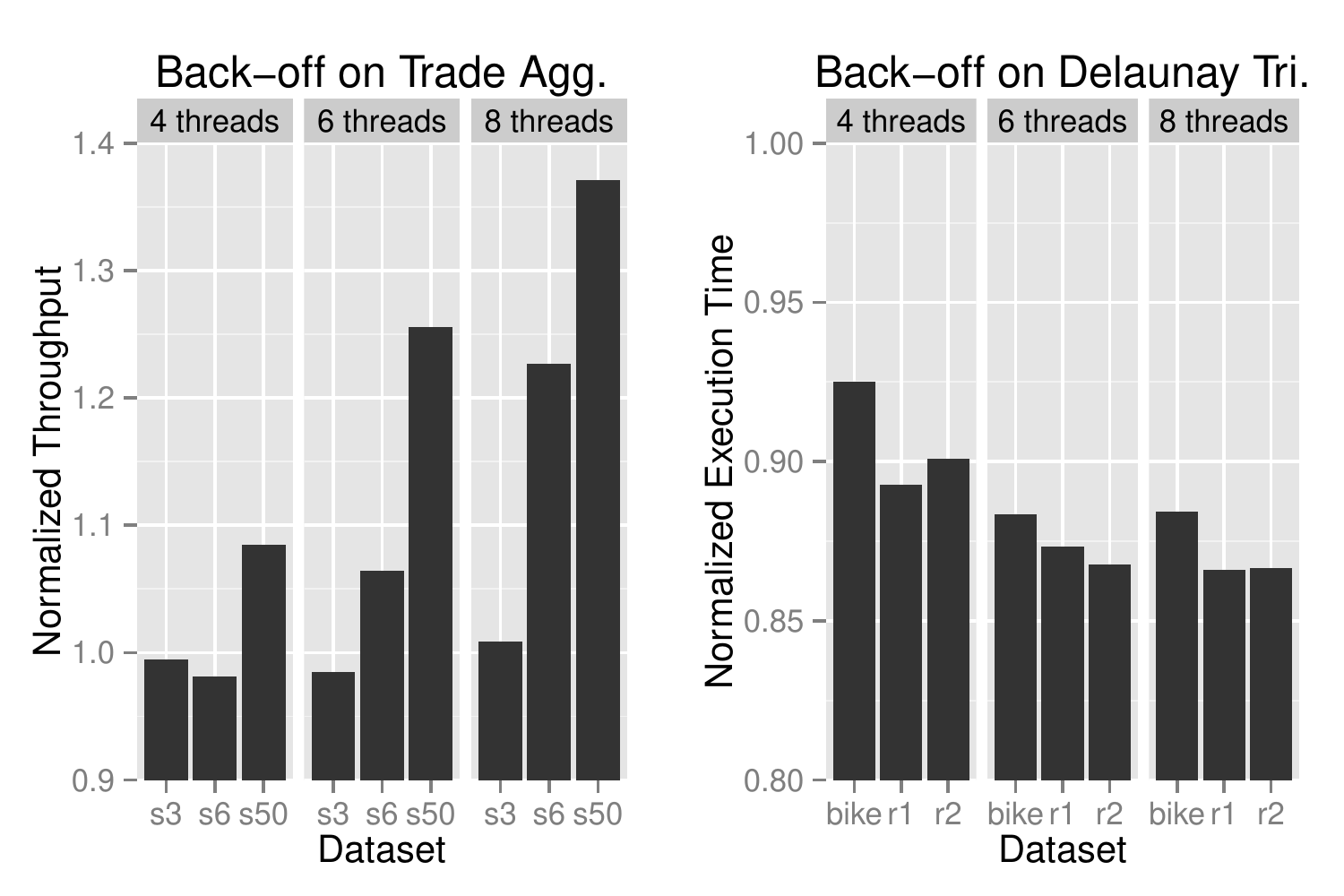}
\end{center}
\caption{Performance impact of our back-off tunings \label{fig:bos}}
\end{figure}}

A measure or an estimate of \pw is not always available (and could
change over time, see next section), therefore we propose also an
adaptive strategy: we incorporate in the \ds a monitoring routine that tracks
the number of failed \res, employing a sliding window. As our analysis
computes an estimate of the number of failed \res as a function of
\pw, we are able to estimate the current \pw, and hence the
corresponding back-off time like previously.

We test our adaptive back-off mechanism on a workload originated
from~\cite{taq-se}, where global operators of exchanges for financial
markets gather data of trades with a microsecond accuracy. We assume
that the data comes from several streams, each
of them being associated with a thread. All threads enqueue the
elements that they receive in a concurrent queue, so that they can be
later aggregated. We extract from the original data a trade stream
distribution that we use to generate similar streams that reach the
same thread; varying the number of streams to the same thread leads to
different workloads. The results, represented as the normalized
throughput (compared to the initial throughput) of trades that are enqueued when the
adaptive back-off is used, are plotted in Figure~\ref{fig:bos}.
For any number of threads, the queue is not
contended on workload s3, hence our improvement is either small or
slightly negative. On the contrary, the workload s50 contends the
queue and we achieve very significant improvement.


\rr{\begin{figure}[h!]
\begin{center}
\includegraphics[width=.7\textwidth]{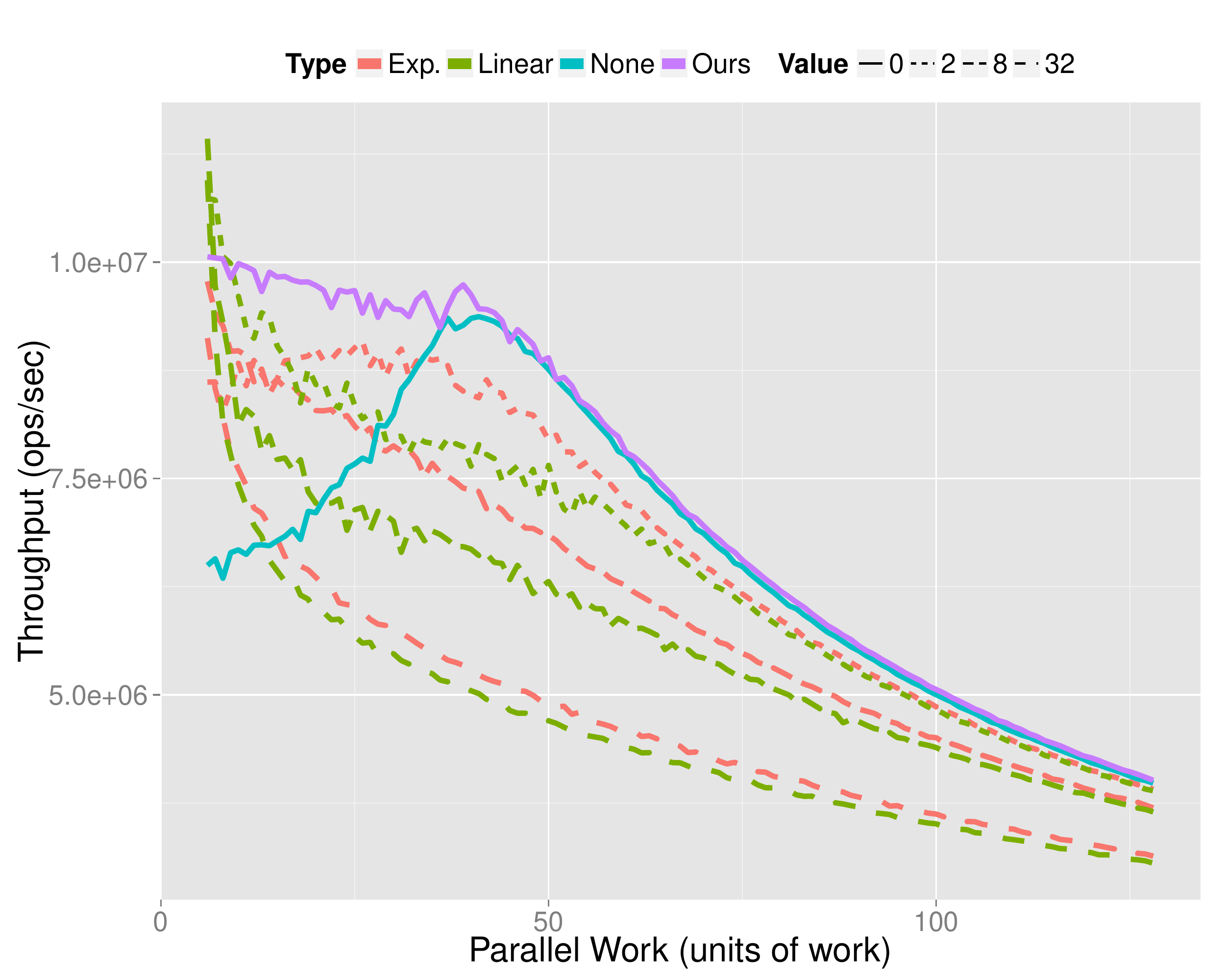}
\end{center}
\caption{Back-off Tuning on Treiber's Stack\label{fig:bo-synt}}
\end{figure}}


\subsubsection{Memory Management Optimization}

%
%
%
%

Memory Management (MM) is an inseparable part of dynamic concurrent
\dss. In contrary to lock-based implementations, a node that has been
{\it removed} from a lock-free \ds can still be accessed by other threads,
\eg if they have been delayed. Collective decisions are thus required
in order to {\it reclaim} a node in a safe manner.
A well-known solution to deal with this problem is the hazard pointers
technique~\cite{Mic04b}. \pp{In an implementation of such design each
  thread lists the nodes that it accesses and the nodes that it has
  removed. When the number of nodes it has removed reaches a threshold,
  it reclaims its listed removed nodes that are not listed as
  accessed by any thread.}



\newcommand\nodhp[1]{\ema{\mathcal{N}_{#1}}}
\newcommand\delhp[1]{\ema{\mathcal{D}_{#1}}}

\rr{A traditional design to implement this technique works as follows.
Each thread \thr{i}, maintains two lists of nodes: \nodhp{i} contains
the nodes that \thr{i} is currently accessing, and \delhp{i} stores
the nodes that have been removed from the \ds by \thr{i}. Once a
threshold on the size of \delhp{i} is reached, \thr{i} calls a routine
that: (i) collects the nodes that are accessed by any other thread, \ie
\nodhp{j} for $j \neq i$ (collection phase), and (ii) for each element
in \delhp{i}, checks whether someone is accessing the element, \ie
whether it belongs to $\cup_{j \neq i} \nodhp{j}$, and if not,
reclaims it (reclamation phase).}

The main goal of our adaptive MM scheme is to distribute this
extra-work in a way that the loss in performance is largely leveraged,
knowing that additional work can be an advantage under high-contention
(see previous section).
The optimization is based on two main
modifications.
\pr{First, we divide the reclamation phase of the traditional MM scheme
into quanta (equally-sized chunks), whose finer granularity allows for
accurate back-off times. Second, we track continuously the contention
level in the same way as our adaptive back-off. See Appendix~\ref{app:mm}.}%
{First, the granularity has to be finer, since the
additional quantum that the back-off mechanism uses, has to
be rather small (hundreds of cycles for a queue). Second, we need to
track the contention level on the \ds in order to be able to inject the work
at a proper execution point.}

\rr{\figcompmm}

\rr{\begin{figure}[t!]
\begin{center}
\includegraphics[width=.9\textwidth]{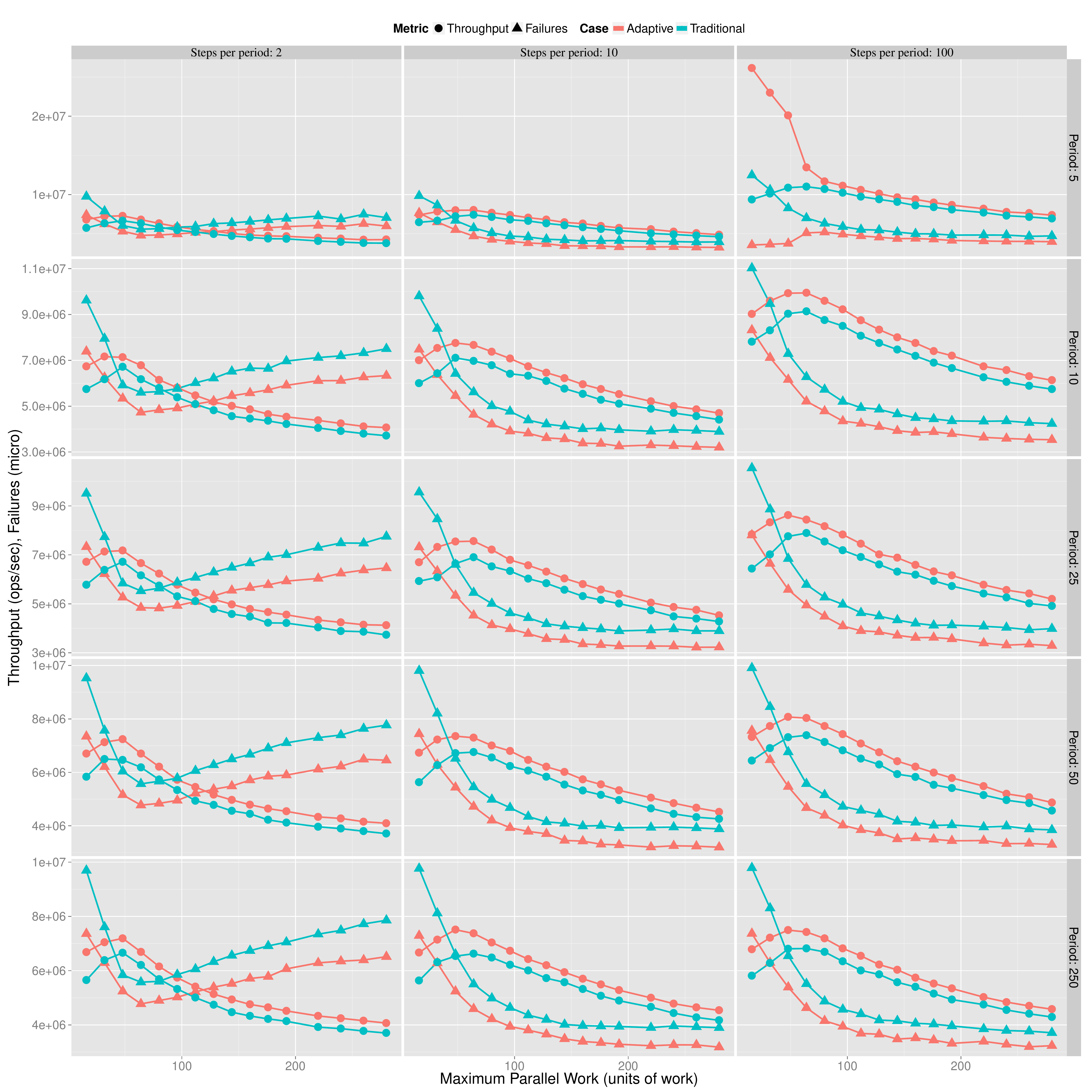}
\end{center}
\caption{Adaptive MM with varying mean \pw\label{fig:mm_adaptive}}
\end{figure}}

\rr{\bothmm}




\pr{%
%
We emulate the behavior of many scientific applications, that are
  built upon a pattern of alternating phases, that are
  communication-intensive (synchronization phase) or
  computation-intensive. Here we assume a synchronization ensured
  through a shared \ds, hence the communication-intensive phases
  correspond to a high access rate to the \ds, while \ds is accessed
  at a low rate during a computation-intensive phase. The \pww still
  follows an exponential distribution of mean \pw, but \pw varies in a
  sinusoidal manner with time.
To study also the impact of the continuity of the change in \pw, \pw
is set as a step approximation of a sine function. Thus, two
additional parameters rule the experiment: the period of the
oscillating function represents the length of the phases, and the
number of steps within a period depicts how continuous are the phase
changes.
}%
{\adaptsine}

In Figure~\ref{fig:mm_adaptive}, we compare our approach with the
traditional implementation for different periods of the sine function,
on the \deqop of the Michael-Scott queue~\cite{lf-queue-michael}.
The adaptive MM, that relies on the analysis presented
in this work, outperforms the traditional MM
because it provides an advantage both under low contention due to the
costless (since delayed) invocation of the MM and under high
contention due to the back-off effect.

\pp{\newpage\appendix

  \setlength\abovedisplayskip{10pt}
  \setlength\belowdisplayskip{10pt}
  \setlength\abovedisplayshortskip{10pt}
  \setlength\belowdisplayshortskip{10pt}

\subsubsection{Average-based Approach}
\label{app:avba}

\subsubsection{Helping Lemma}
\label{app:lemsl}
\lemsl

\subsubsection{Switch between non-contended and contended}
\label{app:switch}

\proofswitch

\subsubsection{Proof of Theorem~\ref{th:fixed-point}}
\label{app:fixed-point}

\prooffp

\subsection{Constructive Approach}
\label{app:markov}

\subsubsection{Proof of Theorem~\ref{th:mark-expa}}
\label{app:aliexp}
\proofaliexp

\wholemarkov
\figsynthcst

\subsection{Experiments}

\subsubsection{Basic Data Structures}
\label{app:xp-basic}
\synthtreib


\figsynthpoi
\figsynthconst

\paragraph{Synthetic Tests}

\synth

\paragraph{Treiber's Stack}

\figtreib

\treib

\subsubsection{Towards Advanced Data Structure Designs}
\label{app:ads}

\sumads

\paragraph{Expected Expansion for the Advanced Data Structures}
\fulladsexp

\paragraph{Expected Slack Time for the Advanced Data Structures}
\fulladswati

\paragraph{Enqueue on Michael-Scott Queue}
\fullenq

\paragraph{\Deq}
\label{sec:xp-deq}
\fulldeq

\subsubsection{Applications}

\paragraph{Back-off Optimizations}
\label{app:bo}

\begin{figure}[h!]
\begin{center}
\includegraphics[width=.6\textwidth]{stack_back_rr}
\end{center}
\caption{Back-off Tuning on Treiber's Stack\label{fig:bo-synt}}
\end{figure}

\fullbosy

\figcompmm

\paragraph{Memory Management Optimization}
\label{app:mm}

\begin{figure}[b!]
\begin{center}
\includegraphics[width=.7\textwidth]{adaptive_rr_disc}
\end{center}
\caption{Extensive experiments on adaptive MM\label{fig:ad-ext-mm}}
\end{figure}

~\\\bothmm

We show here extended experiments on Memory Management
optimization. Results are depicted in Figure~\ref{fig:ad-ext-mm}.


}


\subsection{Energy Modelling and Empirical Evaluation}
\label{sec:markov-energy}

We introduced our power model and the power impacting factors in 
D2.1~\cite{EXCESS:D2.1}. Then, we combined it with our initial performance 
model in D2.3~\cite{EXCESS:D2.3} to obtain the average power consumption in the 
static parallel programs that uses the fundamental lock-free data structures 
(\ie the size of the parallel work that is executed in between data structure 
operations is constant). 

Here, we take one step further and aim to obtain the energy efficiency of a 
wider range of lock-free data structure implementations that are used in the 
dynamic environments (\ie the parallel work that is executed in between data 
structure operations follows a probability distribution). The performance analysis,
that is presented above, can be used to predict the 
performance of such data structures in such environments. For the energy consumption
estimations, we apply the 
methodology that was provided in D2.3, where we combine the power model 
presented in D2.1 with the performance estimations to obtain the average power 
consumption estimations.

\begin{figure}[t!]
\begin{center}
\includegraphics[width=\textwidth]{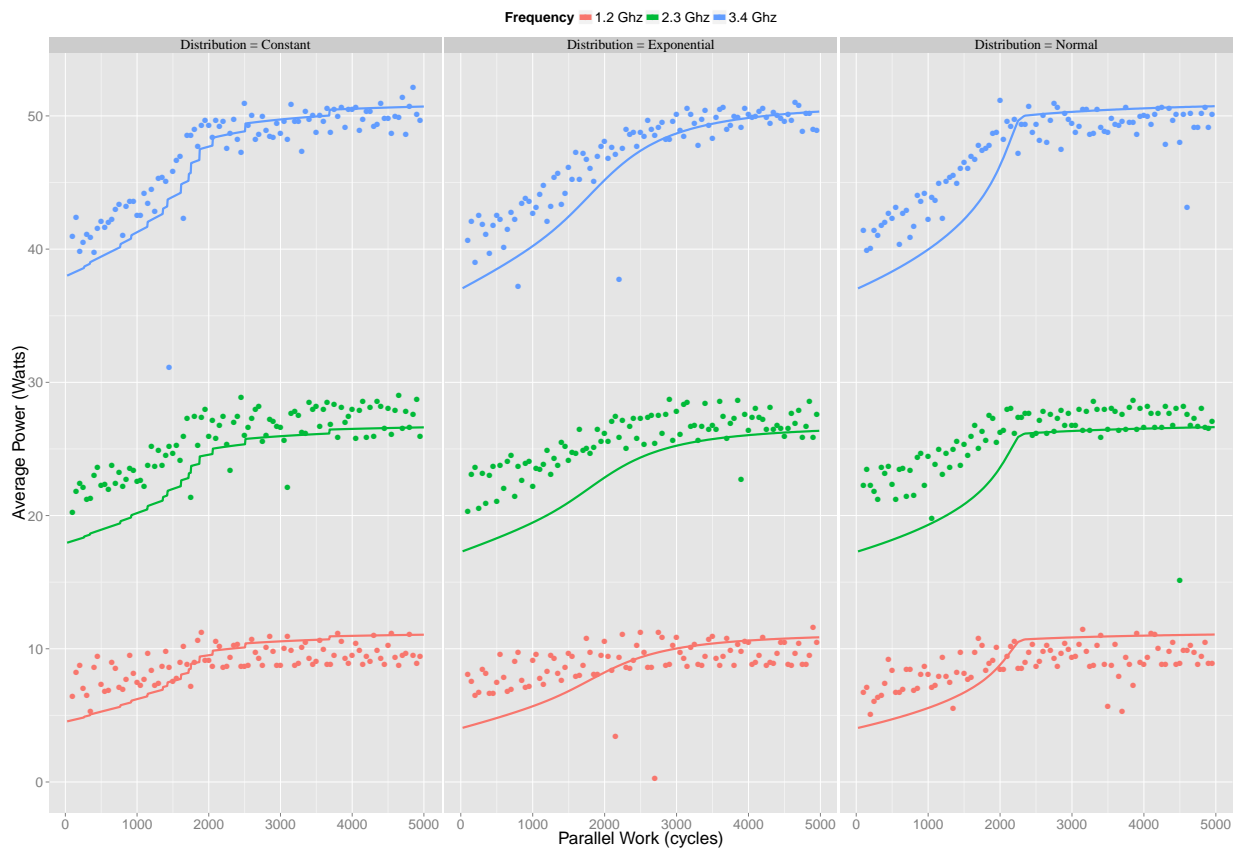}
\end{center}
\caption{Average Power Consumption for Treiber's Stack (Pop operation)\label{fig:en_stack}}
\end{figure}

\begin{figure}[h!]
\begin{center}
\includegraphics[width=\textwidth]{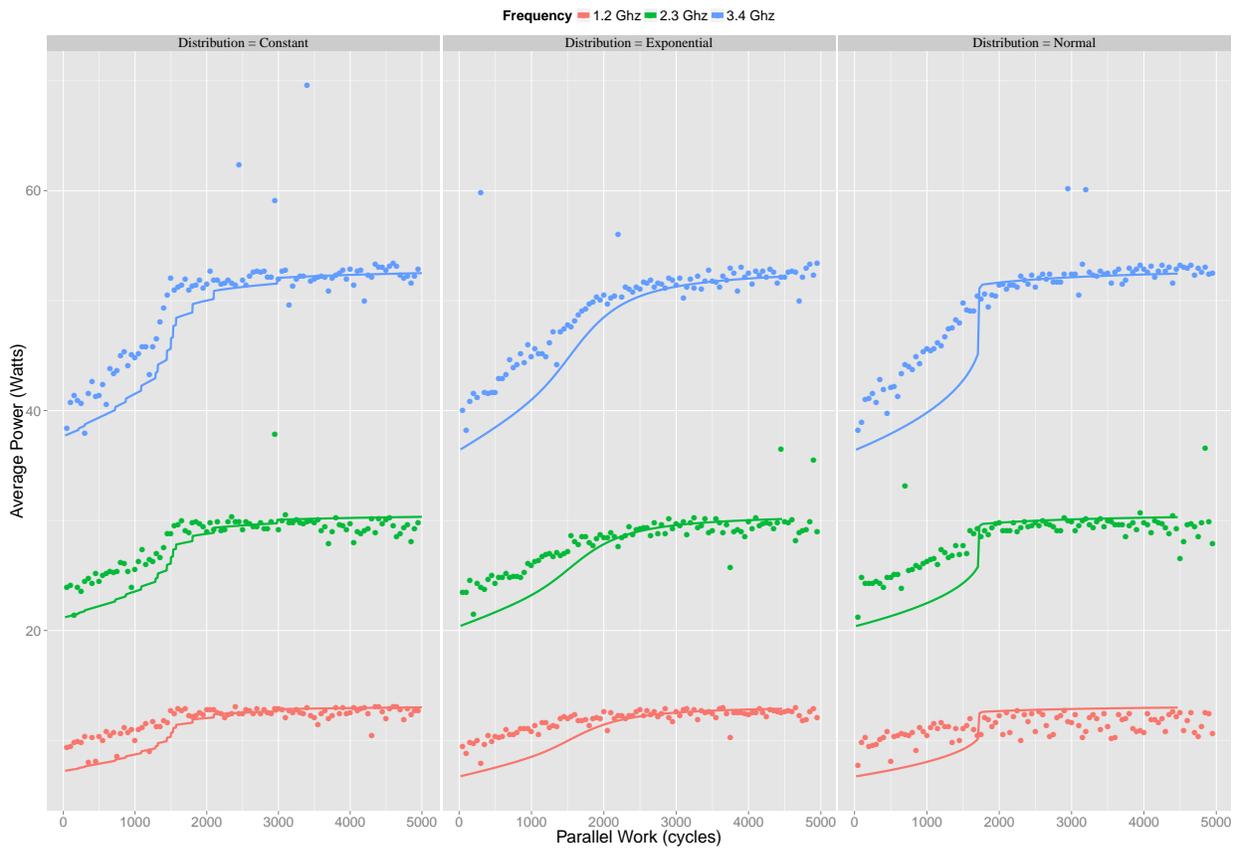}
\end{center}
\caption{Average Power Consumption for Shared Counter (Increment operation)\label{fig:en_sc}}
\end{figure}

In D2.1, we decompose the power into two orthogonal bases, each base
having three dimensions. On the one hand, we define the model base by
separating the power into static, active and dynamic power. On the
other hand, the measurement base corresponds to the components that
actually dissipate the power,\ie CPU, memory and uncore, in
accordance with RAPL energy counters. We recall that we are interested
only in the dynamic component of power, since we determine the static
power and the activation power, that do not depend on the \ds
implementation or the application that uses the concurrent \ds. Our
performance model does not cover the cases where the inter-socket
communication takes place. Here, we do not present the dynamic memory
and uncore power evaluations because they are insignificant (\ie close
to 0 for all cases) when there are not memory accesses (parallel work
is composed of multiplication instructions) or inter-socket
communication (threads are pinned to the same socket).

\begin{figure}[t!]
\begin{center}
\includegraphics[width=\textwidth]{Enq-energy.pdf}
\end{center}
\caption{Average Power Consumption for MS Queue (Enqueue operation)\label{fig:en_enqueue}}
\end{figure}

\begin{figure}[h!]
\begin{center}
\includegraphics[width=\textwidth]{Deq-energy.pdf}
\end{center}
\caption{Average Power Consumption for MS Queue (Dequeue operation)\label{fig:en_dequeue}}
\end{figure}

In D2.3, we explain in detail the methodology to obtain the energy consumption estimations
that span the whole parallel work and the number of threads domain. Here, we use
the same power model that relies on the variation of dynamic components of the power in between  
the execution of the data structure operations and parallel work. 
Different from D2.3, here we back this power model with a more extensive
performance analysis, presented above, in order to find the ratio of time that the 
parallel programs spend executing the data structure operations. Thanks to our performance 
analysis, 
we are able to estimate the energy consumption of lock-free data structures 
in dynamic environments where the size of parallel work, denoted by \pw, is either
constant or follows a probability distribution.    

We present the results for a set of fundamental lock-free \ds operations,
namely for Micheal and Scott Queue (Enqueue and Dequeue operations),
Treiber's Stack (Pop operation) and Shared Counter (Increment
operation). In the figures, x-axis provides the mean of \pw which follows a 
probability distribution. Lines and points represent predictions and
actual measurements, respectively. The performance estimations, for the different probability
distributions, are conducted by making use of different approaches. We used, 
respectively, the approach 
presented in D2.3, the constructive approach in Section~\ref{sec:cons}, the \avba approach in 
Section~\ref{sec:avba} for the cases where \pw is constant, follows exponential 
distribution and follows normal distribution.

In the figures, we observe a similar behaviour. Dynamic CPU power
decreases when \pw decreases. We know that \pw is a key aspect that
influences the contention on the \ds, equally with the
the ratio of time that threads spend in the data structure operation.
Also, we can observe, though slightly, the difference between the different 
probability distributions of \pw. For instance, the variation of the 
average power occurs more smoothly when \pw follows exponential distribution, 
similar to what is estimated by our model.

\clearpage


\subsection{Conclusion}
\label{sec:conc}

In this work we have presented two analyses for calculating the
performance of lock-free \dss in dynamic environments. The first
analysis has its roots in queuing theory, and gives the flexibility to
cover a large spectrum of configurations. The second analysis makes
use of Markov chains to exhibit a stochastic execution; it gives
better results, but it is restricted to simpler \dss and exponentially
distributed \pww.
\rr{We have evaluated the quality of the prediction on basic \dss like
stacks, as well as more advanced \dss like optimized queues and
\deqs. Our results can be directly used by algorithmicians to gain a
better understanding of the performance behavior of different designs,
and by experimentalists to rank implementations within a fair
framework. }%
%
We have\rr{ also} shown how to use our results to tune applications using
lock-free codes. These tuning methods include: (i) the calculation of
simple and efficient back-off strategies whose applicability is
illustrated in application contexts; (ii) a new adaptative memory
management mechanism that acclimates to a changing environment.

Moreover, we have integrated the performance estimations with the power 
model to estimate the energy efficiency of lock-free data structure by using
only a restricted amount of information about the application at hand.
%
%
%
%

The main differences between the \dss of this work and linked lists,
skip lists and trees occur when the size of the data structure
grows. With large sizes, the performance is dominated by the traversal
cost that is ruled by the cache parameters. The reduction in the size
of the data structure decreases the traversal cost which in turn
increases the probability of encountering an on-going \cas operation
that delays the threads which traverse the link.
The expansion, which can additionally be supported unfavorably by
helping mechanisms, appears then as the main performance degrading
factor.
%
%
While the analysis becomes easier for high degrees of parallelism
(large \ds size), being able to describe the behavior of lock-free
data structures as the degree of parallelism changes constitutes the
main challenge of our future work.


\label{sec:chalmers-lock-free}


\newpage
\section{A General and Validated Energy Complexity Model for Multi-threaded Algorithms}
\label{sec:uit-energy-model}
In this Deliverable D2.4, we report the ICE (Ideal Cache Energy) complexity model for analyzing the energy complexity of a wide range of multi-threaded algorithms \cite{TranH16_ICPADS}. Compared to the EPEM model reported in D2.3, this model proposed using Ideal Cache memory model to compute I/O complexity of the algorithms. Besides a case study of SpMV to demonstrate how to apply the ICE model to find energy complexity of parallel algorithms which is described in Deliverable D2.3, Deliverable D2.4 also reports a case study to apply the ICE model to Dense Matrix Multiplication (matmul) (cf. Section \ref{Matmul-energy}). The model is then validated with both data-intensive (i.e., SpMV) and computation-intensive (i.e., matmul) algorithms according to three aspects: different algorithms, different input types/sizes and different platforms (cf. Section \ref{validation}). In order to make the reading flow easy to follow, we include in this report a complete study of ICE model along with latest results.
\subsection{Introduction}
As described in Deliverable D2.3, understanding the energy complexity of algorithms is crucially important to improve the energy efficiency of algorithms and reduce the energy consumption of computing systems \cite{HaTUAGRTW16, Lagraviere2016}. One of the main approaches to understand the energy complexity of algorithms is to devise energy models.
 
Significant efforts have been devoted to developing power and energy models in literature \cite{Alonso2014, Choi2013, Choi2014, Korthikanti2009, Korthikanti2010, 7108419, Mishra:2015, Snowdon:2009}. However, there are no analytic models for multithreaded algorithms that are both applicable to a wide range of algorithms and comprehensively validated yet (cf. Table \ref{table:energy-model-summary}). The existing {\em parallel} energy models are either theoretical studies without validation or only applicable for specific algorithms. Modeling energy consumption of {\em parallel} algorithms is difficult since the energy models must take into account the complexity of both parallel algorithms and parallel platforms. The algorithm complexity results from parallel computation, concurrent memory accesses and inter-process communication. The platform complexity results from multicore architectures with deep memory hierarchy.

The existing models and their classification are summarized in Table \ref{table:energy-model-summary}. 
To the best of our knowledge, the proposed ICE (Ideal Cache Energy) complexity model is the first energy model that covers all three aspects: i) ability to analyze the energy complexity of parallel algorithms (i.e. Energy complexity analysis for parallel algorithms), ii) applicability to a wide range of algorithms (i.e., Algorithm generality), and iii) model validation (i.e., Validation). Section \ref{related-work} describes how the ICE model complements the other currently used models.
\begin{table*}
\caption{Energy Model Summary}
\label{table:energy-model-summary}
\begin{center}
\begin{tabular}{lllll}
\hline\noalign{\smallskip}
\textbf{Study} 	 &\textbf{Energy complexity} 	&\textbf{Algorithm}  	 &\textbf{Validation}  	 \\
	 &\textbf{analysis for} 	&\textbf{ generality}  		&  	\\
	 &\textbf{parallel algorithms} && 				\\
\noalign{\smallskip}\hline\noalign{\smallskip}
LEO \cite{Mishra:2015} 	&No 	&General	&Yes 	\\
POET \cite{7108419} 	&No 	&General	&Yes	\\
Koala \cite{Snowdon:2009} 	&No 	&General	&Yes	\\
Roofline \cite{Choi2013,Choi2014}	&No	&General	&Yes	\\  
Energy scalability \cite{Korthikanti2009, Korthikanti2010}	&Yes	&General	&No	\\ 
Sequential energy complexity \cite{Roy2013}   	&No	&General	&Yes\\
Alonso  et al. \cite{Alonso2014}	&Yes	&Algorithm-specific	&Yes	\\
Malossi  et al. \cite{Malossi2015}	&Yes	&Algorithm-specific	&Yes	\\
\textbf{ICE model (this study)}       	&\textbf{Yes}	&\textbf{General}	&\textbf{Yes}	\\
\noalign{\smallskip}\hline\noalign{\smallskip}
\end{tabular}
\end{center}
\end{table*}

The energy complexity model ICE proposed in this study is for general multithreaded algorithms and validated on three aspects: different algorithms for a given problem, different input types and different platforms. The proposed model is an analytic model which characterizes both algorithms (e.g., representing algorithms by their {\em work}, {\em span} and {\em I/O} complexity) and platforms (e.g., representing platforms by their static and dynamic energy of memory accesses and computational operations). By considering {\em work}, {\em span} and {\em I/O} complexity, the new ICE model is applicable to any multithreaded algorithms. 

The new ICE model is designed for analyzing the energy {\em complexity} of algorithms and therefore the model does not provide the estimation of absolute energy consumption. The goal of the ICE model is to answer energy complexity question: "Given two parallel algorithms A and B for a given problem, which algorithm consumes less energy analytically?". Hence, the details of underlying systems (e.g., runtime and architectures) are abstracted away to keep ICE model simple and suitable for complexity analysis. O-notation represents an {\em asymptotic upper-bound} on energy complexity.

In this work, the following contributions have been made.
\begin{itemize}
\item Devising a new general energy model ICE for analyzing the energy complexity of a wide range of multithreaded algorithms based on their {\em work}, {\em span} and {\em I/O} complexity (cf. Section \ref{energy-model}). The new ICE model abstracts away possible {\em multicore platforms} by their static and dynamic energy of computational operations and memory access. The new ICE model complements previous energy models such as energy roofline models \cite{Choi2013, Choi2014} that abstract away possible {\em algorithms} to analyze the energy consumption of different multicore platforms.
\item Conducting two case studies (i.e., SpMV and matmul) to demonstrate how to apply the ICE model to find energy complexity of parallel algorithms. The selected parallel algorithms for SpMV are three algorithms: Compressed Sparse Column(CSC), Compressed Sparse Block(CSB) and Compressed Sparse Row(CSR)(cf. Section \ref{SpMV-energy}). The selected parallel algorithms for matmul are two algorithms: a basic matmul algorithm and a cache-oblivious algorithm (cf. Section \ref{Matmul-energy}). 
\item Validating the ICE energy complexity model with both data-intensive (i.e., SpMV) and computation-intensive (i.e., matmul) algorithms according to three aspects: different algorithms, different input types and different platforms. The results show the precise prediction on which validated SpMV algorithm (i.e., CSB or CSC) consumes more energy when using different matrix input types from Florida matrix collection \cite{Davis:2011} (cf. Section \ref{SpMV-validation}). The results also show the precise prediction on which validated matmul algorithm (i.e., basic or cache-oblivious) consumes more energy (cf. Section \ref{Matmul-validation}). The model platform-related parameters for 11 platforms, including x86, ARM and GPU, are provided to facilitate the deployment of the ICE model.  
\end{itemize}

\subsection{Related Work - Overview of energy models}
\label{related-work}
\comm{
\begin{table*}
\caption{Energy Model Details}
\label{table:energy-model-details}
\centering
\begin{tabular}{llllllll}
\hline\noalign{\smallskip}
\textbf{Study} 	 &\textbf{Parallel-} 	&\textbf{Applicability}  	 &\textbf{Validation} 	&\textbf{Communication}  	 &\textbf{Pre-run} 	 	 &\textbf{Application} \\
 	& \textbf{Algorithm } 	&\textbf{}  	  	&\textbf{} &\textbf{model}  	 &\textbf{Overhead} 	&\textbf{properties} \\ 	  
	 &\textbf{Support} &&&&&&\\						
\noalign{\smallskip}\hline\noalign{\smallskip}						
LEO \cite{Mishra:2015} 	&parallel 	&Yes 	     &Yes 	&No	&Yes	       	&None\\
\noalign{\smallskip}\hline\noalign{\smallskip}	
POET \cite{7108419} 	&parallel 	&Yes	    & Yes  	&No	&No	 	&None\\
\noalign{\smallskip}\hline\noalign{\smallskip}	
Koala \cite{Snowdon:2009} 	&parallel 	&Yes	     &Yes  	&No	&Yes&None \\
\noalign{\smallskip}\hline\noalign{\smallskip}	
Roofline        	&sequential	&Yes	     &Yes  	&Von Neumann	&No  	 &Operational  \\    
 \cite{Choi2013,Choi2014}		       	&&&       	&shared cached		 &&intensity	       \\
\noalign{\smallskip}\hline\noalign{\smallskip}
Energy	&parallel	&Yes	&No	&Message passing	&No 	&No. of messages\\
scalability \cite{Korthikanti2009}			&&&&&&				No. of computations\\
\noalign{\smallskip}\hline\noalign{\smallskip}				
Energy	&parallel	&Yes	&No	&CREW PEM	&No 	&No. of mem-accesses\\
scalability \cite{Korthikanti2010}							&&&&&&No. of computations\\
\noalign{\smallskip}\hline\noalign{\smallskip}	
Sequential 	&sequential	&Yes	     &Yes	&Uni-processor	&No	&Work complexity      \\
energy   			&&&	& with parallel	&	 	&I/O complexity\\
complexity \cite{Roy2013}   			&&	&	&memory-bank&	 	&\\
\noalign{\smallskip}\hline\noalign{\smallskip}	
Alonso	&parallel	&No(Dense matrix	     &Yes  	&No	&Yes	& Application tasks     \\
 et al. \cite{Alonso2014}		&&factorization)	&&&&		\\		
\noalign{\smallskip}\hline\noalign{\smallskip}	
Malossi 	&parallel	&No(Algebraic 	     &Yes  	&Shared memory	&Yes    	&No. of arithmetic, barrier      \\
et al. \cite{Malossi2015}	  	&& kernels)   					 &&&&mem-accesses, reduction\\
\noalign{\smallskip}\hline\noalign{\smallskip}	
ICE        	&parallel	&Yes	&Yes	&ICE	&No   	&Work, Span, I/O\\
model	     		&&&				&&&Input types    \\  
\noalign{\smallskip}\hline\noalign{\smallskip}	
\end{tabular}

\end{table*}
}
 
\comm{
We present the summary of existing modeling studies in Table \ref{table:energy-model-summary}. The characteristics of each approaches are extracted as the list of categories, including: whether the models support parallel algorithms (i.e., Parallel Algorithm Support), whether the model is applicable to general algorithms (i.e., Applicability), whether the model is validated (i.e., Validation), the communication model (i.e., Communication model), whether there is pre-run overhead before estimating energy consumption of applications (i.e., Pre-run Overhead) and how the model represents applications (i.e., App-properties). This summary is not an exhaustive survey on the topic of energy models. However, we believe the Table \ref{table:energy-model-details} represents the most current studies on energy models.
}

We also included the related work of the most well-known energy models in this report to show why we need the new proposed ICE model. Energy models for finding energy-optimized system configurations for a given application have been recently reported [12, 16, 19]. Imes et al. \cite{7108419} used controller theory and linear programming to find energy-optimized configurations for an application with soft real-time constraints at runtime. 
Mishra et al. \cite{Mishra:2015} used hierarchical Bayesian model in machine learning to find  energy-optimized configurations. 
Snowdon et al. \cite{Snowdon:2009} developed a power management framework called Koala which models the energy consumption of the platform and monitors an application' energy behavior. 
Although the energy models for finding energy-optimized system configurations have resulted in energy saving in practice, they focus on characterizing system platforms rather than applications and therefore are not appropriate for analyzing the energy complexity of application algorithms. 

Another direction of energy modeling study is to predict the energy consumption of applications by analyzing applications without actual execution on real platforms which we classify as analytic models. 

Among energy and power models for different architectures \cite{Choi2013, Choi2014, Leng:2013, MCPAT,  TranH16_SAMOS, Tran2016}, energy roofline models \cite{Choi2013, Choi2014} are some of the comprehensive energy models that abstract away possible algorithms in order to analyze and characterize different multicore platforms in terms of energy consumption. 
Our new energy model, which abstracts away possible multicore platform and characterize the energy complexity of algorithms based on their {\em work, span} and {\em I/O} complexity, complements the energy roofline models.  

Validated energy models for {\em specific} algorithms have been reported recently \cite{Alonso2014, Malossi2015}. Alonso et al. \cite{Alonso2014} provided an accurate energy model for three key dense matrix factorizations. Malossi et al. \cite{Malossi2015} focused on basic linear-algebra kernels and characterized the kernels by the number of arithmetic operations, memory accesses, reduction and barrier steps. Although the energy models for specific algorithms are accurate for the target algorithms, they are not applicable for other algorithms and therefore cannot be used as general energy complexity models for parallel algorithms.

The {\em energy scalability} of a parallel algorithm has been investigated by Korthikanti et al. \cite{Korthikanti2009, Korthikanti2010}. 
Unlike the energy scalability studies that have not been validated on real platforms, our new energy complexity model is validated on HPC and accelerator platforms, confirming its usability and accuracy.

The energy complexity of {\em sequential} algorithms on a {\em uniprocessor} machine with {\em several memory banks} has been studied by Roy et al. \cite{Roy2013}. Our energy complexity studies complement Roy et al.'s studies by investigating the energy complexity of {\em parallel} algorithms on a {\em multiprocessor} machine with {\em a shared memory bank} and private caches, a machine model that has been widely adopted to study parallel algorithms \cite{Frigo:2006, Arge:2008, Korthikanti2010}.

\subsection{ICE Shared Memory Machine Model}
\label{EPEM-model}
Generally speaking, the energy consumption of a parallel algorithm is the sum of i) static energy (or leakage) $E_{static}$, ii) dynamic energy of computation $E_{comp}$ and iii) dynamic energy of memory accesses $E_{mem}$. The static energy $E_{static}$ is proportional to the execution time of the algorithm while the dynamic energy of computation and the dynamic energy of memory accesses are proportional to the number of computational operations and the number of memory accesses of the algorithm, respectively \cite{Korthikanti2010}. As a result, in the new ICE complexity model, the energy complexity of a multithreaded algorithm is analyzed based on its {\em span complexity} \cite{CormenLRS:2009} (for the static energy), {\em work complexity} \cite{CormenLRS:2009} (for the dynamic energy of computation) and {\em I/O complexity} (for the dynamic energy of memory accesses) (cf. Section \ref{energy-model}). This section describes shared-memory machine models supporting I/O complexity analysis for parallel algorithms.

The first memory model we consider is parallel external memory (PEM) model \cite{Arge:2008}, an extension of the Parallel Random Access Machine (PRAM) model that includes a two-level memory hierarchy. 
In the PEM model, there are $n$ cores (or processors) each of which has its own {\em private} cache of size $Z$ (in bytes) and shares the main memory with the other cores (cf. Figure \ref{fig:machine-model}). 
When $n$ cores access $n$ distinct blocks from the shared memory {\em simultaneously}, the I/O complexity in the PEM model is $O(1)$ instead of $O(n)$.
\begin{figure}[!t] \centering
\resizebox{0.5\columnwidth}{!}{ \includegraphics{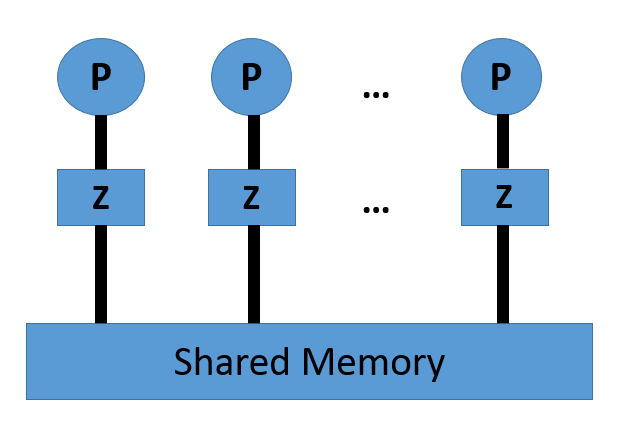}}
\caption{A Shared Memory Machine Model with Private Caches}
\label{fig:machine-model}
\end{figure} 
Although the PEM model is appropriate for analyzing the I/O complexity of parallel algorithms in terms of time performance \cite{Arge:2008}, we have found that the PEM model is not appropriate for analyzing parallel algorithms in terms of the dynamic energy of memory accesses. In fact, even when the $n$ cores can access data from the main memory simultaneously, the {\em dynamic} energy consumption of the access is proportional to the number $n$ of accessing cores (because of the load-store unit activated within each accessing core and the energy compositionality of parallel computations \cite{HaTUTGRWA:2014, chris-eehco14}), rather than a constant as implied by the PEM model.

As a result, we consider the ideal distributed cache (IDC) model \cite{Frigo:2006} to analyze I/O complexity of multithreaded algorithms in terms of dynamic energy consumption. 
Since the cache complexity of $m$ misses is $O(m)$ regardless of whether or not the cache misses are incurred simultaneously by the cores, the IDC model reflects the aforementioned dynamic energy consumption of memory accesses by the cores. 

However, the IDC model is mainly designed for analyzing the cache complexity of divide-and-conquer algorithms, making it difficult to apply to general multi-threaded algorithms targeted by our new ICE model. Constraining the new ICE model to the IDC model would limit the applicability of the ICE model to a wide range of multithreaded algorithms.

In order to make our new ICE complexity model applicable to a wide range of multithreaded algorithms, we show that the cache complexity analysis using the traditional (sequential) ideal cache (IC) model \cite{FrigoLPR:1999} can be used to find an upper bound  on the cache complexity of the same algorithm using the IDC model (cf. Lemma \ref{ePEM}). As the sequential execution of multithreaded algorithms is a valid execution regardless of whether they are divide-or-conquer algorithms, the ability to analyze the cache complexity of multithreaded algorithms via their sequential execution in the ICE complexity model improves the usability of the ICE model.   
 
Let $Q_1(Alg,B,Z)$ and $Q_P(Alg, B, Z)$ be the cache complexity of a parallel algorithm $Alg$ analyzed in the (uniprocessor) ideal cache (IC) model  \cite{FrigoLPR:1999} with block size $B$ and cache size $Z$ (i.e, running $Alg$ with a single core) and the cache complexity analyzed in the (multicore) IDC model with $P$ cores each of which has a private cache of size $Z$ and block size $B$, respectively. We have the following lemma:
\begin{lemma}
\label{ePEM}
The cache complexity $Q_P(Alg, B, Z)$ of a parallel algorithm $Alg$ analyzed in the ideal distributed cache (IDC) model with $P$ cores is bounded from above by the product of $P$ and the cache complexity $Q_1(Alg, B, Z)$ of the same algorithm analyzed in the ideal cache (IC) model. Namely,
\begin{equation}
Q_P(Alg, B, Z) \leq P*Q_1(Alg,B,Z)
\end{equation}
\end{lemma}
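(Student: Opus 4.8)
The plan is to exhibit an explicit sequential execution of the parallel algorithm $Alg$ on a uniprocessor ideal cache and relate its cache cost to $P$ times the per-core cost incurred in the IDC model. The starting observation is that any parallel (multithreaded) algorithm admits a valid sequential schedule: simply order the threads $1, 2, \dots, P$ and run thread $i$ to completion before starting thread $i+1$ (this is a legal execution since we only need it to be cache-cost-comparable, not performance-comparable — and for a correct algorithm such a serialization is well defined once we fix any topological order of the dependency DAG). Call this the \emph{serialized execution}. The first step is to argue that in the serialized execution on an ideal cache of size $Z$ and block size $B$, the number of cache misses incurred while running thread $i$ is at most the number of cache misses thread $i$ incurs in the IDC model, where it owns a private cache of size $Z$ and block size $B$. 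The key point here is that when the serialized execution begins thread $i$, the cache may hold blocks left over from threads $1, \dots, i-1$; having a warm (rather than empty) cache can only help an optimal (ideal) replacement policy, so the miss count for thread $i$'s portion of the serialized run is bounded by the miss count of thread $i$ running alone from a cold cache — which is exactly what the IDC model charges to core $i$ (each private cache in the IDC model independently services that core's accesses).

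Second, I would sum over threads: the total misses of the serialized execution is $\sum_{i=1}^{P} (\text{misses of thread } i \text{ alone}) \le \sum_{i=1}^{P} Q_1(Alg_i, B, Z)$, where $Alg_i$ denotes the work of thread $i$. Since each $Alg_i$ is a sub-computation of $Alg$, we have $Q_1(Alg_i, B, Z) \le Q_1(Alg, B, Z)$, giving $\sum_{i=1}^{P} Q_1(Alg_i, B, Z) \le P \cdot Q_1(Alg, B, Z)$. Third, I would invoke the optimality of the ideal cache: the serialized execution I constructed uses \emph{some} replacement policy, and the IC model's $Q_1(Alg, B, Z)$ is defined with the \emph{optimal} offline policy on the \emph{whole} access sequence of $Alg$; but the serialized access sequence is just one particular interleaving of $Alg$'s accesses, so the optimal policy on it does no worse than any policy, and in particular the bound $\sum_i Q_1(Alg_i,B,Z)$ transfers. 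Finally, the left-hand side: $Q_P(Alg, B, Z)$ in the IDC model is by definition the total miss count summed over the $P$ private caches, and by the per-core comparison above each private cache's count is bounded by the corresponding term, so $Q_P(Alg, B, Z) \le P \cdot Q_1(Alg, B, Z)$.

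The main obstacle I anticipate is the per-core comparison in the first step — making precise the claim that core $i$'s miss count in the IDC model dominates thread $i$'s contribution to the serialized execution. In the IDC model each core runs concurrently, so core $i$'s private cache sees \emph{only} core $i$'s accesses in their original (concurrent) temporal order; in the serialized execution thread $i$'s accesses appear contiguously and the cache content at the start of that segment is whatever the previous threads left. One must argue that (a) the intra-thread access order is the same in both cases (true — serializing does not reorder a single thread's instructions), and (b) a non-empty starting cache never increases the ideal-policy miss count compared to an empty one (a standard monotonicity fact about optimal offline paging, which I would cite or prove in a sentence via an exchange argument). A subtlety worth flagging is whether the private caches in the IDC model each have size $Z$ matching the single IC cache of size $Z$ — the lemma as stated assumes exactly this, so no scaling of $Z$ is needed and the argument goes through cleanly. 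I would not belabor the divide-and-conquer structure at all: the whole point of the lemma is precisely to escape that restriction, and the serialization argument is oblivious to it.
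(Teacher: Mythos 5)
Your proof is correct and, once the serialization scaffolding is stripped away, reduces to exactly the paper's argument: $Q_P(Alg,B,Z)=\sum_{i=1}^P Q_P^{i}(Alg,B,Z)$, where each core's private-cache miss count equals the ideal-cache cost of that core's own access subsequence and is therefore at most $Q_1(Alg,B,Z)$. The one genuine improvement is that you name the monotonicity of optimal offline paging under subsequences explicitly (the paper's sketch hides it inside ``caches do not interfere with each other''); the serialized execution of your first three steps is never actually used in the final chain and could be dropped.
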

\begin{proof} 
(Sketch)
Let $Q_P^{i}(Alg, B, Z)$ be the number of cache misses incurred by core $i$ during the parallel execution of algorithm $Alg$ in the IDC model. Because caches do not interfere with each other in the IDC model, the number of cache misses incurred by core $i$ when executing algorithm $Alg$ in parallel by $P$ cores is not greater than the number of cache misses incurred by core $i$ when executing the whole algorithm $Alg$ only by core $i$. That is,
\begin{equation}
\label{eq:idc_ic_1}
Q_P^{i}{(Alg, B, Z)} \leq Q_1(Alg,B,Z)
\end{equation}
or
\begin{equation}
\label{eq:idc_ic_2}
\sum_{i=1}^P Q_P^{i}{(Alg, B, Z)} \leq P*Q_1(Alg,B,Z)
\end{equation}

On the other hand, since the number of cache misses incurred by algorithm $Alg$ when it is executed by $P$ cores in the IDC model is the sum of the numbers of cache misses incurred by each core during the $Alg$ execution, we have 
\begin{equation}
\label{eq:idc_ic_3}
Q_P(Alg, B, Z) = \sum_{i=1}^P Q_P^{i}{(Alg, B, Z)}
\end{equation}
From Equations \ref{eq:idc_ic_2} and \ref{eq:idc_ic_3}, we have
\begin{equation}
\label{eq:idc_ic_4}
Q_P(Alg, B, Z) \leq P*Q_1(Alg,B,Z)
\end{equation}
\end{proof}

We also make the following assumptions regarding platforms.
\begin{itemize}
\item Algorithms are executed with the best configuration (e.g., maximum number of cores, maximum frequency) following the race-to-halt strategy.
\item The I/O parallelism is bounded from above by the computation parallelism. Namely, each core can issue a memory request only if its previous memory requests have been served. Therefore, the work and span (i.e., critical path) of an algorithm represent the parallelism for both I/O and computation \cite{CormenLRS:2009}. 
\end{itemize}  

\subsection{Energy Complexity in ICE model}
\label{energy-model}
This section describes two energy complexity models, a platform-supporting energy complexity model considering both platform and algorithm characteristics and platform-independent energy complexity model considering only algorithm characteristics. The platform-supporting model is used when platform parameters in the model are available while platform-independent model analyses energy complexity of algorithms without considering platform characteristics.
\subsubsection{Platform-supporting Energy Complexity Model} 
This section describes a methodology to find energy complexity of algorithms. The energy complexity model considers three groups of parameters: machine-dependent, algorithm-dependent and input-dependent parameters. The reason to consider all three parameter-categories is that only operational intensity \cite{Williams2009} is insufficient to capture the characteristics of algorithms. Two algorithms with the same values of operational intensity might consume different levels of energy. The reasons are their differences in data accessing patterns leading to performance scalability gap among them. For example, although the sequential version and parallel version of an algorithm may have the same operational intensity, they may have different energy consumption since the parallel version would have less static energy consumption because of shorter execution time.

The energy consumption of a parallel algorithm is the sum of i) static energy (or leakage) $E_{static}$, ii) dynamic energy of computation $E_{comp}$ and iii) dynamic energy of memory accesses $E_{mem}$: $E=E_{static}+E_{comp}+E_{mem}$ \cite{Choi2013, Korthikanti2009, Korthikanti2010}. The static energy $E_{static}$ is the product of the execution time of the algorithm and the static power of the whole platform. The dynamic energy of computation and the dynamic energy of memory accesses are proportional to the number of computational operations $Work$ and the number of memory accesses $I/O$, respectively. Pipelining technique in modern architectures enables overlapping computation with memory accesses \cite{HaTUTGRWA:2014}. Since computation time and memory-access time can be overlapped, the execution time of the algorithm is assumed to be the maximum of computation time and memory-access time \cite{Choi2013}. Therefore, the energy consumption of algorithms is computed by Equation \ref{eq:BigET}, 
where the values of ICE parameters, including $\epsilon_{op}$, $\epsilon_{I/O}$, $\pi_{op}$, and $\pi_{I/O}$ are described in Table \ref{table:ModelParameters} and computed by the Equation \ref{eq:epsilon_op}, \ref{eq:epsilon_IO}, \ref{eq:pi_op}, and \ref{eq:pi_IO}, respectively.  
\begin{equation} \label{eq:BigET}
	E= \epsilon_{op} \times Work + \epsilon_{I/O} \times I/O + P^{sta} \times max(T^{comp} ,T^{mem})
\end{equation}
\comm{
\begin{equation} \label{eq:BigE0}
	E= \epsilon_{op} \times Work + \epsilon_{I/O} \times I/O + max(\pi_{op} \times Span ,\pi_{I/O} \times \frac{I/O \times Span}{Work})  
\end{equation}
}

\begin{table}
\caption{ICE Model Parameter Description}
\label{table:ModelParameters}
\begin{center}
\begin{tabular}{ll}
\hline\noalign{\smallskip}
Machine & Description \\
\noalign{\smallskip}\hline\noalign{\smallskip}
$\epsilon_{op}$                  & dynamic energy of one operation (average) \\ 
$\epsilon_{I/O}$                  & dynamic energy of a random memory access (1 core)\\ 
$\pi_{op}$                  & static energy when performing one operation  \\ 
$\pi_{I/O}$                  & static energy of a random memory access \\ 
\noalign{\smallskip}\hline\noalign{\smallskip}
\hline\noalign{\smallskip}
Algorithm & Description \\
\noalign{\smallskip}\hline\noalign{\smallskip}
$Work$                & Number of work in flops of the algorithm \cite{CormenLRS:2009}        \\ 
$Span$                 & The critical path of the algorithm \cite{CormenLRS:2009}       \\ 
$I/O$                  & Number of cache line transfer of the algorithm \cite{CormenLRS:2009}       \\ 
\noalign{\smallskip}\hline\noalign{\smallskip}
\end{tabular}
\end{center}
\end{table} 
 
\begin{equation} \label{eq:epsilon_op}
	\epsilon_{op}= P^{op} \times \frac{F}{Freq}
\end{equation}
\begin{equation} \label{eq:epsilon_IO}	
	\epsilon_{I/O}= P^{I/O} \times \frac{M}{Freq}
\end{equation}
\begin{equation} \label{eq:pi_op}
	\pi_{op}= P^{sta} \times \frac{F}{Freq}
\end{equation}
\begin{equation} \label{eq:pi_IO}
	\pi_{I/O}= P^{sta} \times \frac{M}{Freq}
\end{equation}

The dynamic energy of one operation by one core $\epsilon_{op}$ is the product of the consumed power of one operation by one active core $P^{op}$ and the time to perform one operation. Equation \ref{eq:epsilon_op} shows how $\epsilon_{op}$ relates to frequency $Freq$ and the number of cycles per operation $F$. Similarly, the dynamic energy of a random access by one core $\epsilon_{I/O}$ is the product of the consumed power by one active core performing one I/O (i.e., cache-line transfer) $P^{I/O}$ and the time to perform one cache line transfer computed as $M/Freq$, where $M$ is the number of cycles per cache line transfer (cf. Equation \ref{eq:epsilon_IO}). The static energy of operations $\pi_{op}$ is the product of the whole platform static power $P^{sta}$ and time per operation. The static energy of one I/O $\pi_{I/O}$ is the product of the whole platform static power and time per I/O, shown by Equation \ref{eq:pi_op} and \ref{eq:pi_IO}.

In order to compute {\em work}, {\em span} and {\em I/O} complexity of the algorithms, the input parameters also need to be considered. For example, SpMV algorithms consider input parameters listed in Table \ref{table:AlgorithmInputParameters}. Cache size is captured in the ICE model by the {\em I/O complexity} of the algorithm. Note that in the ICE machine model (Section \ref{EPEM-model}), cache size $Z$ is a constant and may disappear in the {\em I/O complexity} (e.g., O-notation).   

The details of how to obtain the ICE parameters of recent platforms are discussed in Section \ref{experiment-set-up}. The actual values of ICE platform parameters for 11 recent platforms are presented in Table \ref{table:platform-parameter-values}. 
\comm{
\begin{table}
\caption{Platform Parameter Description}
\label{table:PlatformParameters}
\begin{center}
\begin{tabular}{ll}
\hline\noalign{\smallskip}
Machine & Description \\
\noalign{\smallskip}\hline\noalign{\smallskip}
$P^{sta}$                & Static power of a whole platform         \\ 
$P^{op}$                  & Dynamic power of an operation        \\ 
$P^{I/O}$                  & Power to transfer one cache line \\
$N$                & Maximum number of cores in the platform      \\ 
$M$              & Number of cycles per cache line transfer \\ 
$F$              & Number of cycles per operation      \\
$Freq$                & Platform frequency       \\
$Z$              & Cache size of a single processor        
 \\
$B$              & Cache block size         \\ 
\noalign{\smallskip}\hline\noalign{\smallskip}
\end{tabular}
\end{center}
\end{table}
}
\begin{table*}
\caption{Platform parameter summary. The parameters of the first nine platforms are derived from \cite{Choi2014} and the parameters of the two new platforms are found in this study.}
\label{table:platform-parameter-values}
\begin{center}
\begin{tabular}{llllll}
\hline\noalign{\smallskip}
Platform & Processor &  $\epsilon_{op}$(nJ) &$\pi_{op}$(nJ) &$\epsilon_{I/O}$(nJ) & $\pi_{I/O}$(nJ) \\
\noalign{\smallskip}\hline\noalign{\smallskip}
Nehalem i7-950	&Intel i7-950	&0.670 	&2.455	&50.88	&408.80\\
Ivy Bridge i3-3217U	&Intel  i3-3217U	&0.024 	&0.591	&26.75	&58.99\\
Bobcat CPU 	&AMD  E2-1800	&0.199 	&3.980	&27.84	&387.47\\
Fermi GTX 580	&NVIDIA GF100	&0.213 	&0.622	&32.83	&45.66\\
Kepler GTX 680	&NVIDIA GK104	&0.263 	&0.452	&27.97	&26.90\\
Kepler GTX Titan	&NVIDIA GK110	&0.094 	&0.077	&17.09	&32.94\\
XeonPhi KNC	&Intel 5110P	&0.012 	&0.178	&8.70	&63.65\\
Cortex-A9	&TI OMAP 4460	&0.302 &1.152	&51.84	&174.00\\
Arndale Cortex-A15	&Samsung Exynos 5	&0.275 	&1.385	&24.70	&89.34\\
\noalign{\smallskip}\hline\noalign{\smallskip}
Xeon 	&2xIntel E5-2650l v3	&0.263	&0.108	&8.86	&23.29\\
Xeon-Phi	&Intel 31S1P	&0.006	&0.078	&25.02	&64.40\\
\noalign{\smallskip}\hline\noalign{\smallskip}
\end{tabular}
\end{center}
\end{table*} 

The computation time of parallel algorithms is proportional to the span complexity of the algorithm, which is $T^{comp}=\frac{Span \times F}{Freq}$ where $Freq$ is the processor frequency, and $F$ is the number of cycles per operation. The memory-access time of parallel algorithms in the ICE model is proportional to the I/O complexity of the algorithm divided by its I/O parallelism, which is $T^{mem} = \frac{I/O}{I/O-parallelism} \times \frac{M}{Freq}$. As I/O parallelism, which is the average number of I/O ports that the algorithm can utilize per step along the span, is bounded by the computation parallelism $\frac{Work}{Span}$, namely the average number of cores that the algorithm can utilize per step along the span (cf. Section \ref{EPEM-model}), the memory-access time $T^{mem}$ becomes: $T^{mem}=\frac{I/O \times Span \times M}{Work \times Freq}$ where $M$ is the number of cycles per cache line transfer. If an algorithm has $T^{comp}$ greater than $T^{mem}$, the algorithm is a CPU-bound algorithm. Otherwise, it is a memory-bound algorithm. 
\paragraph{CPU-bound Algorithms}
If an algorithm has computation time $T^{comp}$ longer than data-accessing time $T^{mem}$ (i.e., CPU-bound algorithms), the ICE energy complexity model becomes Equation \ref{eq:BigE-cpu-time} which is simplified as Equation \ref{eq:BigE1}.
\begin{equation} \label{eq:BigE-cpu-time}
	E= \epsilon_{op} \times Work + \epsilon_{I/O} \times I/O +  P^{sta} \times \frac{Span \times F}{Freq}\\
\end{equation}
or
\begin{equation} \label{eq:BigE1}
	E= \epsilon_{op} \times Work + \epsilon_{I/O} \times I/O +  \pi_{op} \times Span\\
\end{equation}

\paragraph{Memory-bound Algorithms}
If an algorithm has data-accessing time longer than computation time (i.e., memory-bound algorithms): $T^{mem} \geq T^{comp}$, energy complexity becomes Equation \ref{eq:BigE-mem-time} which is simplified as Equation \ref{eq:BigE2}. 
\begin{equation} \label{eq:BigE-mem-time}
	E= \epsilon_{op} \times Work + \epsilon_{I/O} \times I/O +  P^{sta} \times \frac{I/O \times Span \times M}{Work \times Freq}\\
\end{equation}
or
\begin{equation} \label{eq:BigE2}
	E= \epsilon_{op} \times Work + \epsilon_{I/O} \times I/O +  \pi_{I/O} \times \frac{I/O \times Span}{Work}\\
\end{equation}
\subsubsection{Platform-independent Energy Complexity Model}
This section describes the energy complexity model that is platform-independent and considers only algorithm characteristics. When the platform parameters (i.e., $\epsilon_{op}$, $\epsilon_{I/O}$, $\pi_{op}$, and $\pi_{I/O}$) are unavailable, the energy complexity model is derived from Equation \ref{eq:BigET}, where the platform parameters are constants and can be removed. Assuming $\pi_{max} = max(\pi_{op}, \pi_{I/O})$, after removing platform parameters, the platform-independent energy complexity model are shown in Equation \ref{eq:BigE-non-flatform}.
\begin{equation} \label{eq:BigE-non-flatform}
	E= O(Work+I/O+max(Span, \frac{I/O \times Span}{Work}))  
\end{equation} 

\subsection{A Case Study of Sparse Matrix Multiplication}
\label{SpMV-energy}
SpMV is one of the most common application kernels in Berkeley dwarf list \cite{Asa06}. It computes a vector result $y$ by multiplying a sparse matrix $A$ with a dense vector $x$: $y=Ax$. SpMV is a data-intensive kernel and has irregular memory-access patterns. The data access patterns for SpMV is defined by its sparse matrix format and matrix input types. 
There are several sparse matrix formats and SpMV algorithms in literature. To name a few, they are Coordinate Format (COO), Compressed Sparse Column (CSC), Compressed Sparse Row (CSR), Compressed Sparse Block (CSB), Recursive Sparse Block (RSB), Block Compressed Sparse Row (BCSR) and so on.
Three popular SpMV algorithms, namely CSC, CSB and CSR are chosen to validate the proposed energy complexity model. They have different data-accessing patterns leading to different values of I/O, work and span complexity. Since SpMV is a memory-bound application kernel, Equation \ref{eq:BigE2} is applied. The input matrices of SpMV have different parameters listed in Table \ref{table:AlgorithmInputParameters}.
\begin{table}
\caption{SpMV Input Parameter Description}
\label{table:AlgorithmInputParameters}
\begin{center}
\begin{tabular}{ll}
\hline\noalign{\smallskip}
SpMV Input & Description \\
\noalign{\smallskip}\hline\noalign{\smallskip}
$n$                & Number of rows        \\ 
$nz$                 & Number of nonzero elements        \\ 
$nr$                  & Maximum number of nonzero in a row        \\ 
$nc$                  & Maximum number of nonzero in a column        \\ 
$\beta$                  & Size of a block \\ 
\noalign{\smallskip}\hline\noalign{\smallskip}
\end{tabular}
\end{center}
\end{table}
\subsubsection{Compressed Sparse Row}
CSR is a standard storage format for sparse matrices which reduces the storage of matrix compared to the tuple representation \cite{Kotlyar:1997}. This format enables row-wise compression of $A$ with size $n \times n$ (or  $n \times m$) to store only the non-zero $nz$ elements. Let {\em nz} be the number of non-zero elements in matrix A. 
The {\em work} complexity of CSR SpMV is $\Theta(nz)$ where $nz>=n$ and {\em span} complexity is $O(nr + \log{n})$ \cite{Buluc:2009}, where $nr$ is the maximum number of non-zero elements in a row. The {\em I/O} complexity of CSR in the sequential I/O model of row-major layout is $O(nz)$ \cite{Bender2010} namely, scanning all non-zero elements of matrix $A$ costs $O(\frac{nz}{B})$ I/Os with B is the cache block size. However, randomly accessing vector $x$ causes the total of $O(nz)$ I/Os.
Applying the proposed model on CSR SpMV, their total energy complexity are computed as Equation \ref{eq:BigE-CSR}.
\begin{equation} \label{eq:BigE-CSR}
	E_{CSR}= O(\epsilon_{op} \times nz + \epsilon_{I/O} \times nz + \pi_{I/O} \times (nr+ \log{}n)) 
\end{equation}
\subsubsection{Compressed Sparse Column}
CSC is the similar storage format for sparse matrices as CSR. However, it compresses the sparse matrix in column-wise manner to store the non-zero elements. The {\em work} complexity of CSC SpMV is $\Theta(nz)$ where $nz>=n$ and {\em span} complexity is $O(nc + \log{n})$, where $nc$ is the maximum number of non-zero elements in a column. The {\em I/O} complexity of CSC in the sequential I/O model of column-major layout is $O(nz)$ \cite{Bender2010}. Similar to CSR, scanning all non-zero elements of matrix $A$ in CSC format costs $O(\frac{nz}{B})$ I/Os. However, randomly updating vector $y$ causing the bottle neck with total of $O(nz)$ I/Os.
Applying the proposed model on CSC SpMV, their total energy complexity are computed as Equation \ref{eq:BigE-CSC}.
\begin{equation} \label{eq:BigE-CSC}
	E_{CSC}= O(\epsilon_{op} \times nz + \epsilon_{I/O} \times nz + \pi_{I/O} \times (nc+ \log{}n)) 
\end{equation}
\subsubsection{Compressed Sparse Block}
Given a sparse matrix $A$, while CSR has good performance on SpMV $y=Ax$, CSC has good performance on transpose sparse matrix vector multiplication $y=A^{T}\times x$, Compressed sparse blocks (CSB) format is efficient for computing either $Ax$ or $A^{T}x$. CSB is another storage format for representing sparse matrices by dividing the matrix $A$ and vector $x, y$  to blocks. A block-row contains multiple chunks, each chunks contains consecutive blocks and non-zero elements of each block are stored in Z-Morton-ordered \cite{Buluc:2009}.
From Beluc et al. \cite{Buluc:2009}, CSB SpMV computing a matrix with $nz$ non-zero elements, size $n\times n$ and divided by block size $\beta \times \beta$ has span complexity $O(\beta \times \log{\frac{n}{\beta}}+ \frac{n}{\beta})$ and {\em work} complexity as $\Theta(\frac{n^2}{\beta^2}+nz)$.

{\em I/O} complexity for CSB SpMV is not available in the literature. We do the analysis of CSB manually by following the master method \cite{CormenLRS:2009}. The {\em I/O} complexity is analyzed for the algorithm CSB\_SpMV(A,x,y) from Beluc et al. \cite{Buluc:2009}. The I/O complexity of CSB is similar to {\em work} complexity of CSB $O(\frac{n^2}{\beta^2} + nz)$, only that non-zero accesses in a block is divided by B: $O(\frac{n^2}{\beta^2} + {\frac{nz}{B}})$, where $B$ is cache block size. The reason is that non-zero elements in a block are stored in Z-Morton order which only requires $\frac{nz}{B}$ I/Os. The energy complexity of CSB SPMV is shown in Equation \ref{eq:BigE-CSB}.

From the complexity analysis of SpMV algorithms using different layouts, the complexity of CSR-SpMV, CSC-SpMV and CSB-SpMV are summarized in Table \ref{table:SpMV-complexity}.
\begin{floatEq}
\begin{equation}
\label{eq:BigE-CSB}
	E_{CSB}= O(\epsilon_{op} \times (\frac{n^2}{\beta^2} + nz) + \epsilon_{I/O} \times (\frac{n^2}{\beta^2} + \frac{nz}{B}) + \pi_{I/O} \times \frac{(\frac{n^2}{\beta^2} + \frac{nz}{B})\times (\beta \times \log{\frac{n}{\beta}}+ \frac{n}{\beta})}{(\frac{n^2}{\beta^2} + nz)} )
\end{equation}
\end{floatEq}
\begin{table*}
\caption{SpMV Complexity Analysis}
\label{table:SpMV-complexity}
\begin{center}
\begin{tabular}{llll}
\hline\noalign{\smallskip}
Complexity & CSC-SpMV & CSB-SpMV & CSR-SpMV \\
\noalign{\smallskip}\hline\noalign{\smallskip}
Work  & $\Theta(nz)$ \cite{Buluc:2009} & $\Theta(\frac{n^2}{\beta^2} + nz)$ \cite{Buluc:2009} & $\Theta(nz)$ \cite{Buluc:2009}\\ 
I/O                 & $O(nz)$ \cite{Bender2010} & $O(\frac{n^2}{\beta^2} + {\frac{nz}{B}})$ [this study] & $O(nz)$ \cite{Bender2010} \\ 
Span                  & $O(nc+ \log{}n)$ \cite{Buluc:2009} & $O(\beta \times \log{\frac{n}{\beta}}+ \frac{n}{\beta})$ \cite{Buluc:2009} & $O(nr+ \log{}n)$  \cite{Buluc:2009}\\ 

\noalign{\smallskip}\hline\noalign{\smallskip}
\end{tabular}
\end{center}
\end{table*}

\subsection{A Case Study of Dense Matrix Multiplication}
\label{Matmul-energy}
Besides SpMV, we also apply the ICE model to dense matrix multiplication (matmul). Unlike SpMV, a data-intensive kernel, matmul is a computation-intensive kernel used in high performance computing. It computes output matrix C (size n x p) by multiplying two dense matrices A (size n x m) and B (size m x p): $C=A \times B$. In this work, we implemented two matmul algorithms (i.e., a basic algorithm and a cache-oblivious algorithm \cite{FrigoLPR:1999}) and apply the ICE analysis to find their energy complexity. Both algorithms partition matrix A and C equally to N sub-matrices (e.g., $A_{i}$ with i=(1,2,..,N)), where N is the number of cores in the platform. The partition approach is shown in Figure \ref{fig:matmul-algo}. Each core computes a sub-matrix $C_{i}$: $C_{i}=A_{i} \times B$. Since matmul is a computation-bound application kernel, Equation \ref{eq:BigE1} is applied.
\begin{figure}[!t] \centering
\resizebox{0.5\columnwidth}{!}{ \includegraphics{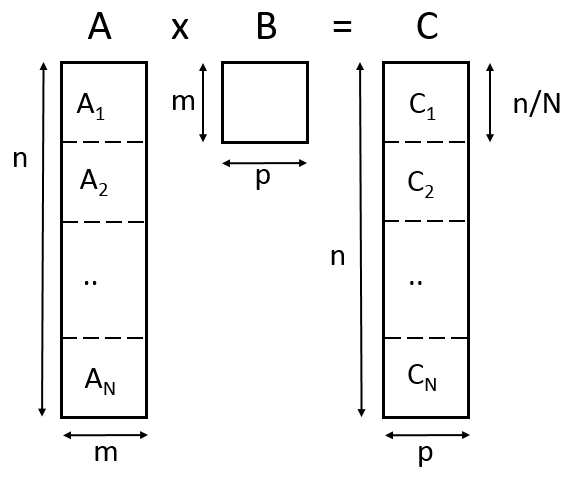}}
\caption{Partition approach for parallel matmul algorithms. Each sub-matrix $A_{i}$ has size $\frac{n}{N} \times m$ and each sub-matrix $C_{i}$ has size $\frac{n}{N} \times p$.}
\label{fig:matmul-algo}
\end{figure} 
\subsubsection{Basic Matmul Algorithm}
The basic matmul algorithm is described in Listing \ref{lst:matmul-simple}. Its work complexity is $\Theta(2nmp)$ \cite{Yelick2004} and span complexity is $\Theta(\frac{2nmp}{N})$ because the computational work is divided equally to N cores due to matrix partition approach. When matrix size of matrix B is bigger than the platform cache size, the basic algorithm loads matrix B n times (i.e., once for computing each row of C), results in $\frac{nmp}{B}$ cache block transfers, where $B$ is cache block size. In total, I/O complexity of the basic matmul algorithm is $\Theta(\frac{nm+nmp+np}{B})$. Applying the ICE model on this algorithm, the total energy complexity are computed as Equation \ref{eq:BigE-Naive}.
\begin{lstlisting}[caption={Simple Matmul},label={lst:matmul-simple}]
for i = 1 to n
	for j = 1 to p
		for k = 1 to m
			C (i,j) = C (i,j) + A(i,k) * B(k,j)
\end{lstlisting}
\begin{equation} \label{eq:BigE-Naive}
	E_{basic}= O(\epsilon_{op} \times 2nmp + \epsilon_{I/O} \times \frac{nm+nmp+np}{B} + \pi_{op} \times \frac{2nmp}{N}) 
\end{equation}
\subsubsection{Cache-oblivious Matmul Algorithm}
The cache-oblivious matmul (CO-matmul) algorithm \cite{FrigoLPR:1999} is a divide-and-conquer algorithm. It has work complexity the same as the basic matmul algorithm $\Theta(2nmp)$. Its span complexity is also $\Theta(\frac{2nmp}{N})$ because of the used matrix partition approach shown in Figure \ref{fig:matmul-algo}. The I/O complexity of CO-matmul, however, is different from the basic algorithm: $\Theta(n+m+p+\frac{nm+mp+np}{B} +\frac{nmp}{B\sqrt[2]{Z}})$ \cite{FrigoLPR:1999}. Applying the ICE model to CO-matmul, the total energy complexity are computed as Equation \ref{eq:BigE-CO}.
\begin{floatEq}
\begin{equation} \label{eq:BigE-CO}
	E_{CO}= O(\epsilon_{op} \times 2nmp + \epsilon_{I/O} \times (n+m+p+\frac{nm+mp+np}{B}+\frac{nmp}{B\sqrt[2]{Z}})+\pi_{op} \times \frac{2nmp}{N})
\end{equation}
\end{floatEq}
\begin{table*}
\caption{Matmul Complexity Analysis}
\label{table:Matmul-complexity}
\begin{center}
\begin{tabular}{llll}
\hline\noalign{\smallskip}
Complexity & Cache-oblivious Algorithm & Basic Algorithm\\
\noalign{\smallskip}\hline\noalign{\smallskip}
Work  & $\Theta(2nmp)$ \cite{FrigoLPR:1999}  & $\Theta(2nmp)$  \cite{Yelick2004}  \\ 
I/O   & $\Theta(n+m+p+\frac{nm+mp+np}{B} +\frac{nmp}{B\sqrt[2]{Z}})$ \cite{FrigoLPR:1999}  & $\Theta(\frac{nm+nmp+np}{B})$ [this study]\\ 
Span  & $\Theta(\frac{2nmp}{N})$ [this study] & $\Theta(\frac{2nmp}{N})$ [this study]\\ 
\noalign{\smallskip}\hline\noalign{\smallskip}
\end{tabular}
\end{center}
\end{table*}

\subsection{Validation of ICE Model}
\label{validation}
This section describes the experimental study to validate the ICE model, including: introducing the two experimental platforms and how to obtain their parameters for the ICE model, describing input types, and discussing the validation results of SpMV and matmul.
\subsubsection{Experiment Set-up}
\label{experiment-set-up}
For the validation of the ICE model, we conduct the experiments on two HPC platforms: one platform with two Intel Xeon E5-2650l v3 processors and one platform with Xeon Phi 31S1P processor. The Intel Xeon platform has two processors Xeon E5-2650l v3 with $2\times12$ cores, each processor has the frequency 1.8 GHz. The Intel Xeon Phi platform has one processor Xeon Phi 31S1P with $57$ cores and its frequency is 1.1 GHz. To measure energy consumption of the platforms, we read the PCM MSR counters for Intel Xeon and MIC power reader for Xeon Phi.
\subsubsection{Identifying Platform Parameters}
We apply the energy roofline approach \cite{Choi2013, Choi2014} to find the platform parameters for the two new experimental platforms, namely Intel Xeon E5-2650l v3 and Xeon Phi 31S1P. Moreover, the energy roofline study \cite{Choi2014} has also provided a list of other platforms including CPU, GPU, embedded platforms with their parameters considered in the Roofline model. Thanks to authors Choi et al. \cite{Choi2014}, we extract the required values of ICE parameters for nine platforms presented in their study as follows: $\epsilon_{op}=\epsilon_{d}$, $\epsilon_{I/O}=\epsilon_{mem}\times B$, $\pi_{op}=\pi_{1} \times \tau_{d}$, $\pi_{I/O}=\pi_{1} \times \tau_{mem}$, where $B$ is cache block size, $\epsilon_{d}$, $\epsilon_{d}$, $\tau_{d}$, $\tau_{mem}$ are defined by \cite{Choi2014} as energy per flop, energy per byte, time per flop and time per byte, respectively. 

The ICE parameter values of the two new HPC platforms (i.e., Xeon and Xeon-Phi 31S1P) used to validate the ICE model are obtained by using the same approach as energy roofline study \cite{Choi2013}. We create micro-benchmarks for the two platforms and measure their energy consumption and performance. The ICE parameter values of each platform are obtained from energy and performance data by regression techniques. Along with the two HPC platforms used in this validation, we provide parameters required in the ICE model for a total of 11 platforms. Their platform parameters are listed in Table \ref{table:platform-parameter-values} for further uses.
\subsubsection{SpMV Implementation}
\label{SpMV-Implementation}
We want to conduct complexity analysis and experimental study with two SpMV algorithms, namely CSB and CSC. Parallel CSB and sequential CSC implementations are available thanks to the study from Bulu\c{c} et al. \cite{Buluc:2009}. Since the optimization steps of available parallel SpMV kernels (e.g., pOSKI \cite{pOSKI}, LAMA\cite{Forster2011}) might affect the work complexity of the algorithms, we decided to implement a simple parallel CSC using Cilk and pthread. To validate the correctness of our parallel CSC implementation, we compare the vector result $y$ from $y=A*x$ of CSC and CSB implementation. The comparison shows the equality of the two vector results $y$. Moreover, we compare the performance of the our parallel CSC code with Matlab parallel CSC-SpMV kernel. Matlab also uses CSC layout as the format for their sparse matrix \cite{Gilbert:1992} and is used as baseline comparison for SpMV studies \cite{Buluc:2009}. Our CSC implementation has out-performed Matlab parallel CSC kernel when computing the same targeted input matrices. Figure \ref{fig:Matlab-comparison} shows the performance comparison of our CSC SpMV implementation and Matlab CSC SpMV kernel. The experimental study of SpMV energy consumption is then conducted with CSB SpMV implementation from Bulu\c{c} et al. \cite{Buluc:2009} and our CSC SpMV parallel implementation.  
\begin{figure}[!t] \centering
\resizebox{0.8\columnwidth}{!}{ \includegraphics{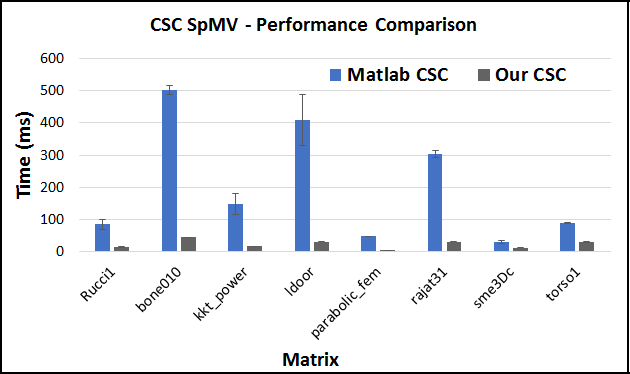}}
\caption{Performance (time) comparison of two parallel CSC SpMV implementations. For a set of different input matrices, the parallel CSC SpMV using Cilk out-performs Matlab parallel CSC.}
\label{fig:Matlab-comparison}
\end{figure}
\subsubsection{SpMV Matrix Input Types}
We conducted the experiments with nine different matrix-input types from Florida sparse matrix collection \cite{Davis:2011}. Each matrix input has different properties listed in Table \ref{table:AlgorithmInputParameters}, including size of the matrix $n\times m$, the maximum number of non-zero of the sparse matrix $nz$, the maximum number of non-zero elements in one column $nc$. Table \ref{table:matrix-type} lists the matrix types used in this experimental validation with their properties. 
\begin{table}
\caption{Sparse matrix input types. The maximum number of non-zero elements in a column $nc$ is derived from \cite{Buluc:2009}.}
\label{table:matrix-type}
\begin{center}
\begin{tabular}{lllll}
\hline\noalign{\smallskip}
\textbf{Matrix type}	&\textbf{n}	&\textbf{m}	&\textbf{nz}	&\textbf{nc}	\\
\noalign{\smallskip}\hline\noalign{\smallskip}
bone010	&986703	&986703	&47851783	&63	\\
kkt\_power	&2063494	&2063494	&12771361	&90	\\
ldoor	&952203	&952203	&42493817	&77	\\
parabolic\_fem	&525825	&525825	&3674625	&7	\\
pds-100	&156243	&517577	&1096002	&7	\\
rajat31	&4690002	&4690002	&20316253	&1200	\\
Rucci1	&1977885	&109900	&7791168	&108	\\
sme3Dc	&42930	&42930	&3148656	&405	\\
torso1	&116158	&116158	&8516500	&1200	\\
\noalign{\smallskip}\hline\noalign{\smallskip}
\end{tabular}
\end{center}
\end{table}
\subsubsection{Validating ICE Using Different SpMV Algorithms}
From the model-estimated data, CSB SpMV consumes less energy than CSC SpMV on both platforms. Even though CSB has higher work complexity than CSC, CSB SpMV has less I/O complexity than CSC SpMV. Firstly, the dynamic energy cost of one I/O is much greater than the energy cost of one operation (i.e., $\epsilon_{I/O}>>\epsilon_{op}$) on both platforms. Secondly, CSB has better parallelism than CSC, computed by $\frac{Work}{Span}$, which results in shorter execution time. Both reasons contribute to the less energy consumption of CSB SpMV. 
\begin{figure}[!t] \centering
\resizebox{0.9\columnwidth}{!}{ \includegraphics{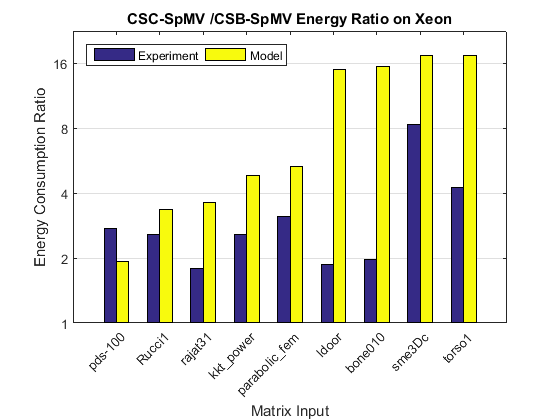}}
\caption{Energy consumption comparison between CSC-SpMV and CSB-SpMV on the Intel Xeon platform, computed by $\frac{E_{CSC}}{E_{CSB}}$. Both the ICE model estimation and experimental measurement on Intel Xeon platform show the consistent results that $\frac{E_{CSC}}{E_{CSB}}$ is greater than 1, meaning CSC SpMV algorithm consumes more energy than the CSB SpMV algorithm on different input matrices.}
\label{fig:CSBvsCSC-Xeon}
\end{figure} 
\begin{figure}[!t] \centering
\resizebox{0.9\columnwidth}{!}{ \includegraphics{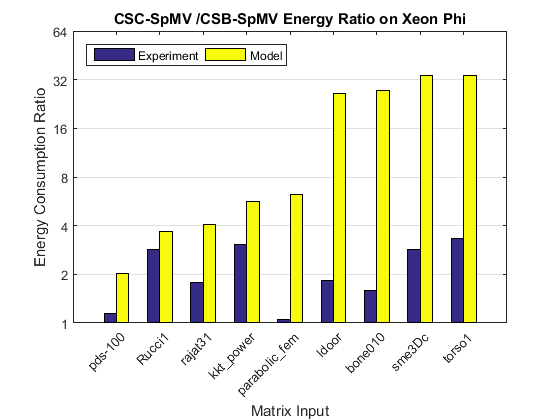}}
\caption{Energy consumption comparison between CSC-SpMV and CSB-SpMV on the Intel Xeon Phi platform, computed by $\frac{E_{CSC}}{E_{CSB}}$. Both the ICE model estimation and experimental measurement on Intel Xeon Phi platform show the consistent results that $\frac{E_{CSC}}{E_{CSB}}$ is greater than 1, meaning CSC SpMV algorithm consumes more energy than the CSB SpMV algorithm on different input matrices.}
\label{fig:CSBvsCSC-XeonPhi}
\end{figure} 
The measurement data confirms that CSB SpMV algorithm consumes less energy than CSC SpMV algorithm, shown by the energy consumption ratio between CSC-SpMV and CSB-SpMV greater than 1 in the Figure \ref{fig:CSBvsCSC-Xeon} and \ref{fig:CSBvsCSC-XeonPhi}. For all input matrices, the ICE model has confirmed that CSB SpMV consumes less energy than CSC SpMV algorithm. 
\paragraph{Validating ICE Using Different Input Types}
\begin{table}
\caption{Comparison of Energy Consumption of Different Matrix Input Types.}
\label{table:Input-Energy-Comparison}
\centering
\resizebox{\textwidth}{!}{%
\begin{tabular}{lllllllll}
\hline\noalign{\smallskip}
Algorithm &CSB &CSB &CSC  &CSC  &CSB  &CSB &CSC  &CSC  \\
\noalign{\smallskip}\hline\noalign{\smallskip}
Platform &Xeon &Xeon &Xeon &Xeon  &Xeon-Phi &Xeon-Phi &Xeon-Phi &Xeon-Phi  \\
\noalign{\smallskip}\hline\noalign{\smallskip}
Model/Exprmt & model &  exprmt &model &exprmt  &model &exprmt &model &exprmt  \\
\noalign{\smallskip}\hline\noalign{\smallskip}
Increasing &sme3Dc	&pds-100	&pds-100	&pds-100	&sme3Dc	&pds-100	&pds-100	&parabolic\\

Energy&torso1	&parabolic	&sme3Dc	&parabolic	&torso1	&parabolic	&sme3Dc	&pds-100\\

Consumption&pds-100	&sme3Dc	&parabolic	&sme3Dc	&pds-100	&Rucci1	&parabolic	&Rucci1\\

Order&parabolic	&Rucci1	&Rucci1	&Rucci1	&parabolic	&sme3Dc	&Rucci1	&sme3Dc\\

&Rucci1	&kkt &torso1	&kkt	&ldoor	&kktr	&torso1	&rajat31\\

&kkt	&torso1	&kkt	&torso1	&bone010	&torso1	&kkt	&kkt\\

&ldoor	&rajat31	&rajat31	&rajat31	&Rucci1	&rajat31	&rajat31	&ldoor\\

&bone010	&ldoor	&ldoor	&ldoor	&kkt	&ldoor	&ldoor	&torso1\\

&rajat31	&bone010	&bone010	&bone010	&rajat31	&bone010	&bone010	&bone010\\
\noalign{\smallskip}\hline\noalign{\smallskip}
\end{tabular}
}
\end{table}
\begin{table*}
\caption{CSC Energy Comparison of Different Input Matrix Types on Xeon}
\label{table:Input-Comparison-CSC}
\begin{center}
\begin{tabular}{llllllllll}
\hline\noalign{\smallskip}
Correctness	&pds-100	&parabolic	&sme3Dc	&Rucci1	&kkt	&torso1	&rajat31	&ldoor	&bone010\\
pds-100	&x	&1	&1	&1	&1	&1	&1	&1	&1\\
parabolic	&	&x	&0	&1	&1	&1	&1	&1	&1\\
sme3Dc		&&	&x	&1	&1	&1	&1	&1	&1\\
Rucci1		&&&		&x	&1	&1	&1	&1	&1\\
kkt			&&&&		&x	&0	&1	&1	&1\\
torso1		&&&&&				&x	&1	&1	&1\\
rajat31		&&&&&&					&x	&1	&1\\
ldoor		&&&&&&&						&x	&1\\
bone010		&&&&&&&&							&x\\
\noalign{\smallskip}\hline\noalign{\smallskip}
\end{tabular}
\end{center}
\end{table*}
\begin{table}
\caption{Comparison accuracy of SpMV energy consumption computing different input matrix types}
\label{table:Input-Comparison-Accuracy}
\begin{center}
\begin{tabular}{llll}
\hline\noalign{\smallskip}
Algorithm &CSB  &CSC\\
\noalign{\smallskip}\hline\noalign{\smallskip}
Xeon &75\%  &94\%\\
Xeon Phi &63.8\%  &80.5\%\\
\noalign{\smallskip}\hline\noalign{\smallskip}
\end{tabular}
\end{center}
\end{table}
To validate the ICE model regarding input types, the experiments have been conducted with nine matrix types listed in Table \ref{table:matrix-type}. The model can capture the energy-consumption relation among different inputs. The increasing order of energy consumption of different matrix-input types are shown in Table \ref{table:Input-Energy-Comparison}, from both model estimation and experimental study.

For instance, in order to validate the comparison of energy consumption for different input types, a validated table as Table \ref{table:Input-Comparison-CSC} is created for CSC SpMV on Xeon to compare model prediction and experimental measurement. For nine input types, there are $\frac{9\times9}{2}-9=36$ input relations. If the relation is correct, meaning both experimental data and model data are the same, the relation value in the table of two inputs is 1. Otherwise, the relation value is 0. From Table \ref{table:Input-Comparison-CSC}, there are 34 out of 36 relations are the same for both model and experiment, which gives 94\% accuracy on the relation of the energy consumption of different inputs. Similarly, the input validation for CSC and CSB on both Xeon and Xeon Phi platforms is provided in Table \ref{table:Input-Comparison-Accuracy}.
\paragraph{Validating The Applicability of ICE on Different Platforms}
\begin{figure*}[!t] \centering
\resizebox{0.9\textwidth}{!}{\includegraphics{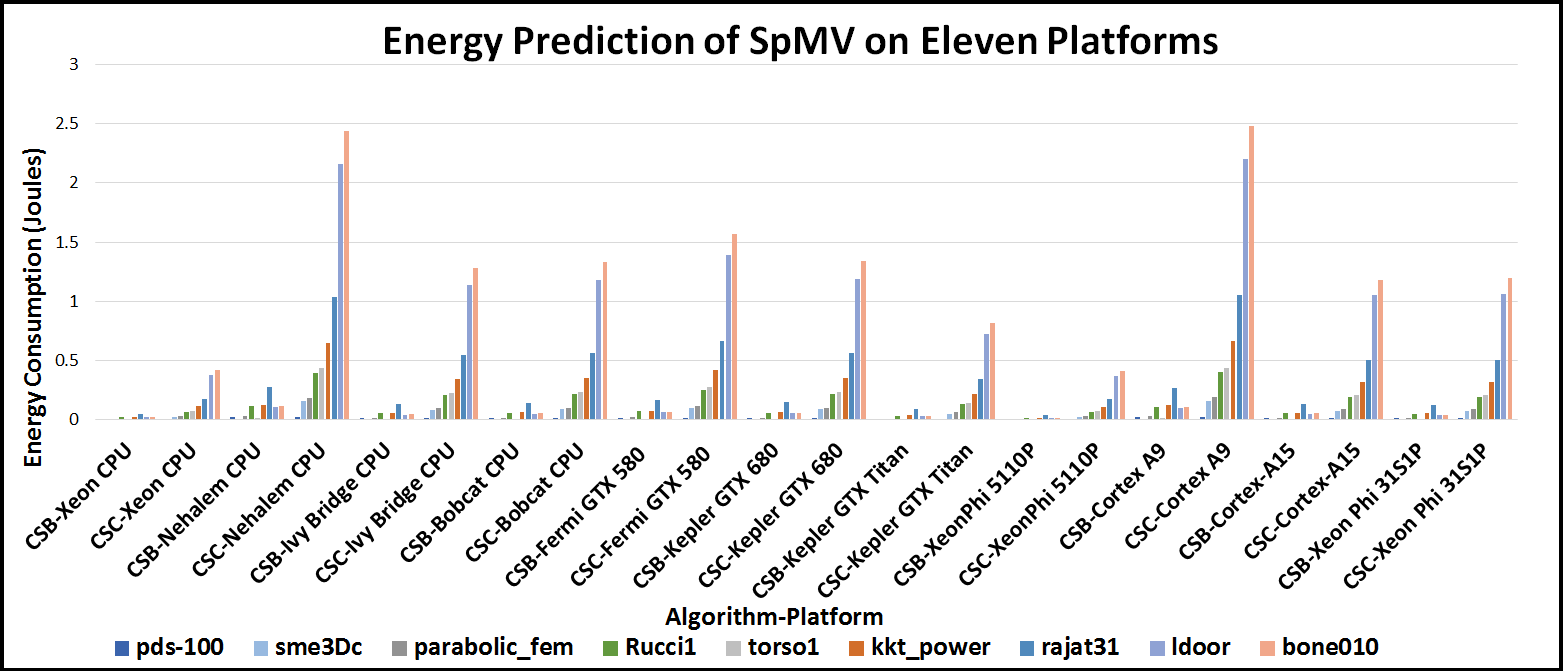}}
\caption{Energy Comparison of CSB and CSC SpMV on eleven different platforms.}
\label{fig:ModelPrediction_Platforms}
\end{figure*}
The energy comparison of CSB and CSC SpMV is concluded for eleven platforms listed in Table \ref{table:platform-parameter-values}. Like two Xeon and Xeon Phi 31S1P platforms used in experiments, Figure \ref{fig:ModelPrediction_Platforms} shows the prediction that CSB SpMV consumes less energy than CSC SpMV, on all platforms listed in Table \ref{table:platform-parameter-values}. This confirms the applicability of ICE model to compare the energy consumption of algorithms on different platforms with different input types.
\paragraph{Validating the Platform-independent Energy Complexity Model}
From Equation \ref{eq:BigE-CSC} and \ref{eq:BigE-CSB}, the platform-independent energy complexity for CSC and CSB SpMV are derived as Equation \ref{eq:BigE-CSC-plat-ind} and \ref{eq:BigE-CSB-plat-ind}, respectively.
\begin{equation} \label{eq:BigE-CSC-plat-ind}
	E_{CSC}= O(2 \times nz + (nc+ \log{}n)) 
\end{equation}
\begin{equation}
\label{eq:BigE-CSB-plat-ind}
	E_{CSB}= O(2 \times \frac{n^2}{\beta^2} + nz \times (1+\frac{1}{B}) + \beta \times \log{\frac{n}{\beta}}+ \frac{n}{\beta}) 
\end{equation}
We validate the platform-independent energy complexity of CSC and CSB SpMV. The platform-independent energy complexity also shows the accurate comparison of CSC and CSB SpMV computing different matrix types shown in Figure \ref{fig:model-comparison}. Both platform-independent and platform-supporting models show that CSC-SpMV algorithm consumes more energy than CSB-algorithm. 
However, the difference gap between the energy complexity of CSC and CSB using the platform-independent model is not clear for all input types except "ldoor" and "bone010" while in the platform-supporting model, the difference gap is clearer and consistent with the experiment results in terms of which algorithm consumes less energy for different input types. 
Comparing energy consumption of different input types requires more detailed information of the platforms. Therefore, the platform-independent model is only applicable to predict which algorithm consumes more energy.
\begin{figure}[!t] \centering
\resizebox{0.5\columnwidth}{!}{\includegraphics{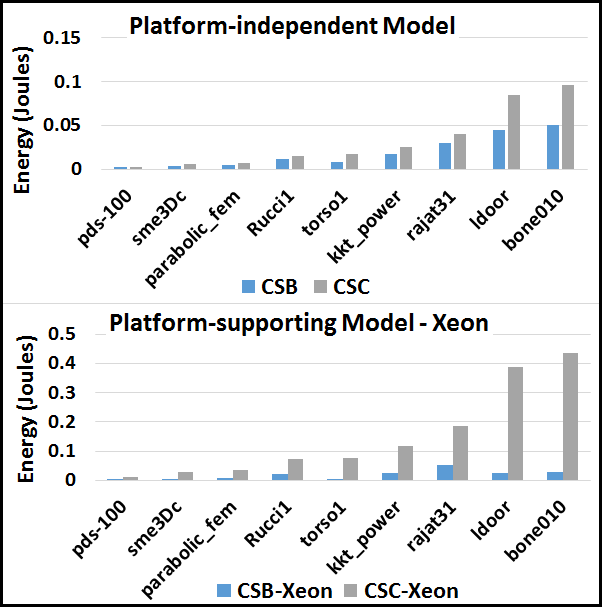}}
\caption{Comparison of platform-dependent and platform-supporting energy complexity model. Both models show that CSC SpMV consumes more energy than CSB SpMV.}
\label{fig:model-comparison}
\end{figure}
\label{SpMV-validation}
\subsubsection{Validating ICE With Matmul Algorithms}
The validation of ICE model with Matmul algorithm is another new result of this study in Deliverable D2.4 as compared to Deliverable D2.3. This makes the validation of the ICE model more complete with both data-intensive and computation-intensive algorithms.
From the model-estimated data, Basic-Matmul consumes more energy than CO-Matmul on both platforms. Even though both algorithms have the same work and span complexity, Basic-Matmul has more I/O complexity than CO-Matmul, which results in greater energy consumption of Basic-Matmul compared to CO-Matmul algorithm.
\begin{figure}[!t] \centering
\resizebox{0.8\columnwidth}{!}{ \includegraphics{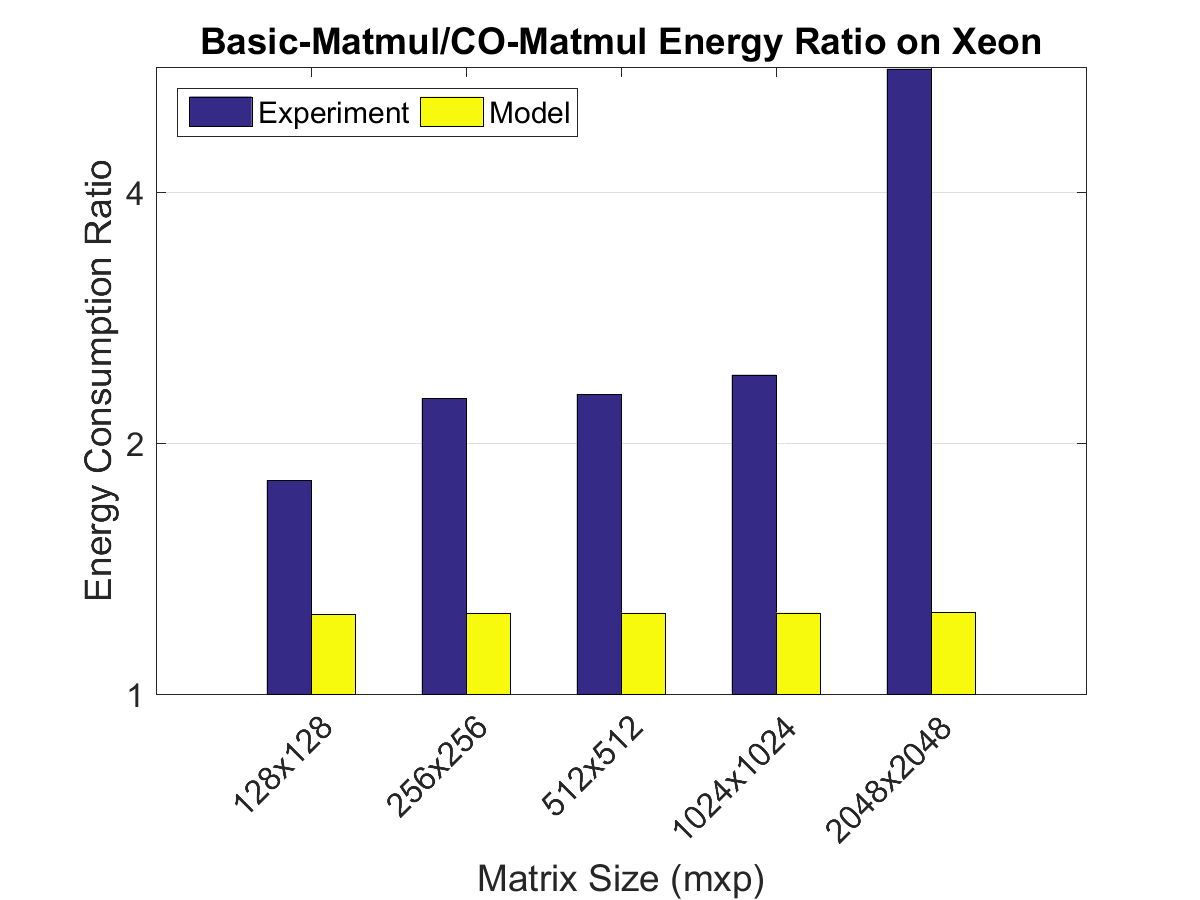}}
\caption{Energy consumption comparison between Basic-Matmul and CO-Matmul on the Intel Xeon platform, computed by $\frac{E_{Basic}}{E_{CO}}$. Both the ICE model estimation and experimental measurement on Intel Xeon platform show that $\frac{E_{Basic}}{E_{CO}}$ is greater than 1, meaning Basic-Matmul algorithm consumes more energy than the CO-Matmul algorithm.}
\label{fig:NaiveCO-Xeon}
\end{figure} 
\begin{figure}[!t] \centering
\resizebox{0.8\columnwidth}{!}{ \includegraphics{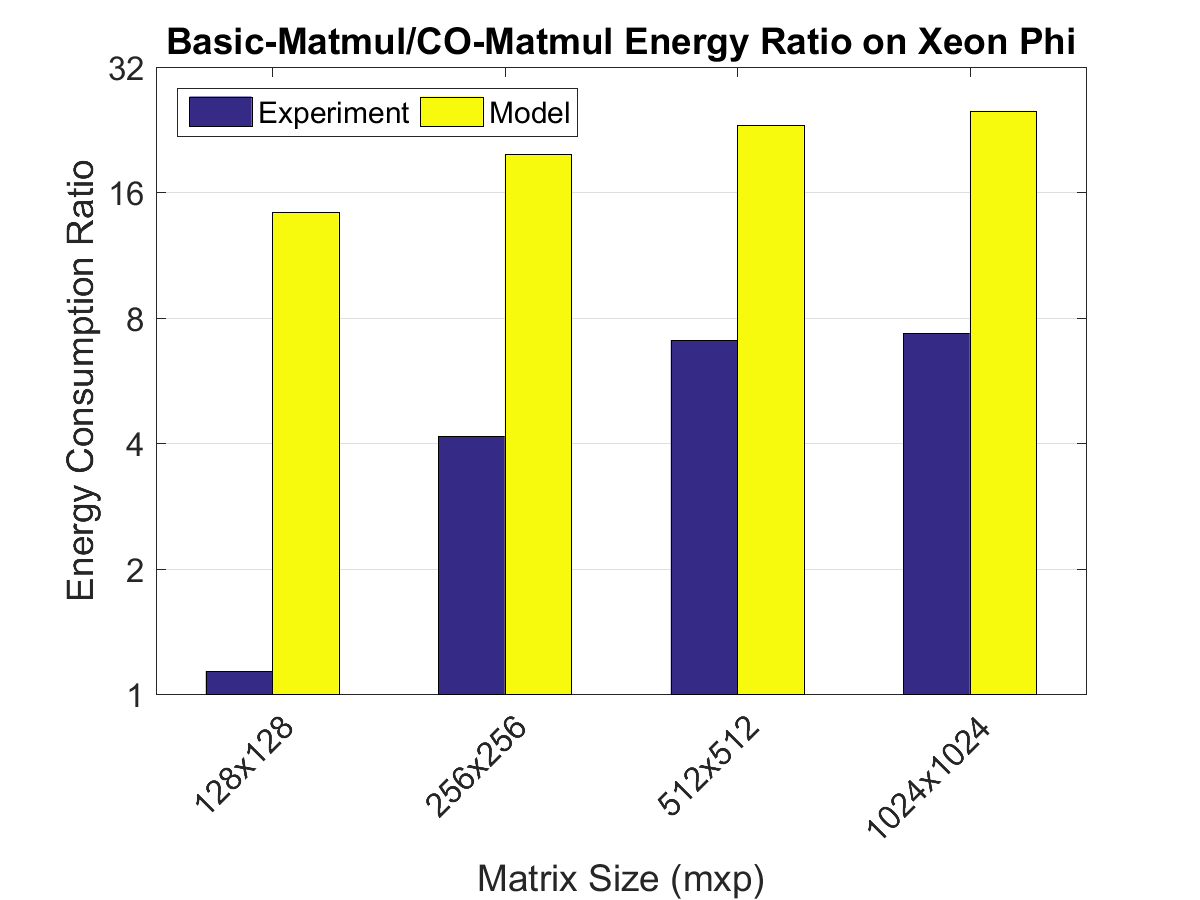}}
\caption{Energy consumption comparison between Basic-Matmul and CO-Matmul on the Intel Xeon Phi platform, computed by $\frac{E_{Basic}}{E_{CO}}$. Both the ICE model estimation and experimental measurement on Intel Xeon Phi platform show that $\frac{E_{Basic}}{E_{CO}}$ is greater than 1, meaning Basic-Matmul algorithm consumes more energy than the CO-Matmul algorithm.}
\label{fig:NaiveCO-XeonPhi}
\end{figure} 
The measurement data confirms that Basic-Matmul algorithm consumes more energy than CO-Matmul algorithm, shown by the energy consumption ratio between Basic-Matmul and CO-Matmul greater than 1 in the Figure \ref{fig:NaiveCO-Xeon} and \ref{fig:NaiveCO-XeonPhi}. For all input matrices, the ICE model has confirmed that Basic-Matmul consumes more energy than CO-Matmul algorithm.

\label{Matmul-validation}


\clearpage


\section{Conclusions} \label{sec:Conclusion}
In this Deliverable D2.4, we have reported our current results on the new energy/power models modeling the trade-off of energy efficiency and performance of data structures and algorithms; as well as the final prototype of libraries and programming abstractions.

\begin{itemize}
\item We have presented a detailed description of GreenBST, an energy-efficient concurrent search tree that is briefly described in D2.3. We have evaluated GreenBST with new state-of-the-art concurrent search trees and showed that GreenBST is portable and has a better energy efficiency and throughput than the state-of-the-art. We have developed GreenBST for Myriad2 and have experimentally evaluated our implementation.
\item  We developed a methodology for the customization of streaming aggregation implemented in modern low power embedded devices. We further compared the proposed embedded system implementations of the streaming aggregation operator with the corresponding HPC and GPGPU implementations in terms of performance per watt. 
\item We have introduced two new frameworks that can be used to the capture the performance of a wide set of lock-free data structures in dynamic environments. Then, we have integrated these performance analyses to our previous power model to obtain energy efficiency. 
\item  We have validated the ICE model, a new energy complexity model for multithreaded algorithms with both data-intensive and computation-intensive kernels. This new energy complexity model is general for parallel (multithreaded) algorithms. The ICE model derives the energy complexity of a given algorithm from its {\em work}, {\em span} and {\em I/O} complexity. We also showed that {\em I/O} complexity in energy complexity is computed based on the Ideal Cache memory model.  
\end{itemize}

\newpage

\bibliographystyle{plain}

\bibliography{../WP6-bibtex/longhead,../D2.1/D2.1_related_papers,../WP6-bibtex/excess,../WP6-bibtex/related-papers,../WP6-bibtex/peppher-related,UiT-Vi-D2-4,bib-chalmers/biblio-disc,bib-chalmers/longhead,bib-chalmers/biblio-aggr,UiT-Ibrahim}
\newpage

\begin{appendices}
\section{The tree library}\label{sec:tree-library}
We have developed concurrent search tree libraries that contain the implementation of the concurrent search tree algorithms described in Table~\ref{tbl:algos}.

\subsection{Getting the source and compilation.} The libraries are provided in a separate directory for easy access and maintenance. The repository address 
is http://gitlab.excess-project.eu/ibrahim/tree-libraries. A makefile for each of the libraries is also provided to aid compilations. 
The libraries have been tested on Linux and Mac OS X platforms.

\subsection{Running and outputs.} By default, the provided makefile will build the standalone benchmark 
version of the libraries which will accept these following parameters:

\noindent\fbox{%
    \parbox{0.95\textwidth}{%
   \texttt{-r <NUM>    : Allowable range for each element (0..NUM)\\
-u <0..100> : Update ratio. 0 = Only search; 100 = Only updates\\
-i <NUM>    : Initial tree size (initial pre-filled element count)\\
-t <NUM>    : DeltaNode ($\UB$) size (ONLY USED IN DELTATREE FAMILIES)\\
-n <NUM>    : Number of benchmark threads\\
-s <NUM>    : Random seed. (0 = using time as seed, Default)
}}}
\\
\\
The benchmark outputs are formatted in this sequence: 

\noindent\fbox{%
    \parbox{0.95\textwidth}{%
    \texttt{
0: range, insert ratio, delete ratio, \#threads, \#attempted insert, \#attempted delete, \#attempted search, \#effective insert, \#effective delete, \#effective search, time (in msec.)
}}}

NOTE: {\tt 0:} characters are just unique token for easy tagging (e.g., for using {\tt grep}).

\noindent\fbox{%
    \parbox{0.95\textwidth}{%
\ttfamily
\$ ./DeltaTree -h \newline
DeltaTree v0.1\newline
===============\newline
Use -h switch for help.\newline
\newline
Accepted parameters\newline
-r <NUM>    : Range size\newline
-u <0..100> : Update ratio. 0 = Only search; 100 = Only updates\newline
-i <NUM>    : Initial tree size (inital pre-filled element count)\newline
-t <NUM>    : DeltaNode size\newline
-n <NUM>    : Number of threads\newline
-s <NUM>    : Random seed. 0 = using time as seed\newline
-d <0..1>   : Density (in float)\newline
-v <0 or 1> : Valgrind mode (less stats). 0 = False; 1 = True\newline
-h          : This help\newline
\newline
Benchmark output format: \newline
"0: range, insert ratio, delete ratio, \#threads, attempted insert, attempted delete, attempted search, effective insert, effective delete, effective search, time (in msec)"
}}

\noindent\fbox{%
    \parbox{0.95\textwidth}{%
\ttfamily
\$ ./DeltaTree -r 5000000 -u 10 -i 1024000 -n 10 -s 0\newline
DeltaTree v0.1\newline
===============\newline
Use -h switch for help.\newline
\newline
Parameters:\newline
- Range size r:		 5000000\newline
- DeltaNode size t:	 127\newline
- Update rate u:	 10\% \newline
- Number of threads n:	 10\newline
- Initial tree size i:	 1024000\newline
- Random seed s:	 0\newline
- Density d:		 0.500000\newline
- Valgrind mode v:	 0\newline
\newline
Finished building initial DeltaTree\newline
The node size is: 25 bytes\newline
Now pre-filling 1024000 random elements...\newline
...Done!\newline
\newline
Finished init a DeltaTree using DeltaNode size 127, with initial 1024000 members\newline
\#TS: 1421050928, 511389\newline
Starting benchmark...\newline
Pinning to core 0... Success\newline
Pinning to core 3... Success\newline
Pinning to core 1... Success\newline
Pinning to core 8... Success\newline
Pinning to core 9... Success\newline
Pinning to core 10... Success\newline
Pinning to core 2... Success\newline
Pinning to core 11... Success\newline
Pinning to core 4... Success\newline
Pinning to core 12... Success\newline
\newline
0: 5000000, 5.00, 5.00, 10, 249410, 248857, 4501733, 195052, 53720, 1000568, 476\newline
\newline
Active (alloc'd) triangle:258187(266398), Min Depth:12, Max Depth:30 \newline
Node Count:1165332, Node Count(MAX): 1217838, Rebalance (Insert) Done: 234, Rebalance (Delete) Done: 0, Merging Done: 1\newline
Insert Count:195052, Delete Count:53720, Failed Insert:54358, Failed Delete:195137 \newline
Entering top: 0, Waiting at the top:0
}}

NOTE: {\tt \#TS:} is the benchmark start timestamp. 

\subsection{Pluggable library.} To use any component as a library, each library provides a (.h) header file and a simple, 
uniform API in C. These available and callable APIs are:

\noindent\fbox{%
    \parbox{0.95\textwidth}{%
\textsc{Structure:}\\
\\
\texttt{<libname>\_t} : Structure variable declaration.\\
\\
\textsc{Functions:}\\
\\
\texttt{<libname>\_t* <libname>\_alloc()} :	Function to allocate the defined structure, returns the allocated (empty) structure.\\
\\
\texttt{void* <libname>\_free(<libname>\_t* map)} :	Function to release all memory used by the structure, returns NULL on success.\\
\\
\texttt{int <libname>\_insert(<libname>\_t* map, void* key, void* data)} : Function to insert a key and a linked pointer (data), returns 1 on success and 0 otherwise.\\
\\
\texttt{int <libname>\_contains(<libname>\_t* map, void* key)} : Function to check whether a key is available in the structure, returns 1 if yes and 0 otherwise.\\
\\
\texttt{void *<libname>\_get(<libname>\_t* map, void* key)} : Function to get the linked data given its key, returns the pointer of the data of the corresponding key and 0 if the 
key is not found.\\
\\
\texttt{int <libname>\_delete(<libname>\_t* map, void* key)} : Function to delete an element that matches the given key, returns 1 on success and 0 otherwise. 
}}

As an example, the concurrent B-tree library provides the \texttt{cbtree.h} file that can be linked into
any C source code and provides the callable \texttt{cbtree\_t* cbtree\_alloc()} function. 
Note that the valid \texttt{<libname>} is \texttt{dtree} for DeltaTree, \texttt{gbst} for GreenBST, and \texttt{cbtree} for CBTree.
It is also possible
to use the MAP selector header (\texttt{map\_select.h}) plus defining which tree type to use 
so that  MAP\_\textless{operator}\textgreater functions are used instead as specific tree function as the below example:

\begin{lstlisting}[frame=single, language=C]
#define MAP_USE_CBTREE 
#include "map_select.h"

int main(void)
{
	long numData = 10;
	long i;
	char *str;
	puts("Starting...");
	MAP_T* cbtreePtr = MAP_ALLOC(void, void); 
	assert(cbtreePtr);
	for (i = 0; i < numData; i++) { 
		str = calloc(1, sizeof(char)); *str = 'a'+(i%254);
		MAP INSERT(cbtreePtr, i+1, str); 
	}
	for (i = 0; i < numData; i++) {
		printf("%ld: %c\n", i+1, 
			*((char*)MAP_FIND(cbtreePtr, i+1))); 
	}
	for (i = 0; i < numData; i++) {
		printf("%ld: %d\n", i+1, 
			MAP_CONTAINS(cbtreePtr, i+1));
	}
	for (i = 0; i < numData; i++) {
		MAP_REMOVE(cbtreePtr, i+1);
	}
	for (i = 0; i < numData; i++) {
		printf("%ld: %d\n", i+1, 
			MAP_CONTAINS(cbtreePtr, i+1));
	}
	MAP_FREE(cbtreePtr)
	puts("Done."); 
	return 0;
}

\end{lstlisting}

\subsection{Intel PCM integration.} All of the libraries provide support for Intel PCM measurement. 
To enable Intel PCM measurement metrics,
the compiler must be invoked using \texttt{-DUSE\_PCM} parameter during the libraries's compilation 
and all the Intel PCM compiled object files must be linked to the output executables.
  





\end{appendices}

\section*{Glossary}
\begin{flushleft}
\begin{tabular}{lp{12cm}}
\textbf{BRU}    &  Branch Repeat Unit (on SHAVE processor) \\
\textbf{CAS}    &  Compare-and-Swap instruction \\
\textbf{CMX}    &  Connection MatriX on-chip (shared) memory unit, 128KB (Movidius Myriad) \\
\textbf{CMU}    &  Compare-Move Unit (on SHAVE processor) \\
\textbf{Component} & 1. [hardware component] part of a chip's or motherboard's 
  circuitry; \ 2. [software component] encapsulated and annotated reusable
  software entity with contractually specified interface and
  explicit context dependences only, subject to third-party (software) composition.\\
\textbf{Composition}    & 1. [software composition] Binding a call to a 
  specific callee (e.g., implementation variant of a component) and allocating
  resources for its execution; \ 2. [task composition] Defining a macrotask and
  its use of execution resources 
  by internally scheduling its constituent tasks in serial,
  in parallel or a combination thereof. \\
  
\textbf{CPU}    &  Central (general-purpose) Processing Unit\\

\textbf{uncore}    &  including the ring interconnect, shared cache, integrated memory controller, home agent, power control unit, integrated I/O module, config Agent, caching agent and Intel QPI link interface \\ 
\textbf{CTH}    &  Chalmers University of Technology \\
\textbf{DAQ}    &  Data Acquisition Unit \\
\textbf{DCU}    &  Debug Control Unit (on SHAVE processor) \\
\textbf{DDR}    &  Double Data Rate Random Access Memory \\
\textbf{DMA}    &  Direct (remote) Memory Access \\
\textbf{DRAM}   &  Dynamic Random Access Memory \\
\textbf{DSP}    &  Digital Signal Processor \\
\textbf{DVFS}   &  Dynamic Voltage and Frequency Scaling \\
\textbf{ECC}    &  Error-Correcting Coding \\
\textbf{EXCESS} &  Execution Models for Energy-Efficient Computing Systems\\
\textbf{GPU}    &  Graphics Processing Unit\\
\textbf{HPC}    &  High Performance Computing\\
\textbf{IAU}    &  Integer Arithmetic Unit (on SHAVE processor) \\
\textbf{IDC}    &  Instruction Decoding Unit (on SHAVE processor) \\
\textbf{IRF}    &  Integer Register File (on SHAVE processor) \\
\textbf{LEON}    &  SPARCv8 RISC processor in the Myriad1 chip\\
\textbf{LIU}    &  Link\"oping University \\
\textbf{LLC}    &  Last-level cache\\
\textbf{LSU}    &  Load-Store Unit (on SHAVE processor) \\
\textbf{Microbenchmark} & Simple loop or kernel developed to measure one or few properties of the underlying architecture or system software\\
\textbf{PAPI}   &  Performance Application Programming Interface\\
\end{tabular}
\end{flushleft}

\newpage 

\begin{flushleft}
\begin{tabular}{lp{12cm}}
\textbf{PEU}    &  Predicated Execution Unit (on SHAVE processor) \\
\textbf{Pinning} &  [thread pinning] Restricting the operating system's CPU scheduler in order to map a thread to a fixed CPU core \\
\textbf{QPI}    &  Quick Path Interconnect\\
\textbf{RAPL}   &  Running Average Power Limit energy consumption counters (Intel)\\
\textbf{RCL}   &  Remote Core Locking (synchronization algorithm)\\
\textbf{SAU}    &  Scalar Arithmetic Unit (on SHAVE processor) \\
\textbf{SHAVE}  &  Streaming Hybrid Architecture Vector Engine (Movidius) \\
\textbf{SoC}    &  System on Chip \\
\textbf{SRF}    &  Scalar Register File (on SHAVE processor) \\
\textbf{SRAM}   &  Static Random Access Memory \\
\textbf{TAS}    &  Test-and-Set instruction\\
\textbf{TMU}    &  Texture Management Unit (on SHAVE processor) \\
\textbf{USB}    &  Universal Serial Bus \\
\textbf{VAU}    &  Vector Arithmetic Unit (on SHAVE processor) \\
\textbf{Vdram}  &  DRAM Supply Voltage \\
\textbf{Vin}    &  Input voltage level  \\
\textbf{Vio}    &  Input/Output voltage level  \\
\textbf{VLIW}   &  Very Long Instruction Word (processor) \\
\textbf{VLLIW}  &  Variable Length VLIW (processor) \\
\textbf{VRF}    &  Vector Register File (on SHAVE processor) \\
\textbf{Wattsup}&  Watts Up .NET power meter \\
\textbf{WP1}   &  Work Package 1 (here: of EXCESS) \\
\textbf{WP2}   &  Work Package 2 (here: of EXCESS) \\
\end{tabular}
\end{flushleft}

\end{document}